\renewcommand*{\ALG@name}{Polynomial}
\renewcommand{\backref}[1]{}
\renewcommand{\backrefalt}[4]{%
\ifcase #1
\or
[p.\ #2]
\else
[pp.\ #2]
\fi}
\newcommand{\para}{
  \@startsection{paragraph}{4}
  {\z@}{2ex \@plus 3.3ex \@minus .2ex}{-1em}
  {\normalfont\normalsize\bfseries}
}
\newenvironment{myitemize}
{ \begin{itemize}
    \setlength{\itemsep}{0pt}
    \setlength{\parskip}{0pt}
    \setlength{\parsep}{0pt}     }
{ \end{itemize}                  } 
\newtheorem{theorem}{Theorem}
\newtheorem{openproblem}[theorem]{Open Problem}
\newtheorem{condition}{Condition}
\newtheorem{remark}{Remark}
\newtheorem{definition}[theorem]{Definition}
\newtheorem{lemma}[theorem]{Lemma}
\newtheorem{corollary}[theorem]{Corollary}
\newtheorem{fact}[theorem]{Fact}
\newtheorem{proposition}[theorem]{Proposition}
\theoremstyle{plain}
\crefname{condition}{Condition}{Condition}
\crefname{openproblem}{Open Problem}{Open Problem}
\crefname{fact}{Fact}{Fact}
\newcommand{\genericdomain}{\mathcal{X}}
\newcommand{\reals}{\mathbb{R}}
\newcommand{\R}{\mathbb{R}}
\newcommand{\E}{\mathbb{E}}
\newcommand{\sgn}{\mathrm{sign}}
\renewcommand{\P}{\mathcal{P}}
\newcommand{\poly}{\mathrm{poly}}
\newcommand{\SURJ}{\mathsf{ SURJ}}
\newcommand{\SDU}{\mathsf{ SDU}}
\newcommand{\SE}{\mathsf{ IST}}
\newcommand{\Surjectivity}{Surjectivity}
\newcommand{\DIST}{\mathsf{DIST}}
\newcommand{\dumSURJ}{\mathsf{dSURJ}}
\newcommand{\dumDIST}{\mathsf{dDIST}}
\newcommand{\OR}{\mathsf{ OR}}
\newcommand{\NOR}{\mathsf{ NOR}}
\newcommand{\THR}{\mathsf{THR}}
\newcommand{\bits}{\{-1,1\}}
\newcommand{\B}{\{0,1\}}
\newcommand{\promcomp}{G^{\le N}}
\newcommand{\ls}{\star}
\DeclareMathAlphabet{\mathpzc}{OT1}{pzc}{m}{it}
\DeclareMathAlphabet{\mathcal}{OMS}{cmsy}{m}{n}
\newcommand{\N}{\mathbb{N}}
\newcommand{\tO}{\tilde{O}}
\newcommand{\tOmega}{\tilde{\Omega}}
\newcommand{\tTheta}{\tilde{\Theta}}
\newcommand{\cX}{\mathcal{X}}
\newcommand{\eps}{\varepsilon}
\newcommand{\id}{{\mathds{1}}}
\newcommand{\ignore}[1]{}
\newcommand{\provisionallyremove}[1]{}
\newcommand{\Rcal}{\mathcal{R}}
\newcommand{\Scal}{\mathcal{S}}
\newcommand{\adeg}{\widetilde{\operatorname{deg}}}
\newcommand{\ubdeg}{\widetilde{\operatorname{ubdeg}}}
\newcommand{\dpdeg}{\widetilde{\operatorname{dpdeg}}}
\newcommand{\AND}{\mathsf{AND}}
\newcommand{\GAPAND}{\mathsf{GapAND}}
\renewcommand{\sgn}{\operatorname{sgn}}
\newcommand{\fr}{\operatorname{\#}}
\newcommand*{\itemequation}[3][]{
  \item
  \begingroup
    \refstepcounter{equation}
    \ifx\\#1\\%
    \else
      \label{#1}
    \fi
    \sbox0{#2}
    \sbox2{$\displaystyle#3\m@th$}
    \sbox4{ \@eqnnum}
    \dimen@=.5\dimexpr\linewidth-\wd2\relax
    \let\CenterInSpace=N
    \ifcase
        \ifdim\wd0>\dimen@
          \z@
        \else
          \ifdim\wd4>\dimen@
            \z@
          \else
            \@ne
          \fi
        \fi
      \let\CenterInSpace=Y
    \fi
    \ifdim\dimexpr\wd0+\wd2+\wd4\relax>\linewidth
      \@latex@warning{Equation is too large}
    \fi
    \noindent
    \rlap{\copy0}
    \ifx\CenterInSpace Y
      \rlap{\hbox to \linewidth{\kern\wd0\hss\copy2\hss\kern\wd4}}
    \else
      \rlap{\hbox to \linewidth{\hfill\copy2\hfill}}
    \fi
    \hbox to \linewidth{\hfill\copy4}
    \hspace{0pt}
  \endgroup
  \ignorespaces
}
\title{The Polynomial Method Strikes Back:\\ Tight Quantum Query Bounds via Dual Polynomials}
\author{
Mark Bun\\
Boston University\\
\textsf{mbun@bu.edu}
\and 
Robin Kothari\\
Microsoft Research\\
\textsf{robin.kothari@microsoft.com}
\and
Justin Thaler\\
Georgetown University\\
\textsf{justin.thaler@georgetown.edu}
}
\date{}
\begin{document}
\maketitle 


\begin{abstract}
The approximate degree of a Boolean function $f$ is the least degree of a real polynomial that approximates $f$ pointwise to error at most $1/3$. The approximate degree of $f$ is known to be a lower bound on the quantum query complexity of $f$ (Beals et al., FOCS 1998 and J.\ ACM 2001).

We resolve or nearly resolve the approximate degree and quantum query complexities of several basic functions. Specifically, we show the following:

\begin{itemize}
\item $k$-distinctness: For any constant $k$, the approximate degree and quantum query complexity of the $k$-distinctness function is $\Omega(n^{3/4-1/(2k)})$. 
This is nearly tight for large $k$, as Belovs (FOCS 2012) has shown that for any constant $k$, the approximate degree and quantum query complexity of $k$-distinctness is $O(n^{3/4-1/(2^{k+2}-4)})$.

\item Image Size Testing: The approximate degree and quantum query complexity of testing the size of the image of a function $[n] \to [n]$ is $\tilde{\Omega}(n^{1/2})$. This proves a conjecture of Ambainis et al. (SODA 2016), and it implies tight lower bounds on the approximate degree and quantum query complexity of the following natural problems. 
\begin{itemize} 
\item $k$-junta testing: A tight $\tilde{\Omega}(k^{1/2})$ lower bound for $k$-junta testing, answering the main open question of Ambainis et al. (SODA 2016).

\item Statistical Distance from Uniform: A tight $\tilde{\Omega}(n^{1/2})$ lower bound for approximating the statistical distance from uniform of a distribution,
answering the main question left open by Bravyi et al.\ (STACS 2010 and IEEE Trans.\ Inf.\ Theory 2011).

\item Shannon entropy: A tight $\tilde{\Omega}(n^{1/2})$ lower bound for approximating Shannon entropy up to a certain additive constant, answering a question of Li and Wu (2017). 
\end{itemize}

\item Surjectivity: The approximate degree of the Surjectivity function is $\tilde{\Omega}(n^{3/4})$. The best prior lower bound was $\Omega(n^{2/3})$. Our result matches an upper bound of $\tilde{O}(n^{3/4})$ due to Sherstov (STOC 2018), which we reprove using different techniques. The quantum query complexity of this function is known to be $\Theta(n)$ (Beame and Machmouchi, Quantum Inf. Comput. 2012 and Sherstov, FOCS 2015).

\end{itemize}

Our upper bound for Surjectivity introduces new techniques for approximating Boolean
functions by low-degree polynomials. Our lower bounds are proved by significantly refining techniques recently introduced by Bun and Thaler (FOCS 2017). 
\end{abstract}


\newpage 

\setcounter{tocdepth}{2}
\setlength{\cftbeforesecskip}{0.55em}
\tableofcontents

\newpage 
\section{Introduction}
\para{Approximate degree.}
The approximate degree of a Boolean function $f \colon \{-1, 1\}^n \to \{-1, 1\}$, denoted $\adeg(f)$, is the least degree of a real polynomial $p$ such that $|p(x)-f(x)| \leq 1/3$ for all $x \in \bits^n$. Approximate degree is a basic measure of the complexity of a Boolean function,
and has diverse applications throughout theoretical computer science.

Upper bounds on approximate degree are at the heart of the most powerful known learning
algorithms in a number of models~\cite{ksdnf, klivansservedioomb, kkms, servediotanthaler,
readonceformulae, colt, osnewbounds}, algorithmic approximations for the inclusion-exclusion principle~\cite{kahn,sherstovinclusion}, and algorithms for differentially private data release
\cite{difpriv1, difpriv2}. 
A recent line of work~\cite{tal2, tal1} has used approximate degree upper bounds to show new lower bounds on the formula and graph complexity of explicit functions.

Lower bounds on approximate degree
have enabled progress in several areas of complexity theory, including communication complexity~\cite{patmat, bvdw, comm1, comm2, comm3, comm4, comm6, comm7, comm8, sherstovsurvey}, circuit complexity~\cite{mp, sherstovmajmaj},
oracle separations~\cite{beigel, bchtv}, and secret-sharing~\cite{viola}. Most importantly for this paper, approximate degree lower bounds have been critical in shaping our understanding of \emph{quantum query complexity}~\cite{beals, qqc2, aaronsonshi},

In spite of the importance of approximate degree, major gaps remain in our understanding. In particular, the approximate degrees of many basic functions are still unknown. Our goal in this paper is to resolve the approximate degrees of many natural functions which had previously withstood characterization.

\para{Quantum query complexity.}
While resolving the approximate degree of basic functions of interest is a test of our understanding of approximate degree, it is also motivated by the study of quantum algorithms.
In the quantum query model, a quantum algorithm is given query access to the bits of an input $x$, and the goal is to compute some function $f$ of $x$ while minimizing the number of queried bits. Quantum query complexity captures much of the power of quantum computing, and most quantum algorithms were discovered in or can easily be described in the query setting. 

Approximate degree was one of the first general lower bound techniques for quantum query complexity. 
In 1998, Beals et al.~\cite{beals} observed that the bounded-error quantum query complexity of a function $f$ is lower bounded by (one half times) the approximate degree of $f$.
Since polynomials are sometimes easier to understand than quantum algorithms, this observation led to a number of new lower bounds on quantum query complexity. 
This method of proving quantum query lower bounds is called the \emph{polynomial method}. 

After several significant quantum query lower bounds were proved via the polynomial method (including the work of Aaronson and Shi~\cite{aaronsonshi}, who proved  optimal lower bounds for the Collision and Element Distinctness problems), the polynomial method took a back seat. Since then, the positive-weights adversary method \cite{Amb02,BSS03,LM04,Zha05} and the newer negative-weights adversary method \cite{negativeweights,adversary2,LMR+11} have become the tools of choice for proving quantum query lower bounds (with some notable exceptions, such as Zhandry's recent tight lower bound for the set equality problem \cite{perm1}). This leads us to our second goal for this work.

In this work, we seek to resolve several open problems in quantum query complexity using approximate degree as the lower bound technique.
A distinct advantage of proving quantum query lower bounds with the polynomial method is that any such bound can be ``lifted'' via Sherstov's \emph{pattern matrix method} \cite{patmat} to a quantum \emph{communication} lower bound (even with unlimited shared entanglement \cite{ls09rank}); such a result is not known for any other quantum query lower bound technique. More generally, using approximate degree as a lower bound technique for quantum query complexity has other advantages, such as the ability to show lower bounds for zero-error and small-error quantum algorithms~\cite{smallerrorquantum}, unbounded-error quantum algorithms~\cite{beals}, and time-space tradeoffs~\cite{KSdW07}.

\para{Quantum query complexity and approximate degree.} 
In this work we illustrate the power of the polynomial method by proving optimal or nearly optimal bounds on several functions studied in the quantum computing community. These results are summarized in \Cref{tab:summary}, and definitions of the problems considered can be found in \Cref{sec:results}.
Since the upper bounds for these functions were shown using quantum algorithms, our results resolve both the quantum query complexity and approximate degree of these functions.

\setlength{\tabcolsep}{9pt}
\renewcommand{\arraystretch}{1.3}
\begin{table}[t]
\centering
{\small
\begin{tabular}{ l l l l }
\hline
Problem & Best Prior Upper Bound & Our Lower Bound & Best Prior Lower Bound \\
\hline
& & &\\[-12pt]
$k$-distinctness 
& $O(n^{3/4-1/{(2^{k + 2} - 4)}})$ \cite{belovs} 
& $\tilde{\Omega}(n^{3/4-{1}/{(2k)}})$ 
& $\tilde{\Omega}(n^{2/3})$ \cite{aaronsonshi}\\[1pt]
Image Size Testing 
& $O(\sqrt{n} \log n)$ \cite{juntatesting} 
& $\tilde{\Omega}(\sqrt{n})$ 
& $\tilde{\Omega}(n^{1/3})$ \cite{juntatesting} \\
$k$-junta Testing 
& $O(\sqrt{k} \log k)$ \cite{juntatesting} 
& $\tilde{\Omega}(\sqrt{k})$ 
& $\tilde{\Omega}(k^{1/3})$ \cite{juntatesting} \\
$\SDU$ 
& $O(\sqrt{n})$ \cite{bravyi} 
& $\tilde{\Omega}(\sqrt{n})$
& $\tilde{\Omega}(n^{1/3})$ \cite{bravyi, aaronsonshi} \\
Shannon Entropy 
& $\tilde{O}(\sqrt{n})$ \cite{bravyi, entropy} 
& $\tilde{\Omega}(\sqrt{n})$ 
& $\tilde{\Omega}(n^{1/3})$ \cite{entropy} \\[4pt]
\hline
\end{tabular}
\caption{Our lower bounds on quantum query complexity and approximate degree vs.~prior work.}
\label{tab:summary}
}
\end{table}

For most of the functions studied in this paper, the positive-weights adversary bound provably cannot show optimal lower bounds due to the certificate complexity barrier~\cite{Zha05, spalekszegedy} and the property testing barrier~\cite{negativeweights}. 
While these barriers do not apply to the negative-weights variant (which is actually capable for proving tight quantum query lower bounds for \emph{all} functions~\cite{adversary2,LMR+11}), the negative-weights adversary method is often challenging to apply to specific problems, and the problems we consider have withstood characterization for a long time.

\medskip
For the functions presented in \Cref{tab:summary}, the approximate degree and quantum query complexity are essentially the same. 
This is not the case for the Surjectivity function, which has played an
important role in the literature on approximate degree and quantum query complexity.
Specifically, Beame and Machmouchi~\cite{beame} showed that Surjectivity has quantum query complexity $\tilde{\Theta}(n)$. On the other hand, Sherstov recently showed that Surjectivity has approximate degree $\tilde{O}(n^{3/4})$ \cite{sherstovpersonal}. 
Surjectivity is the only known example of a ``natural'' function separating approximate degree from quantum query complexity; prior examples of such functions~\cite{ambainissep, cheatsheets} were contrived, and (unlike Surjectivity) specifically constructed to separate the two measures.

Our final result gives a full characterization of the approximate degree of Surjectivity. We prove a new lower bound of $\tilde{\Omega}(n^{3/4})$, which matches Sherstov's upper bound up to logarithmic factors. We also give a new construction of an approximating polynomial of degree $\tilde{O}(n^{3/4})$, using very different techniques than \cite{sherstovpersonal}. We believe that our proof
of this $\tilde{O}(n^{3/4})$ upper bound is of independent interest. In particular,
our lower bound proof for Surjectivity is specifically tailored to showing 
optimality of our upper bound construction, in
a sense that can be made formal via complementary slackness. 
We are optimistic that our approximation techniques will be
useful for showing additional tight approximate degree bounds in the future.

\setlength{\tabcolsep}{8pt}
\renewcommand{\arraystretch}{1.3}
\begin{table}
\centering
{\small
\begin{tabular}{ l l l l l }
\hline
Problem  & Prior Upper Bound & Our Upper Bound & Our Lower Bound & Prior Lower Bound\\
\hline
& & & &\\[-12pt] 
Surjectivity & $\tO(n^{3/4})$ \cite{sherstovpersonal} & $\tO(n^{3/4})$ &$\tOmega(n^{3/4})$ & $\tilde{\Omega}(n^{2/3})$ \cite{aaronsonshi} \\[3pt]
\hline
\end{tabular}
\caption{Our bounds on the approximate degree of Surjectivity vs. prior work.}
\label{tab:summary2}
}
\end{table}

\subsection{Our Results}
\label{sec:results}

We now describe our results and prior work on these functions in more detail.

\subsubsection{Functions Considered}
We now informally describe the functions studied in this paper. These functions are formally defined in \Cref{sec:functions}.

Let $R$ be a power of two and $N \geq R$, and let $n=N \cdot \log_2 R$.
Most of the functions that we consider interpret their inputs in $\bits^n$ as a list of $N$ numbers from a range $[R]$, and determine whether this list satisfies various natural properties. 
We let the \emph{frequency} $f_i$ of range item $i \in R$ denote the number of times $i$ appears in the input list.

In this paper we study the following functions in which the input is $N$ numbers from a range $[R]$:
\begin{myitemize}
    \item Surjectivity ($\SURJ$): Do all range items appear at least once?
    \item $k$-distinctness: Is there a range item that appears $k$ or more times?
    \item Image Size Testing: Decide if all range items appear at least once or if at most $\gamma \cdot R$ range items appear at least once, under the promise that one of these is true.
    \item Statistical distance from uniform ($\SDU$): Interpret the input as a probability distribution $p$, where $p_i = f_i/N$. Compute the statistical distance of $p$ from the uniform distribution over $R$ up to some small additive error $\eps$.
    \item Shannon entropy: Interpret the input as a probability distribution $p$, where $p_i = f_i/N$. Compute the Shannon entropy $\sum_{i \in R} p_i \cdot \log(1/p_i)$ of $p$ up to additive error $\eps$.
\end{myitemize}
An additional function we consider that does not fit neatly into the framework above
is $k$-junta testing. 
\begin{myitemize}
    \item $k$-junta testing: Given an input in $\bits^n$ representing the truth table of a function $\bits^{\log n} \to \bits$, determine whether this function depends on at most $k$ of its input bits, or is at least $\eps$-far from any such function. 
\end{myitemize}

We resolve or nearly resolve the quantum query complexity and/or approximate degree of all of the functions above. Our lower bounds for $\SURJ$,
$k$-distinctness, Image Size Testing, $\SDU$, and entropy approximation all require $N$ to be ``sufficiently larger'' than $R$, by a certain constant  factor. For simplicity, throughout this introduction we do not make this requirement explicit, and for this reason we label
the theorems in this introduction informal.

\subsubsection{Results in Detail}

\para{Surjectivity.} In the Surjectivity problem we are given $N$ numbers from $[R]$ and must decide if every range item appears at least once in the input.

The quantum query complexity of this problem was studied by Beame and Machmouchi~\cite{beame}, who proved a lower bound of $\tOmega(n)$, which was later improved by Sherstov to the optimal $\Theta(n)$~\cite{sherstov15}.  Beame and Machmouchi~\cite{beame} explicitly leave open the question of characterizing the approximate degree of Surjectivity.
Recently, Sherstov~\cite{sherstovpersonal} showed an upper bound of $\tO(n^{3/4})$ on the approximate degree of this function.
The best prior lower bound was $\tilde{\Omega}(n^{2/3})$ \cite{aaronsonshi, adegsurj}. 

We give a completely different construction of an approximating polynomial for Surjectivity with degree $\tO(n^{3/4})$. We also prove a matching lower bound, which shows that the approximate degree of the Surjectivity function is $\tilde{\Theta}(n^{3/4})$. 

\begin{theorem}[Informal] \label{thm:introsurj} The approximate degree of $\SURJ$ is $\tilde{\Theta}(n^{3/4})$.
\end{theorem}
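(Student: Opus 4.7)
The plan is to establish matching bounds $\tilde{O}(n^{3/4})$ and $\tilde{\Omega}(n^{3/4})$ on $\adeg(\SURJ)$, with the two proofs coordinated via linear programming duality, as hinted by the authors' remark about complementary slackness.

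For the upper bound, I would start from the decomposition $\SURJ(x) = \AND_R\bigl(y_1(x), \dots, y_R(x)\bigr)$ with $y_i(x) = \OR_N\bigl([x_1 = i], \ldots, [x_N = i]\bigr)$. A naive block composition of the Nisan--Szegedy approximations for $\AND_R$ (degree $\sqrt{R}$) and $\OR_N$ (degree $\sqrt{N}$) would yield degree $\tilde{O}(\sqrt{RN})$, which is too large. The new ingredient to exploit is the frequency constraint $\sum_i f_i = N$, which forces most range symbols to be rare. I would split the outer $\AND$ into two sub-products: one indexed by (few) ``heavy'' range symbols, handled by a dedicated low-degree approximation; and one indexed by (many) ``light'' range symbols, where each inner $\OR$ must detect occurrences among a small expected number of positions and so admits a degree substantially below $\sqrt{N}$. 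A careful choice of the heavy/light threshold, combined with a composition of the approximants that exploits the averaging of $|f_i|$ across blocks, should balance the inner and outer degree costs and yield $\tilde{O}(n^{3/4})$.

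For the lower bound, the plan is to exhibit a dual witness $\psi \colon \bits^n \to \R$ with $\|\psi\|_1 = 1$, correlation at least $2/3$ with $\SURJ$, and pure high-degree $\tilde{\Omega}(n^{3/4})$, i.e.\ $\sum_x \psi(x) p(x) = 0$ for every polynomial $p$ below that degree. Following and refining the Bun--Thaler (FOCS 2017) framework, I would build $\psi$ as a \emph{dual block composition}: combine a dual witness for the outer $\AND_R$ (pure degree $\sqrt{R}$) with dual witnesses for an inner ``absence-detection'' gadget on each block whose pure degree matches the inner approximation degree used in the upper bound. Because the $R$ blocks are not genuinely independent---they share mass via $\sum_i f_i = N$---I would first prove the lower bound for a symmetric surrogate (a ``dual $\SURJ$'' living on frequency vectors, in the spirit of $\dumSURJ$), then transfer it back to $\SURJ$ via a symmetrization/reduction argument that converts the symmetric dual witness into one on $\bits^n$.

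The main obstacle will be the lower bound, specifically ensuring that the dual block composition simultaneously preserves both the pure high-degree and the correlation with $\SURJ$ at the $n^{3/4}$ threshold. By LP duality, the right dual witness is essentially dictated by the structure of the optimal approximating polynomial from the upper bound: complementary slackness predicts where $\psi$ must be supported and how it should distribute its mass between heavy and light configurations of the frequency vector. Handling these constraints when combining dual polynomials of the building blocks---and controlling the correlation loss incurred in transferring from the symmetric surrogate back to $\SURJ$---are the technically delicate steps where a direct application of the Bun--Thaler machinery is insufficient and must be refined; I expect this to be where most of the work lives.
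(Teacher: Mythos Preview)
Your proposal is in the right spirit but misses the key technical ideas on both sides.

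For the upper bound, the heavy/light split as you describe it cannot work: the set of heavy range symbols depends on the input $x$, so you cannot write the outer $\AND$ as a product over a \emph{fixed} partition into heavy and light indices. The paper's construction handles this adaptively in a way you do not anticipate. First, it observes that $\OR_N$ restricted to inputs of Hamming weight at most $T$ admits an approximation of degree $O(\sqrt{T})$ (rather than $\sqrt{N}$), so $\AND_R \circ \OR_N$ restricted to inputs where \emph{every} block has weight at most $T$ can be approximated in degree $\tilde{O}(\sqrt{RT})$ --- but this polynomial blows up, potentially exponentially, outside the promise. Second, it \emph{samples} $S = \tilde{\Theta}(N^{3/4})$ input positions and removes from $[R]$ every symbol seen; with high probability this eliminates all symbols of frequency exceeding $T = \tilde{\Theta}(\sqrt{N})$, reducing to the promise case on the remaining symbols. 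The final polynomial averages over all possible samples, and a careful accounting shows that the blow-up on rare ``bad'' samples (where some high-frequency symbol survives) is dominated by the exponentially small probability of such samples. Without the sampling-plus-averaging idea, it is not clear how to turn your heavy/light intuition into an actual polynomial.

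For the lower bound, your plan follows the Bun--Thaler framework (reduce $\SURJ$ to $\AND_R \circ \OR_N$ on inputs of Hamming weight at most $N$, dual-block-compose, then zero out high-weight mass), but you do not identify what must be refined to improve their $\tilde{\Omega}(n^{2/3})$ to $\tilde{\Omega}(n^{3/4})$. The crucial new observation is that the dual witness $\gamma$ for $\OR_N$ satisfies an \emph{exponential} --- not merely inverse-polynomial --- layered decay: the mass $|\gamma|$ places on Hamming layer $t$ is at most $\exp(-\Omega(t/\sqrt{T}))/t^2$. This stronger tail bound is precisely what allows the zeroing-out step to succeed at pure high degree $\tilde{\Omega}(n^{3/4})$ rather than $\tilde{\Omega}(n^{2/3})$; it is the dual manifestation of the ``blow-up outside the promise'' in the upper bound. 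Your appeal to complementary slackness is the right instinct, but without naming this exponential decay property your plan would only reproduce the $n^{2/3}$ bound of Bun--Thaler.
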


\para{$k$-distinctness.} In this problem, we are given $N$ numbers in $[R]$ and must decide if any range item appears at least $k$ times in the list (i.e., is there an $i\in [R]$ with $f_i \geq k$?). This generalizes the well-studied Element Distinctness problem, which is the same as $2$-distinctness.

Ambainis~\cite{ambainis} first used quantum walks to give an $O(n^{k/(k+1)})$ upper bound on the quantum query complexity of any problem with certificates of size $k$, including $k$-distinctness and $k$-sum.\footnote{In the $k$-sum problem, we are given $N$ numbers in $[R]$ and asked to decide if any $k$ of them sum to $0 \pmod R$.}
Later, Belovs introduced a beautiful new framework for designing quantum algorithms~\cite{belovsframework} and used it to improve the upper bound for $k$-distinctness to $O(n^{3/4-1/(2^{k + 2} - 4)})$ \cite{belovs}. 
Several subsequent works have used Belovs' $k$-distinctness algorithm as a black-box subroutine for solving more complicated problems (e.g., \cite{entropy, montanaro}).

As for lower bounds, Aaronson and Shi \cite{aaronsonshi} established an $\tilde{\Omega}(n^{2/3})$ lower bound on the approximate degree of $k$-distinctness for any $k\geq 2$. Belovs and \v{S}palek used the adversary method to prove a lower bound of $\Omega(n^{k/(k+1)})$ on the quantum query
complexity of $k$-sum, showing that Ambainis' algorithm is tight for $k$-sum. 
They asked whether their techniques can prove an $\omega(n^{2/3})$ quantum query lower bound for $k$-distinctness. 
We achieve this goal, but using the polynomial method instead of the adversary method.
Our main result is the following:

\begin{theorem}[Informal]  \label{thm:introkdist} For any $k\geq 2$, the approximate degree and quantum query complexity of $k$-distinctness 
is $\tilde{\Omega}(n^{3/4-1/(2k)})$. \end{theorem}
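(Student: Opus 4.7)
The plan is to construct an explicit dual polynomial (dual witness) $\psi \colon [R]^N \to \mathbb{R}$ for $k$-distinctness satisfying (i) $\|\psi\|_1 = 1$, (ii) $\psi$ is orthogonal to every polynomial of degree below $d := \tilde{\Omega}(n^{3/4 - 1/(2k)})$, and (iii) the correlation of $\psi$ with $k$-distinctness is bounded away from zero. The existence of such a $\psi$ implies the stated approximate-degree, and hence quantum query, lower bound by LP duality. Since the target exponent $3/4 - 1/(2k)$ is below $2/3$ for small constant $k$, where Aaronson--Shi already yields $\tilde{\Omega}(n^{2/3})$, it suffices to treat $k$ at least some absolute constant.

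The first step is to reduce to a block-composed surrogate in the indicator representation. Encode each input symbol $x_i \in [R]$ as its unit vector in $\{0,1\}^R$, and group the resulting $NR$ bits into $R$ blocks of length $N$, where the $r$-th block records which inputs equal $r$. On \emph{valid} indicator configurations (exactly one range value per input symbol), $k$-distinctness coincides with the block composition $\OR_R \circ \THR_{k,N}$, where $\THR_{k,N}$ accepts iff its $N$ input bits have Hamming weight at least $k$. A known dual witness for $\THR_{k,N}$ achieves pure high degree $\tilde{\Theta}(\sqrt{kN})$, and Sherstov-style dual block composition with a dual witness for $\OR_R$ yields a dual witness on the full hypercube $\{0,1\}^{NR}$ of pure high degree $\tilde{\Omega}(\sqrt{R}\cdot \sqrt{kN}) = \tilde{\Omega}(\sqrt{RkN})$ for this surrogate. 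Balancing $R \approx n^{1/2 - 1/k}$ and $N \approx n/(R \log R)$ brings this quantity to the target $\tilde{\Omega}(n^{3/4 - 1/(2k)})$.

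The crux of the argument is then converting the hypercube dual witness into one supported on valid indicator configurations only (those corresponding to actual functions $[N] \to [R]$). This ``promise-to-total'' step is precisely where the refined Bun--Thaler (FOCS 2017) techniques enter. A naive multiplication by the indicator of validity followed by renormalization destroys orthogonality to low-degree polynomials. Instead, I would exploit the block symmetry of the constructed witness (symmetry within each block under permutations of the $N$ indices) and subtract a correction term supported on invalid configurations, chosen so that the resulting signed measure still has bounded $\ell_1$ norm, still correlates with $k$-distinctness on valid inputs, and still has pure high degree $d$. Concentration estimates on the number of ``collisions'' under a natural product distribution on indicator bits should ensure that invalid configurations carry only a small fraction of the total $\ell_1$ mass, making the correction affordable.

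The hardest step will be this last one: engineering the correction so that pure high degree is preserved. Ad hoc re-weightings inevitably reintroduce low-degree moments, so the correction itself must lie in the appropriate low-degree-orthogonal subspace and respect the block symmetry. Bun--Thaler provides a blueprint for the special case of Surjectivity (corresponding roughly to threshold parameter on the order of $N/R$); the task here is to adapt that blueprint to arbitrary $k$, which requires tracking the subtle interaction between the threshold structure inside each block and the global validity constraint across blocks. The proof should conclude with a careful parameter optimization balancing the pure high degree of the inner threshold dual witness, the outer OR amplification factor, the $\ell_1$ mass of the invalid-configuration correction, and the slackness needed to preserve low-degree orthogonality.
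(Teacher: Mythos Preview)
Your plan has the right skeleton (dual block composition followed by a Bun--Thaler-style correction), but the parameter accounting and the choice of inner dual witness both fail in ways that are essential, not cosmetic.

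First, your arithmetic in the balancing step is simply wrong. With $n = N \log R$, any split of the form $R \approx n^{1/2-1/k}$, $N \approx n/(R\log R)$ gives $RN = \tilde{\Theta}(n)$, and hence $\sqrt{RkN} = \tilde{\Theta}(\sqrt{n})$, not $n^{3/4-1/(2k)}$. The way the paper gets an exponent above $1/2$ is to take $N = \Theta_k(R)$, so the indicator hypercube has $NR \approx n^2$ bits, and then work under the Hamming-weight-$\le N$ promise (which is what the reduction via the dummy item actually yields---not ``exactly one $1$ per column'' as you wrote).

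Second, and more fundamentally, you cannot use the standard $\tilde{\Theta}(\sqrt{kN})$ dual witness for $\THR_{k,N}$. That witness spreads its mass over Hamming weights up to $\Theta(N)$, so after dual block composition the mass of $\Phi\star\psi$ sits near total Hamming weight $\Theta(NR)$, and essentially none of it survives the restriction to $H^{NR}_{\le N}$; the ``correction term'' you sketch would have to cancel the entire witness. The paper instead constructs an inner dual $\omega$ for $\THR^k_N$ that is supported on Hamming weights at most $T = \Theta_k(\sqrt{R})$ and satisfies an exponential decay bound $|\omega(t)| \le (2k)^k \exp(-\Omega(t/\sqrt{kTN^{1/k}}))/t^2$; this is exactly what makes the zeroing-out affordable. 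Crucially, $\omega$ must also satisfy the one-sided error bound $\sum_{t\ge k,\ \omega(t)>0}|\omega(t)| \le O(1/N)$, because the outer function is $\OR_R$ and the composed dual's correlation with $\OR_R\circ\THR^k_N$ degrades by an $R$-factor on false positives of the inner dual. Enforcing this one-sided condition forces the support points of $\omega$ above $k$ to be spaced by $\Theta(N^{1/k})$, which is precisely what cuts the inner pure high degree down to $\tilde{\Theta}(\sqrt{T\cdot N^{-1/k}})$ and produces the $-1/(2k)$ in the final exponent. Your proposal has no mechanism for this; without it, the composed dual will not be well-correlated with $\OR_R\circ\THR^k_N$ even before the promise restriction.
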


This is nearly tight for large $k$, as it approaches Belovs' upper bound of $O(n^{3/4-1/(2^{k + 2} - 4)})$. 
Note that both bounds approach $\Theta(n^{3/4})$ as $k\to \infty$.
It remains an intriguing open question to close the gap between $n^{3/4-1/(2^{k + 2} - 4)}$ and $n^{3/4-1/(2k)}$, especially for small values of $k \geq 3$. 

Our $k$-distinctness lower bound also implies an $\tilde{\Omega}(n^{3/4-1/(2k)})$ lower bound on the quantum query complexity of approximating the maximum frequency, $F_{\infty}$, of any element up to relative error less than $1/k$ \cite{montanaro}, improving over the previous best bound of $\tilde{\Omega}(n^{2/3})$.

\para{Image Size Testing.} In this problem, we are given $N$ numbers in $[R]$ and $0<\gamma<1$, and must decide if every range item appears at least once or if at most $\gamma\cdot R$ range items appear at least once. 
We show for any $\gamma > 0$, the problem has approximate degree and quantum query complexity $\tilde{\Omega}(\sqrt{n})$. 
This holds as long as $N=c\cdot R$ for a certain constant $c>0$.

\begin{theorem}[Informal]  \label{thm:imagesize} The approximate degree and quantum query complexity of Image Size Testing
is $\tilde{\Omega}(\sqrt{n})$. 
\end{theorem}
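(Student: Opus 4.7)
The plan is to prove $\adeg(\mathrm{ISSz}) = \tOmega(\sqrt{n})$ by constructing an explicit dual polynomial witness, building on a refinement of the framework of Bun and Thaler (FOCS 2017). First I would exploit the natural compositional structure of Image Size Testing. For each $j \in [N]$ and $i \in [R]$, let $\sigma_{j,i} = \mathds{1}[\text{input } j = i]$; then the appearance indicator $h_i := \mathsf{OR}(\sigma_{1,i}, \ldots, \sigma_{N,i})$ equals $1$ iff range element $i$ lies in the image, and $\mathrm{ISSz}$ is precisely $\GAPAND_R$ applied to $(h_1, \ldots, h_R)$---restricted, however, to the ``promise domain'' of $N \times R$ matrices $\sigma$ with exactly one $1$ per row.

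Next, I would construct a dual witness for the ``dummy'' variant $\mathrm{dISSz}$ defined as $\GAPAND_R \circ \mathsf{OR}_N$ on all of $\{0,1\}^{N \times R}$ without the row-constraint. Dual witnesses of pure high degree $\Omega(\sqrt{R})$ for $\GAPAND_R$ and of pure high degree $\Omega(\sqrt{N})$ for $\mathsf{OR}_N$ are standard, and Sherstov's dual block composition combines them into a dual witness $\psi_{\mathrm{dummy}}$ for $\mathrm{dISSz}$ with pure high degree $\Omega(\sqrt{RN})$ and constant correlation with $\mathrm{dISSz}$.

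The central step---and the principal refinement over Bun-Thaler---is to lift $\psi_{\mathrm{dummy}}$ to a dual witness $\psi_{\mathrm{real}}$ for $\mathrm{ISSz}$, supported only on valid matrices $\sigma$. Following the Bun-Thaler program, I would reshape $\psi_{\mathrm{dummy}}$ by symmetrizing over the coordinate permutations that act on rows and columns of $\sigma$, then zero out mass on invalid inputs via a carefully chosen low-degree correction term, preserving correlation with $\mathrm{ISSz}$ and losing at most polylogarithmically in pure high degree. Finally, I would convert the lower bound from the $NR$-bit unary representation to the standard $N \log R$-bit binary representation: each $\sigma_{j,i}$ is a monomial in the $\log R$ bits encoding input $j$, so pullback costs at most a $\log R$ factor in degree, yielding the desired $\tOmega(\sqrt{n})$ bound; the quantum query complexity lower bound then follows from Beals et al.

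The hard part is the lifting step. The dummy witness spreads its $\ell_1$ mass over arbitrary Boolean matrices, while $\mathrm{ISSz}$ is only meaningful on matrices representing valid functions $[N] \to [R]$, so a naive restriction destroys either orthogonality to low-degree polynomials or correlation with $\mathrm{ISSz}$. Engineering the reweighting so that both properties survive with only polylogarithmic degree loss is precisely where the refinement of the Bun-Thaler techniques is needed, and this step demands the most care.
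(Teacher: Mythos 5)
Your high-level architecture (view Image Size Testing as $\GAPAND_R \circ \OR_N$ on a promise domain, build a dual witness for the unrestricted composition, then ``lift'' it to the promise) matches the paper's, but the quantitative core of your Step 2 is wrong, and the error propagates into an impossible Step 3.

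First, there is no dual witness of pure high degree $\Omega(\sqrt{R})$ for $\GAPAND_R$: distinguishing ``all $R$ inputs are $-1$'' from ``at most $\gamma R$ inputs are $-1$'' is approximate counting with a constant multiplicative gap, so $\GAPAND_R$ has approximate degree $O(1)$ for constant $\gamma$, and by LP duality no dual witness with superconstant pure high degree and constant correlation can exist. In the paper the outer dual witness $\Phi$ (supported only on the all-$(-1)$ and all-$(+1)$ strings) has pure high degree just $1$; the entire $\tOmega(\sqrt{R})$ comes from the \emph{inner} $\OR_N$ witness, instantiated with decay threshold $T = N = \Theta(R)$ so that its pure high degree is $\Omega(\sqrt{\delta N}) = \Omega(\sqrt{R})$. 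Second, your claimed pure high degree $\Omega(\sqrt{RN}) = \Omega(R)$ for the composition, combined with a lifting step that loses ``at most polylogarithmically,'' would yield $\adeg(\mathrm{ISSz}) = \tOmega(n)$ — contradicting the $O(\sqrt{n}\log n)$ upper bound of Ambainis et al.\ that this theorem is meant to match. So the lifting step as you describe it provably cannot exist, independent of how carefully it is engineered.

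Two further issues you do not address. (a) The outer function is \emph{partial}: the dual witness must place negligible $\ell_1$ mass on inputs $(x_1,\dots,x_R)$ where $(\OR_N(x_1),\dots,\OR_N(x_R))$ has between $\gamma R$ and $R-1$ coordinates equal to $-1$, i.e.\ outside the domain of $\GAPAND_R^{\gamma}$. The paper gets this from the one-sided error of any $\OR$ dual witness (Gavinsky--Sherstov) together with a Chernoff bound, and this is exactly why $\Phi$ is chosen to be supported on only the two extreme points. (b) The reduction from the list problem to the composed function is not a permutation-symmetrization you need to invent; it is the Ambainis/Bun--Thaler symmetrization theorem, which reduces to the composition restricted to total Hamming weight at most $N$ (not to ``exactly one $1$ per row''). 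The genuinely new work in the paper is then the zeroing-out of mass on Hamming weight $> N$ while keeping $N = O(R)$ — achieved via a refined combinatorial tail bound exploiting the exponential decay of the $\OR$ dual witness — not a reweighting over matrix permutations.
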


This lower bound is tight, matching a quantum algorithm of Ambainis, Belovs, Regev, and de Wolf \cite{juntatesting}, and resolves a conjecture from their work. The previous best lower bound was $\Omega(n^{1/3})$ \cite{juntatesting} obtained via reduction to the Collision lower bound \cite{aaronsonshi}. The classical query complexity of this problem is $\Theta(n/\log n)$~\cite{VV11}.

The version of image size testing we define is actually a special case of the one studied in \cite{juntatesting}. The version we define is solvable via the following simple algorithm making $O(\sqrt{n})$ queries: pick a random range item, and Grover search for an instance of that range item.  The fact that our lower bound holds even for this special case of the problem considered in prior works obviously only makes our lower bound stronger.

This lower bound also serves as a starting point to establish the next three lower bounds.

\para{$k$-junta Testing.} In this problem, we are given the truth table of a Boolean function and have to determine if the function depends on at most $k$ variables or if it is $\epsilon$-far from any such function. 

The best classical algorithm for this problem uses $O(k \log k+k/\eps)$ queries~\cite{Bla09}. The problem was first studied in the quantum setting by  At{\i}c{\i} and Servedio \cite{roccoquantum}, who gave a quantum algorithm making $O(k/\eps)$ queries. This was later improved by Ambainis et al.~\cite{juntatesting} to $\tilde{O}(\sqrt{k/\eps})$. They also proved a lower bound of $\Omega(k^{1/3})$. 
Via a connection established by Ambainis et al., our image size testing lower bound implies a $\tilde{\Omega}(\sqrt{k})$ lower bound on the approximate degree and quantum query complexity of $k$-junta testing (for some $\eps=\Omega(1)$). 

\begin{theorem}[Informal]  \label{thm:juntatesting} The approximate degree and quantum query complexity of $k$-junta testing 
is $\tilde{\Omega}(\sqrt{k})$. \end{theorem}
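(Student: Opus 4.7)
The plan is to derive this theorem directly from our image size testing lower bound (\Cref{thm:imagesize}) via a reduction due to Ambainis, Belovs, Regev, and de Wolf~\cite{juntatesting}. They establish that a $k$-junta tester on Boolean functions over $\Theta(\log k)$ variables can, with only constant-factor overhead in queries, be used to solve image size testing on a domain $[N] \to [R]$ with $R = \Theta(k)$ and $N = \Theta(R)$. Once this reduction is in hand, instantiating \Cref{thm:imagesize} at input length $n = N \log_2 R = \Theta(k \log k)$ immediately yields the stated $\tilde{\Omega}(\sqrt{k})$ lower bound (absorbing the $\sqrt{\log k}$ factor into the $\tilde{\Omega}$).

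The reduction of~\cite{juntatesting} is essentially syntactic: it maps an image size instance $x : [N] \to [R]$ to the truth table of a Boolean function $h$ on $\Theta(\log k)$ variables in such a way that each entry of $h$'s truth table can be computed as a simple local function of $x$, depending on at most one value of $x$. Thus every quantum query to $h$ is simulated by $O(1)$ quantum queries to $x$, which transfers quantum query lower bounds as argued in~\cite{juntatesting}. Crucially for our purposes, the locality of the encoding also makes the reduction polynomial-preserving: a real polynomial of degree $d$ that $1/3$-approximates the $k$-junta testing predicate of $h$ composes cleanly with the encoding to produce a real polynomial of degree $O(d)$ that $1/3$-approximates the image size testing predicate of $x$. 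This transfers our approximate degree lower bound, not just the quantum query complexity lower bound.

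The main thing to verify is that the parameters of the image size testing instance produced by the reduction fall within the regime in which \Cref{thm:imagesize} applies---namely, that the required relationship between $N$ and $R$ (the reduction fixes $N$ to be a specific constant times $R$) and the promise gap $\gamma$ are compatible with the assumptions under which we proved \Cref{thm:imagesize}. Since \Cref{thm:imagesize} holds whenever $N$ exceeds $R$ by a sufficiently large constant factor and for any fixed $\gamma < 1$, and since the reduction affords flexibility in choosing these constants, this check is routine. The hardest part of the proof---the image size testing lower bound itself---has already been done in \Cref{thm:imagesize}, and no new lower bound machinery is required here.
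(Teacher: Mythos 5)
Your proposal matches the paper's proof: the theorem is obtained as an immediate corollary of the Image Size Testing lower bound (\Cref{sethm}) via the reduction of Ambainis et al.~\cite{juntatesting}, with the parameter regime $N = O(R)$ and small constant $\gamma$ checked to be compatible. The extra detail you give about the reduction being local and hence polynomial-preserving is consistent with how the paper invokes it for approximate degree as well as quantum query complexity.
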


This matches the upper bound of~\cite{juntatesting}, resolving the main open question from their work.

\para{Statistical Distance From Uniform $(\SDU)$.} In this problem, we are given $N$ numbers in $[R]$, which we interpret 
as a probability distribution $p$, where $p_i=f_i/N$, the fraction of times $i$ appears.
The goal is to compute the statistical distance between $p$ and the uniform distribution to error $\eps$.

This problem was studied by Bravyi, Harrow, and Hassidim \cite{bravyi}, who gave an $O(\sqrt{n})$-query quantum algorithm approximating the statistical distance between
 two input distributions to additive error $\eps=\Omega(1)$.
We show that the approximate degree and quantum query complexity of this task are $\tilde{\Omega}(\sqrt{n})$, even when one of the distributions is known to be the uniform distribution. 

\begin{theorem}[Informal]  \label{thm:sdu} There is a constant $c > 0$ such that the approximate degree and quantum query complexity of approximating the statistical distribution of a distribution over a range of size $n$ from the uniform distribution over the same range to additive error $c$ is 
is $\tilde{\Omega}(\sqrt{n})$. 
\end{theorem}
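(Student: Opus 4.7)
The plan is to derive this bound by reduction from \Cref{thm:imagesize}. The key observation is that on an input in which each of the $R$ range items appears either zero times or exactly $N/S$ times (with $S$ the image size), the induced distribution $p_i = f_i/N$ is uniform over its support, and so its statistical distance from the uniform distribution on $[R]$ is exactly $1 - S/R$. Restricting attention to such ``uniform-on-image'' inputs therefore aligns $\SDU$ with Image Size Testing: YES instances ($S = R$) satisfy $\SDU = 0$, while NO instances with $S \le \gamma R$ satisfy $\SDU = 1 - \gamma$. Fixing $\gamma = 1/2$ gives a gap of $1/2$ between the two cases, so any constant $c < 1/4$ suffices for the theorem.

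Given this alignment, the reduction is routine. If $q$ is a degree-$d$ polynomial approximating $\SDU$ pointwise to additive error $c < 1/4$ on all inputs, then in particular $q$ takes values in $[-c,c]$ on YES inputs and in $[1/2-c,1/2+c]$ on NO inputs of the uniform-on-image form, so a suitable affine rescaling of $q$ approximates the Image Size Testing indicator pointwise to error at most $1/3$ on this restricted domain. Combined with \Cref{thm:imagesize}, this forces $d = \tilde\Omega(\sqrt n)$, and the matching quantum query lower bound follows from the polynomial method of Beals et al.

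The main obstacle is ensuring that the $\tilde\Omega(\sqrt n)$ lower bound of \Cref{thm:imagesize} remains valid when attention is restricted to uniform-on-image inputs, or equivalently that a dual witness for Image Size Testing can be taken to be supported on such inputs. I expect this to go through because the natural dual construction for Image Size Testing is invariant under permutations of the range $[R]$, which forces the dual witness to depend only on the multiset of frequencies $\{f_i\}_{i \in [R]}$. One can then choose this multiset to be exactly $\{N/R, \dots, N/R\}$ on the YES side and $\{2N/R, \dots, 2N/R, 0, \dots, 0\}$ on the NO side, which is precisely the uniform-on-image form. Should the proof of \Cref{thm:imagesize} instead produce a dual witness that mixes over richer frequency profiles, the small slack in the resulting $\SDU$ values (which is Lipschitz in the frequency vector, with constant $O(1/N)$) can be absorbed by slightly shrinking $\gamma$ or by averaging, since the inputs contributing the mass of the dual witness still have $\SDU$ separated by a constant.
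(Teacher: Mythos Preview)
Your high-level idea---reduce from Image Size Testing---is the same as the paper's, but your primary argument has a real gap, and your fallback, while closer to what the paper does, misses the key quantitative point.

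The flaw in your primary argument: permutation invariance over the range $[R]$ means the dual witness \emph{value} at an input depends only on the multiset $\{f_i\}$, but it does \emph{not} mean the dual witness is \emph{supported} on a single multiset on each side. The dual witness the paper builds for Image Size Testing (via $\Phi \ls \psi$ with $\psi$ a dual polynomial for $\OR_N$) spreads mass over many frequency profiles; on the YES side it hits inputs where some items appear once and others appear many times. You cannot ``choose this multiset to be exactly $\{N/R,\dots,N/R\}$'' without a separate argument that restricting to uniform-on-image inputs preserves the lower bound, and permutation invariance alone does not supply one.

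The paper avoids this issue entirely by \emph{not} restricting to uniform-on-image inputs. Instead it uses the crucial fact---which you never invoke---that the lower bound of \Cref{thm:imagesize} holds with $N = c\,\gamma^{-1/2} R = O(R)$. With $N = O(R)$, \emph{every} distribution over $[R]$ with full support and probabilities that are integer multiples of $1/N$ is automatically at statistical distance at most $1 - R/N$ from uniform (the extremal case puts mass $1-(R-1)/N$ on one item and $1/N$ on the rest). Meanwhile, any distribution with support size at most $\gamma R$ is at distance at least roughly $1-\gamma$ from uniform. Choosing $\gamma$ small enough (the paper takes $\gamma < (2/3c)^2$) separates these by a constant, so an approximation to $\SDU$ with small enough additive error directly solves Image Size Testing on its full promise---no domain restriction needed.

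Your fallback paragraph gestures toward this with the ``Lipschitz in the frequency vector'' remark, but the constant separation does not come from Lipschitz continuity; it comes from the extremal calculation above, which in turn requires $N = O(R)$. If $N$ were, say, $R\log^2 R$ (as in the earlier analysis of \cite{adegsurj}), a full-support distribution could be arbitrarily close to distance $1$ from uniform, and the reduction would collapse. This is exactly why the paper invests nontrivial effort (\Cref{prop:btzeroing}, \Cref{lem:combinatorial}) in pushing the Image Size Testing lower bound down to $N = O(R)$.
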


This matches the upper bound of Bravyi et al.~\cite{bravyi} and answers the main question left open from that work. Note that the classical query complexity of this problem is $\Theta(n/\log n)$~\cite{VV11}.

\para{Entropy Approximation.} As in the previous problem, we interpret the input as a probability distribution, and the goal is to compute its Shannon entropy to additive error $\eps$. The classical query complexity of this problem is $\Theta(n/\log n)$~\cite{VV11}. We show that, for some $\eps=\Omega(1)$,  the approximate degree and quantum query complexity are $\tilde{\Omega}(\sqrt{n})$.

 \begin{theorem}[Informal]  \label{thm:introentropy} There is a constant $c > 0$ such that the approximate degree and quantum query complexity of approximating the Shannon
 entropy of a distribution over a range of size $n$ to additive error $c$ is 
is $\tilde{\Omega}(\sqrt{n})$. \end{theorem}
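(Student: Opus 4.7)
The plan is to reduce from Image Size Testing (\Cref{thm:imagesize}), exploiting the fact that Shannon entropy distinguishes a distribution uniform on $[R]$ from one uniform on a $\gamma$-fraction of $[R]$ by a constant additive gap of $\log(1/\gamma)$. I would first argue that the $\tilde{\Omega}(\sqrt{n})$ lower bound of \Cref{thm:imagesize} can be taken to hold on a \emph{balanced} variant of the problem in which YES instances are functions $[N] \to [R]$ where every range element appears exactly $N/R$ times (so the induced distribution $p$ is uniform on $[R]$), and NO instances are functions whose image is some set $S \subseteq [R]$ of size $\gamma R$ with each element of $S$ appearing exactly $N/(\gamma R)$ times (so $p$ is uniform on $S$). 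This should follow either directly from the hard distributions used to prove \Cref{thm:imagesize}, or after symmetrizing them over permutations of the $N$ input coordinates, which preserves Shannon entropy.

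On balanced YES instances we have $H(p) = \log R$ exactly, and on balanced NO instances we have $H(p) = \log(\gamma R) = \log R - \log(1/\gamma)$ exactly. Choose $\gamma$ to be a small constant satisfying $\log(1/\gamma) \geq 6c$, where $c$ is the target additive error. Given any polynomial $P$ of degree $d$ with $|P(x) - H(p_x)| \leq c$ for every valid input $x$, define the affine rescaling
\[
q(x) \;=\; \frac{2 P(x) - \log R - \log(\gamma R)}{\log(1/\gamma)}.
\]
A direct calculation shows $q(x) \in [1 - 1/3,\, 1 + 1/3]$ on balanced YES instances and $q(x) \in [-1 - 1/3,\, -1 + 1/3]$ on balanced NO instances, so $q$ is a degree-$d$ approximating polynomial (to error $1/3$) for the balanced variant of Image Size Testing. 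Combined with the previous paragraph, this yields $d = \tilde{\Omega}(\sqrt{n})$, which is the desired approximate degree lower bound.

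The quantum query version is obtained by the same reduction in a black-box fashion: any quantum algorithm that approximates $H(p_x)$ to additive error $c$ with bounded error can, by thresholding its output at $\log R - 3c$, decide balanced Image Size Testing instances with bounded error, and thus must make $\tilde{\Omega}(\sqrt{n})$ queries. The main obstacle I anticipate is the very first step, namely verifying that the Image Size Testing lower bound of \Cref{thm:imagesize} holds (or can be easily modified to hold) on the balanced YES/NO family described above; everything after that is the one-line affine rescaling given above. If the dual witness constructed for \Cref{thm:imagesize} already lives on instances with uniform-on-support frequencies, we are done; otherwise, averaging the witness over permutations of the $N$ input coordinates should yield one that does, since both the image size and the induced entropy are invariant under such permutations.
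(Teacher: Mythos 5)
The affine rescaling in your second and third paragraphs is fine, but the step you yourself flag as the main obstacle is not merely unverified --- it is false. On the balanced family you describe, the induced distribution is exactly uniform on $[R]$ in the YES case and exactly uniform on a $\gamma$-fraction of $[R]$ in the NO case, so the two cases are at constant statistical distance and distinguishing them is an instance of quantum uniformity testing, which is solvable with $\tilde{O}(n^{1/3})$ queries (Bravyi, Harrow, and Hassidim~\cite{bravyi}, and later work on quantum distribution testing). Hence the balanced variant of Image Size Testing has quantum query complexity and approximate degree $\tilde{O}(n^{1/3})$, and no $\tilde{\Omega}(\sqrt{n})$ lower bound can hold for it. Your proposed fix of averaging the dual witness over permutations of the $N$ input positions does not help: such permutations preserve the multiset of frequencies $(f_1, \dots, f_R)$, so the witness remains supported on unbalanced frequency profiles --- and it must, by the upper bound just mentioned. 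The deeper point is that on the genuine hard instances for $\SE$, the entropy of $p$ does not separate YES from NO at all: a full-image input with $N = \Theta(R)$ may put frequency $N - R + 1$ on one range item and $1$ on each of the others, giving $H(p) = o(\log R)$, which is far \emph{below} the entropy $\log(\gamma R)$ of a uniform small-support NO instance.

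This is exactly why the paper does not reduce entropy approximation to Image Size Testing directly. Its route is $\SE \to \SDU \to$ entropy. The first reduction uses the fact that full support with all probabilities integer multiples of $1/N$, $N = O(R)$, forces statistical distance at most $1 - R/N$ from uniform, while support at most $\gamma R$ forces distance at least $1 - 3\gamma/2$; both quantities are close to $1$, but they are separated by a constant. The second step is Vadhan's reduction from $\SDU$ to entropy \emph{comparison}: from $p$ one constructs two auxiliary distributions $p_1, p_2$ over $[R] \times \{0,1\}$ whose entropy gap $H(p_1) - H(p_2)$ is governed by the distance of $p$ from uniform, independently of the value of $H(p)$ itself; entropy approximation then solves entropy comparison by thresholding. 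To make a direct argument work you would need to construct a dual witness for an entropy-gap promise problem from scratch, not repurpose the one for image size.
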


This too is tight, answering a question of Li and Wu \cite{entropy}.

\subsection{Prior Work on Lower Bounding Approximate Degree}
A relatively new lower-bound technique for approximate degree called the \emph{method of dual polynomials} plays an essential role in our paper.
This method of dual polynomials 
dates back to work of Sherstov \cite{sherstovhalfspaces1} and \v{S}palek \cite{spalek},
though dual polynomials had been used earlier to resolve longstanding questions
in communication complexity \cite{patmat, shizhu, sherstovmajmaj, comm7, comm9}.
To prove a lower bound for a function $f$
via this method, one exhibits an explicit \emph{dual polynomial} for $f$, 
which is a dual solution to a certain linear program capturing the approximate degree of $f$. 

A notable feature of the method of dual polynomials is that it is lossless,
in the sense that it can exhibit a tight lower bound on the approximate degree of 
any function $f$ (though actually applying the method to specific functions may be highly challenging). Prior to the method of dual polynomials, the primary tool available for proving approximate degree lower bounds was symmetrization, introduced by Minsky and Papert \cite{mp} in the 1960s. Although powerful, symmetrization is not a lossless technique.

Most prior work on 
the method of dual polynomials 
can be understood as establishing hardness amplification results. 
Such results show how to take a function $f$ that is ``somewhat hard'' to approximate
by low-degree polynomials, and turn $f$ into a related function $g$ that is much harder to approximate.
Here, harder means either that $g$ requires larger degree
to approximate to the same error as $f$, or that approximations to $g$ of a given degree incur 
much larger error than do approximations to $f$ of the same degree. 

\vspace{-.4mm}
\medskip \noindent \textbf{Results for Block-Composed Functions.} 
Until very recently, the method of dual polynomials had been used exclusively to prove hardness amplification results for \emph{block-composed} functions. 
That is, the harder function $g$ would be obtained by block-composing $f$ with another 
function $h$, i.e., $g=h \circ f$. 
Here, a function
$g \colon \bits^{n \cdot m} \to \bits$ is the block-composition of $h \colon \bits^n \to \bits$ and $f \colon \bits^m \to \bits$ 
if $g$ interprets its input as a sequence of $n$ blocks, 
applies $f$
to each block, and then feeds the $n$ outputs into $h$.

The method of dual polynomials turns out to be particularly suited to analyzing block-composed functions, as there 
are sophisticated ways of ``combining'' dual witnesses for $h$ and $f$ individually to give 
an effective dual witness for $h \circ f$ \cite{sherstovhalfspaces1, shizhu, sherstovandor, bt13, bt14, thaler, sherstov14, sherstov15, bchtv}.
Prior work on analyzing block-composed functions has, for example,
resolved the approximate degree of the function $f(x) = \bigwedge_{i=1}^n \bigvee_{j=1}^m x_{ij}$, known as the AND-OR tree, which had been open for 19 years
\cite{bt13, sherstovandor}, 
established new lower bounds for AC$^0$ under basic complexity measures including
discrepancy \cite{bt14, thaler, sherstov14, sherstov15}, sign-rank \cite{bt16}, and threshold degree \cite{sherstov15, sherstov14}, and
resolved a number of open questions about the power of statistical zero knowledge proofs \cite{bchtv}.

\medskip
 \noindent \textbf{Beyond Block-Composed Functions.}  While the aforementioned results
led to considerable progress in complexity theory,
 many basic questions require understanding the approximate degree
 of \emph{non}-block-composed functions. One prominent example with many
 applications is to exhibit an AC$^0$ circuit over $n$ variables with approximate degree $\Omega(n)$.
\label{op1sec}
Until very recently, the best result in this direction was Aaronson and Shi's well-known $\tilde{\Omega}(n^{2/3})$ lower bound on the approximate degree of the Element Distinctness function (which is equivalent to $k$-distinctness for $k=2$)\cite{aaronsonshi}. However, Bun and Thaler \cite{adegsurj} recently achieved a near-resolution of this problem by proving the following theorem.

\begin{theorem}[Bun and Thaler \cite{adegsurj}] \label{thm:mainthm} For any constant $\delta \!>\! 0$,
there is an AC$^0$ circuit with approximate degree $\Omega(n^{1-\delta})$. \end{theorem}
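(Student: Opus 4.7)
The plan is to build, for any constant $\delta > 0$, an AC$^0$ circuit $C_\delta$ on $n$ variables computing a function of approximate degree $\Omega(n^{1-\delta})$, via a hardness amplification procedure that stays inside AC$^0$ and boosts the approximate-degree exponent in a controlled way.

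First, I would fix a base function with a reasonable AC$^0$ approximate-degree lower bound. A natural choice is Element Distinctness $\ED_N$: Aaronson and Shi established $\adeg(\ED_N) = \tilde{\Omega}(N^{2/3})$, and $\ED_N$ is computable by a polynomial-size constant-depth circuit (a big OR of pairwise-equality tests, each written as an AND of bit equalities). This already yields an AC$^0$ function with $\adeg = \Omega(n^{2/3 - o(1)})$.

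Second, I would define an amplification transformation $f \mapsto \mathrm{Amp}(f)$ that multiplies the approximate-degree exponent by a fixed factor $\rho > 1$ while preserving AC$^0$ computability. Morally $\mathrm{Amp}(f)$ should have the flavor of ``for every range element, some prescribed fiber yields a hard instance of $f$,'' combining $f$ with a Surjectivity-like gadget \emph{without} being a pure block composition: inputs would be reused across ``blocks'' so that the domain size does not blow up by the full factor required by block composition, and so that standard Sherstov--Lee dual-block multiplication does not directly apply.

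Third, and this is the technical heart, I would explicitly construct a dual polynomial $\psi'$ for $\mathrm{Amp}(f)$ given a dual polynomial $\psi$ for $f$. The three dual conditions must be simultaneously ensured: $\ell_1$-normalization, orthogonality to low-degree polynomials (pure high-degree), and correlation strictly greater than $1/3$ with $\mathrm{Amp}(f)$. I would build $\psi'$ as a carefully weighted averaging of shifted copies of $\psi$ indexed by the fibers of the Surjectivity gadget; correlation would follow from a combinatorial identity tailored to the gadget, and pure high-degree from a symmetrization argument that exploits the gadget's structure rather than independence across blocks. Iterating $\mathrm{Amp}$ a constant $k = O(\log(1/\delta))$ number of times, starting from $\ED$, yields an AC$^0$ function on $n$ variables with approximate degree $\Omega(n^{1-\delta})$; each iteration costs only a constant-depth blow-up and a polynomial size increase, so the final circuit remains AC$^0$.

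The main obstacle is step three. For block compositions $h \circ f$ one can multiply a dual witness for $h$ by dual witnesses for each copy of $f$ and invoke the standard dual-block lemma to obtain pure high-degree essentially for free. Here, variable reuse across blocks rules out that move: pure high-degree must be re-established by hand, typically via a symmetrization argument specific to the gadget, and one must simultaneously prevent $\|\psi'\|_1$ from collapsing and the correlation with $\mathrm{Amp}(f)$ from dropping below $1/3$. Getting all three dual conditions to cooperate, and tracking the exponent improvement $\rho > 1$ cleanly through the iteration so that only logarithmically many rounds are needed, is where the bulk of the work will lie.
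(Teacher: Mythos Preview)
This theorem is not proved in the present paper: it is quoted as a prior result of Bun and Thaler \cite{adegsurj} in the related-work discussion, and no proof (or proof sketch) of it appears here. There is therefore no ``paper's own proof'' to compare your proposal against.

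That said, your high-level plan is consistent with what this paper reports about the Bun--Thaler approach. The paper describes their method as (i) reducing a Surjectivity-type property to a block-composed function such as $\AND_R \circ \OR_N$ restricted to inputs of Hamming weight at most $N$ (\Cref{thm:main-reduction}), (ii) building a preliminary dual witness for the block composition via the dual-block method, and then (iii) post-processing that witness to zero out mass on high-Hamming-weight inputs (\Cref{prop:btzeroing}, \Cref{prop:correction}). The iterate-to-$n^{1-\delta}$ part is exactly the content of \cite{adegsurj}, which this paper only cites. Your sketch omits the specific mechanism by which pure high degree survives the ``non-block'' structure: in \cite{adegsurj} (and as reused here) this is handled not by a bespoke symmetrization for the amplified function, but by first passing to the genuinely block-composed problem on the low-Hamming-weight promise via Ambainis-style symmetrization, applying the standard dual-block pure-high-degree lemma there, and only then correcting the support. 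If you were to flesh out your step three, that is the route you would want to follow rather than attempting to re-derive pure high degree directly for the variable-reusing function.
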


\noindent The reason that \Cref{thm:mainthm} \emph{required} moving beyond block-composed functions is the following result of Sherstov~\cite{sherstovrobust}.

\begin{theorem}[Sherstov] \label{thm:robust}
For any Boolean functions $f$ and $h$, $\adeg(h \circ f) = O\left(\adeg(h) \cdot \adeg(f)\right)$.
\end{theorem}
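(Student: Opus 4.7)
The plan is to establish the bound by composing approximating polynomials for $h$ and $f$ in a way that is tolerant to the small error introduced when approximating each block. Let $d_h = \adeg(h)$ and $d_f = \adeg(f)$, and let $p_h$ be a degree-$d_h$ real polynomial approximating $h$ pointwise to error $1/3$, while $p_f$ is a degree-$d_f$ polynomial approximating $f$ pointwise to error $1/3$. The naive composition $p_h(p_f(x_1), \dots, p_f(x_n))$ has the correct degree $O(d_h \cdot d_f)$, but it need not approximate $h \circ f$ well: the inner polynomial $p_f$ produces outputs in $[-4/3, 4/3]$ rather than exactly in $\{-1,1\}$, and $p_h$ may be wildly inaccurate on non-Boolean inputs.

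The key step is therefore to \emph{robustify} the outer polynomial. Specifically, I would invoke (or prove) the following kind of lemma: for any Boolean function $h \colon \bits^n \to \bits$ and any constant $\delta > 0$, there is a polynomial $\tilde{p}_h$ of degree $O(\adeg(h))$ such that $|\tilde{p}_h(y_1, \dots, y_n) - h(z_1, \dots, z_n)| \le 1/6$ whenever $z \in \bits^n$ and $\|y - z\|_\infty \le \delta$. The standard way to build such a $\tilde{p}_h$ is to start from the ordinary approximant $p_h$ of $h$ and smooth out each coordinate using a univariate polynomial that rounds any input within $\delta$ of $\pm 1$ to $\pm 1$ with tiny error; this rounding polynomial can be constructed from a truncated Chebyshev series (or equivalently from an amplification polynomial for the sign function on $[-1-\delta, -1+\delta] \cup [1-\delta, 1+\delta]$) and has degree $O(\log(1/\eps))$ for error $\eps$. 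Setting $\eps$ small enough that the accumulated per-coordinate error does not blow up when plugged into $p_h$ (for which it suffices that $\eps$ is inverse-polynomial in the number of monomials of $p_h$, giving an $O(d_h \log n)$ additive overhead) and then carefully using an error-amplified approximant controls the total error.

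With $\tilde{p}_h$ in hand, I would first amplify $p_f$ to an approximant $\tilde{p}_f$ of degree $O(d_f)$ and error $\delta$, by composing $p_f$ with a constant-degree univariate polynomial that drives error from $1/3$ down to $\delta$ on the Boolean inputs where $p_f$ is accurate; this amplification preserves degree up to a constant factor. Then define
\[
P(x^{(1)}, \dots, x^{(n)}) \;=\; \tilde{p}_h\bigl(\tilde{p}_f(x^{(1)}), \dots, \tilde{p}_f(x^{(n)})\bigr).
\]
For every Boolean input, each inner value $\tilde{p}_f(x^{(i)})$ lies within $\delta$ of $f(x^{(i)}) \in \bits$, so the robustness guarantee for $\tilde{p}_h$ yields $|P(x) - (h \circ f)(x)| \le 1/6 < 1/3$. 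The degree of $P$ is $\deg(\tilde{p}_h) \cdot \deg(\tilde{p}_f) = O(d_h \cdot d_f)$, which gives the claimed bound.

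The main obstacle, and the technical heart of the argument, is the robustification lemma itself: one must show that a small-error Boolean approximant can be upgraded to tolerate $\ell_\infty$-perturbations of the inputs without paying more than a constant factor in degree. A direct approach via per-coordinate rounding loses a $\log n$ factor, so to obtain the clean $O(d_h \cdot d_f)$ stated here one needs the sharper construction from Sherstov's robustification theorem, which uses a more global smoothing (based on multivariate polynomial interpolation combined with Chebyshev-based error reduction) to avoid the $\log n$ overhead. Everything else — the amplification of $p_f$, the composition, and the final error accounting — is routine once that lemma is available.
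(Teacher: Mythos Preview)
The paper does not actually prove \Cref{thm:robust}; it merely cites it as a known result of Sherstov~\cite{sherstovrobust} and uses it only for context (to explain why block-composed functions alone cannot yield approximate degree $\Omega(n)$ for AC$^0$). So there is no ``paper's own proof'' to compare against.

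That said, your sketch is essentially the correct outline of Sherstov's argument. You have correctly identified both the strategy (compose approximants, but first robustify the outer one to tolerate $\ell_\infty$-perturbations of its inputs) and the genuine technical crux (showing that robustification costs only a constant factor in degree rather than the $O(\log n)$ that naive per-coordinate rounding would incur). One small correction: your description of how to avoid the $\log n$ loss is vague (``a more global smoothing based on multivariate polynomial interpolation''); Sherstov's actual construction iteratively applies a univariate error-reduction polynomial and carefully tracks how the approximation error and the degree evolve together, exploiting that a polynomial bounded on the Boolean cube cannot be too sensitive to small real perturbations. If you were to write out a full proof, that is where the real work lies, and you would need to either reproduce Sherstov's argument or cite it directly.
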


\Cref{thm:robust} implies that the approximate degree of $h \circ f$
(viewed as a function of its input size) is never higher than the approximate degree of
$f$ or $h$ individually (viewed as a function of their input sizes). For example, if $f$ and $h$ are both functions 
on $n$ inputs, and
both have approximate degree $O(n^{1/2})$, then $h \circ f$ has $N:=n^2$ inputs, and by \Cref{thm:robust}, $\adeg(h \circ f) = O(n^{1/2} \cdot n^{1/2}) = O(N^{1/2})$.

This means that
block-composing multiple AC$^0$ functions does not result in a function of higher approximate degree (as a function of its input size) than that of the individual functions.
Bun and Thaler \cite{adegsurj} overcome
 this hurdle by introducing a way of analyzing functions that
 cannot be written as a block-composition of simpler functions.
 
 Bun and Thaler's techniques set the stage to resolve the approximate degree
 of many basic functions using the method of dual polynomials. However, they were not refined enough to accomplish this on their own. Our lower bounds in this paper
 are obtained by refining and extending the methods of \cite{adegsurj}.

\subsection{Our Techniques}
\label{s:overview}
In order to describe our techniques, it is helpful to explain the process
by which we discovered the tight $\tilde{\Theta}(n^{3/4})$ 
lower and upper bounds for Surjectivity (cf. \Cref{thm:introsurj}).
It has previously been observed \cite{thaler, bt13, adegsurj} that 
optimal dual polynomials for a function $f$
tend to be tailored (in a sense that can be made precise via complementary slackness)
to showing optimality of some specific approximation technique for $f$. Hence, 
constructing a
dual polynomial for $f$ can provide a strong hint as to how to construct an optimal approximation for $f$,
and vice versa.

\para{Upper Bound for Surjectivity.}
In \cite{adegsurj}, Bun and Thaler  constructed a dual polynomial 
witnessing a suboptimal bound of $\tilde{\Omega}(n^{2/3})$ for $\mathsf{SURJ}$. Even though
this dual polynomial is suboptimal, it still provided a major clue as to what an optimal 
approximation for $\mathsf{SURJ}$ should look like: 
 it curiously
 ignored all inputs failing to satisfy the following condition.
\begin{condition} \label{cond}
Every range item has frequency at most $T$, for a specific threshold $T=O(N^{1/3}) \ll N$.
\end{condition}
\noindent This 
suggested that an optimal approximation for $\mathsf{SURJ}$ should treat inputs satisfying  \Cref{cond}  differently than other inputs,
leading us to the following multi-phase construction (for clarity and brevity,
this overview is simplified).
The first phase constructs a polynomial $p$ of degree $O(n^{3/4})$
approximating $\mathsf{SURJ}$ on all inputs satisfying \Cref{cond}. However, $p$
may be \emph{exponentially large} on other inputs. 
The second phase constructs a polynomial $q$ of degree $O(n^{3/4})$ that is \emph{exponentially small} on inputs $x$ that 
do \emph{not} satisfy \Cref{cond} (in particular, $q(x) \ll 1/p(x)$ for such $x$), and is close to 1 otherwise. The product $p \cdot q$ still
approximates $\mathsf{SURJ}$ on inputs satisfying \Cref{cond},
and is \emph{exponentially small} on all other inputs. Notice that 
 $\deg(p \cdot q) \leq \deg(p) + \deg(q) = O(n^{3/4})$.
Combining the above with an additional averaging step (the details of which we omit from this introduction)
yields an approximation to $\mathsf{SURJ}$  that is accurate on \emph{all} inputs.

\para{Lower Bound for Surjectivity.}
With the $O(n^{3/4})$ upper bound in hand, we were able
to identify the fundamental bottleneck
preventing further improvement of the upper bound. 
This suggested 
a way to refine the techniques of \cite{adegsurj} to
prove a matching lower bound. Once the tight lower bound
for $\SURJ$ was established, we were able to identify additional refinements
to analyze the other functions that we consider. We now describe this in more detail.

Bun and Thaler's \cite{adegsurj} (suboptimal) lower bound analysis for $\SURJ$ proceeds in two stages.
In the first stage, proving a lower bound for $\SURJ$ (on $N$ input list items and $R$ range items) is reduced to the problem
of proving a lower bound for the \emph{block-composed} function $\AND_R \circ \OR_N$,\footnote{When
it is not clear from context, we use subscripts to denote the number of variables on which a function is defined.} under the promise that the input has Hamming weight at most $N$.\footnote{Note that a reduction the other direction is straightforward: 
to approximate $\SURJ$, it suffices to approximate 
$\AND_R \circ \OR_N$ on inputs of Hamming weight exactly $N$. This is because
$\SURJ$ can be expressed as an $\AND_R$ (over all range items $r\in [R]$) of the $\OR_N$ (over all input bits $i\in[N]$) of ``Is input $x_i$ equal to $r$''? Each predicate of the form in quotes is computed exactly by a polynomial of degree $\log R$, since
it depends on only $\log R$ of the inputs, and exactly $N$ of these predicates (one for each $i \in [N]$)
evaluate to TRUE.}
In this paper, we use this stage of their analysis unmodified.

The second stage proves an $\tilde{\Omega}(R^{2/3})$ lower bound for the latter problem by leveraging much of the machinery developed to analyze the approximate degree of block-composed functions \cite{bt13, sherstovandor, razborovsherstov}. 
To describe this machinery, we require the following notion.
A dual polynomial that witnesses the fact that $\adeg_{\eps}(f_n) \geq d$ is a 
function $\psi \colon \bits^n \to \bits$ satisfying three properties:
\begin{itemize}
\item $\sum_{x \in \bits^n}\psi(x) \cdot f(x) > \eps$. If $\psi$ satisfies this condition, it is said to be \emph{well-correlated} with $f$.
\item  $\sum_{x \in \bits^n} |\psi(x)| = 1$. If $\psi$ satisfies this condition, it is said to have $\ell_1$-norm equal to 1.
\item For all polynomials $p \colon \bits^n \to \R$ of degree less than $d$, we have $\sum_{x \in \bits^n} p(x) \cdot \psi(x) = 0$. 
If $\psi$ satisfies this condition, it is said to have \emph{pure high degree} at least $d$.
\end{itemize}

\label{s:overviewoftheproofs}

In more detail, the second stage of the analysis from \cite{adegsurj} itself proceeds in two steps. First, the authors consider a dual witness $\psi$ for the high approximate degree of $\AND_R \circ \OR_N$ that was constructed in prior work \cite{bt13}. $\psi$ is constructed by taking dual witnesses $\phi$ and $\gamma$ for the high approximate degrees of $\AND_R$ and $\OR_N$ individually, and ``combining'' them in a specific
way \cite{shizhu, sherstovhalfspaces1, lee} to obtain a dual witness for the high approximate degree of their block-composition $\AND_R \circ \OR_N$. 

 Unfortunately, $\psi$ only  witnesses a lower bound for $\AND_R \circ \OR_N$ \emph{without} the promise that the Hamming weight of the input is at most $N$. To address this issue, it is enough to ``post-process'' $\psi$ so that it no longer ``exploits''
any inputs of Hamming weight larger than $N$ (formally, $\psi(x)$ should equal zero
for any inputs in $\bits^{R \cdot N}$ of Hamming weight more than $N$). The authors accomplish this
by observing that $\psi$ ``almost ignores'' all such inputs (i.e., it places exponentially little total mass on all such inputs), and hence it is possible
to perturb $\psi$ to make it \emph{completely} ignore all such inputs.

Key to this step is the fact that the ``inner'' dual witness $\gamma$ for the high approximate degree of the $\OR_N$ function satisfies a ``Hamming weight decay'' condition:
\begin{equation} \label{decaykey} |\gamma(x)| \cdot \binom{N}{|x|}\leq \frac{1}{\poly(|x|)},\end{equation} for a suitable polynomial function. 

To improve the lower bound for $\SURJ$ from $\tilde{\Omega}(n^{2/3})$ 
to the optimal $\tilde{\Omega}(n^{3/4})$, we observe that $\gamma$ in fact
satisfies a much stronger decay condition: while the inverse-polynomial decay property of Equation \eqref{decaykey} is tight for small Hamming weights $|x|$, $|\gamma(x)|$ actually
decays \emph{exponentially quickly} once $|x|$ is larger than a certain threshold $t$. 
This observation is enough to obtain the tight $\tilde{\Omega}(n^{3/4})$ lower bound for $\SURJ$. 

For intuition, it is worth mentioning that a primal formulation of the dual decay condition that we exploit shows that any low-degree polynomial $p$ that is an accurate
approximation to $\OR_N$ on low Hamming weight inputs requires large degree,
\emph{even if $|p(x)|$ is allowed to be exponentially large for inputs of Hamming weight
more than $t$.}\footnote{We do not formally describe this primal formulation of the dual decay condition, because it is not necessary to prove any of the results in this paper.} This is precisely the bottleneck that prevents us from improving our
upper bound for $\SURJ$ to $o(N^{3/4})$. In this sense, our dual witness
is intuitively tailored to showing optimality of the techniques used in our upper bound.

\para{Other Lower Bounds.}
To obtain the lower bound for $k$-distinctness, the first stage
of the analysis of \cite{adegsurj} reduces to a question about the approximate degree
of the block composed function $\OR_R \circ \mathsf{THR}^k_N$, under the promise that the input has Hamming weight at most $N$.
Here $\mathsf{THR}^k_N \colon \bits^N \to \bits$ denotes the function that evaluates to $-1$ if and only if the Hamming weight of its input is at least $k$.
By constructing a suitable dual witness for $\mathsf{THR}^k_N$, and combining it with a dual witness for $\OR_N$ via similar techniques as in our construction for $\SURJ$, we are able to prove our $\Omega(n^{3/4-1/(2k)})$ lower bound for $k$-distinctness. (This description glosses over several significant technical issues that
must be dealt with to ensure that the combined dual witness is well-correlated with $\OR_R \circ \mathsf{THR}^k_N$).\footnote{Specifically, our analysis requires the dual witness $\gamma$
for $\mathsf{THR}^k_N$ to be very well-correlated with $\mathsf{THR}^k_N$ in a certain one-sided sense (roughly, we need the probability distribution $|\gamma|$ to have the property that,
conditioned on $\gamma$ outputting a negative value, the input to $\gamma$
is in $\left(\mathsf{THR}^k_N\right)^{-1}(-1)$ with probability at least $1-1/(3R)$). 
This property was not required in the analysis for $\SURJ$, which is why
our lower bound for $\SURJ$ is larger by a factor of $n^{1/(2k)}$ than
our lower bound for $k$-distinctness. This seemingly technical issue 
is at least partially intrinsic:
a polynomial loss compared to the $\Omega(n^{3/4})$ lower
bound for $\SURJ$ is unavoidable, owing to Belovs' $n^{3/4 - \Omega(1)}$ upper bound \cite{belovs} for $k$-distinctness.}

\medskip

Recall that our lower bounds for $k$-junta testing, $\SDU$, and entropy approximation
are derived as consequences of our lower bound for image size testing. 
The connection between image size testing and junta testing was established by Ambainis et al. \cite{juntatesting}.
The reason that the image testing lower bound implies lower bounds for $\SDU$
is the following. Consider any distribution $p$ over $[R]$ such that all probabilities $p_i$ are integer multiples of $1/N$ for some $N=O(R)$. Then if $p$ has full support, $p$ is guaranteed 
to be somewhat close to uniform, while if $p$ has small support, $p$
must be very far from uniform.
We obtain our lower bound for entropy approximation using
a simple reduction from $\SDU$ due to Vadhan \cite{vadhan}.

\medskip 
To obtain our lower bound for Image Size Testing, we observe that the first stage
of the analysis of \cite{adegsurj} reduces to a question about the approximate degree
of the block composed function $\GAPAND_R \circ \OR_N$, under the promise that the input has Hamming weight at most $N$.
Here, $\GAPAND_R$
is the promise function that outputs $-1$ if all of its inputs equal $-1$, outputs $+1$ if fewer than $\gamma \cdot R$ of its inputs are $-1$, and is undefined otherwise.

Roughly speaking,  we obtain the desired $\tilde{\Omega}(n^{1/2})$ lower bound by combining a dual witness for $\GAPAND_R \circ \OR_N$ from prior work \cite{bt14} with the same techniques as in our construction for $\SURJ$. However, additional technical refinements to the analysis of \cite{adegsurj} are required to obtain our results. In 
particular, the analysis of \cite{adegsurj} only provides a lower bound
for $\SURJ$ if $N = \Omega( R \cdot \log^2(R))$. But in order to infer 
our lower bound for
$\SDU$ and entropy approximation (as well as $k$-junta testing for $\eps=\Omega(1)$), it is essential that the lower bound hold
for $N=O(R)$. This is because a distribution with full support is guaranteed 
to be $\Omega(1)$-close to uniform if all probabilities are integer multiples
of $1/N$ with $N=O(R)$, but this is not the case otherwise. (Consider, e.g., a distribution
that places mass $1-1/\log^2(R)$ on a single range item, and spreads out the
remaining mass evenly over all other range items). Refining the methods of \cite{adegsurj} to yield lower bounds even when $N=O(R)$ requires a significantly
more delicate analysis than in \cite{adegsurj}.

A diagram indicating how we obtain our results and the relationships that we establish between problems is given in \Cref{fig:reductions}. 

\begin{figure}
    \centering
    \begin{minipage}{0.9\textwidth}
        \includegraphics[clip,width=0.9\textwidth]{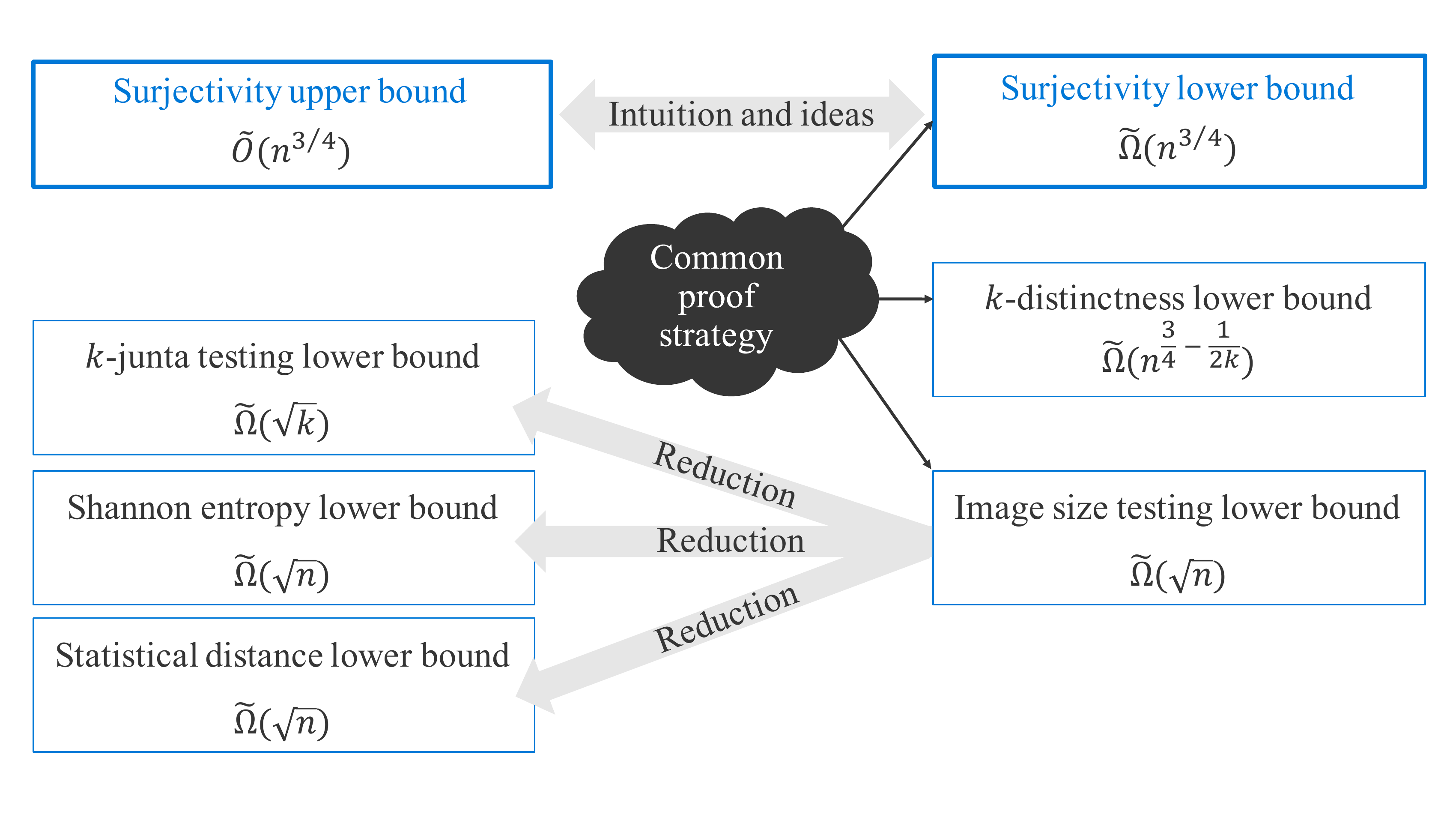} 
        \caption{Diagram of reductions and relationships amongst our results. \label{fig:reductions}}
        \end{minipage}
\end{figure}

\subsection{Outline for the Rest of the Paper}
\Cref{s:prelims} covers preliminary definitions and lemmas.
\Cref{s:upper} presents the $\tilde{O}(n^{3/4})$ upper bound
for $\SURJ$, while \Cref{s:surjlower} presents the matching $\tilde{\Omega}(n^{3/4})$
lower bound. 
\Cref{s:kdist} gives the $\tilde{\Omega}(n^{3/4-1/(2k)})$ lower 
bound for $k$-distinctness. \Cref{s:final}
presents the lower bound
for Image Size Testing, and its implications for 
junta testing, $\SDU$, and Shannon entropy approximation.
\Cref{s:conclusion} concludes by briefly describing
some additional consequences of our results, as well as a number
of open questions and directions for future work.


\section{Preliminaries}
\label{s:prelims}

\subsection{Notation}

For a natural number $N$, let $[N] = \{1, 2, \dots, N\}$ and $[N]_0 = \{0, 1, 2, \dots, N\}$. All logarithms are taken in base $2$ unless otherwise noted.
As is standard, we say that a function $f(n)$ is in $\tilde{O}(h(n))$ if there exists a constant $k$ such that $f(n)$ is in $O(h(n) \cdot \log^k(h(n)))$. 

We will frequently work with Boolean functions under the promise that their inputs have low Hamming weight. To this end, we introduce the following notation for the set of low-Hamming weight inputs.

\begin{definition} \label{def:ht} 
For $1 \le T \le n$, let $H^n_{\leq T}$ denote the subset of $\bits^n$
consisting of all inputs Hamming weight at most $T$. We use 
$|x|$ to denote the Hamming weight of an input $x \in \bits^n$,
so $H^n_{\leq T} = \{x \in \bits^n \colon |x| \leq T\}$. 
\end{definition}

\subsection{Two Variants of Approximate Degree and Their
Dual Formulations} 
There are two natural notions of approximate degree for promise problems
(i.e., for functions defined on a strict subset $\mathcal{X}$ of  $\bits^n$). 
One notion requires an approximating polynomial $p$ to be bounded
in absolute value even on inputs in $\bits^n \setminus \mathcal{X}$. 
The other places no restrictions on $p$ outside of the promise $\mathcal{X}$. 
In this work, we make use of \emph{both} notions. Hence, we must introduce 
some (non-standard) notation
to distinguish the two. 

\begin{definition}[Approximate Degree With Boundedness Outside of the Promise Required]
Let $\eps > 0$ and $\cX \subseteq \bits^n$. The $\eps$-\emph{approximate degree} of a Boolean function $f : \cX \to \bits$, denoted $\adeg_\eps(f)$, is the least degree of a real polynomial $p : \cX \to \R$ such that $|p(x) - f(x)| \le \eps$ for all $x \in \cX$ and $|p(x)| \le 1 + \eps$ for all $x \in \bits^n \setminus \cX$. We use the term \emph{approximate degree} without qualification to refer to $\adeg(f) = \adeg_{1/3}(f)$.
\end{definition}

The following standard dual formulation of this first variant of approximate degree
can be found in, e.g., \cite{bttoc}.

\begin{proposition} \label{prop:duality}
Let $\genericdomain \subseteq \bits^n$, and let $f : \genericdomain \to \{-1,1\}$.
Then $\adeg_{\eps}(f) \geq d$ if and only if there exists a function 
$\psi \colon \bits^n \to \R$ satisfying the following properties.
\begin{align} \label{eq:corr} &\sum_{x \in \genericdomain}\psi(x) \cdot f(x) - \sum_{x \in \bits^n \setminus \genericdomain} |\psi(x)| > \eps, \\
 \label{eq:unitnorm} &\sum_{x \in \bits^n} |\psi(x)| = 1, \text{ and }  \\
 \label{eq:phd} &\text{ For every polynomial } p \colon \bits^n \to \R \text{ of degree less than } d, \sum_{x \in \bits^n} p(x) \cdot \psi(x) = 0. 
\end{align}
\end{proposition}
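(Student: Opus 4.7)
The plan is to derive Proposition~\ref{prop:duality} from strong LP duality applied to the natural linear program that computes the smallest error achievable by a degree-$<d$ polynomial. First I would recast $\adeg_\eps(f) \ge d$ as the statement that the optimum $\eta^*$ of the following LP is strictly greater than $\eps$: minimize $\eta$ over $\eta \in \R$ and $p$ ranging over the finite-dimensional vector space $\Pcal$ of real polynomials of degree less than $d$ on $\bits^n$, subject to $p(x) - f(x) \le \eta$ and $f(x) - p(x) \le \eta$ for each $x \in \genericdomain$, together with $p(x) - 1 \le \eta$ and $-p(x) - 1 \le \eta$ for each $x \in \bits^n \setminus \genericdomain$. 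Because this LP has finitely many variables and constraints, is feasible (take $p = 0$, $\eta = 1$), and has objective bounded below by $0$, strong LP duality applies.

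Next I would attach non-negative dual multipliers $\alpha_x, \beta_x$ to the two constraints at each $x \in \genericdomain$ and $\gamma_x, \delta_x$ to the two constraints at each $x \in \bits^n \setminus \genericdomain$, form the Lagrangian, and read off the dual. Requiring that the coefficient of the free variable $\eta$ in the Lagrangian vanish forces $\sum_x (\alpha_x + \beta_x + \gamma_x + \delta_x) = 1$. Defining $\psi(x) := \alpha_x - \beta_x$ for $x \in \genericdomain$ and $\psi(x) := \gamma_x - \delta_x$ otherwise, the requirement that the coefficient of the unconstrained primal variable $p$ in the Lagrangian vanish for every $p \in \Pcal$ becomes $\sum_x \psi(x) q(x) = 0$ for all $q \in \Pcal$, which is exactly the pure high degree condition \eqref{eq:phd} (using that $\Pcal$ is a linear subspace). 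The remaining constant term of the Lagrangian works out to $\sum_{x \in \genericdomain} \psi(x) f(x) - \sum_{x \notin \genericdomain}(\gamma_x + \delta_x)$.

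Finally, I would observe that for each fixed $\psi$ the dual objective is maximized by choosing the smallest non-negative multipliers compatible with the sign/difference constraints, namely $\alpha_x = \max(\psi(x),0)$, $\beta_x = \max(-\psi(x),0)$, and analogously for $\gamma_x, \delta_x$. This forces $\alpha_x + \beta_x = |\psi(x)|$ on $\genericdomain$ and $\gamma_x + \delta_x = |\psi(x)|$ off $\genericdomain$, so the normalization becomes $\sum_x |\psi(x)| = 1$ (i.e., \eqref{eq:unitnorm}) and the dual objective becomes $\sum_{x \in \genericdomain} \psi(x) f(x) - \sum_{x \notin \genericdomain} |\psi(x)|$. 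Strong LP duality then yields $\eta^* > \eps$ iff some $\psi$ satisfying \eqref{eq:unitnorm} and \eqref{eq:phd} achieves dual value exceeding $\eps$, which is exactly condition \eqref{eq:corr}.

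The one step requiring care is the derivation of \eqref{eq:phd} from the unconstrained primal variable $p$: one must justify that the Lagrangian coefficient of $p$ vanishes \emph{as a linear functional on $\Pcal$} if and only if $\psi$ has zero inner product against every $q \in \Pcal$. This is where the finite-dimensionality of $\Pcal$ is essential -- it lets us pass freely between the evaluation-vector representation of $\Pcal$ in $\R^{\bits^n}$ and the abstract polynomial representation -- and it is the reason the argument does not run into the measurability/Hahn-Banach subtleties that arise in infinite-dimensional analogues.
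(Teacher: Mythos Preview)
The paper does not prove this proposition; it simply states that this ``standard dual formulation \dots\ can be found in, e.g., \cite{bttoc}'' and moves on. Your LP-duality derivation is exactly the standard argument used to establish such dual characterizations, and it is correct in outline: cast the best-error problem as a finite LP, take the dual, and identify the dual variables with a single signed function $\psi$ by collapsing $(\alpha_x,\beta_x)$ and $(\gamma_x,\delta_x)$ to their differences.

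One small slip: with your sign convention $\psi(x)=\alpha_x-\beta_x$, the constant term of the Lagrangian is $-\sum_{x\in\genericdomain}\psi(x)f(x)-\sum_{x\notin\genericdomain}(\gamma_x+\delta_x)$, not $+\sum_{x\in\genericdomain}\psi(x)f(x)-\cdots$. The fix is trivial---define $\psi(x)=\beta_x-\alpha_x$ on $\genericdomain$ and $\psi(x)=\delta_x-\gamma_x$ off it---and then everything you wrote goes through verbatim. The pure-high-degree condition \eqref{eq:phd} is insensitive to this sign flip, and the normalization and correlation conditions come out exactly as stated.
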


We will refer to functions $\psi \colon \bits^n \to \reals$ as \emph{dual polynomials}.
We refer to $\sum_{x \in \bits^n} |\psi(x)|$ as the $\ell_1$-norm of $\psi$, and denote this quantity by $\|\psi\|_1$. If $\psi$ satisfies Equation \eqref{eq:phd}, it is said to have \emph{pure high degree} at least $d$.

\medskip Given a function $\psi \colon \bits^n \to \R$, and a (possibly partial)
function $f \colon \genericdomain \to \bits$, where $\genericdomain \subseteq \bits^n$,
we let $\langle f, \psi \rangle := \sum_{x \in \genericdomain} f(x) \cdot \psi(x) - \sum_{x \in \bits^n \setminus \genericdomain} |\psi(x)|$, and refer to this as the correlation of $f$ and $\psi$. So Condition \eqref{eq:corr}
is equivalent to requiring $\psi$ and $f$ to have correlation great than $\eps$.

\begin{definition}[Approximate Degree With Unboundedness Permitted Outside of the Promise]
Let $\eps > 0$ and $\cX$ be a finite set. The $\eps$-\emph{unbounded approximate degree} of a Boolean function $f : \cX \to \bits$, denoted $\ubdeg_\eps(f)$, is the least degree of a real polynomial $p : \cX \to \R$ such that $|p(x) - f(x)| \le \eps$ for all $x \in \cX$ (if $\cX$ is a strict subset of a larger domain, then no constraints are placed on $p(x)$ for $x \not\in \cX$). We use the term \emph{unbounded approximate degree} without qualification to refer to $\ubdeg(f) = \ubdeg_{1/3}(f)$.
\end{definition}

The following standard dual formulation of this second variant of approximate degree
can be found in, e.g., \cite{patmat}. A dual polynomial $\psi : \bits^n \to \bits$ witnessing the fact that $\ubdeg_\eps(f) \ge d$ is the same as a dual witness for $\adeg_\eps(f) \ge d$, but with the additional requirement that $\psi(x) = 0$ outside of $\genericdomain$.

\begin{proposition} \label{prop:ub-duality}
Let $\genericdomain \subseteq \bits^n$, and let $f : \genericdomain \to \{-1,1\}$.
Then $\ubdeg_{\eps}(f) \geq d$ if and only if there exists a function 
$\psi \colon \bits^n \to \R$ satisfying the following properties.
\begin{align} \label{eq:ub-zero} &\psi(x) = 0 \text{ for all } x \not\in \genericdomain, \\
\label{eq:ub-corr} &\sum_{x \in \genericdomain}\psi(x) \cdot f(x) > \eps, \\
 \label{eq:ub-unitnorm} &\sum_{x \in \bits^n} |\psi(x)| = 1, \text{ and }  \\
 \label{eq:ub-phd} &\text{ For every polynomial } p \colon \bits^n \to \R \text{ of degree less than } d, \sum_{x \in \bits^n} p(x) \cdot \psi(x) = 0. 
\end{align}

\end{proposition}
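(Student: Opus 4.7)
The plan is a standard LP duality argument, essentially identical to the one for the bounded variant (\Cref{prop:duality}), but now with no polynomial constraints imposed outside $\genericdomain$. The key observation is that $\ubdeg_\eps(f) \ge d$ is equivalent to the infeasibility of a finite-dimensional linear program, and strong LP duality (or Farkas' lemma) then produces the dual witness $\psi$.

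First I would set up the primal. Let $V_d$ denote the space of real multilinear polynomials on $\bits^n$ of degree strictly less than $d$; since we only care about values on $\bits^n$, this is a finite-dimensional vector space whose coordinates can be taken to be the coefficients $\{\hat p(S)\}_{|S|<d}$. Consider the linear program
\begin{equation*}
\delta^\ast \;=\; \min_{p \in V_d,\, \delta \ge 0}\; \delta \quad \text{subject to} \quad -\delta \le p(x) - f(x) \le \delta \text{ for every } x \in \genericdomain.
\end{equation*}
By definition, $\ubdeg_\eps(f) \ge d$ if and only if $\delta^\ast > \eps$: no polynomial of degree $< d$ lies within $\eps$ of $f$ on $\genericdomain$, and the primal imposes no restriction on $p$ outside $\genericdomain$ (this is the only difference from the bounded case, and is what produces condition \eqref{eq:ub-zero}).

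Next I would dualize. Introduce nonnegative dual variables $\alpha^+(x), \alpha^-(x)$ for the two inequalities at each $x \in \genericdomain$. A routine calculation gives the dual LP
\begin{equation*}
\delta^\ast \;=\; \max \sum_{x \in \genericdomain}\bigl(\alpha^+(x)-\alpha^-(x)\bigr) f(x)
\end{equation*}
subject to $\sum_{x \in \genericdomain}(\alpha^+(x)+\alpha^-(x)) = 1$ and, for every multilinear monomial $\chi_S$ with $|S|<d$, the orthogonality constraint $\sum_{x \in \genericdomain}(\alpha^+(x)-\alpha^-(x))\chi_S(x)=0$. Define
\begin{equation*}
\psi(x) \;=\; \begin{cases} \alpha^+(x)-\alpha^-(x) & x \in \genericdomain, \\ 0 & x \notin \genericdomain. \end{cases}
\end{equation*}
At an optimum we may assume $\alpha^+(x)\alpha^-(x)=0$ (if both are positive, decreasing each by their minimum preserves feasibility and objective), so $|\psi(x)| = \alpha^+(x)+\alpha^-(x)$. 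This identifies the dual with the problem of maximizing $\sum_{x} \psi(x) f(x)$ over $\psi:\bits^n\to\R$ satisfying \eqref{eq:ub-zero}, \eqref{eq:ub-unitnorm}, and \eqref{eq:ub-phd} (orthogonality against every monomial of degree $<d$ is equivalent to orthogonality against every polynomial in $V_d$).

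Finally I would invoke strong LP duality. Both programs are finite-dimensional with finite optima (the dual is feasible by $\psi\equiv 0$ after trivial rescaling and the primal is feasible by $p\equiv 0$, $\delta = 1$), so the two optima coincide. Hence $\delta^\ast > \eps$ if and only if there exists a feasible dual $\psi$ with $\sum_{x \in \genericdomain}\psi(x)f(x) > \eps$, which is exactly \eqref{eq:ub-corr}. Combined with the equivalence $\ubdeg_\eps(f)\ge d \iff \delta^\ast > \eps$ from the first step, this proves the proposition. The only subtlety worth flagging is the complementary-slackness argument identifying $|\psi|$ with $\alpha^++\alpha^-$, which is what allows the $\ell_1$-normalization in \eqref{eq:ub-unitnorm} to be written cleanly; everything else is mechanical.
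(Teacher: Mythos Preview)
The paper does not actually prove this proposition; it states it as a standard fact and cites \cite{patmat}. Your LP-duality derivation is precisely the standard argument and is correct. One small clean-up: in the complementary-slackness step, decreasing both $\alpha^+(x)$ and $\alpha^-(x)$ by their minimum does \emph{not} preserve the normalization $\sum_x(\alpha^+ + \alpha^-)=1$ (it strictly decreases it); the right phrasing is that it preserves $\psi=\alpha^+-\alpha^-$ and hence the objective and the orthogonality constraints, after which rescaling to restore the normalization strictly increases the (positive) objective, contradicting optimality. Equivalently, observe that any $\psi$ with $\|\psi\|_1\le 1$ arises from some feasible $(\alpha^+,\alpha^-)$, so the dual is really $\max\{\sum_x\psi(x)f(x):\operatorname{supp}\psi\subseteq\genericdomain,\ \psi\perp V_d,\ \|\psi\|_1\le 1\}$, and equality $\|\psi\|_1=1$ is forced whenever the optimum is positive.
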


Observe that $\adeg(f)$ and $\ubdeg(f)$ coincide for total functions $f$. To avoid notational
clutter, when referring to the approximate degree of total functions, we will
use the shorter notation $\adeg(f)$.

\subsection{Basic Facts about Polynomial Approximations}

The seminal work of Nisan and Szegedy~\cite{nisanszegedy} gave tight bounds on the approximate degree of the $\AND_n$ and $\OR_n$ functions.

\begin{lemma}
\label{andornor}
For any constant $\eps \in (0, 1)$, the functions $\AND$ and $\OR$ on $n$ bits have $\eps$-approximate degree $\Theta(n^{1/2})$,
and the same holds for their negations.
\end{lemma}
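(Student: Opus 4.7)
The plan is to prove this classical Nisan--Szegedy result in two halves: an explicit construction via Chebyshev polynomials for the upper bound, and Minsky--Papert symmetrization followed by a Markov-type inequality for the lower bound. Throughout, it is enough to handle $\OR_n$, since $\AND_n(x) = -\OR_n(-x)$ immediately transfers every approximation across, and $\eps$-approximating $f$ is equivalent to $\eps$-approximating $-f$ (or $1-f$).

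For the upper bound $\adeg_\eps(\OR_n) = O(\sqrt{n})$, I would use the degree-$d$ Chebyshev polynomial $T_d$ of the first kind, which satisfies $|T_d(x)| \le 1$ on $[-1,1]$ and grows like $\cosh(d\sqrt{2\delta})$ at $1+\delta$. Let $L(t) = 1 - 2(t-1)/(n-1)$, so that $L$ sends $\{1,\dots,n\}$ into $[-1,1]$ while $L(0) = 1 + 2/(n-1)$. Choose $d = C\sqrt{n}$ with $C = C(\eps)$ large enough to guarantee $T_d(L(0)) \ge 1/\eps$, and define the univariate polynomial
\[
p(t) \;=\; 1 - \frac{T_d(L(t))}{T_d(L(0))}.
\]
Then $p(0) = 0$ and $|p(t) - 1| \le \eps$ for every $t \in \{1,\dots,n\}$. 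Setting $P(x_1,\dots,x_n) = p\bigl(\sum_i (1-x_i)/2\bigr)$ yields a multilinear polynomial of degree at most $d = O(\sqrt{n})$ on $\bits^n$ that $\eps$-approximates $\OR_n$ (under the convention $\OR_n(1,\dots,1) = 1$, which we can flip by $p \mapsto 1-p$ if needed).

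For the matching lower bound, I would apply Minsky--Papert symmetrization: given any polynomial $P : \bits^n \to \R$ of degree $d$ with $|P(x) - \OR_n(x)| \le \eps$ on $\bits^n$, the univariate function
\[
\hat p(k) \;=\; \mathbb{E}_{|x| = k}[P(x)],
\]
where $|x|$ denotes the number of $-1$ coordinates of $x$, is a polynomial in $k$ of degree at most $d$. Since $\OR_n$ depends only on $|x|$, we have $|\hat p(k) - \OR_n^{\mathrm{sym}}(k)| \le \eps$ for every integer $k \in \{0,1,\dots,n\}$, so $\hat p$ is bounded in absolute value by $1+\eps$ on this grid while satisfying $\hat p(0) \ge 1 - \eps$ and $\hat p(1) \le -1 + \eps$. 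This jump of magnitude at least $2 - 2\eps$ between consecutive integers, combined with a Markov-type inequality for polynomials bounded on the grid $\{0,1,\dots,n\}$ (Ehlich--Zeller/Rivlin--Cheney), forces $d = \Omega(\sqrt{n})$: any such grid-bounded polynomial has derivative at most $O(d^2/n)$ on $[0,n]$, so a unit-size jump over a unit interval is impossible unless $d = \Omega(\sqrt{n})$.

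The only genuinely non-elementary ingredient is the Markov-type inequality on the discrete grid, which is where the $\sqrt{n}$ barrier actually comes from; everything else consists of either routine symmetrization or standard bounds on $T_d$ outside $[-1,1]$. I expect the main bookkeeping obstacle, though not a conceptual one, to be making sure that the constant $C = C(\eps)$ in the upper bound construction (which enters through the requirement $T_d(1 + 2/(n-1)) \ge 1/\eps$) and the constant hidden in the Markov-inequality side of the lower bound both depend only on $\eps$, so that the stated $\Theta(\sqrt{n})$ bound holds uniformly for any fixed constant $\eps \in (0,1)$.
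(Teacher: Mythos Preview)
Your proposal is correct and is essentially the classical Nisan--Szegedy argument. The paper itself does not prove this lemma; it simply states it and attributes it to Nisan and Szegedy~\cite{nisanszegedy}, so there is no ``paper's own proof'' to compare against beyond that citation. Your sketch---Chebyshev construction for the upper bound, Minsky--Papert symmetrization followed by a Markov/Bernstein-type inequality for the lower bound---is exactly the standard route and is sound as outlined.
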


Approximate degree is invariant under negating the inputs or output of a function, and hence the result for $\AND$ implies the result for $\mathsf{NAND}$, $\OR$, etc.

The following lemma, which forms the basis of the well-known symmetrization argument, is due to Minsky and Papert \cite{mp}. 
\begin{lemma} \label{lem:mp}
Let $p \colon \bits^n \to \bits$ be an arbitrary polynomial and let $[n]_0$
denote the set $\{0, 1, \dots, n\}$. Then
there is a univariate polynomial $q \colon \R \to \R$ of degree at most $\deg(p)$
such that 
\[q(t) = \frac{1}{\binom{n}{ t}} \sum_{x \in \bits^n \colon |x|=t} p(x)\]
for all $t \in [n]_0$.
\end{lemma}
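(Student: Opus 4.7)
The plan is to execute the classical symmetrization argument of Minsky and Papert. I would proceed in three short steps. First, since $x_i^2 = 1$ on $\bits$, I may assume without loss of generality that $p$ is multilinear and write it in its unique Fourier expansion $p(x) = \sum_{S \subseteq [n]} c_S \chi_S(x)$, where $\chi_S(x) := \prod_{i \in S} x_i$ has degree $|S|$. The averaging operation $p \mapsto q$ defined by $q(t) := \binom{n}{t}^{-1} \sum_{x : |x| = t} p(x)$ is $\R$-linear in $p$, so it suffices to prove the claim for a single character $\chi_S$ of degree $d = |S|$; summing over $S$ then yields a $q$ of degree at most $\deg(p)$.

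Next, I would translate to the $\{0,1\}$-basis via the linear bijection $y_i = (1 - x_i)/2$, under which $|x|$ becomes $\sum_i y_i$ and $\chi_S$ expands into a polynomial of degree exactly $d$ in the $y_i$'s. By linearity again, it suffices to compute the weight-$t$ average of a single $\{0,1\}$-monomial $\prod_{i \in S} y_i$. Then I would count directly: the number of $y \in \{0,1\}^n$ with $\sum_i y_i = t$ and $y_i = 1$ for every $i \in S$ is exactly $\binom{n-d}{t-d}$, and hence the average equals
\[
\frac{\binom{n-d}{t-d}}{\binom{n}{t}} \;=\; \frac{t(t-1)\cdots(t-d+1)}{n(n-1)\cdots(n-d+1)},
\]
which is visibly a univariate polynomial in $t$ of degree $d$. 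Undoing the change of variables and summing over $S$ with coefficients $c_S$ yields the desired univariate polynomial $q$ of degree at most $\deg(p)$, agreeing with the stated average on every $t \in [n]_0$ (and in fact extended canonically to all of $\R$).

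There is no genuine obstacle here; the only delicate point is the cancellation in the display above, where $\binom{n-d}{t-d}$ taken alone is a polynomial of much larger degree $n-d$ in $t$, but the ratio with $\binom{n}{t}$ collapses to a degree-$d$ falling factorial once the $(n-t)!$ factors cancel. This collapse is precisely what makes the symmetrization map degree-non-increasing, and is the whole reason the lemma is useful for reducing multivariate approximate-degree questions to univariate ones, as it will be used throughout the subsequent sections of the paper.
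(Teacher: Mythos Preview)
Your argument is correct and is exactly the classical Minsky--Papert symmetrization proof. The paper does not actually supply its own proof of this lemma; it simply states the result and attributes it to Minsky and Papert \cite{mp}, so there is nothing to compare against beyond noting that your proof is the standard one.
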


\subsection{Functions of Interest}
\label{sec:functions}
We give formal definitions of the Surjectivity and $k$-distinctness we consider in this work, as well as several variations that will be helpful in proving our lower bounds. The formal definitions of the other functions we study, including Image Size Testing, $\SDU$, and the Shannon Entropy functions, are deferred to \Cref{s:final}.

\subsubsection{Surjectivity}

\begin{definition}\label{def:surj}
For $N \ge R$, we define $\SURJ_{N, R} : [R]^N \to \bits$ by $\SURJ_{N, R}(s_1, \dots, s_N) = -1$ iff for every $j \in [R]$, there exists an $i$ such that $s_i = j$.
\end{definition}

When $N$ and $R$ are clear from context, we will often refer to the function $\SURJ$ without the explicit dependence on these parameters. It will sometimes be convenient to think of the input to $\SURJ_{N, R}$ as a function
mapping $\bits^n \to \bits$ rather than $[R]^N \to \bits$. 
When needed, we assume that $R$ is a power of 2 and an element of $[R]$ is encoded in binary using $\log R$ bits.
In this case we will view \Surjectivity\ as a function on $n=N\log R$ bits, i.e.,  $\SURJ: \bits^{n} \to \bits$.

For technical reasons, when proving lower bounds, it will be more convenient to work with a variant of $\SURJ$ where the range $[R]$ is augmented by a ``dummy element'' $0$ that is simply ignored by the function. That is, while any of the items $s_1, \dots, s_N$ may take the dummy value $0$, the presence of a $0$ in the input is not required for the input to be deemed surjective. We denote this variant of Surjectivity by $\dumSURJ$. More formally:

\begin{definition}\label{def:dumsurj}
For $N \ge R$, we define $\dumSURJ_{N, R} : [R]_0^N \to \bits$ by $\dumSURJ_{N, R}(s_1, \dots, s_N) = -1$ iff for every $j \in [R]$, there exists an $i$ such that $s_i = j$.
\end{definition}

The following simple reduction shows that a lower bound on the approximate degree of $\dumSURJ$ implies a lower bound for $\SURJ$ itself.

\begin{proposition} \label{prop:dumsurj-reduction}
Let $\eps > 0$ and $N \ge R$. Then
\[ \adeg_\eps(\dumSURJ_{N, R}) \le \adeg_\eps(\SURJ_{N + 1, R + 1}) \cdot \log(R+1).\]
\end{proposition}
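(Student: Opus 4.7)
The plan is to exhibit a direct bit-level reduction from $\dumSURJ_{N,R}$ to $\SURJ_{N+1,R+1}$ that incurs a degree blowup of at most $\log(R+1)$. The key observation is that the dummy symbol $0$ in $\dumSURJ$ is effectively a ``don't care'', so we can reinterpret each dummy as an occurrence of the fresh range element $R+1 \in [R+1]$, while always forcing $R+1$ to appear by appending an extra coordinate that is hard-coded to $R+1$.

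First, I would define the instance map at the problem level. Given $s = (s_1, \ldots, s_N) \in [R]_0^N$, set
\[ t_i = \begin{cases} s_i & \text{if } s_i \ne 0, \\ R+1 & \text{if } s_i = 0, \end{cases} \quad \text{for } i = 1, \ldots, N, \qquad t_{N+1} = R+1. \]
Then $t \in [R+1]^{N+1}$ and $\SURJ_{N+1,R+1}(t) = \dumSURJ_{N,R}(s)$, since $R+1$ is always present via $t_{N+1}$ and, for $j \in [R]$, the value $j$ occurs in $t$ iff $j$ occurs in $s$.

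Next, I would lift this map to polynomials. Fix a $\log(R+1)$-bit binary encoding for both $[R]_0$ and $[R+1]$. Each of the $\log(R+1)$ bits of $t_i$ (for $i \leq N$) is a Boolean function of the $\log(R+1)$ bits of $s_i$, and is therefore represented exactly by a multilinear polynomial of degree at most $\log(R+1)$ in those bits; the bits of $t_{N+1}$ are constants. Given an approximating polynomial $p$ for $\SURJ_{N+1,R+1}$ of degree $d = \adeg_\eps(\SURJ_{N+1,R+1})$, substituting these expressions into $p$ yields a polynomial $q$ in the bits of $s$ of degree at most $d \cdot \log(R+1)$.

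Finally, I would verify that $q$ is an $\eps$-approximation of $\dumSURJ_{N,R}$ in the bounded sense required by $\adeg_\eps$. For any Boolean $s$ that validly encodes an element of $[R]_0^N$, the substitution produces a valid encoding of $t \in [R+1]^{N+1}$, giving $|q(s) - \dumSURJ_{N,R}(s)| = |p(t) - \SURJ_{N+1,R+1}(t)| \le \eps$. For Boolean $s$ outside this promise, the substituted bits still form some string in $\bits^{(N+1)\log(R+1)}$, on which $|p| \le 1+\eps$ by definition of $\adeg_\eps$, so $|q(s)| \le 1+\eps$. The argument is a routine bit-encoding reduction with no substantive obstacle; the only point requiring a small amount of care is ensuring that boundedness outside the promise is inherited, which follows because multilinear representations of Boolean functions map Boolean inputs to Boolean outputs, keeping the substituted $\SURJ$-input inside the hypercube where $p$ is already bounded.
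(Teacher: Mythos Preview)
Your proposal is correct and follows essentially the same approach as the paper: both define the map $T(s) = s$ if $s \ne 0$ and $T(s) = R+1$ if $s = 0$, append a hard-coded $R+1$ coordinate, and substitute the degree-$\log(R+1)$ bit-polynomials for $T$ into an $\eps$-approximating polynomial for $\SURJ_{N+1,R+1}$. Your explicit verification of boundedness outside the promise is a detail the paper glosses over, but the argument is otherwise the same.
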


\begin{proof}
Let $p : \bits^{(N+1) \cdot \log(R+1)} \to \bits$ be a polynomial of degree $d$ that $\eps$-approximates $\SURJ_{N + 1, R+1}$. We will use $p$ to construct a polynomial of degree $d$ that $\eps$-approximates $\dumSURJ_{N, R}$. Recall that an input to $\dumSURJ_{N, R}$ takes the form $(s_1, \dots, s_N)$ where each $s_i$ is the binary representation of a number in $[R]_0$. Define the transformation $T : [R]_0 \to [R + 1]$ by
\[T(s) = \begin{cases}
R + 1 & \text{ if } s = 0 \\
s & \text{ otherwise}.
\end{cases}\]
Note that as a mapping between binary representations, the function $T$ is exactly computed by a vector of polynomials of degree at most $\log(R+1)$. For every $(s_1, \dots, s_N) \in [R]_0^N$, observe that
\[\dumSURJ_{N, R}(s_1, \dots, s_N) = \SURJ_{N + 1, R+1}(T(s_1), \dots, T(s_N), R+1).\]
Hence, the polynomial
\[p(T(s_1), \dots, T(s_N), R+1)\]
is a polynomial of degree $d \cdot \log(R+1)$ that $\eps$-approximates $\dumSURJ_{N, R}$.
\end{proof}

\subsubsection{$k$-Distinctness}

\begin{definition}
For integers $k, N, R$ with $k \le N$, define the function $\DIST^k_{N, R} : [R]^N \to \bits$ by $\DIST^k_{N, R}(s_1, \dots, s_N) = -1$ iff there exist $r \in [R]$ and distinct indices $i_1, \dots, i_k$ such that $s_{i_1} = \dots = s_{i_k} = r$.
\end{definition}

As with Surjectivity, it will be convenient to work with a variant of $k$-distinctness where $[R]$ is augmented with a dummy item:

\begin{definition}
For integers $k, N, R$ with $k \le N$, define the function $\dumDIST^k_{N, R} : [R]_0^N \to \bits$ by $\dumDIST^k_{N, R}(s_1, \dots, s_N) = -1$ iff there exist $r \in [R]$ and distinct indices $i_1, \dots, i_k$ such that $s_{i_1} = \dots = s_{i_k} = r$.
\end{definition}

For $k \ge 2$, a lower bound on the approximate degree of $\dumDIST^k$ implies a lower bound on the approximate degree of $\DIST^k$. The restriction that $k \ge 2$ is essential, because the function $\DIST^1$ is the constant function that evaluates to $\mathsf{TRUE}$ on \emph{any} input (since at least one range item must always have frequency at least one), whereas $\dumDIST^1$ contains $\OR_N$ as a subfunction, and hence has approximate degree at least $\Omega(\sqrt{N})$.

\begin{proposition} \label{prop:dumdist-reduction}
Let $\eps > 0$, $N, R \in \N$, and $k \ge 2$. Then
\[ \deg_\eps(\dumDIST^k_{N, R}) \le \deg_\eps(\DIST^k_{N, R+N}) \cdot \log(R+1).\]
\end{proposition}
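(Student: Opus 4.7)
The proof proceeds in close analogy with \Cref{prop:dumsurj-reduction} for $\SURJ$, with one important twist: rather than collapsing every occurrence of the dummy symbol $0$ to a single fresh range item, I will map dummies in different positions to \emph{distinct} fresh range items. This is forced on us because $k$-distinctness is sensitive to the multiplicity of each range element, so merging all dummies onto a single fresh value would create a spurious $k$-tuple whenever $k$ or more dummies appeared, breaking the reduction.

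Concretely, suppose $p : \bits^{N \cdot \log(R+N)} \to \reals$ is a polynomial of degree $d$ that $\eps$-approximates $\DIST^k_{N, R+N}$. For each position $i \in [N]$, define a transformation $T_i : [R]_0 \to [R+N]$ by setting $T_i(0) = R + i$ and $T_i(s) = s$ for $s \in [R]$. Each coordinate of the binary representation of $T_i(s_i)$ is a function of the $\log(R+1)$ bits encoding $s_i$, and hence is exactly computed by a polynomial of degree at most $\log(R+1)$ in those bits. The composed polynomial
\[ q(s_1, \dots, s_N) \;=\; p\bigl(T_1(s_1), \dots, T_N(s_N)\bigr) \]
therefore has degree at most $d \cdot \log(R+1)$ in the $N \cdot \log(R+1)$ input bits.

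It remains to verify the key identity
\[ \dumDIST^k_{N, R}(s_1, \dots, s_N) \;=\; \DIST^k_{N, R+N}\bigl(T_1(s_1), \dots, T_N(s_N)\bigr), \]
from which it follows that $q$ $\eps$-approximates $\dumDIST^k_{N, R}$. For the ``$\Leftarrow$'' direction, if some $r \in [R]$ has frequency at least $k$ in $(s_1, \dots, s_N)$, then the same $r$ has frequency at least $k$ in the transformed input, since $T_i$ acts as the identity on $[R]$. For the ``$\Rightarrow$'' direction, suppose some value $r' \in [R+N]$ appears $k$ times in $(T_1(s_1), \dots, T_N(s_N))$. If $r' \in [R]$, then again by the identity action of the $T_i$ on $[R]$, the element $r'$ appears $k$ times in $(s_1, \dots, s_N)$. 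If instead $r' = R + i$ for some $i$, then $r'$ lies in the image of $T_i$ alone and thus appears at most once in the transformed input, contradicting $k \ge 2$.

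There is no real obstacle here beyond setting up the transformation correctly; the only subtle point is the use of $k \ge 2$ to rule out the second case above, and the observation that this forces us to spend $N$ fresh range symbols rather than a single one. The degree blow-up is the same $\log(R+1)$ factor as in \Cref{prop:dumsurj-reduction} because each $T_i$ is computed coordinate-wise from the $\log(R+1)$ bits encoding a single input item, independent of how large the expanded range $[R+N]$ becomes.
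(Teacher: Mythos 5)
Your proposal is correct and follows essentially the same route as the paper's proof: the same position-dependent transformations $T_i$ sending the dummy symbol to $R+i$, the same degree accounting, and the same appeal to $k \ge 2$ to rule out spurious collisions among the fresh range items. Your verification of the key identity is somewhat more explicit than the paper's, but the argument is identical.
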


\begin{proof}
The proof is similar to that of \Cref{prop:dumsurj-reduction}, but uses a slightly more involved reduction. Let $p : \bits^{2N\cdot \log(R+N)} \to \bits$ be a polynomial of degree $d$ that $\eps$-approximates $\DIST^k_{N, R+N}$. We will use $p$ to construct a polynomial of degree $d$ that $\eps$-approximates $\dumDIST^k_{N, R}$. For each $i = 1, \dots, R$, define a transformation $T_i : [R]_0 \to [R+N]$ by
\[T_i(s) = \begin{cases}
R + i & \text{ if } s = 0 \\
s & \text{ otherwise}.
\end{cases}\]
As a mapping between binary representations, the function $T=(T_1, \dots, T_N)$ is exactly computed by a vector of polynomials of degree at most $\log(R+1)$. For every $(s_1, \dots, s_N) \in [R]_0^N$, observe that
\[\dumDIST^k_{N, R}(s_1, \dots, s_N) = \DIST^k_{N, R+N}(T_1(s_1), \dots, T_N(s_N)).\]
Hence, the polynomial
\[p(T(s_1), \dots, T(s_N))\]
is a polynomial of degree $d \cdot \log(R+1)$ that $\eps$-approximates $\dumSURJ_{N, R}$.
\end{proof}

\subsection{Connecting Symmetric Properties and Block Composed Functions}
\label{sec:step1}

An important ingredient in~\cite{adegsurj} is the relationship between the approximate degree of a property of a list of numbers (such as $\SURJ$) and the approximate degree of a simpler block composed function, defined as follows.

\begin{definition}
For functions $f : Y^n \to Z$ and $g: X \to Y$, define the \emph{block composition} $f \circ g : X^n \to Z$ by $(f\circ g)(x_1, \dots, x_n) = f(g(x_1), \dots, g(x_n))$, for all $x_1, \dots, x_n \in X$.
\end{definition}

Fix $R, N \in \N$, let $f : \bits^R \to \bits$ and let $g : \bits^N \to \bits$. Suppose $g$ is a \emph{symmetric} function, in the sense that for any $x \in \bits^N$ and any permutation $\sigma : [N] \to [N]$, we have
\[g(x_1, \dots, x_N) = g(x_{\sigma(1)}, \dots, x_{\sigma(N)}).\]
Equivalently, the value of $g$ on any input $x$ depends only on its Hamming weight $|x|$.

The functions $f$ and $g$ give rise to two functions. The first, which we denote by $F^{\operatorname{prop}} : [R]_0^N \to \bits$, is a certain property of a list of numbers $s_1, \dots, s_N \in [R]_0$. The second, which we denote by $F^{\le N} : H_{\le N}^{N \cdot R} \to \bits$, is the block composition of $f$ and $g$ restricted to inputs of Hamming weight at most $N$. Formally, these functions are defined as:
\begin{align*}
&F^{\operatorname{prop}}(s_1, \dots, s_N) = f(g(\id[s_1 = 1], \dots, \id[s_N = 1]), \dots, g(\id[s_1=R], \dots,\id[s_N=R])) \\
&F^{\le N}(x_1, \dots, x_R) = \begin{cases} f(g(x_1), \dots, g(x_R)) & \text{ if } x_1, \dots, x_R \in \bits^N,  |x_1| + \dots + |x_R| \le N.\\
\text{undefined} & \text{otherwise}.
\end{cases}
\end{align*}

The following proposition from ~\cite{adegsurj}, which in turn relies heavily on a clever symmetrization argument due to due to Ambainis~\cite{ambainissmallrange}, relates the approximate degrees of the two functions $F^{\operatorname{prop}}$ and $F^{\le N}$.

\begin{theorem} \label{thm:main-reduction}
Let $f : \bits^R \to \bits$ be any function and let $g : \bits^N \to \bits$ be a symmetric function. Then for $F^{\operatorname{prop}}$ and $F^{\le N}$ defined above, and for any $\eps > 0$, we have
\[\deg_\eps(F^{\operatorname{prop}}) \ge \ubdeg_\eps(F^{\le N}).\]
\end{theorem}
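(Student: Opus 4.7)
The plan is to convert a degree-$d$ polynomial $\eps$-approximating $F^{\operatorname{prop}}$ into a degree-$d$ polynomial $\eps$-approximating $F^{\le N}$ on the promise set $H^{NR}_{\le N}$ (with no constraints off the promise, matching the definition of $\ubdeg$). The core engine is an Ambainis-style symmetrization argument that averages the approximant to $F^{\operatorname{prop}}$ over a natural random lifting of an input $x \in H^{NR}_{\le N}$ to a list in $[R]_0^N$.

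First I would re-express the given approximant $p$ in indicator variables. Let $z_{i,j} = \id[s_i = j]$ for $i \in [N]$, $j \in [R]_0$. Each Boolean coordinate of the encoding of $s_i$ is a linear combination of $z_{i,0}, \ldots, z_{i,R}$, so every monomial of $p$ of Boolean degree $\le d$ expands into a polynomial in $z_{i,j}$'s spanning at most $d$ blocks. Using the relations $\sum_j z_{i,j} = 1$ and $z_{i,j} z_{i,j'} = 0$ for $j \ne j'$, I can reduce each surviving monomial to one involving at most one indicator per block. The result is a polynomial $\tilde p$, agreeing with $p$ on valid inputs, whose monomials have the form $z_{i_1, j_1} \cdots z_{i_k, j_k}$ with distinct $i_l$'s and $k \le d$.

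Next, for each $x = (x_1, \ldots, x_R) \in H^{NR}_{\le N}$, let $M(x)$ be the multiset over $[R]_0$ containing $|x_r|$ copies of each $r \in [R]$ and $N - \sum_{r \in [R]} |x_r|$ copies of the dummy $0$ (so $|M(x)| = N$), and let $\mu_x$ denote the uniform distribution over orderings of $M(x)$. Define
\[
q(x) := \E_{s \sim \mu_x}\bigl[\tilde p(s_1, \ldots, s_N)\bigr].
\]
For any $s$ in the support of $\mu_x$ and any $r \in [R]$, the number of $i$ with $s_i = r$ equals $|x_r|$, and symmetry of $g$ then gives $g(\id[s_1 = r], \ldots, \id[s_N = r]) = g(x_r)$; consequently $F^{\operatorname{prop}}(s) = F^{\le N}(x)$. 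Combining with $|\tilde p(s) - F^{\operatorname{prop}}(s)| \le \eps$ yields $|q(x) - F^{\le N}(x)| \le \eps$ on the promise set.

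The only non-trivial step, and the main obstacle, is the degree bound on $q$. By linearity it suffices to analyze a single reduced monomial $z_{i_1, j_1} \cdots z_{i_k, j_k}$. Counting orderings of $M(x)$ in which positions $i_1, \ldots, i_k$ receive values $j_1, \ldots, j_k$ gives
\[
\E_{s \sim \mu_x}\Bigl[\prod_{l=1}^{k} z_{i_l, j_l}\Bigr] \;=\; \frac{1}{(N)_k}\prod_{r \in [R]_0} (|M(x)|_r)_{n_r},
\]
where $n_r = |\{l : j_l = r\}|$ and $(m)_n = m(m-1)\cdots(m-n+1)$ is the falling factorial. Each multiplicity $|M(x)|_r$ is an affine function of the Boolean coordinates of $x$, so $(|M(x)|_r)_{n_r}$ has degree $n_r$ in $x$ and the product has total degree $\sum_r n_r = k \le d$. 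Summing over the monomials of $\tilde p$ gives a polynomial $q$ of degree at most $d$ in $x$, completing the proof. Once this falling-factorial computation is in hand, everything else is bookkeeping.
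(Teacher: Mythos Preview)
Your proof is correct and is precisely the Ambainis-style symmetrization argument that the paper invokes (the paper does not prove this theorem itself but imports it from \cite{adegsurj}, which in turn rests on Ambainis's symmetrization \cite{ambainissmallrange}). The indicator-variable rewriting, the random-ordering average over the multiset $M(x)$, and the falling-factorial degree computation are exactly the ingredients of that standard argument.
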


In the case where $f = \AND_R$ and $g = \OR_N$, the function $F^{\operatorname{prop}}(s_1, \dots, s_N)$ is the Surjectivity function augmented with a dummy item, $\dumSURJ_{N, R}(s_1, \dots, s_N)$. Hence,

\begin{corollary} \label{cor:surj-reduction}
Let $N, R \in \N$. Then for any $\eps > 0$,
\[\adeg_\eps(\dumSURJ_{N, R}) \ge \ubdeg_\eps(F^{\le N})\]
where $F^{\le N} : H^{N\cdot R}_{\le N} \to \bits$ is the restriction of $\AND_R \circ \OR_N$ to $H^{N\cdot R}_{\le N}$.
\end{corollary}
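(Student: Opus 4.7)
The plan is to invoke \Cref{thm:main-reduction} with the specific choices $f = \AND_R$ and $g = \OR_N$, and then verify by a direct definitional check that the resulting symmetric property $F^{\operatorname{prop}}$ is exactly $\dumSURJ_{N,R}$. First I would confirm the hypothesis of the theorem: $\OR_N : \bits^N \to \bits$ depends only on the Hamming weight of its input, so it is symmetric, and therefore \Cref{thm:main-reduction} yields $\adeg_\eps(F^{\operatorname{prop}}) \ge \ubdeg_\eps(F^{\le N})$, where the $F^{\le N}$ so produced is the restriction of $\AND_R \circ \OR_N$ to $H^{N\cdot R}_{\le N}$. That is precisely the $F^{\le N}$ appearing on the right-hand side of the statement, so the right-hand side is already in the desired form.

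Next I would identify $F^{\operatorname{prop}}$ with $\dumSURJ_{N,R}$. Unfolding the definition of $F^{\operatorname{prop}}$ for $(s_1, \dots, s_N) \in [R]_0^N$ gives
\[F^{\operatorname{prop}}(s_1, \dots, s_N) \;=\; \AND_R\bigl(\OR_N(\id[s_1{=}1], \dots, \id[s_N{=}1]),\; \dots,\; \OR_N(\id[s_1{=}R], \dots, \id[s_N{=}R])\bigr).\]
This evaluates to $-1$ iff for every $j \in [R]$ the inner $\OR_N$ evaluates to $-1$, i.e.\ iff for every $j \in [R]$ there is some $i \in [N]$ with $s_i = j$. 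This is the defining condition of $\dumSURJ_{N,R}(s_1,\dots,s_N) = -1$, so $F^{\operatorname{prop}} = \dumSURJ_{N,R}$ as functions on $[R]_0^N$. Plugging this equality into the inequality from \Cref{thm:main-reduction} yields the stated corollary.

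There is no real obstacle: the corollary is essentially an instantiation of \Cref{thm:main-reduction} on top of a definitional unfolding. The only subtle point worth flagging is the role of the dummy symbol $0 \in [R]_0$. A coordinate $s_i = 0$ sends $\id[s_i = j]$ to $+1$ (false) for every $j \in [R]$ simultaneously, so such coordinates are silently ignored by all $R$ inner $\OR_N$ gates. This is exactly what distinguishes $\dumSURJ$ from $\SURJ$, and it is the reason the reduction produces the dummy variant $\dumSURJ_{N,R}$ rather than $\SURJ_{N,R}$ directly; the transfer back to $\SURJ$ itself is then handled separately by \Cref{prop:dumsurj-reduction}.
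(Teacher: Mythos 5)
Your proposal is correct and matches the paper's own (implicit) argument exactly: the corollary is stated as an immediate instantiation of \Cref{thm:main-reduction} with $f = \AND_R$ and $g = \OR_N$, together with the observation that the resulting $F^{\operatorname{prop}}$ is precisely $\dumSURJ_{N,R}$. Your definitional unfolding and the remark about the dummy symbol $0$ being ignored by every inner $\OR_N$ gate are exactly the right justifications.
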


\begin{definition}
For integers $k, N$ with $k \le N$, define the function $\THR^k_N : \bits^N \to \bits$ by $\THR^k_N(x) = -1$ iff $|x| \ge k$.
\end{definition}

If we let $f = \OR_R$ and $g = \THR_N^k$, then the function $F^{\operatorname{prop}}$ is the dummy augmented $k$-distinctness function $\dumDIST_{N, R}^k$.

\begin{corollary} \label{cor:dist-reduction}
Let $N, R \in \N$. Then for any $\eps > 0$,
\[\adeg_\eps(\dumDIST^k_{N, R}) \ge \ubdeg_\eps(G^{\le N})\]
where $G^{\le N} : H^{N\cdot R}_{\le N} \to \bits$ is the restriction of $\OR_R \circ \THR^k_N$ to $H^{N\cdot R}_{\le N}$.
\end{corollary}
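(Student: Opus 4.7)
The plan is to derive this as an immediate consequence of \Cref{thm:main-reduction} by specializing to $f = \OR_R$ and $g = \THR^k_N$. The only real content is to verify that under this choice, the symmetric-property function $F^{\operatorname{prop}}$ coincides pointwise with $\dumDIST^k_{N,R}$; once this is established, the inequality on approximate degrees follows mechanically from \Cref{thm:main-reduction}.

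First, I would note that $\THR^k_N$ is a symmetric function (its value depends only on the Hamming weight of its input), so the hypotheses of \Cref{thm:main-reduction} are satisfied. Next, fix an input $(s_1, \dots, s_N) \in [R]_0^N$. For each range item $r \in [R]$, the indicator vector $(\Ind[s_1 = r], \dots, \Ind[s_N = r]) \in \bits^N$ has Hamming weight equal to the frequency $f_r$ of $r$ in the list. By definition, $\THR^k_N$ applied to this vector evaluates to $-1$ if and only if $f_r \geq k$. Therefore
\[
\OR_R\bigl(\THR^k_N(\Ind[s_1 = 1], \dots, \Ind[s_N = 1]), \dots, \THR^k_N(\Ind[s_1 = R], \dots, \Ind[s_N = R])\bigr) = -1
\]
if and only if there exists some $r \in [R]$ with $f_r \geq k$, i.e., if and only if there exist distinct indices $i_1, \dots, i_k$ with $s_{i_1} = \dots = s_{i_k} = r$. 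This is precisely the defining condition for $\dumDIST^k_{N,R}(s_1, \dots, s_N) = -1$, so the two functions agree on every input.

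Having identified $F^{\operatorname{prop}}$ with $\dumDIST^k_{N,R}$, I would invoke \Cref{thm:main-reduction} with $f = \OR_R$ and $g = \THR^k_N$ to conclude that
\[
\adeg_\eps(\dumDIST^k_{N,R}) = \adeg_\eps(F^{\operatorname{prop}}) \geq \ubdeg_\eps(F^{\leq N}) = \ubdeg_\eps(G^{\leq N}),
\]
where $G^{\leq N}$ denotes the restriction of $\OR_R \circ \THR^k_N$ to $H^{N \cdot R}_{\leq N}$, matching the definition in the statement. Since this is essentially just an instantiation of the main reduction theorem, there is no substantive obstacle — the only thing to be careful about is the bookkeeping of the indicator encoding and confirming that the frequency-based description of $\dumDIST^k_{N,R}$ lines up with the block-composed form, which is straightforward.
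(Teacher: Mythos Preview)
Your proposal is correct and matches the paper's approach exactly: the paper states just before the corollary that taking $f = \OR_R$ and $g = \THR_N^k$ makes $F^{\operatorname{prop}}$ equal to $\dumDIST_{N,R}^k$, and the corollary then follows immediately from \Cref{thm:main-reduction}. Your verification of the pointwise equality (via frequencies and the symmetric nature of $\THR^k_N$) is precisely the bookkeeping the paper leaves implicit.
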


\subsection{The Dual Block Method}
This section collects definitions and preliminary results on the dual block method \cite{shizhu, lee, sherstovhalfspaces1}
for constructing dual witnesses for a block composed function $F \circ f$ by combining dual witnesses
for $F$ and $f$ respectively.
\label{sec:ls}
\begin{definition}
Let $\Psi : \bits^M \to \R$ and $\psi : \bits^m \to \R$ be functions that are not identically zero. Let $x=(x_1, \dots, x_M) \in \left(\bits^{m}\right)^M$. Define the \emph{dual block composition} of $\Psi$ and $\psi$, denoted $\Psi \ls \psi : (\bits^m)^M \to \R$, by
\[(\Psi \ls \psi)(x_1, \dots, x_M) = 2^M \cdot \Psi(\dots, \sgn\left(\psi(x_i)\right), \dots) \cdot \prod_{i = 1}^M |\psi(x_i)|.\]
\end{definition}

\begin{proposition}[\!\!\cite{sherstovhalfspaces1,adegsurj}] \label{prop:ls}
The dual block composition satisfies the following properties:
\paragraph{\emph{Preservation of $\ell_1$-norm}:} If $\|\Psi\|_1 = 1$, $\|\psi\|_1 = 1$, and $\langle \psi, \mathbf{1}\rangle = 0$, then
\begin{equation}\|\Psi \ls \psi\|_1 = 1. \label{eqn:ls-norm}\end{equation}

\paragraph{\emph{Multiplicativity of pure high degree}:} If $\langle \Psi, P \rangle = 0$ for every polynomial $P \colon \bits^M \to \bits$ of degree less than  $D$, and $\langle \psi, p \rangle = 0$ for every polynomial $p \colon \bits^m \to \bits$ of degree less than $d$, then
for every polynomial $q \colon \bits^{m \cdot M} \to \bits$,
\begin{equation}\deg q < D \cdot d \implies \langle \Psi \ls \psi, q \rangle = 0. \label{eqn:ls-phd} \end{equation}

\paragraph{\emph{Associativity}:} For every $\zeta: \bits^{m_\zeta} \to \R$, $\varphi: \bits^{m_\varphi} \to \R$, and $\psi: \bits^{m_\psi} \to \R$, we have
\begin{equation} (\zeta \ls \varphi) \ls \psi = \zeta \ls (\varphi \ls \psi). \label{eqn:ls-assoc} \end{equation}
\end{proposition}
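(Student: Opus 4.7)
The plan is to verify each of the three assertions directly from the definition $(\Psi \ls \psi)(x_1,\dots,x_M) = 2^M \Psi(\sgn\psi(x_1),\dots,\sgn\psi(x_M)) \prod_i |\psi(x_i)|$, adopting the harmless convention that $\sgn(0)$ is assigned an arbitrary value in $\{-1,1\}$ (this never matters since such terms are weighted by a factor of $|\psi(x_i)| = 0$).

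For preservation of the $\ell_1$-norm, I would expand $\|\Psi \ls \psi\|_1$ and group the sum over $x \in (\bits^m)^M$ by the sign pattern $\epsilon = (\sgn\psi(x_1),\dots,\sgn\psi(x_M)) \in \bits^M$. The hypotheses $\|\psi\|_1 = 1$ and $\langle \psi,\mathbf{1}\rangle = 0$ together imply $\sum_{x \colon \sgn\psi(x) = +1}|\psi(x)| = \sum_{x \colon \sgn\psi(x) = -1}|\psi(x)| = 1/2$, so for each $\epsilon \in \bits^M$ the inner product over the preimage contributes exactly $2^{-M}$. The overall $2^M$ cancels and leaves $\sum_{\epsilon} |\Psi(\epsilon)| = \|\Psi\|_1 = 1$.

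The main work is in multiplicativity of pure high degree. By linearity it suffices to handle a multilinear monomial $q(x_1,\dots,x_M) = \prod_{i=1}^M m_i(x_i)$, where $m_i$ is a monomial in the variables of block $i$ with $d_i = \deg m_i$ and $\sum_i d_i < Dd$. The key algebraic move is the identity $|\psi(x)| \cdot \mathbf{1}[\sgn\psi(x) = \epsilon] = \tfrac{1}{2}(|\psi(x)| + \epsilon \psi(x))$, which lets me rewrite
\begin{equation*}
\langle \Psi \ls \psi, q \rangle \;=\; \sum_{\epsilon \in \bits^M} \Psi(\epsilon) \prod_{i=1}^M (A_i + \epsilon_i B_i), \qquad A_i = \langle |\psi|, m_i\rangle,\; B_i = \langle \psi, m_i\rangle.
\end{equation*}
Expanding the product over subsets $T' \subseteq [M]$ yields $\langle \Psi \ls \psi, q\rangle = \sum_{T'} \bigl(\prod_{i \in T'} B_i\bigr)\bigl(\prod_{i \notin T'} A_i\bigr) \langle \Psi, \chi_{T'}\rangle$, where $\chi_{T'}$ is the parity character on coordinates in $T'$. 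Pure high degree $D$ of $\Psi$ kills the character sum unless $|T'| \ge D$; pure high degree $d$ of $\psi$ makes $B_i = 0$ for every $i$ with $d_i < d$, so nonvanishing terms require $T' \subseteq T := \{i : d_i \ge d\}$. But $|T| \cdot d \le \sum_{i \in T} d_i \le \sum_i d_i < Dd$ forces $|T| < D$, contradicting $|T'| \ge D$. Hence every summand vanishes. The main obstacle here is purely notational bookkeeping; the substantive insight is that the identity above cleanly separates the contributions requiring pure high degree of $\psi$ (the $B_i$ factors) from those requiring pure high degree of $\Psi$ (the character sum).

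For associativity, I would unfold both sides evaluated at $(x_{i,j,k})$ with $i \in [m_\zeta], j \in [m_\varphi], k \in [m_\psi]$. On the left, applying $\ls \psi$ to $\zeta \ls \varphi$ produces a factor of $2^{m_\zeta m_\varphi}$, replaces each inner-block variable by $\sgn\psi(x_{i,j,\cdot})$, and multiplies by $\prod_{i,j,k}|\psi(x_{i,j,k})|$; unwrapping the remaining $\zeta \ls \varphi$ then yields a factor of $2^{m_\zeta}$, a factor $\zeta(\dots, \sgn\varphi(\sgn\psi(x_{i,j,\cdot})_j), \dots)$, and a factor $\prod_i |\varphi(\sgn\psi(x_{i,j,\cdot})_j)|$. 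On the right, I would first compute $\sgn(\varphi \ls \psi)(x_{i,j,\cdot}) = \sgn \varphi(\sgn\psi(x_{i,j,\cdot})_j)$ and $|(\varphi \ls \psi)(x_{i,j,\cdot})| = 2^{m_\varphi}|\varphi(\sgn\psi(x_{i,j,\cdot})_j)| \prod_k |\psi(x_{i,j,k})|$, then substitute these into $\zeta \ls (\varphi \ls \psi)$. Multiplying through gives exactly the same expression as the left-hand side, completing the proof.
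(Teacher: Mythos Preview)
The paper does not supply its own proof of this proposition; it is stated with citations to \cite{sherstovhalfspaces1,adegsurj} and used as a black box. Your argument is correct and is essentially the standard one from those references: the $\ell_1$-norm computation via grouping by sign pattern, the pure-high-degree argument via the identity $|\psi(x)|\,\mathbf{1}[\sgn\psi(x)=\epsilon]=\tfrac12(|\psi(x)|+\epsilon\,\psi(x))$ followed by expansion into Fourier characters of $\Psi$, and the direct unfolding for associativity are exactly how these facts are proved in the literature.
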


\subsection{A Refinement of a Technical Lemma from Prior Work}
 The following technical proposition
 refines techniques of Bun and Thaler \cite{adegsurj}. 
This proposition is useful for ``zeroing out'' the mass that 
a dual polynomial $\xi$
places on inputs of high Hamming weight,
if $\xi$ is obtained via the dual-block method.

\begin{definition}
Let $M \in \N$ and $\alpha, \beta > 0$. A function $\omega: [M]_0 \to \R$ satisfies the $(\alpha, \beta)$-decay condition if
\begin{align}
&\sum_{t = 0}^M \omega(t) = 0, \label{eqn:omega-balanced}\\
&\sum_{t = 0}^M |\omega(t)| = 1, \label{eqn:omega-norm}\\
&|\omega(t)| \le \alpha \exp (-\beta t) / t^2 \quad \forall t = 1, 2, \dots, M. \label{eqn:omega-decay}
\end{align}
\end{definition}

\begin{restatable}{proposition}{btzeroing} \label{prop:btzeroing}
Let $R \in \N$ be sufficiently large, and let $\Phi : \bits^R \to \R$ with $\|\Phi\|_1 = 1$. For $M \le R$, let $\omega : [M]_0 \to \R$ satisfy the $(\alpha, \beta)$-decay condition with parameters $1 \leq \alpha \le R^2, \beta \in (4\ln^2 R/ \sqrt{\alpha} R, 1)$.

Let $N = \lceil 20 \sqrt{\alpha} \rceil R$, and define $\psi: \bits^N \to \R$ by $\psi(x) = \omega(|x|) / \binom{N}{|x|}$. If $D < N$ is such that
\begin{equation}
\text{ For every polynomial } p \text{ with } \deg p < D, \text{ we have } \langle \Phi \ls \psi, p \rangle = 0, \label{eqn:prelim-phd}
\end{equation}
then there exist $\Delta \ge \beta\sqrt{\alpha} R / 4\ln^2 R$ and a function $\zeta: (\bits^N)^R \to \R$ such that
\begin{align}& \text{ For every polynomial } p \text{ with } \deg p < \min\{D, \Delta\}, \text{ we have } \langle \zeta, p \rangle = 0, \label{eqn:final-phd} \\
&\|\zeta - \Phi \ls \psi\|_1 \le \frac{2}{9}, \label{eqn:final-corr} \\
&\|\zeta\|_1 = 1, \label{eqn:final-norm}\\
&\zeta \text{ is supported on } H^{N \cdot R}_{\le N}. \label{eqn:final-support}
\end{align}
\end{restatable}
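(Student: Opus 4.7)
My plan is to construct $\zeta$ by first restricting $\Phi \ls \psi$ to the low-Hamming-weight set $H^{N \cdot R}_{\le N}$ and then adding a small correction that restores both unit $\ell_1$-norm and pure high degree up to $\min\{D, \Delta\}$.

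First I would upper bound the $\ell_1$-mass that $\Phi \ls \psi$ places on inputs with $\sum_i |x_i| > N$. A direct calculation from the definition of $\ls$ shows that, after normalization, $|\Phi \ls \psi|$ induces a joint distribution on $(\bits^N)^R$ under which the sign pattern $s = (\sgn\psi(x_1), \ldots, \sgn\psi(x_R))$ is drawn according to $|\Phi|$ and, conditional on $s$, the block Hamming weights $|x_i|$ are independent with $\Pr[|x_i| = t \mid s] = 2|\omega(t)| \cdot \Ind[\sgn\omega(t) = s_i]$. Using the $(\alpha, \beta)$-decay condition jointly with the normalization $\sum_t |\omega(t)| = 1$, one can control the moments of $|x_i|$: the effective support of $|\omega|$ lies essentially within $[0, T^{\ast}]$, where $T^{\ast} \le \sqrt{\alpha}$ is the threshold at which $\alpha e^{-\beta t}/t^2 = 1$, and the tail past $T^{\ast}$ decays like $\alpha e^{-\beta t}/t^2$. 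With $N = \lceil 20 \sqrt{\alpha}\rceil R$, a careful Bernstein-type concentration argument yields $\Pr\bigl[\sum_i |x_i| > N\bigr] \le 1/9$. Writing $\chi$ for the restriction of $\Phi \ls \psi$ to $H^{N \cdot R}_{\le N}$ and $\rho$ for the complementary part, this gives $\|\rho\|_1 \le 1/9$.

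Second, I would use LP duality to find a correction $\eta$ supported on $H^{N \cdot R}_{\le N}$ with small $\|\eta\|_1$ such that $\chi + \eta$ is orthogonal to every polynomial of degree less than $\min\{D, \Delta\}$. The LP minimizing $\|\eta\|_1$ subject to these orthogonality constraints and the support constraint has dual value $\max |\langle \chi, p\rangle|$ over polynomials $p$ of degree less than $\min\{D, \Delta\}$ with $|p(x)| \le 1$ on $H^{N \cdot R}_{\le N}$. Because $\Phi \ls \psi$ annihilates polynomials of degree less than $D$ by hypothesis (\ref{eqn:prelim-phd}), for any such $p$ we have $\langle \chi, p \rangle = -\langle \rho, p \rangle$. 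Bounding $|\langle \rho, p\rangle|$ requires pairing the refined exponential tail of $\rho$ from the first step against a Markov-type estimate: after symmetrizing $p$ along permutations that preserve the block structure, a polynomial of degree $d$ with $|p| \le 1$ on $H^{N \cdot R}_{\le N}$ grows at most like $\exp(O(d\, \sqrt{|x|/N}))$ for $x$ of larger Hamming weight (by a univariate Chebyshev growth bound applied to the symmetrized polynomial). The exponential decay of $\rho$ dominates this growth provided $d < \Delta := \beta\sqrt{\alpha} R/(4\ln^2 R)$, yielding $|\langle \chi, p\rangle| \le 1/9$ and hence a valid $\eta$ with $\|\eta\|_1 \le 1/9$. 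Setting $\zeta = (\chi + \eta)/\|\chi + \eta\|_1$ then satisfies \eqref{eqn:final-phd}--\eqref{eqn:final-support}, and $\|\zeta - \Phi \ls \psi\|_1 \le 2/9$ follows from the triangle inequality.

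The main obstacle is the second step, where one must balance two competing rates: the rate of exponential tail decay $e^{-\beta t}$ of the block-weight distribution of $\rho$ against the polynomial-growth rate $\exp(O(d \sqrt{t/N}))$ of degree-$d$ polynomials bounded on $H^{N \cdot R}_{\le N}$. Requiring the former to dominate the latter when integrated over all $x$ is exactly what forces the denominator $\ln^2 R$ in $\Delta$: one $\ln R$ factor arises from the threshold $T^{\ast}$ at which the decay condition first becomes effective, and a second $\ln R$ arises from the effective union bound over the $R$ blocks that lifts per-block decay to joint decay. Tracking these bookkeeping factors exactly, while simultaneously keeping the concentration bound from the first step valid throughout the allowed parameter range $\beta \ge 4\ln^2 R / (\sqrt{\alpha} R)$, is the technical heart of this refinement over the inverse-polynomial decay argument from \cite{adegsurj}.
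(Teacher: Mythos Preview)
The paper's argument is structured quite differently, and your Step~1 has a real gap. The paper does not aim for a bound like $\|\rho\|_1\le 1/9$; instead it proves the much stronger estimate $\sum_{x\notin H_{\le N}^{N\cdot R}}|(\Phi\ls\psi)(x)|\le (2NR)^{-2\Delta}$ (Proposition~\ref{prop:prelim-mass}) via a combinatorial union bound (Lemma~\ref{lem:combinatorial}): if $\sum_i t_i>N$ then for some $s\ge T:=\lfloor N/(2M)\rfloor$ at least $s$ coordinates satisfy $t_i\ge C(s)$ for a carefully chosen threshold $C(s)$, and the probability of each such event is bounded using both the $1/t^2$ and $e^{-\beta t}$ factors in the decay hypothesis. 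This exponentially small tail is then fed into a black-box correction from prior work (Proposition~\ref{prop:correction}, building on~\cite{razborovsherstov}): whenever the out-of-promise mass is at most $(2NR)^{-2D}$, there is a $\nu$ with $\|\nu\|_1\le 1/10$, pure high degree $D$, and $\nu=\xi$ outside $H_{\le N}^{N\cdot R}$; setting $\zeta=(\xi-\nu)/\|\xi-\nu\|_1$ completes the proof. So in the paper's decomposition, all of the $\beta$-dependent analysis lives in the first step, and the second step is parameter-free.

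Your Bernstein approach in Step~1 cannot establish even the weak bound $\|\rho\|_1\le 1/9$ in this generality: the $(\alpha,\beta)$-decay condition is only an \emph{upper} bound on $|\omega(t)|$ and does not force the ``effective support'' of $|\omega|$ into $[0,\sqrt{\alpha}]$ as you assert. For instance, with $\alpha=1$ and $\beta$ near the lower endpoint $4\ln^2 R/R$, one may take $|\omega(t)|\asymp 1/t^2$ on all of $[1,M]$ with $M\asymp R/\ln^2 R$, yielding conditional block-weight mean $\E[|x_i|]\asymp\ln R$ and hence $R\cdot\E[|x_i|]\asymp R\ln R\gg N$; moment-based concentration then says nothing about $\Pr[\sum_i|x_i|>N]$. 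And even if you had $\|\rho\|_1\le 1/9$, it would not suffice for your Step~2: the LP dual pairs $\rho$ against polynomials that are bounded on $H_{\le N}^{N\cdot R}$ but may grow like $\exp(O(d\sqrt{|x|/N}))$ on the support of $\rho$, so you need the full exponentially-decaying tail structure of $\rho$ --- which is exactly what Lemma~\ref{lem:combinatorial} supplies and what your Step~1 does not. Your Step~2, incidentally, is essentially a re-derivation of Proposition~\ref{prop:correction} via its primal formulation; the paper chooses to invoke it as a black box.
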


The key refinement of \Cref{prop:btzeroing} relative to the analysis of Bun and Thaler
is that \Cref{prop:btzeroing} applies when $N=\Theta(R)$ (assuming $\alpha=O(1))$. In contrast, the techniques of Bun and Thaler required $N=\Omega(R \cdot \log^2 R)$. As indicated in \Cref{s:overviewoftheproofs}, this refinement will be essential in obtaining our lower bounds for $\SDU$, entropy approximation, and junta testing for constant proximity parameter.

The proof of \Cref{prop:btzeroing} occurs in two steps. First, in \Cref{prop:prelim-mass}, we show that $\xi = \Phi \ls \psi$ places an exponentially small amount of mass on inputs outside of $H_{\le N}^{N \cdot R}$. Second, in \Cref{prop:correction}, we construct a correction object $\nu$ that zeroes out the mass $\xi$ places outside of $H_{\le N}^{N \cdot R}$ without decreasing its pure high degree. Combining $\xi$ with $\nu$ yields the desired object $\zeta$.

\begin{proposition} \label{prop:prelim-mass}
Let $\Phi : \bits^R \to \R$ and $\psi : \bits^n \to \R$ satisfy the conditions of \Cref{prop:btzeroing}. Then for sufficiently large $R$, there exists $\Delta \ge \beta \sqrt{\alpha} R / 4\ln^2 R$ such that, for $N = \lceil 20 \sqrt{\alpha} \rceil R$,
\begin{equation}
\sum_{x \notin H_{\le N}^{N \cdot R}} |(\Phi \ls \psi)(x)| \le (2NR)^{-2\Delta}. \label{eqn:prelim-mass}
\end{equation}
\end{proposition}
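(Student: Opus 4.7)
The plan is to express the target $\ell_1$-mass as a tail probability under a product-like distribution, and then bound that tail by a Chernoff-type argument driven by the decay of $\omega$.

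I begin by unpacking the dual block composition
\[|(\Phi \ls \psi)(x_1,\dots,x_R)| = 2^R |\Phi(\sigma(x))| \prod_{i=1}^R |\psi(x_i)|,\]
where $\sigma(x)_i = \sgn\, \psi(x_i)$. The balance condition $\sum_t \omega(t) = 0$ from Equation~\eqref{eqn:omega-balanced} forces $\langle \psi, \mathbf{1} \rangle = 0$, hence $\sum_{x : \psi(x) > 0}|\psi(x)| = \sum_{x : \psi(x) < 0}|\psi(x)| = 1/2$. Combined with $\|\Phi\|_1 = 1$ and preservation of $\ell_1$-norm, $|\Phi \ls \psi|$ is a probability distribution on $(\bits^N)^R$ that factors as follows: draw $\sigma \sim |\Phi|$, then independently draw each $x_i$ from $\mu_{\sigma_i}(x) = 2|\psi(x)|\, \id[\sgn\, \psi(x) = \sigma_i]$. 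So the quantity of interest factors as
\[\sum_{x \notin H_{\leq N}^{N \cdot R}} |(\Phi\ls\psi)(x)| \;=\; \sum_{\sigma \in \bits^R} |\Phi(\sigma)|\, \Pr_{x_i \sim \mu_{\sigma_i}}\!\bigl[\,|x_1|+\dots+|x_R|>N\,\bigr].\]

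Next I bound the moment generating function of $|x_i|$ under $\mu_\pm$. Since $\psi(x) = \omega(|x|)/\binom{N}{|x|}$, the marginal of $\mu_\pm$ on $|x|$ assigns mass $2|\omega(t)|\, \id[\sgn\, \omega(t) = \pm]$ to each Hamming weight $t$. Using the decay bound for $t \geq 1$ together with $|\omega(0)| \leq 1$, for any $\lambda \in [0,\beta]$,
\[E_{\mu_\pm}\bigl[e^{\lambda|x|}\bigr] \;\leq\; 2|\omega(0)| + 2\alpha \sum_{t \geq 1} \frac{e^{(\lambda - \beta)t}}{t^2} \;\leq\; 2 + \tfrac{\pi^2}{3}\alpha.\]
I then apply Chernoff conditionally on $\sigma$. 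By independence of the $x_i$'s given $\sigma$,
\[\Pr\bigl[\,\textstyle\sum_i |x_i| > N \mid \sigma\bigr] \;\leq\; e^{-\lambda N} \prod_{i=1}^R E_{\mu_{\sigma_i}}\bigl[e^{\lambda|x_i|}\bigr] \;\leq\; e^{-\lambda N}\bigl(2+\tfrac{\pi^2}{3}\alpha\bigr)^R,\]
and averaging against $|\Phi|$ (which has total mass $1$) preserves the same upper bound.

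Finally, I optimize $\lambda$ and translate the exponent into the target form. Choosing $\lambda = \beta$ and using $N = \lceil 20\sqrt{\alpha}\rceil R$, the exponent is $-20\beta\sqrt{\alpha}R + R\ln(2 + \tfrac{\pi^2}{3}\alpha)$. Setting this equal to $-2\Delta \ln(2NR)$ and using $\beta > 4\ln^2 R/(\sqrt{\alpha}R)$, $\alpha \leq R^2$, and $\ln(2NR) = O(\ln R)$ (since $2NR = O(R^3)$), one identifies a valid $\Delta \geq \beta \sqrt{\alpha} R/(4\ln^2 R)$ for $R$ sufficiently large. The main obstacle is precisely this quantitative accounting: the additive correction $R\ln(1+\alpha)$ must be controlled relative to the principal gain $\lambda N = 20\beta\sqrt{\alpha}R$. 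Given the wide permissible range of $\alpha, \beta$, this balancing essentially saturates the ``sufficiently large $R$'' hypothesis, and at the extreme end of the parameter range will require refining the MGF bound by exploiting the balance condition $\sum_t \omega(t) = 0$ to obtain cancellation between the two sign classes of $\omega$ rather than simply applying the triangle inequality.
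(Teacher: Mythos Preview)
Your decomposition of $|(\Phi\ls\psi)|$ as a mixture over sign patterns $\sigma$ of product measures $\mu_{\sigma_i}$ is correct and matches the paper's starting point. The gap is in the Chernoff step. Your MGF bound
\[E_{\mu_\pm}\bigl[e^{\lambda|x|}\bigr]\le 2+\tfrac{\pi^2}{3}\alpha\]
uses the decay condition~\eqref{eqn:omega-decay} for \emph{all} $t\ge 1$, but that condition is essentially vacuous for $t\lesssim\sqrt{\alpha}$ (where $\alpha/t^2\ge 1$ while $|\omega(t)|\le 1$ trivially). The resulting bound $e^{-\beta N}(2+\tfrac{\pi^2}{3}\alpha)^R$ requires $20\beta\sqrt{\alpha}>\ln(2+\tfrac{\pi^2}{3}\alpha)$ to be useful, and this fails across most of the allowed parameter range. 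Concretely, in the paper's main application (Surjectivity, \Cref{sec:parametersetting}) one has $\alpha=3400$ constant and $\beta=\Theta(R^{-1/4})$, so $\beta N=\Theta(R^{3/4})$ while $R\ln(2+\tfrac{\pi^2}{3}\alpha)=\Theta(R)$: your exponent is positive and the bound blows up. You acknowledge this at the end, but the proposed fix---``exploiting the balance condition $\sum_t\omega(t)=0$ to obtain cancellation''---cannot help, since the MGF of $|x_i|$ depends only on $|\omega(t)|$, not on signs.

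The paper's proof avoids the MGF altogether. It proves a combinatorial lemma (\Cref{lem:combinatorial}): if $t_1+\dots+t_R>N$, then for some $s\in\{T,\dots,R\}$ at least $s$ of the coordinates satisfy $t_i\ge C(s):=\max\{N/(6s\ln R),\,N/(6\sqrt{Rs})\}$. One then union-bounds over the location of these $s$ large coordinates; for those coordinates the tail of the decay bound~\eqref{eqn:omega-decay} applies (now genuinely, since $C(s)$ is large), while the remaining $R-s$ coordinates are controlled by the $\ell_1$ constraint $\sum_t\eta_i(t)\le 1/2$ alone. This two-regime splitting---using $\|\omega\|_1=1$ where the decay bound is loose and the decay bound only where it bites---is precisely what your single-$\lambda$ Chernoff argument cannot reproduce.
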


\begin{proof}
 Recall that $\psi(x) = \omega(|x|) / \binom{N}{|x|}$ where $\omega \colon [M]_{0} \to \R$. By Equation~\eqref{eqn:omega-balanced} we may write $\omega = \omega_{+1} - \omega_{-1}$  where $\omega_{+1}$ and $\omega_{-1}$ are non-negative
functions satisfying
\begin{equation} \label{eq:sweeteq} \sum_{t=0}^k \omega_{+1}(t) = \sum_{t=0}^k \omega_{-1}(t) = 1/2.\end{equation}

By the definition of dual block composition, we have
\[(\Phi \ls \psi)(x_1, \dots, x_R) = 2^R \cdot \Phi(\dots, \sgn\left(\psi(x_i)\right), \dots) \cdot \prod_{i = 1}^R |\psi(x_i)|.\]
Consequently,
\begin{align}
\notag \sum_{x \notin H_{\le N}^{N \cdot R}} |(\Phi \ls \psi)(x)| &= 2^R \sum_{z \in \bits^R} |\Phi(z)| \left( \sum_{\substack{(x_1, \dots, x_R) \notin  H_{\le N}^{N \cdot R} \text{ s.t.} \\ \notag \sgn\left( \psi(x_1)\right) = z_1, \dots, \sgn\left(\psi(x_R)\right) = z_R}} \prod_{i = 1}^R |\psi(x_i)| \right) \\
&= 2^R \sum_{z \in \bits^R} |\Phi(z)| \left( \sum_{(x_1, \dots, x_R) \notin H_{\le N}^{N \cdot R}} \prod_{i = 1}^R \frac{\omega_{z_i}(|x_i|)}{\binom{N}{ |x_i|}} \right). \label{finallineman}
\end{align}

Observe that for any $(t_1, \dots, t_R) \in [M]_0^R$, the number of inputs $(x_1, \dots, x_R) \in \left(\bits^N\right)^R$
such that $|x_i| = t_i$ for all $i \in [R]$ is exactly $\prod_{i=1}^{R} \binom{N}{ t_i}$. 
Hence, defining $$P = \{(t_1, \dots, t_R) \in [M]_0^R : t_1 + \dots + t_R > N\},$$ we
may rewrite Expression
\eqref{finallineman} as
\[2^R \sum_{z \in \bits^R} |\Phi(z)| \left( \sum_{(t_1, \dots, t_R) \in P} \prod_{i = 1}^R \omega_{z_i}(t_i) \right).\]
To control this quantity, we appeal to the following combinatorial lemma,
whose proof we delay to \Cref{s:proofofcomb} (this lemma
is a substantial refinement of \cite[Lemma 32]{adegsurj}).

\begin{restatable}{lemma}{combinatorial} \label{lem:combinatorial}
Let $\alpha \le R^2$, let $\beta \in (4\ln^2 R/ \sqrt{\alpha} R, 1)$, and let $R$ be sufficiently large. Let $N = \lceil 20 \sqrt{\alpha}\rceil R$. Let $\eta_i :[M]_0 \to \R$, for $i = 1, \dots R$, be a sequence of non-negative functions where for every $i$,
\begin{align}&\sum_{r = 0}^M \eta_i(r) \le 1 / 2     \label{eqn:eta-bound} \\
&\eta_i(r) \le \alpha \exp(-\beta r) / r^2  \qquad \forall r = 1, \dots, M.     \label{eqn:eta-decay} 
\end{align}
Let $P = \{\vec{t} = (t_1, \dots, t_R) \in [M]_0^R : t_1 + \dots + t_R > N\}$. Then
\[\sum_{\vec{t} \in P} \prod_{i = 1}^R  \eta_i(t_i) \le 2^{-R} \cdot (2NR)^{-2\Delta}\]
where $\Delta \ge \beta\sqrt{\alpha} R / 4\ln^2 R$.
\end{restatable}

Observe that the functions $\omega_{z_i}$ satisfy Condition \eqref{eqn:eta-bound} (cf. Equation \eqref{eq:sweeteq}) and Condition \eqref{eqn:eta-decay} (cf. Property~\eqref{eqn:omega-decay}). We complete the proof by bounding
\begin{align*}
2^R \sum_{z \in \bits^R} |\Phi(z)| \left( \sum_{\vec{t} \in P} \prod_{i = 1}^R \omega_{z_i}(t_i) \right) &\le 2^R \sum_{z \in \bits^R} |\Phi(z)| \cdot \left(2^{-R} \cdot (2NR)^{-2\Delta}\right)\\
&=(2NR)^{-2 \Delta} .
\end{align*}
Here, the equality appeals to the condition that $\|\Phi\|_1 = 1$.
\end{proof}

Our final dual witness $\zeta$ is obtained by modifying $\xi = \Phi \ls \psi$ to zero out all of the mass it places on inputs of total Hamming weight larger than $N$. The following proposition captures the conditions under which this postprocessing step can be done.

\begin{proposition}[{\cite[Proposition 34]{adegsurj}, building on~\cite{razborovsherstov}}] \label{prop:correction}
Let $N \ge R > D$ and let $\xi : (\bits^N)^R \to \R$ be any function such that
\[\sum_{x \notin H_{\le N}^{N \cdot R}} |\xi(x)| \le (2NR)^{-2D}.\]
Then there exists a function $\nu : (\bits^N)^R \to \R$ such that
\begin{align*}
  &\text{For all polynomials } p \colon (\bits^N)^R  \to \R\text{, } \deg p < D \implies \langle \nu, p \rangle = 0    \\
   &\|\nu\|_1 \le 1/10     \\
   &|x| > N \implies \nu(x) = \xi(x).    
\end{align*}
\end{proposition}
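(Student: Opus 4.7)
The plan is to apply $\ell_1$--$\ell_\infty$ LP duality to reduce the construction of $\nu$ to a short polynomial growth bound that I would prove by Möbius inversion in the monomial basis. The pointwise-agreement requirement forces $\nu(x) = \xi(x)$ for all $x$ with $|x| > N$, so letting $\xi_{\mathrm{high}}$ denote the restriction of $\xi$ to those inputs, I would set $\nu = \xi_{\mathrm{high}} + \mu$ and build a correction $\mu$ supported on $H_{\le N}^{N \cdot R}$. The pure-high-degree condition for $\nu$ becomes $\langle \mu, p \rangle = -\langle \xi_{\mathrm{high}}, p \rangle$ for every polynomial $p$ with $\deg p < D$, and the norm requirement becomes $\|\mu\|_1 \le 1/10 - \|\xi_{\mathrm{high}}\|_1$. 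Since the hypothesis already gives $\|\xi_{\mathrm{high}}\|_1 \le (2NR)^{-2D}$, which is negligible, it suffices to produce a $\mu$ with $\|\mu\|_1 \le 1/20$.

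By strong LP duality, such a $\mu$ exists with $\|\mu\|_1 \le \gamma$ if and only if
\[
|\langle \xi_{\mathrm{high}}, p \rangle| \;\le\; \gamma \qquad \text{for every polynomial $p$ with $\deg p < D$ and $\max_{x \in H_{\le N}^{N \cdot R}} |p(x)| \le 1$.}
\]
For any such $p$, a crude $\ell_1$--$\ell_\infty$ pairing gives
\[
|\langle \xi_{\mathrm{high}}, p \rangle| \;\le\; \Bigl(\textstyle\sum_{x \notin H_{\le N}^{N \cdot R}} |\xi(x)|\Bigr) \cdot \max_{x \in (\bits^N)^R} |p(x)| \;\le\; (2NR)^{-2D} \cdot \max_{x} |p(x)|,
\]
so the whole problem reduces to a uniform growth bound on $p$ over the full cube.

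The growth bound I would prove is $\max_{x \in (\bits^N)^R} |p(x)| \le 2 \, (2NR)^{D-1}$ for any $p$ of degree less than $D$ satisfying $|p| \le 1$ on $H_{\le N}^{N \cdot R}$. After the harmless linear change of variables $z_i = (1 - x_i)/2$ to move to $\{0,1\}^{NR}$, expand the resulting polynomial as $p(z) = \sum_{|S| < D} c_S \prod_{i \in S} z_i$. The hypothesis $D < R \le N$ ensures that for every $T$ with $|T| < D$, the indicator $\mathbf{1}_T$ corresponds to a point of Hamming weight $|T| < N$, so $|p(\mathbf{1}_T)| \le 1$. Möbius inversion then yields $c_S = \sum_{T \subseteq S} (-1)^{|S| - |T|} p(\mathbf{1}_T)$ and hence $|c_S| \le 2^{|S|} \le 2^{D-1}$; summing $|c_S| \cdot |\prod_{i \in S} z_i| \le 2^{D-1}$ over the at most $2(NR)^{D-1}$ monomials of degree less than $D$ yields the claimed bound.

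Putting everything together, $\|\mu\|_1 \le 2(2NR)^{-D-1}$, which together with $NR \ge 4$ (forced by $N \ge R > D \ge 1$) is comfortably below $1/20$. Then $\nu = \xi_{\mathrm{high}} + \mu$ agrees with $\xi$ on inputs of Hamming weight $> N$ by construction, annihilates every polynomial of degree $< D$ by the orthogonality enforced on $\mu$, and satisfies $\|\nu\|_1 \le \|\xi_{\mathrm{high}}\|_1 + \|\mu\|_1 \le 1/10$. The main potential obstacle I anticipated was a Chebyshev-type multivariate growth estimate, but the Möbius-inversion approach reduces this to a one-line calculation. A subtle point worth checking is that the LP dual correctly ranges over all polynomials $p$ of degree $< D$ on $(\bits^N)^R$, not merely their restrictions to $H_{\le N}^{N \cdot R}$, because the functional $p \mapsto \langle \xi_{\mathrm{high}}, p \rangle$ genuinely depends on the values of $p$ outside the low-weight region.
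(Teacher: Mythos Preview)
The paper does not give its own proof of this proposition; it is quoted verbatim from \cite[Proposition~34]{adegsurj}. Your argument is correct, and in fact the remark immediately following the proposition in the paper points to precisely the route you take: the primal formulation of the statement is a growth bound on low-degree polynomials that are bounded on low-Hamming-weight inputs, and that is exactly the M\"obius-inversion estimate you establish. One small point worth making explicit is primal feasibility in your LP (i.e., that some $\mu$ supported on $H_{\le N}^{N\cdot R}$ with the prescribed moments exists at all); this is automatic here because your growth bound shows the dual is bounded while $p=0$ shows it is feasible, so infeasibility of the primal is ruled out.

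By way of comparison, the proof in \cite{adegsurj}, following Razborov--Sherstov, builds $\nu$ explicitly rather than via duality: for each high-weight point $y$ one writes down a concrete function $\nu_y$ of pure high degree at least $D$ with $\nu_y(y)=1$ and controlled $\ell_1$ norm, then sets $\nu=\sum_{|y|>N}\xi(y)\,\nu_y$. The $\ell_1$ bound on $\nu_y$ in that construction is precisely the dual of your polynomial growth bound, so the two proofs are equivalent; yours is arguably cleaner because it replaces the explicit combinatorial construction with a one-line LP-duality reduction and the elementary M\"obius argument.
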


\begin{remark}
\Cref{prop:correction} is framed exclusively in the language 
of dual polynomials: it states that if a dual polynomial $\xi$ places very little mass on inputs
of Hamming weight more than $N$,
then there exists another dual polynomial $\nu$ satisfying certain useful properties
(we ultimately use $\nu$ to zero out the mass that $\xi$ places on inputs of Hamming weight more than $N$).
\Cref{prop:correction} also has a clean \emph{primal} formulation. Roughly speaking,
it is equivalent to a bound on the growth rate of any polynomial of degree at most $D$ that is bounded at
all inputs of Hamming weight at most $D$. We direct the interested
reader to \cite{violalecturenotes} for details of this primal formulation.
\end{remark}

We are now ready to combine \Cref{prop:prelim-mass} and \Cref{prop:correction} to complete the proof of \Cref{prop:btzeroing}.

\begin{proof}[Proof of \Cref{prop:btzeroing}]
Let $\xi = \Phi \ls \psi$. By \Cref{prop:prelim-mass}, we have
\[\sum_{x \notin H_{\le N}^{N \cdot R}} |(\Phi \ls \psi)(x)| \le (2 N R)^{-2\Delta},\]
where $\Delta \ge \beta \sqrt{\alpha} R /\ln^2 R$. By \Cref{prop:correction}, there exists a function $\nu : (\bits^N)^R \to \R$ such that
\begin{align}
  &\text{For all polynomials } p \colon (\bits^N)^R  \to \R\text{, } \deg p < \min\{D, \Delta\} \implies \langle \nu, p \rangle = 0    \label{eqn:correction-phd} \\
   &\|\nu\|_1 \le 1/10     \label{eqn:correction-norm}  \\
   &|x| > N \implies \nu(x) = \xi(x).     \label{eqn:correction-correct}  
\end{align}

Observe that $\|\xi - \nu\|_1 > 0$, as
$\|\xi\|_1 = 1$ (cf. Equation \eqref{eqn:ls-norm}) and $\|\nu\|_1 \leq 1/10$ (cf. Inequality \eqref{eqn:correction-norm}).
Define the function
\[\zeta(x) = \frac{\xi(x) - \nu(x)}{\|\xi - \nu\|_1}.\]
Since $\nu(x) = \xi(x)$ whenever $|x| > N$ (cf. Equation~\eqref{eqn:correction-correct}), the function $\zeta$ is supported on the set $H_{\le N}^{N \cdot R}$, establishing~\eqref{eqn:final-support}. We establish~\eqref{eqn:final-corr} by computing
\begin{align*}
\|\zeta - \xi\|_1 &= \sum_{x \in (\bits^N)^R} \left|  \frac{\xi(x) - \nu(x)}{\|\xi - \nu\|_1} - \xi(x)\right| \\
&\le \sum_{x \in (\bits^N)^R} \left( \frac{1}{\|\xi - \nu\|_1} - 1\right)|\xi(x)| + \frac{1}{\|\xi - \nu\|_1}|\nu(x)| \\
&\le \left( \frac{1}{\|\xi\|_1 - \|\nu\|_1} - 1 \right) \cdot \|\xi\|_1 + \frac{1}{\|\xi\|_1 - \|\nu\|_1} \cdot \|\nu\|_1 \\
&\le \left(\frac{1}{1 - 1/10} - 1 \right)+ \frac{1 /10 }{1 - 1/10} \\
&= \frac{2}{9}.
\end{align*}

Equation~\eqref{eqn:final-norm} is immediate from the definition of $\zeta$. Finally,~\eqref{eqn:final-phd} follows from~\eqref{eqn:prelim-phd}, \eqref{eqn:correction-phd}, and linearity.
\end{proof}

\subsubsection{Proof of \texorpdfstring{\Cref{lem:combinatorial}}{Lemma \ref*{lem:combinatorial}}}
\label{s:proofofcomb}
Our final task is to prove \Cref{lem:combinatorial}, restated here for the reader's convenience.

\combinatorial*

\begin{lemma} \label{lem:binomial}
Let $k, n \in \N$ with $k \le n$. Then 
$\binom{n}{k} \le \left(\frac{en}{k}\right)^k$.
\end{lemma}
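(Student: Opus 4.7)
The plan is to derive the bound by the standard two-step estimate: first upper bound the falling factorial in the numerator of $\binom{n}{k}$ by $n^k$, and then lower bound the factorial $k!$ in the denominator by $(k/e)^k$.

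First I would write $\binom{n}{k} = \frac{n(n-1)(n-2) \cdots (n-k+1)}{k!}$ and observe that since each of the $k$ factors in the numerator is at most $n$, we have the crude but adequate bound $\binom{n}{k} \le n^k / k!$. This step is essentially trivial and needs no further justification.

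The only step requiring actual content is the lower bound $k! \ge (k/e)^k$. The cleanest way I know is to use the Taylor expansion
\[
e^k = \sum_{i=0}^{\infty} \frac{k^i}{i!} \ge \frac{k^k}{k!},
\]
where the inequality is just the fact that we are dropping all terms of the infinite sum other than the $i=k$ term (which is legitimate since every term is nonnegative). Rearranging gives $k! \ge k^k / e^k = (k/e)^k$, as needed.

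Combining the two estimates yields $\binom{n}{k} \le n^k / k! \le n^k \cdot (e/k)^k = (en/k)^k$, which is the claimed bound. I do not anticipate any real obstacle: the only substantive ingredient is the Taylor series lower bound on $k!$, and even that is a one-line derivation. The lemma is almost certainly included only as a reusable estimate for the proof of \Cref{lem:combinatorial}, where it will presumably be applied to bound various multinomial sums arising from the set $P = \{\vec t \in [M]_0^R : t_1 + \cdots + t_R > N\}$.
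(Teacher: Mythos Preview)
Your proof is correct; the paper itself states this lemma without proof, treating it as a standard estimate. Your derivation via $e^k \ge k^k/k!$ is one of the canonical one-line arguments for this bound, and your remark that the lemma is used in the proof of \Cref{lem:combinatorial} (specifically to control the $\binom{R}{s}$ factor in the sum over $s$) is accurate.
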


\begin{lemma} \label{lem:inv-exp-tail}
Let $m \in \N$ and $\eta > 0$. Then
\[\sum_{r = m}^\infty e^{-\eta r} = \frac{1}{1 - e^{-\eta}} \cdot e^{-\eta m}.\]
\end{lemma}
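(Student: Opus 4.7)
The statement is a direct application of the geometric series identity, so the plan is to reduce it to that standard formula by reindexing the sum. First I would substitute $k = r - m$ so that the summation index starts at $0$, rewriting
\[
\sum_{r=m}^{\infty} e^{-\eta r} \;=\; \sum_{k=0}^{\infty} e^{-\eta(k+m)} \;=\; e^{-\eta m} \sum_{k=0}^{\infty} e^{-\eta k},
\]
using the fact that $e^{-\eta m}$ does not depend on the summation variable and can therefore be pulled outside.

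Next I would apply the geometric series formula $\sum_{k=0}^\infty x^k = 1/(1-x)$ with $x = e^{-\eta}$. Since $\eta > 0$, we have $0 < e^{-\eta} < 1$, so the series converges absolutely and the formula applies, yielding
\[
\sum_{k=0}^\infty e^{-\eta k} \;=\; \frac{1}{1 - e^{-\eta}}.
\]
Substituting back gives
\[
\sum_{r=m}^\infty e^{-\eta r} \;=\; \frac{1}{1 - e^{-\eta}} \cdot e^{-\eta m},
\]
as claimed.

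There is no real obstacle here; the only minor point to be careful about is verifying the convergence condition $e^{-\eta} < 1$, which is immediate from the hypothesis $\eta > 0$. The entire argument is two lines and amounts to a reindexed geometric series, so I would present it as a short calculation rather than broken into formal steps.
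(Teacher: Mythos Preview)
Your proof is correct and essentially identical to the paper's: both factor out $e^{-\eta m}$ to reindex the sum from $0$ and then apply the geometric series formula $\sum_{k\ge 0} x^k = 1/(1-x)$ with $x = e^{-\eta}$. The paper's version is a one-line calculation, matching your description.
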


\begin{proof}
We calculate
\[\sum_{r = m}^\infty e^{-\eta r} = e^{-\eta m}  \cdot \sum_{r = 0}^\infty e^{-\eta r} = e^{-\eta m} \cdot \frac{1}{1 - e^{-\eta}}.\qedhere
\]
\end{proof}

\begin{lemma} \label{lem:inv-sqrt}
Let $m \in N$. Then
\[\sum_{r = 1}^m \frac{1}{\sqrt{r}} \le 2 \sqrt{m}.\]
\end{lemma}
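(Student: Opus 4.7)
The plan is to prove this via a telescoping identity, which gives the bound directly without any slack that needs to be absorbed at the end. The key observation is that
\[
\sqrt{r} - \sqrt{r-1} \;=\; \frac{(\sqrt{r}-\sqrt{r-1})(\sqrt{r}+\sqrt{r-1})}{\sqrt{r}+\sqrt{r-1}} \;=\; \frac{1}{\sqrt{r}+\sqrt{r-1}},
\]
obtained by rationalizing the numerator. Since $\sqrt{r-1} \le \sqrt{r}$, the denominator satisfies $\sqrt{r}+\sqrt{r-1} \le 2\sqrt{r}$, and hence
\[
\frac{1}{\sqrt{r}} \;\le\; 2(\sqrt{r}-\sqrt{r-1}) \qquad \text{for every } r \ge 1.
\]

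Summing this inequality from $r=1$ to $m$ and telescoping yields
\[
\sum_{r=1}^m \frac{1}{\sqrt{r}} \;\le\; 2\sum_{r=1}^m \bigl(\sqrt{r}-\sqrt{r-1}\bigr) \;=\; 2\sqrt{m},
\]
as desired. (An equivalent route is the integral test: since $1/\sqrt{x}$ is decreasing on $[1,\infty)$, the bound $1/\sqrt{r} \le \int_{r-1}^{r} dx/\sqrt{x}$ for $r \ge 2$ gives $\sum_{r=1}^m 1/\sqrt{r} \le 1 + \int_1^m dx/\sqrt{x} = 2\sqrt{m}-1$, which is slightly stronger but requires a little extra care at $r=1$.) There is no real obstacle here; the only subtle point is choosing between the two routes, and I would prefer the telescoping proof because it is elementary, self-contained, and avoids a separate handling of the $r=1$ term.
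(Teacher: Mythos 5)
Your proof is correct, but it takes a genuinely different route from the paper's. The paper uses the integral test directly: since $1/\sqrt{t}$ is decreasing, it bounds $\sum_{r=1}^m 1/\sqrt{r} \le \int_0^m t^{-1/2}\,dt = 2\sqrt{m}$, starting the (improper but convergent) integral at $0$ rather than $1$ precisely so that the $r=1$ term is absorbed into $\int_0^1 t^{-1/2}\,dt = 2 \ge 1$ with no special casing. Your telescoping argument via $\frac{1}{\sqrt{r}} \le 2(\sqrt{r}-\sqrt{r-1})$ reaches the same constant entirely within elementary algebra, avoiding any integral (improper or otherwise), and arguably makes it more transparent why the constant $2$ appears. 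The variant of the integral test you sketch in parentheses (integrating from $1$ and handling $r=1$ separately) gives the slightly stronger bound $2\sqrt{m}-1$, but the paper does not need it and sidesteps the bookkeeping by its choice of lower limit. Either argument is perfectly adequate for the lemma as stated.
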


\begin{proof}
Using the fact that the function $1/\sqrt{t}$ is decreasing, we may estimate
\[
  \sum_{r = 1}^m \frac{1}{\sqrt{r}} \le \int_{0}^m \frac{1}{\sqrt{t}} \ dt = 2\sqrt{m}.\qedhere
\]
\end{proof}

\begin{proof}

Let $T = \lfloor \frac{N}{2M} \rfloor$.  For each $s \in \{T, \dots, R\}$, let 
\[C(s) = \max \left\{ \frac{N}{6 s \ln R}, \frac{N}{6\sqrt{R \cdot s}} \right\}.\]

We begin with a simple, but important, structural observation about the set $P$. Let $t = (t_1, \dots, t_R) \in [M]_0^R$ be a sequence such that $t_1 + \dots + t_R > N$. Then  we claim that there exists an $s \in \{T \dots, R\}$ such that $t_i \ge C(s)$ for at least $s$ indices $i \in [R]$. To see this, assume without loss of generality that the entries of $\vec{t}$ are sorted so that $t_1 \ge t_2 \ge \dots \ge t_R$. Then there must exist an $s \geq T$ such that $t_s \ge C(s)$. Otherwise, because no $t_i$ can exceed $M$, we would have:
\begin{align*}
t_1 + \dots + t_R  &< T \cdot M + \sum_{s = 1}^R C(s) \\
&\le \frac{N}{2} + \frac{N}{6 \ln R} \sum_{s = 1}^{\lfloor R / \ln^2 R \rfloor} \frac{1}{s} + \frac{N}{6\sqrt{R}} \sum_{s = \lceil R / \ln^2 R \rceil}^R \frac{1}{\sqrt{s}}\\
&\le \frac{N}{2}+ \frac{N}{6} + \frac{N}{3} \\
&= N.
\end{align*}
where the last inequality follows from \Cref{lem:inv-sqrt} and the fact that $\sum_{s = 1}^m s^{-1} \le \ln m + 1$ (and $R$ is sufficiently large). 
Since the entries of $\vec{t}$ are sorted, the preceding values $t_1, \dots, t_{s-1} \geq C(s)$ as well.

For each subset $S \subseteq [R]$, define
\[P_S = \{\vec{t} \in P : t_i \ge C(|S|)\text{ for all indices }i \in S\}.\]
The observations above guarantee that for every $\vec{t}=(t_1, \dots, t_R) \in P$, 
there exists some set $S$ of size at least $s \in \{T, \dots, R\}$ such that 
$t_i \geq C(s)$ for all $i \in S$. Hence
\begin{align*} \hspace{-10mm}
\sum_{\vec{t} \in P} \prod_{i = 1}^R \eta_i(t_i)  & \leq \sum_{s = T}^{R} \sum_{S \subseteq [R] : |S| = s} \sum_{\vec{t} \in P_S} \prod_{i = 1}^R \eta_i(t_i) \\
& \leq \sum_{s = T}^{R} \binom{R}{s} \left( \max_{i \in \{1, \dots, R\}} \left( \sum_{r=\lceil C(s) \rceil}^M \eta_i(r)\right) \right)^s  \left( \max_{i \in \{1, \dots, R\}} \left( \sum_{r = 0}^M \eta_i(r) \right)\right)^{R-s} \hspace{-3.3em}&\hspace{-3em}\\
&\leq  2^{-R}\sum_{s = T}^{R} \binom{R}{s} \left( \sum_{r = \lceil C(s) \rceil}^M 2\alpha \exp(-\beta r ) r^{-2} \right)^s & \hspace{-28mm} \text{by Properties } \eqref{eqn:eta-bound} \text{ and } \eqref{eqn:eta-decay}\\
&\le 2^{-R} \sum_{s = T}^{R} \left( \frac{Re}{s} \right)^s \left(\frac{2\alpha}{(C(s))^2} \right)^{s} \cdot \sum_{r = \lceil C(s) \rceil}^\infty \exp\left(-\beta r s\right) & \hspace{-28mm} \text{by \Cref{lem:binomial}}\\
&\le 2^{-R} \sum_{s = T}^{R} \left( \frac{Re}{s} \right)^s \left(\frac{2\alpha}{(C(s))^2} \right)^{s} \cdot \frac{1}{1 - e^{-\beta s}} \cdot  \exp\left(-\beta s C(s)\right) & \hspace{-28mm} \text{by \Cref{lem:inv-exp-tail}}\\
&\le 2^{-R} \sum_{s = T}^{R} \left( \frac{Re}{s} \right)^s \left(\frac{72\alpha R s}{N^2} \right)^{s} \cdot \frac{2}{\beta} \cdot  \exp\left(-\frac{\beta N}{6\ln R}\right) & \text{by definition of $C(s)$}\\
&\qquad\qquad\text{ and since } \frac{1}{1 - e^{-\beta s}} \le \frac{1}{1 - (1-\beta/2)} = \frac{2}{\beta} \text{ for } s \ge 1, \beta \in (0, 1) \hspace{-10em}&\hspace{-10em} \\
&\le 2^{-R}  \sum_{s = T}^{R} 2^{-s} \cdot \exp\left(-\frac{3 \beta \sqrt{\alpha} R }{ \ln R} + \ln(2/\beta)\right)&\hspace{-28mm}\text{setting } N = \lceil 20 \sqrt{\alpha}\rceil R \\
&\le 2^{-R} \cdot (2 N R)^{-2\Delta},
\end{align*}
where
\[\Delta = \frac{1}{2\ln(2NR)} \cdot \left(\frac{3\beta \sqrt{\alpha} R}{\ln R} - \ln(2/\beta) \right) \ge \frac{\beta \sqrt{\alpha} R}{4\ln^2 R}\]
holds for sufficiently large $R$ by the restrictions placed on $\alpha$ and $\beta$ in the statement of the lemma.
\end{proof}

 
\section{Upper Bound for Surjectivity}
\label{s:upper}
The goal of this section is to prove that the approximate degree of the \Surjectivity\ function (\Cref{def:surj}) is $\tO(N^{3/4})$.
\begin{theorem}
\label{thm:surj_upper} \label{thm:upper}
For any $R \in \mathbb{N}$, we have $\adeg(\SURJ_{N,R}) = \tO(N^{3/4})$.
\end{theorem}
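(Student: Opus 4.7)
The plan is to follow the multi-phase construction outlined in \Cref{s:overview}. Fix a threshold $T = \Theta(\sqrt{N})$ and recall \Cref{cond}: every range item has frequency at most $T$. View $\SURJ$ via its canonical formula
\[
\SURJ(s_1,\ldots,s_N) \;=\; \bigwedge_{r=1}^{R}\bigvee_{i=1}^{N}\id[s_i=r],
\]
which expresses $\SURJ$ on the indicator bits $x_{r,i}=\id[s_i=r]$ as $\AND_R\circ\OR_N$ applied to $R$ disjointly supported blocks whose Hamming weights $f_r$ sum to $N$. The overall strategy is: (i) cheaply approximate $\SURJ$ on all inputs satisfying Condition~1, where every inner $\OR_N$ sees a low-Hamming-weight input; (ii) multiply by a dampening polynomial that shrinks the approximation to essentially zero off of Condition~1; (iii) restore the correct Boolean value on the remaining inputs via an averaging argument. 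By design the three components will each contribute degree $\tilde{O}(N^{3/4})$.

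For the base approximation $p$, I would first construct a univariate polynomial $\widetilde{\OR}$ of degree $\tilde{O}(\sqrt{T})=\tilde{O}(N^{1/4})$ that $\eps$-approximates the symmetrization of $\OR_N$ on Hamming weights $\{0,1,\ldots,T\}$ (via a standard Chebyshev extrapolation), allowing $\widetilde{\OR}$ to become exponentially large at larger Hamming weights. Pulling this back through \Cref{lem:mp} gives a polynomial of degree $\tilde{O}(\sqrt{T})$ that approximates each inner $\OR_N$ whenever its input block has weight at most $T$. I would then feed the $R$ inner approximators into Sherstov's robust polynomial approximation of $\AND_R$ (degree $\tilde{O}(\sqrt{R})$), which tolerates $\eps$-size perturbations of each coordinate. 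Because on Condition~1 inputs every inner $\widetilde{\OR}$ lies in $[-1-\eps,1+\eps]$, robust composition produces a polynomial $p$ of degree $\tilde{O}(\sqrt{RT})=\tilde{O}(N^{3/4})$ with $|p(x)-\SURJ(x)|\le 1/6$ on all Condition~1 inputs, while $|p(x)|$ may be enormous elsewhere.

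For the dampening phase I would build a polynomial $q$, also of degree $\tilde{O}(N^{3/4})$, that is $\approx 1$ on Condition~1 inputs and smaller than $1/\|p\|_{\infty}$ at every input where some $f_r>T$. A natural construction is a dual-block-style composition: use a per-range-item univariate Chebyshev witness of degree $\tilde{O}(\sqrt{T})$ that evaluates to $\approx 1$ when $f_r\le T$ and is exponentially tiny when $f_r$ exceeds $T$ by more than an additive $O(\sqrt{T\log R})$, then aggregate these $R$ witnesses via a robust $\AND_R$ of degree $\tilde{O}(\sqrt{R})$. The product $pq$ then has degree $\tilde{O}(N^{3/4})$, tracks $\SURJ$ to error $1/3$ on Condition~1 inputs, and is exponentially small on all other inputs.

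The main obstacle is the final step: boosting $pq$, which is essentially zero on non-Condition-1 inputs where $\SURJ$ can equal $\pm 1$, into a genuinely bounded approximator on all of $\bits^n$. My plan is to add a correction term obtained by averaging. Specifically, I would post-compose $pq$ with a randomized ``sparsifying'' operation $\rho$ on the input -- for instance, subsample each occurrence of the list items with a carefully tuned probability, or randomly partition $[R]$ into small buckets and replace each bucket's frequency by a capped surrogate -- so that $\rho(x)$ satisfies Condition~1 with high probability while preserving whether every range item appears. Taking $\E_\rho[p\,q](\rho(x))$ then yields a polynomial of the same asymptotic degree whose value matches $\SURJ(x)$ even when $x$ violates Condition~1. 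The delicate work here will be (a) choosing an averaging distribution that preserves surjectivity with the required confidence on every input including those with a few very heavy range items, and (b) verifying that each sample contributes a polynomial of degree $\tilde{O}(N^{3/4})$ and that averaging does not inflate this. I expect the construction to be reverse-engineered so that complementary slackness against the dual witness driving the matching $\tilde\Omega(N^{3/4})$ lower bound of \Cref{s:surjlower} certifies its tightness.
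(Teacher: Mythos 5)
Your phase (i) is essentially the paper's: approximate each inner $\OR_N$ on Hamming weights $\le T$ by a Chebyshev-based polynomial of degree $\tO(\sqrt{T})$ that is allowed to explode above weight $T$, then compose with an $\tO(\sqrt{R})$-degree approximator of $\AND_R$ (the paper uses naive composition with inner error $O(1/N)$ rather than robust composition, but either works), and control the blowup outside $\P$ via the multilinear bound $|w(z)|\le\prod(|1-z_r|+|z_r|)$. The gaps are in phases (ii) and (iii). For (ii): a \emph{bounded} univariate polynomial on $\{0,\dots,N\}$ that is $\approx 1$ on $[0,T]$ and tiny on $[T+g,N]$ must, by Bernstein--Markov, have degree $\Omega(\sqrt{TN}/g)$; with $T=\Theta(\sqrt N)$ and $g=O(\sqrt{T\log R})$ this is already $\tOmega(\sqrt{N})$ \emph{per range item} for constant error (and another $\tO(\sqrt{T})$ factor for the exponentially small error you need to beat $\|p\|_\infty$), so your per-item witnesses of degree $\tO(\sqrt{T})$ do not exist, and aggregating $R$ such witnesses through $\AND_R$ blows far past $N^{3/4}$. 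For (iii): subsampling occurrences of list items does not preserve surjectivity --- a range item of frequency $1$ survives a $T/N$-rate subsample with probability only $T/N$ --- and the bucketing variant is too vague to certify.

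The missing idea, which replaces both your $q$ and your $\rho$ at once, is to average over a random sample of \emph{positions} and shrink the target set of range items rather than the input: pick $\Scal\subseteq[N]$ of size $S=\tTheta(N^{3/4})$, let $\Rcal(x_\Scal)$ be the range items \emph{not} seen in $x_\Scal$, and output $r(x)=\binom{N}{S}^{-1}\sum_{\Scal}p_{\Rcal(x_\Scal)}(x)$, implemented with degree $S\log R$ via the indicators $\id_y(x_\Scal)$. Since every removed item provably occurs, $\SURJ_{\Rcal(x_\Scal)}(x)=\SURJ(x)$ for every $x$ and every $\Scal$, so correctness outside $\P$ is automatic; and any item with frequency $>T$ is missed by the sample with probability only $\exp(-ST/N)=\exp(-\tOmega(N^{1/4}))$, so the expected contribution of samples for which $b$ heavy items survive, which is at most $\exp(-b\cdot\tOmega(ST/N))\cdot\exp(b\cdot\tO(\sqrt{T}))$, is $o(1)$. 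This probabilistic decay is the ``dampening,'' obtained at degree $S\log R=\tO(N^{3/4})$ with no bounded threshold detector needed. I'd encourage you to rework phases (ii)--(iii) along these lines.
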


For the remainder of the section, we focus on proving \Cref{thm:upper} in the case that $R=\Theta(N)$. This is without loss of generality by the following argument. 
If $R > N$, then 
\Surjectivity\ is identically false, and hence has (exact) degree 0. 
And if $R=o(N)$, then we can reduce to the case $R=\Theta(N)$ as follows. Let $N'=2N$ and $R' = R + N$;
clearly $R' = \Theta(N)$.
Given an input $x$ to $\SURJ_{N, R}$, obtain an input $x'$ to $\SURJ_{N', R'}$ by appending range elements $R+1, \dots, R+N$ to $x$. 
This construction guarantees that $\SURJ_{N, R}(x) = \SURJ_{N', R'}(x')$. It follows that an approximation of degree $\tilde{O}(N^{3/4})$ for $\SURJ_{N', R'}$
implies an approximation of the same degree for $\SURJ_{N, R}$.

\medskip \noindent \textbf{Section Outline.} This section is structured as follows. 
\Cref{specialnotation} introduces some notation that is specific to this section.
\Cref{s:warmup} provides some intuition for the construction of the polynomial approximation for \Surjectivity\ using the simpler function $\NOR$ as a warmup example. 
\Cref{s:terminology} introduces some terminology that is useful for providing intuitive descriptions of our final polynomial construction using the language of algorithms. Finally, in \Cref{s:formal} we formally apply our strategy to \Surjectivity\ in order to prove \Cref{thm:surj_upper}.

\subsection{Notation}
\label{specialnotation}
In this section, we make a few departures from the notation used in the introduction and later sections in order to more easily convey the algorithmic intuition behind our polynomial constructions. First, we will consider Boolean functions $f : \B^n \to \B$, where $1$ corresponds to logical $\mathsf{TRUE}$ and $0$ corresponds to logical $\mathsf{FALSE}$. For such a Boolean function, we say that a polynomial $p : \B^n \to \R$ is an $\eps$-approximating polynomial for $f$ if $p(x) \in [0, \eps]$ when $f(x) = 0$ and $p(x) \in [1-\eps, 1]$ when $f(x) = 1$. 
For such a polynomial $p$, it will be useful to think of $p(x)$ as representing 
the probability that a randomized or quantum algorithm accepts when run on input $x$.

By extension, throughout this section we will use $\adeg_\eps(f)$ to denote the least degree of a real polynomial $p$ that $\eps$-approximates $f$ in the sense described above, and we will write $\adeg(f) = \adeg_{1/3}(f)$. Given an input $x \in \B^n$,
we will use $|x|$ to denote its Hamming weight (i.e., $|x| = \sum_i x_i$).

\subsection{Warmup: Approximating \texorpdfstring{$\NOR$}{NOR}}
\label{s:warmup}
To convey the intuition behind our polynomial construction for \Surjectivity, we start by considering a much simpler function as an illustrative example. 
Consider the negation of the $\OR$ function on $n$ bits, $\NOR \colon \B^n\to\B$. 
We will give a novel construction of an approximating polynomial for $\NOR$ of degree $\tilde{O}(\sqrt{n})$. 
Of course, this is not terribly interesting since it is already known that $\adeg(\NOR) = \Theta(\sqrt{n})$ (cf. \Cref{andornor}). But this construction highlights the main idea behind the more involved constructions to follow.

In many models of computation, such as deterministic or randomized query complexity, the $\NOR$ function remains just as hard if we restrict to inputs with $|x|=0$ or $|x|=1$ (where $|x|$ denotes the Hamming weight of $x\in\B^n$). 
This is fairly intuitive, since distinguishing these two types of inputs essentially requires finding a single $1$ among $n$ possible locations. 
The fact that these inputs represent the ``hard case'' is true for approximate degree as well:
any polynomial that uniformly approximates $\NOR$ to error $1/3$ on Hamming weights 0 and 1, and merely remains bounded in $[0,1]$ on the rest of the hypercube, has degree $\Omega(\sqrt{n})$ \cite{nisanszegedy}. 

However, if we remove the boundedness constraint on inputs of Hamming weight larger than 1, then there is a polynomial of degree \emph{one} that exactly equals the $\NOR$ function on Hamming weights 0 and 1: namely, the polynomial  $1-\sum_{i} x_i$. That is, if we view $\NOR$ as a promise function mapping ${H}_{\leq 1}^n$ to $\bits$, then its approximate degree is $\Theta(n^{1/2})$, but its \emph{unbounded}
approximate degree is just $1$. 

More generally, suppose that we only want the polynomial to approximate the $\NOR$ function on all inputs of Hamming weight up to $T\leq n$, and we place no restrictions on the polynomial when evaluated at inputs of Hamming weight larger than $T$. This can be achieved by a polynomial of degree $O(\sqrt{T})$ (see \Cref{lem:orT} for a proof). 

Let us refer to the set of low-Hamming weight inputs as $\P$, i.e.,
\begin{equation}
    \P = H^n_{\le T} = \{x\in\B^n: 0\leq |x| \leq T\}. \label{pdefnor}
\end{equation}

The above discussion shows that we can construct a polynomial $p$ of degree $O(\sqrt{T})$ that tightly approximates $\NOR$ on inputs $x\in\P$, though $|p(x)|$ may be exponentially large for $x \notin \P$. 
On the other hand, distinguishing inputs with $|x|=0$ from inputs with $x \notin \P$ also seems ``easy''. 
For example, a randomized algorithm that simply samples $\Theta(n/T)$ bits and declares $|x|=0$ if and only if it does not see a $1$
is correct with high probability. 
Analogously, we can construct a low-degree polynomial $\tilde{q}$ which distinguishes between $|x|=0$ or $x \notin \P$ (and
is bounded in $[0, 1]$ for \emph{all} inputs in $\B^n$):
it is not hard to show (via an explicit construction involving Chebyshev polynomials, or by appeal to a quantum algorithm called quantum counting \cite{quantumcounting})  that there exists a polynomial  $\tilde{q}$ of degree $O(\sqrt{n/T})$ that accomplishes this.

To summarize the above discussion, we can construct polynomials $p$ and $\tilde{q}$, of degree $O(\sqrt{T})$ and $O(\sqrt{n/T})$ respectively, with the following properties:
\begin{align}
p(x) \in \begin{cases} [9/10,1] &\mathrm{if}~ |x| = 0 \\
[0,1/10] & \mathrm{if}~ 1\leq |x| \leq T \\ 
\mathbb{R} & \mathrm{if}~ x \notin \P \end{cases}
\qquad \qquad
\tilde{q}(x) \in \begin{cases} [9/10,1] &\mathrm{if}~ |x| = 0 \\
[0,1] & \mathrm{if}~ 1\leq |x| \leq T \\ 
[0,1/10] & \mathrm{if}~ x \notin \P \end{cases}
\end{align}

Now consider the polynomial $p(x) \cdot \tilde{q}(x)$. This polynomial approximates $\NOR$ on $|x|=0$, since its value is in $[0.81,1]$. 
It also approximates $\NOR$ on $1\leq|x|\leq T$, since its value is in $[0,1/10]$. 
However, when $x \notin \P$, although $\tilde{q}(x)$ is small, we cannot ensure that the product $p(x) \cdot q(x)$ is small,
since we have no control over $p(x)$ for such $x$.

To fix this, we will construct a new polynomial $q$ that behaves like $\tilde{q}$ for $x \in \P$ and is \emph{extremely} small when $x \notin \P$
(in particular $0 \leq q(x) \ll 1/|p(x)|$ for such $x$). 
To understand how small we need $q$ to be, we need to determine just how large $p(x)$ can be on inputs with $x \notin \P$. 

To understand the behavior of $p(x)$ outside of $\P$, we can either analyze the behavior of an explicit polynomial of our choice for the $\NOR$ function or we can appeal to a general result about the growth of polynomials that are bounded in a region (see \Cref{lem:orT}). In either case, we get that there exists an upper bound $M = \exp(O(\sqrt{T} \log n))$ such that for all $x \notin \P$, $|p(x)|\leq M$.

This leads us to design a new polynomial $q$, which has the same behavior as $\tilde{q}$ for all $x \in \P$, but is at most $1/(3M$) when $x \notin \P$. 
We can construct the polynomial $q$ from $\tilde{q}$ by applying standard error reduction to reduce the approximation error of $\tilde{q}$ to $\epsilon=1/(3M)$, which increases its the degree by a multiplicative factor of $O(\log(3M)) = O(\sqrt{T} \log n)$. Thus, $\deg(q) = O(\sqrt{n/T} \cdot \sqrt{T} \log n) = \tO(\sqrt{n})$.

In summary, we have now constructed polynomials $p$ and $q$ with the following behavior:
\begin{align}
p(x) \in \begin{cases} [9/10,1] &\mathrm{if}~ |x| = 0 \\
[0,1/10] & \mathrm{if}~ 1\leq |x| \leq T \\ 
[0,M] & \mathrm{if}~ x \notin \P \end{cases}
\qquad 
q(x) \in \begin{cases} [1-1/(3M),1] &\mathrm{if}~ |x| = 0 \\
[0,1] & \mathrm{if}~ 1\leq |x| \leq T \\ 
[0,1/(3M)] & \mathrm{if}~ x \notin \P \end{cases}
\end{align}

\begin{figure}
    \centering
    \begin{minipage}{0.48\textwidth}
        \includegraphics[clip,width=\textwidth]{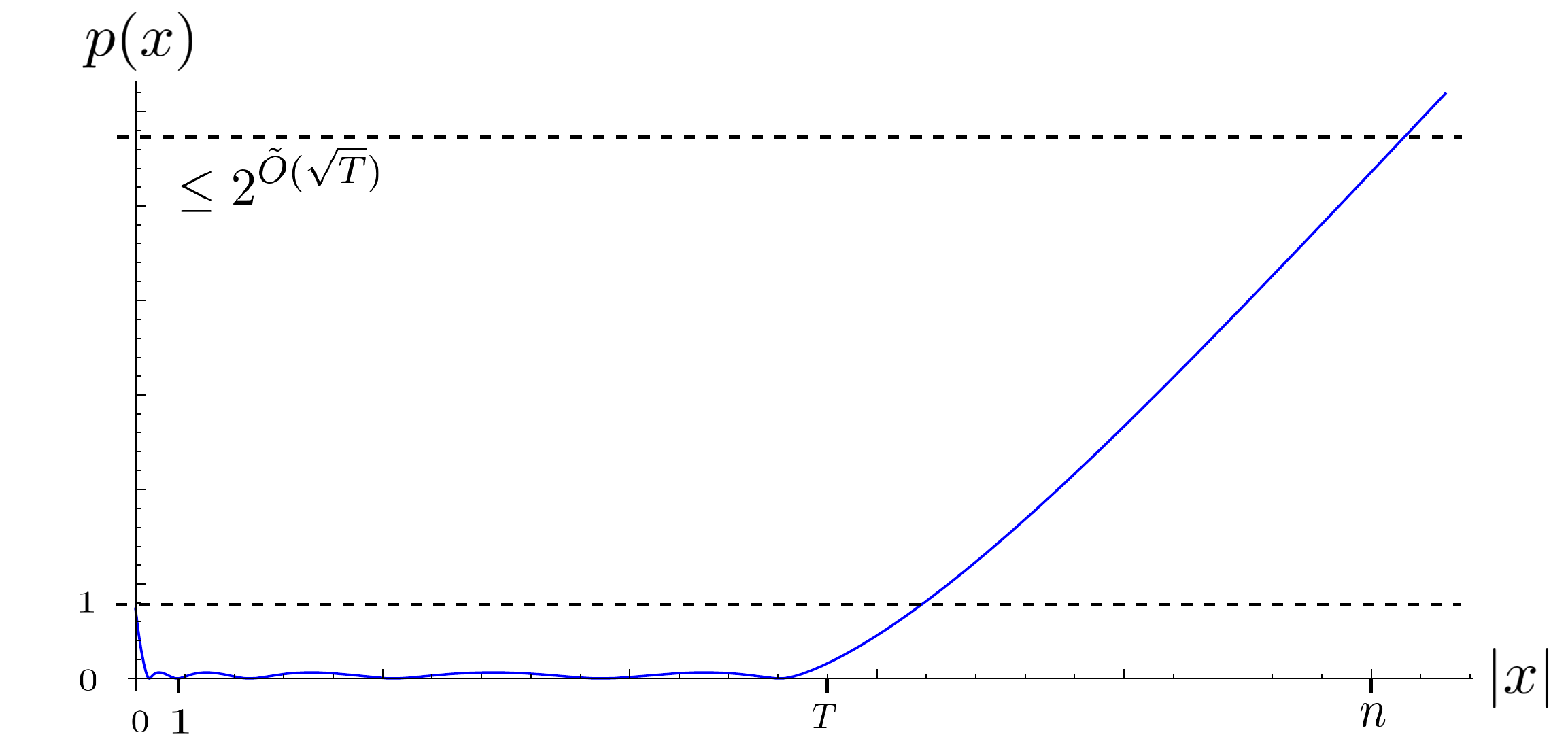} 
        \caption{Caricature of the polynomial $p(x)$\label{fig:p}}
    \end{minipage}\hfill
    \begin{minipage}{0.48\textwidth}
        \includegraphics[clip,width=\textwidth]{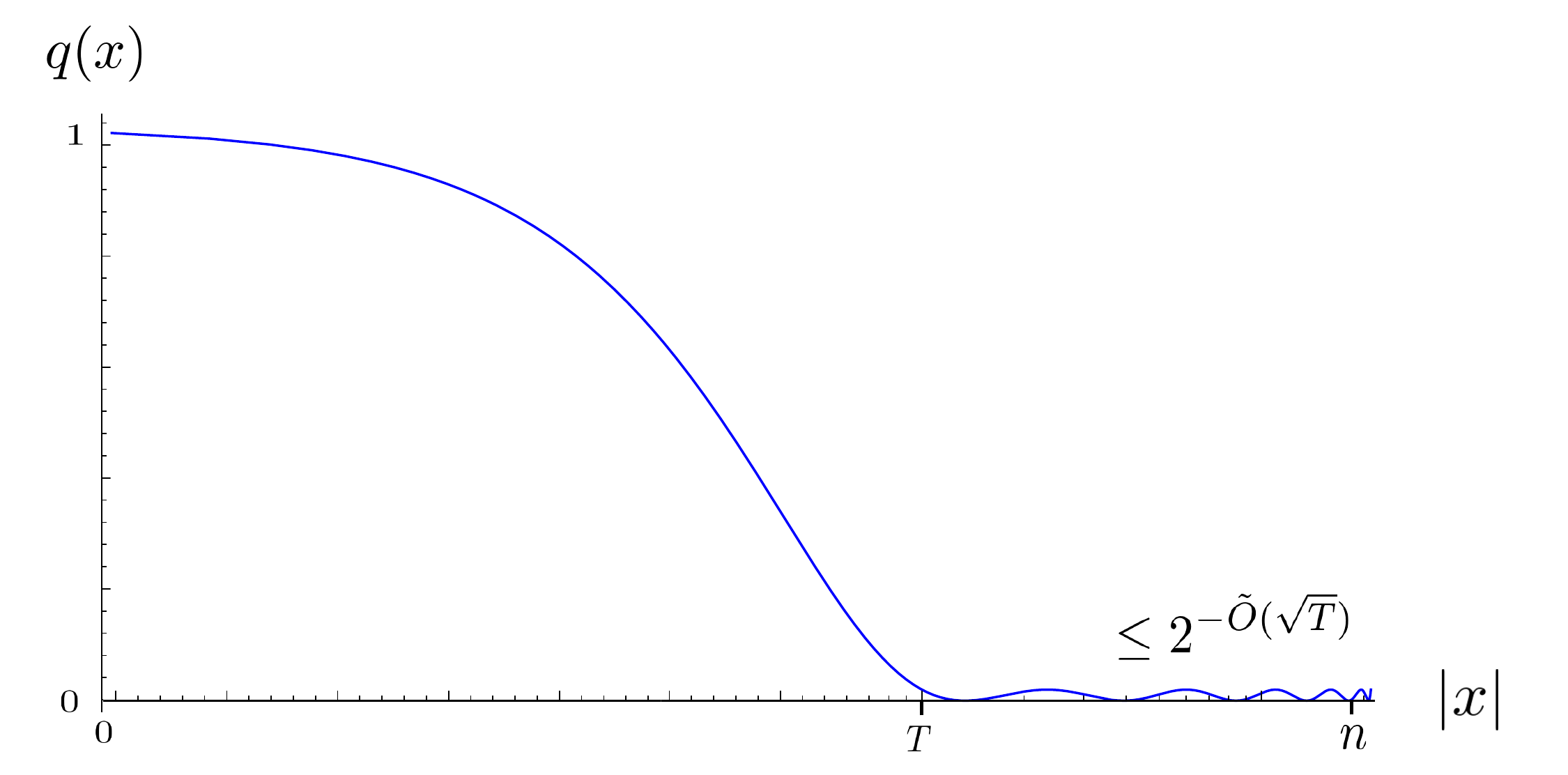} 
        \caption{Caricature of the polynomial $q(x)$\label{fig:q}}
    \end{minipage}
\end{figure}

Caricatures of these polynomials are depicted in \Cref{fig:p} and \Cref{fig:q}. It is now easy to see that the product $r(x)=p(x)\cdot q(x)$ is a (1/3)-error approximation to $\NOR$ for $\emph{all}$ $x \in \B^n$. 
The degree of the constructed polynomial is $\deg(r) \leq \deg(p) + \deg(q) = \tO(\sqrt{T}+\sqrt{n}) = \tO(\sqrt{n})$. 

Thus our constructed polynomial, $r$, has degree $\tO(\sqrt{n})$ and approximates $\NOR$ to error $1/3$, which is optimal by \Cref{andornor}.

\label{s:strategy}

\subsection{Informal Terminology: Polynomials as Algorithms}
\label{s:terminology}
Before moving on to \Surjectivity, we briefly introduce some terminology that will allow us convey the intuition of more involved constructions by reasoning about polynomials as if they were algorithms. 

Consider three Boolean functions $p_1:\B^n\to\B$, $p_2:\B^n\to\B$ and $p_3:\B^n\to\B$, and suppose there are deterministic algorithms $A_1$, $A_2$, and $A_3$ that compute these Boolean functions exactly. Now it makes sense to say ``run algorithm $A_1$ in input $x$; if it accepts then output $A_2(x)$, and if it rejects, then output $A_3(x)$.'' The Boolean function computed by this is $A_2(x)$ if $A_1(x)=1$ and $A_3(x)$ if $A_1(x)=0$. Observe that the following polynomial composition of $p_1, p_2, p_3$ computes the same Boolean function: $p_1(x)p_2(x) + (1-p_1(x))p_3(x)$.

We would like to use this terminology to discuss combining polynomials more generally. For polynomials $p$ that approximate Boolean functions by outputting a value in $[0,1]$ on any input $x$, we can imagine $p(x)$ as representing the probability that a randomized (or quantum) algorithm accepts on input $x$, and the same interpretation goes through. We will also use the same terminology for polynomials that output values greater than $1$, in which case we cannot interpret the output as a probability, but expressions like $p_1(x)p_2(x) + (1-p_1(x))p_3(x)$ still make sense.

For example, in the previous section we had two polynomials $p$ and $q$ and we constructed the polynomial $r(x)=p(x) \cdot q(x)$ by multiplying the two polynomials together. 
Another way to think of this is that $r$ is the polynomial obtained when we ``run'' the polynomial $q$ and output $p$ if it accepts and output $0$ if it rejects. 
This yields the polynomial $q(x)p(x) + (1-q(x)) \cdot 0 = q(x)p(x)$. 
We would like to informally describe polynomial constructions using this language, which will be especially useful when the constructions become more involved. So for example, we would informally describe the polynomial we constructed for the $\NOR$ function as follows (\Cref{alg:NOR}): 

\begin{algorithm}
\caption{An informal description of the polynomial approximation for $\NOR$}\label{alg:NOR}
\begin{algorithmic}[1]
\vspace{0.25em}
\State Using $q$, check if $|x|>T$ (with exponentially small probability of error if indeed $|x| > T$). If so, halt and output 0. 
\State Using $p$, compute $\NOR$ under the promise that $0\leq |x| \leq T$ and output the result.
\end{algorithmic}
\end{algorithm}

\subsection{Approximating \Surjectivity}
\label{s:formal}
We now construct a polynomial that approximates \Surjectivity\ using the strategy described above. Recall that $\SURJ : [R]^N \to \B$ is defined by $\SURJ(x_1, \dots, x_N) = 1$ if and only if for all $r \in [R]$ there exists an $i \in [N]$ such that $x_i = r$.

In this section, we will use the notation
\begin{equation}
    \fr_r(x) = |\{i \in [N]: x_i=r\}|
\end{equation} 
to denote the number of times the range element $r$ appears in the input $x$. Note that the \Surjectivity\ function evaluates to $1$ if and only if $\fr_r(x)\geq 1$ for all $r\in[R]$.

Finally, we will also need to consider a generalized version of \Surjectivity, $\Rcal$-\Surjectivity\ for some set $\Rcal \subseteq [R]$, which we denote $\SURJ_{\Rcal}$. 
As with \Surjectivity, 
\begin{equation}
    \SURJ_{\Rcal}:[R]^N \to \B,
\end{equation}
and $\SURJ_{\Rcal}(x_1, \ldots, x_N) = 1$ if and only if for all $r \in \Rcal$ there exists an $i\in [N]$ such that $x_i = r$. 
In other words, $\SURJ_\Rcal(x)=1$ if and only if for all $r\in\Rcal$ we have $\fr_r(x)\geq 1$.
Note that $\SURJ_{[R]} = \SURJ$. 
Our construction will actually show more generally that $\adeg(\SURJ_{\Rcal}) = \tO(N^{3/4})$ for all $\Rcal \subseteq [R]$.

\subsubsection{Approximating \Surjectivity\ on the Hard Inputs} \label{sec:surj-hard-inputs}

To implement the strategy described in \Cref{s:strategy} in the context of \Surjectivity, we first choose a set $\P$ of inputs that we consider to be ``hard''. 
Since \Surjectivity\ can be phrased as asking whether all range items appear at least once in the input, it is natural to consider the hard inputs to be those that have few occurrences of each range item.  Intuitively, this is because on such inputs, evidence that any given range item $r$ appears at least once in the input is hard to find. 

To this end, we define $\mathcal{P}$ as the set of inputs for which every range item appears at most $T$ times, for a parameter $T$ to be chosen later:
\begin{equation}
    \mathcal{P}=\{x:\forall r\in [R],~\fr_r(x)\leq T\}. \label{pdefsurj}
\end{equation}
More generally, when we consider $\Rcal$-\Surjectivity, we denote the set of hard inputs $\P_\Rcal$ and define it as
\begin{equation}
    \P_\Rcal=\{x:\forall r\in \Rcal,~\fr_r(x)\leq T\}.
\end{equation}

In this section, we will construct a polynomial $p_{\mathcal{R}}$ that approximates $\SURJ_\Rcal$ on $\P_\Rcal$ to bounded error, but may be exponentially large outside of $\P_{\Rcal}$. 
The value of $T$ in the definition of $\P_{\Rcal}$ will be chosen later;
for now we only assume that our choice will satisfy $T=N^{\Theta(1)}$, which simplifies some expressions since we have $\log T = \Theta(\log N)$.

\medskip \noindent \textbf{Overview of the Construction of $p_{\mathcal{R}}$.} To construct a polynomial that works on the hard inputs, $\P_\Rcal$, we first express $\SURJ_\Rcal$ as 
\begin{equation}
    \SURJ_\Rcal(x) = \bigwedge_{r\in\Rcal} \bigvee_{i\in[N]} \id[x_i=r], \label{eq:feck}
\end{equation}
where $\id[x_i=r]$ is the indicator function that takes value $1$ when $x_i=r$ and $0$ otherwise. 
Observe that for any fixed $r \in \Rcal$, the function $\id[x_i=r]$ depends on only $\log R$ bits
of $x$ and hence can be exactly computed by a polynomial of degree at most $\log R$. 

Since our goal in this section is to construct a polynomial $p_{\mathcal{R}}$ that approximates $\SURJ_\Rcal$ on all inputs in $\P_{\Rcal}$ and may be exponentially large outside of $\P_{\Rcal}$, we can assume that each inner $\OR$ gate 
in Equation \eqref{eq:feck} is fed an input of Hamming weight at most $T$. 
Hence, our approach will be to first construct a low-degree polynomial $q$ that approximates $\AND_{|\Rcal|} \circ \OR_{N}$ for inputs in $\left(H^N_{\le T}\right)^{|\mathcal{R}|}$.
We then obtain the polynomial $p_{\mathcal{R}}$ that approximates $\SURJ_{\Rcal}$ at all inputs in $\P_{\Rcal}$
by composing $q$ with the indicator functions $\{\id[x_i=r]\}_{i \in [N], r \in \mathcal{R}}$. Notice that the degree of $p_{\mathcal{R}}$ is at most $\deg(q) \cdot \log R$.

To construct $q$, our approach is as follows. First, we 
construct a polynomial $V_T$ of degree $O(T^{1/2} \log N)$ that approximates $\OR_N$ to error $O(1/N)$ at all inputs of Hamming weight at most $T$.
(However, $V_T(x)$ may be as large as $\exp(T^{1/2} \log N)$ for inputs $x$ of larger Hamming weight). 
Invoking \Cref{andornor}, we let $w$ be a polynomial of degree $\Theta(N^{1/2})$ that approximates $\AND_{|\Rcal|}$ to error $1/20$,
and finally we define $q := w \circ V_T$. 
A simple and elegant analysis of Buhrman et al. \cite{BuhrmanNRW07} (cf. \Cref{lem:naive}) allows
us to argue that $q$ indeed approximates $\AND_{|\Rcal|} \circ \OR_{N}$
on $\left(H^N_{\le T}\right)^{|\mathcal{R}|}$.

\medskip \noindent \textbf{Preliminary Lemmas.}
Before formally defining and analyzing $p_{\mathcal{R}}$, we record a few important facts about Chebyshev polynomials that will be useful throughout the remainder of this section.

\begin{lemma}[Properties of Chebyshev Polynomials]\label{lem:cheb}
Let $d \in \mathbb{N}$.
\begin{enumerate}[(1)]
\item There exists a polynomial $T_d : \R \to \R$ (the Chebyshev polynomial of degree $d$) such that $T_d(x) \in [-1, 1]$ for all $x \in [-1, 1]$ and $T_d(1 + \mu) \ge \frac{1}{2} \exp(d\sqrt{\mu})$ for all $\mu \in (0, 1)$.
\item For any polynomial $p : \R \to \R$ of degree $d$ with $|p(x)| \le 1$ for all $x \in [-1, 1]$, we have that for any $x$ with $|x| > 1$,
\begin{equation}
    |p(x)| \leq |T_d(x)| \leq (2|x|)^d, \label{eq:swell}
\end{equation}
where $T_d$ is the Chebyshev polynomial of degree $d$.
\end{enumerate}
\end{lemma}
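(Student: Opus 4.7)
The plan is to prove both parts using the standard trigonometric/hyperbolic characterization of Chebyshev polynomials: define $T_d$ as the unique polynomial satisfying $T_d(\cos\theta) = \cos(d\theta)$ for $\theta \in [0,\pi]$, which extends to $|x|\ge 1$ via $T_d(\cosh t) = \cosh(dt)$ (and is even/odd in $x$ depending on parity of $d$). The boundedness $T_d(x) \in [-1,1]$ on $[-1,1]$ is immediate from the definition.

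For the lower bound in part (1), I would write $1+\mu = \cosh t$ with $t = \operatorname{arccosh}(1+\mu)$, so that $T_d(1+\mu) = \cosh(dt) \ge \tfrac12 e^{dt}$. It then suffices to show $t \ge \sqrt{\mu}$ for $\mu \in (0,1)$, equivalently $\cosh(\sqrt{\mu}) \le 1+\mu$. Expanding the Taylor series $\cosh(\sqrt{\mu}) = 1 + \mu/2 + \mu^2/24 + \mu^3/720 + \cdots$, the inequality reduces to $\mu/2 \ge \sum_{k\ge 2} \mu^k/(2k)!$, which is easily verified for $\mu \in (0,1)$ by bounding the tail by a geometric series.

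For part (2), the first inequality $|p(x)| \le |T_d(x)|$ is the classical extremal property of Chebyshev polynomials. I would argue by contradiction: suppose $|p(x_0)| > |T_d(x_0)|$ at some $|x_0|>1$, and set $c = T_d(x_0)/p(x_0)$, so $|c|<1$. Consider $q = T_d - c\, p$, which has degree at most $d$ and satisfies $q(x_0) = 0$. At the $d+1$ Chebyshev nodes $y_j = \cos(j\pi/d)$ for $j = 0,\dots,d$, we have $T_d(y_j) = (-1)^j$ while $|c\, p(y_j)| \le |c| < 1$, so $q(y_j)$ alternates in sign. By the intermediate value theorem $q$ has at least $d$ zeros in $(-1,1)$; together with $x_0$ this gives $d+1$ zeros, forcing $q\equiv 0$. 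But then $p = T_d/c$, which yields $|p(y_j)| = 1/|c| > 1$, contradicting the hypothesis that $|p|\le 1$ on $[-1,1]$. For the second inequality, I would invoke the closed form $T_d(x) = \tfrac12\bigl[(x+\sqrt{x^2-1})^d + (x-\sqrt{x^2-1})^d\bigr]$ valid for $|x|\ge 1$; since $\sqrt{x^2-1} \le |x|$, each summand is bounded in magnitude by $(2|x|)^d$, giving $|T_d(x)| \le (2|x|)^d$.

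All ingredients are classical, so I do not expect a genuine obstacle. The one place requiring care is the elementary estimate $\operatorname{arccosh}(1+\mu) \ge \sqrt{\mu}$ on $(0,1)$, since a naive linearization gives only $\sqrt{2\mu}$ near $0$; the Taylor-series bookkeeping above handles this uniformly on the stated range.
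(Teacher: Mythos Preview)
Your proof is correct and, for part~(1), follows exactly the paper's approach: the paper also writes $T_d(1+\mu) = \cosh(d\operatorname{arcosh}(1+\mu))$ and then uses $\operatorname{arcosh}(1+\mu) \ge \sqrt{\mu}$ for $\mu \in (0,1)$ (though it asserts this inequality without the Taylor-series justification you supply). For part~(2), the paper does not give a proof at all and simply cites Cheney's textbook; your self-contained argument via the alternating-sign/zero-counting trick for the extremal property, plus the explicit formula $T_d(x) = \tfrac12\bigl[(x+\sqrt{x^2-1})^d + (x-\sqrt{x^2-1})^d\bigr]$ for the growth bound, is the standard classical proof and is more informative than the paper's citation.
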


\begin{proof}
To establish property (1), we use the fact that for $\mu > 0$, the value of the degree $d$ Chebyshev polynomial can be written~\cite[\S 3.2 Problem 8(f)]{cheney} 
 as
\begin{align*}
T_d(1 + \mu) &= \cosh ( d \operatorname{arcosh} ( 1 + \mu) ) \\
& \ge \cosh ( d \sqrt{\mu} ) & \text{ for } \mu \le 1\\
& \ge \frac{1}{2} \cdot \exp(d \sqrt{\mu}).
\end{align*}
Property (2) appears as~\cite[3.2 Problem 19]{cheney}.
\end{proof}

We are now ready to establish
the existence of a low-degree polynomial that approximates $\OR$ at all inputs of low Hamming weight.

\begin{lemma}[Approximating $\OR$ on Inputs of Low Hamming Weight]\label{lem:orT}
Let $\eps \in (0, 1)$ and $1 \le T \le N$. There is a polynomial $V_{T, \eps}:\B^N\to \R$ of degree $O(\sqrt{T}\log (1/\eps))$ such that
\begin{align}
V_{T, \eps}(x) \in \begin{cases} [0,\eps] &\mathrm{if}~ |x| = 0 \\
[1-\eps,1] & \mathrm{if}~ 1\leq |x| \leq T \\ 
[-a, a] \text{ for some } a \in {\exp\left(O\left(\sqrt{T} \cdot \log N \cdot \log\left(1/\eps\right)\right)\right)} & \mathrm{if}~ |x| > T \end{cases}.
\end{align}
\end{lemma}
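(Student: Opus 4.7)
The plan is to reduce to constructing a univariate polynomial $q : \R \to \R$ of degree $d = O(\sqrt{T}\log(1/\eps))$ with the desired behavior at the integers $0,1,\ldots,N$, and then take $V_{T,\eps}(x) := q(|x|)$. Since $|x| = \sum_i x_i$ has degree~$1$ in the input bits, this preserves the degree. So the whole problem reduces to building a univariate $q$ such that $q(0) \in [0,\eps]$, $q(t) \in [1-\eps,1]$ for $t \in \{1,\ldots,T\}$, and $|q(t)| \leq \exp(O(\sqrt{T}\log N\log(1/\eps)))$ for $t \in \{T+1,\ldots,N\}$.

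The construction is a shifted, rescaled Chebyshev polynomial in the spirit of Nisan--Szegedy. Let $\mu = 2/(T-1)$ and define the affine map $\sigma(t) = \frac{2(t-1)}{T-1} - 1$, so that $\sigma(0) = -(1+\mu)$, $\sigma(\{1,\ldots,T\}) \subseteq [-1,1]$, and $|\sigma(t)| \le 2N/T$ for $t \in \{T+1,\ldots,N\}$. Pick $d = \lceil \sqrt{T}\,\ln(8/\eps)\rceil$; by part (1) of \Cref{lem:cheb} applied at $1+\mu$, combined with $T_d(-y) = (-1)^d T_d(y)$, we have $|T_d(\sigma(0))| = T_d(1+\mu) \ge \tfrac{1}{2}\exp(d\sqrt{\mu}) \ge 4/\eps$. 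Now set
\[
q(t) \;=\; \frac{1}{1+\eps/4}\left(1 \;-\; \frac{T_d(\sigma(t))}{T_d(\sigma(0))}\right).
\]

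Verification of the three conditions is then straightforward. At $t=0$ the ratio $T_d(\sigma(0))/T_d(\sigma(0))$ equals $1$, giving $q(0) = 0 \in [0,\eps]$. For $t \in \{1,\ldots,T\}$, the standard Chebyshev bound $|T_d(y)| \le 1$ on $[-1,1]$ together with $|T_d(\sigma(0))| \ge 4/\eps$ gives $|T_d(\sigma(t))/T_d(\sigma(0))| \le \eps/4$, so $q(t) \in \tfrac{1}{1+\eps/4}[1-\eps/4,\,1+\eps/4] \subseteq [1-\eps,\,1]$. For $t > T$ we have $|\sigma(t)| > 1$, so part (2) of \Cref{lem:cheb} yields $|T_d(\sigma(t))| \le (2|\sigma(t)|)^d \le (4N/T)^d$, and therefore
\[
|q(t)| \;\le\; 1 + \frac{\eps}{4}\,(4N/T)^d \;\le\; \exp\bigl(O(d\log N)\bigr) \;=\; \exp\bigl(O(\sqrt{T}\,\log N\,\log(1/\eps))\bigr),
\]
which is the desired bound.

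There is no real obstacle here; the only points that require a bit of care are (i) arranging the rescaling factor $1/(1+\eps/4)$ so that $q$ lands in $[1-\eps,1]$ rather than straddling $1$ from above, and (ii) invoking the Chebyshev growth bound of \Cref{lem:cheb}(2) in the correct regime $|\sigma(t)|>1$, i.e., exactly when $t > T$. Setting $V_{T,\eps}(x) = q(|x|)$ then completes the proof, with $\deg V_{T,\eps} = d = O(\sqrt{T}\log(1/\eps))$ and with the blowup outside the promise absorbed into the $a$ of the lemma statement.
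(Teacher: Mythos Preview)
Your proposal is correct and takes essentially the same approach as the paper: both reduce to a univariate polynomial in $|x|$ obtained as an affine rescaling of a Chebyshev polynomial, and both verify the three Hamming-weight regimes via the boundedness of $T_d$ on $[-1,1]$ and its growth bound outside. The only cosmetic difference is the choice of affine map (the paper uses $t \mapsto 1 + 1/T - t/T$, mapping $\{1,\ldots,T\}$ into $[1/T,1]$, whereas you map $\{1,\ldots,T\}$ onto all of $[-1,1]$); as a minor caveat, your parameter $\mu = 2/(T-1)$ is undefined at $T=1$ and falls outside the $(0,1)$ range of \Cref{lem:cheb}(1) at $T=2$, so those edge cases need to be handled separately (they are trivial), while the paper's affine map with $\mu = 1/T$ avoids this wrinkle.
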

\begin{proof}
Choose $d = O(\sqrt{T} \log(1/\eps))$ so that $M := T_d(1 + 1/T) + 1\ge 2/\eps$ (as guaranteed by Property 1 of \Cref{lem:cheb}). Define $V_{T, \eps}$ by the following affine transformation of $T_d$:
\[V_{T, \eps}(x) = \left(1  - \frac{1}{M}\right) - \frac{1}{M} \cdot T_d\left(1 + \frac{1}{T} - \frac{|x|}{T}\right).\]
 Then for $|x| = 0$, we have 
\[V_{T, \eps}(x) =  \left(1  - \frac{1}{M}\right) - \frac{1}{M} \cdot T_d\left(1 + \frac{1}{T}\right) = 0.\]
If $1 \le |x| \le T$, then $1 + 1/T - |x| / T \in [-1, 1]$, so $V_{T, \eps(x)} \in [1-\eps, 1]$. Finally, if $T +1 \le |x| \le N$, then
\[|V_{T, \eps}(x)| \le 1 + \frac{1}{M} \cdot T_d(N / T) \le \exp(O(\sqrt{T} \cdot\log N \cdot\log(1/\eps)),\]
where the final inequality holds by Equation \eqref{eq:swell}.
\end{proof}

The following lemma shows that if $p$ and $q$ are approximating polynomials for Boolean functions $f$ and $g$, respectively, then the block composition $p \circ q$ approximates $f \circ g$, with a blowup in error that is proportional to the number of variables over which $f$ is defined. The proof is due to Buhrman et al.~\cite{BuhrmanNRW07}, but our formulation is slightly different so we give the proof for completeness.

\begin{lemma} \label{lem:naive}
Let $f : \B^n \to \B$ and $g : X \to \B$ be Boolean functions, where $X \subseteq \{0, 1\}^m$ for some $m$. Let $p : \B^n \to [0, 1]$ be an $\eps$-approximating polynomial for $f$, and let $q : X \to [0, 1]$ be a $\delta$-approximating polynomial for $g$. Then the block composition $p \circ q : X^n \to \R$ is an $(\eps + \delta n)$-approximating polynomial for $f \circ g : X^n \to \B$.
\end{lemma}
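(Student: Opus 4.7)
The plan is to exploit the fact that the multilinear extension of a Boolean function admits a clean probabilistic interpretation. Let $P : \R^n \to \R$ denote the (unique) multilinear extension of $p : \B^n \to \R$. Standard manipulation shows that for any $y = (y_1,\dots,y_n) \in [0,1]^n$,
\[
P(y_1,\dots,y_n) = \E_{Z_1,\dots,Z_n}[\,p(Z_1,\dots,Z_n)\,],
\]
where each $Z_i$ is an independent Bernoulli random variable with mean $y_i$. Since $q$ takes values in $[0,1]$, this identity applies with $y_i = q(x_i)$, and hence the composed polynomial $p \circ q$ evaluated at $(x_1,\dots,x_n) \in X^n$ equals $\E[p(Z_1,\dots,Z_n)]$ with $Z_i \sim \mathrm{Bernoulli}(q(x_i))$.

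Next I would compare this expectation to $p(g(x_1),\dots,g(x_n))$, which is the expectation under the product of point masses $B_i = g(x_i)$. Because $|q(x_i) - g(x_i)| \le \delta$, the total variation distance between $Z_i$ and $B_i$ is at most $\delta$, and tensorization of TV distance across independent coordinates yields a joint TV bound of $n\delta$. Since $p$ maps Boolean inputs into $[0,1]$ (an immediate consequence of $p$ being an $\eps$-approximating polynomial), coupling gives
\[
\bigl|\,(p\circ q)(x_1,\dots,x_n) - p(g(x_1),\dots,g(x_n))\,\bigr| \;\le\; n\delta.
\]

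Finally, since $p$ is an $\eps$-approximating polynomial for $f$, we have
\[
\bigl|\,p(g(x_1),\dots,g(x_n)) - f(g(x_1),\dots,g(x_n))\,\bigr| \;\le\; \eps.
\]
A triangle inequality combines these two bounds to yield the desired $(\eps + n\delta)$ approximation error on $X^n$. To confirm that $p\circ q$ qualifies as an $(\eps+n\delta)$-approximating polynomial in the sense of the paper's definition, I would also note that $(p\circ q)(x_1,\dots,x_n) = \E[p(Z_1,\dots,Z_n)]$ is an average of values in $[0,1]$ and hence lies in $[0,1]$, so the approximation is one-sided in the appropriate way (in $[0,\eps+n\delta]$ when $(f\circ g)(x)=0$ and in $[1-\eps-n\delta,1]$ when $(f\circ g)(x)=1$).

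There is no real obstacle here; the only thing to be careful about is the probabilistic interpretation of the multilinear extension and the fact that $p$ being an $\eps$-approximating polynomial guarantees $p \in [0,1]$ on Boolean inputs, which is exactly what enables the TV-distance bound to translate to a bound on $|\E[p(Z)] - \E[p(B)]|$.
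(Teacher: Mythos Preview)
Your proposal is correct and follows essentially the same approach as the paper: both use the probabilistic interpretation of the multilinear extension, introducing independent Bernoulli variables $Z_i$ with mean $q(x_i)$ and comparing $\E[p(Z)]$ to $p(g(x_1),\dots,g(x_n))$ via the boundedness of $p$ in $[0,1]$. The only cosmetic difference is that you phrase the comparison in terms of total variation distance and tensorization, whereas the paper directly computes $\Pr[Z = y] \ge (1-\delta)^n \ge 1 - n\delta$ and splits the expectation accordingly; since the TV distance between the Bernoulli product and the point mass at $y$ is exactly $\Pr[Z \ne y]$, these are the same calculation in different clothing.
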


\begin{proof}
Fix any input $x = (x_1, \dots, x_n) \in X^n$, and let $y = (g(x_1), \dots, g(x_n)) \in \B^n$. Let $z \in [0, 1]^n$ be defined by $z = (q(x_1), \dots, q(x_n))$. Since $p$ is an $\eps$-approximating polynomial for $f$, by the triangle inequality, it suffices to show that $|p(y) - p(z)| \le \delta n$.

Let $Z$ be a random variable on $\B^n$ where each $Z_i = 1$ independently with probability $z_i$. Then due to the multilinearity of $p$, we have
\[p(z) = \E[p(Z)] = \Pr[Z =y] \cdot p(y) + \Pr[Z \ne y] \cdot \E[p(Z) | Z \ne y].\]
Since $q$ is a $\delta$-approximating polynomial for $g$, we have $|y_i - z_i| \le \delta$ for every $i \in [n]$.  Hence,
\[\Pr[Z = y] \ge \left(1 - \delta \right)^n \ge 1 - \delta n.\]
Because $p$ is bounded in $[0, 1]$, this implies
\[p(z) \ge (1 - \delta n) \cdot p(y) + 0 \ge p(y) - \delta n\]
and
\[p(z) \le 1 \cdot p(y) + \delta n \cdot 1 = p(y) + \delta n,\]
completing the proof.
\end{proof}

\medskip \noindent \textbf{Formal Definition of $p_{\mathcal{R}}$.} Let $w$ be a $(1/20)$-approximating polynomial for $\AND_{|\mathcal{R}|}$ of degree $O(|\mathcal{R}|^{1/2})$
whose existence is guaranteed by \Cref{andornor}. We may assume that $w(x) \in [0, 1]$ for all $x \in \B^n$ (we will exploit this assumption in the proof of \Cref{lem:promiseSURJ} below, as it allows
us to apply \Cref{lem:real-bound} below to $w$). Let $q := w \circ V_{T, 1/(20n)}$.
Finally, let us define $p_{\mathcal{R}}$ to be the composition of $q$ with the indicator functions $\{\id[x_i=r]\}_{i \in [N], r \in \mathcal{R}}$.
For example, if $\mathcal{R}=\{1, \dots, |\mathcal{R}|\}$,
then \begin{equation*} p_{\mathcal{R}} = q\left(\id\left[x_1 = 1\right], \id\left[x_2=1\right], \dots,
\id\left[x_N=1\right], \id\left[x_1=2\right], \dots, \id\left[x_N=\left|\mathcal{R}\right|\right]\right).
\end{equation*}

Observe that $$\deg(p_{\mathcal{R}}) \leq \deg(w) \cdot \deg(V_{T, 1/(20n)}) \cdot \deg(\id[x_i=r]) \leq O\left(|\mathcal{R}|^{1/2} \cdot T^{1/2} \log n \cdot \log R\right) \leq \tilde{O}\left(\sqrt{NT}\right).$$

\medskip \noindent \textbf{Showing $p_{\mathcal{R}}$ Approximates $\SURJ_{\mathcal{R}}$ on $P_{\mathcal{R}}$.}
\Cref{lem:naive} implies that:
\begin{equation} |q(x) - \AND_{|\mathcal{R}|} \circ \OR(x)| \leq 1/10 \text{ for all } x \in \left(H^N_{\le T}\right)^{|\mathcal{R}|}. \label{imdoneguys} \end{equation} 
An immediate consequence is the following lemma.

\begin{lemma} \label{approxlemma} $|p_{\mathcal{R}}(x) - \SURJ_{\Rcal}(x)| \leq 1/10$
for all $x \in \P_\Rcal$. \end{lemma}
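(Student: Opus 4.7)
The plan is to reduce the claim to Equation~\eqref{imdoneguys}, which has already been established via \Cref{lem:naive}, by verifying that the substitution used to define $p_{\mathcal{R}}$ feeds only low-Hamming-weight inputs into each inner $\OR_N$ whenever $x \in \P_\Rcal$. The only real content is a bookkeeping check.

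First, for each $r \in \mathcal{R}$, consider the vector $y^{(r)} = (\id[x_1 = r], \ldots, \id[x_N = r]) \in \bits^N$. By the definition of the frequency function, $|y^{(r)}| = \fr_r(x)$. Since $x \in \P_\Rcal$, we have $\fr_r(x) \leq T$ for every $r \in \mathcal{R}$, so $y^{(r)} \in H^N_{\leq T}$. Consequently, the concatenated input $(y^{(r)})_{r \in \mathcal{R}}$ lies in $\left(H^N_{\leq T}\right)^{|\mathcal{R}|}$, which is precisely the domain on which Equation~\eqref{imdoneguys} guarantees that $q$ approximates $\AND_{|\mathcal{R}|} \circ \OR_N$ to error at most $1/10$.

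Next, I would invoke Equation~\eqref{eq:feck}: plugging the indicator bits into $\AND_{|\mathcal{R}|} \circ \OR_N$ exactly computes $\SURJ_\Rcal(x)$. Since $p_{\mathcal{R}}(x)$ is by definition $q$ evaluated at the same indicator vectors, applying \eqref{imdoneguys} to the input $(y^{(r)})_{r \in \mathcal{R}}$ and using that $\bigwedge_{r \in \mathcal{R}} \bigvee_{i \in [N]} y^{(r)}_i = \SURJ_\Rcal(x)$ yields
\[
  \bigl| p_{\mathcal{R}}(x) - \SURJ_\Rcal(x) \bigr|
  = \bigl| q\bigl( (y^{(r)})_{r \in \mathcal{R}} \bigr) - \AND_{|\mathcal{R}|}\!\circ\!\OR_N\bigl( (y^{(r)})_{r \in \mathcal{R}} \bigr) \bigr|
  \leq 1/10,
\]
as desired. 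There is no genuine obstacle here: the only thing to watch is that the promise $x \in \P_\Rcal$ is exactly what is needed to place the substituted vector in the domain where \eqref{imdoneguys} gives control, and the identity \eqref{eq:feck} does the rest.
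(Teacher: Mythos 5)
Your proposal is correct and matches the paper's intended argument: the paper states \Cref{approxlemma} as an ``immediate consequence'' of Equation~\eqref{imdoneguys}, and your bookkeeping --- that $x \in \P_\Rcal$ forces each indicator vector $(\id[x_1=r],\dots,\id[x_N=r])$ to have Hamming weight $\fr_r(x) \le T$, so the substituted input lies in $\left(H^N_{\le T}\right)^{|\mathcal{R}|}$ where \eqref{imdoneguys} applies, combined with the identity \eqref{eq:feck} --- is exactly the omitted verification.
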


\medskip \noindent \textbf{Bounding $p_{\mathcal{R}}$ Outside of $P_{\mathcal{R}}$.}
For an input $x\in [R]^N$ outside of $\P_\Rcal$, let $b_\Rcal(x)$ be the number of range items that appear more than $T$ times, i.e., 
\begin{equation}\label{eq:b}
    b_\Rcal(x) = |\{r\in\Rcal:\fr_r(x)>T\}|.
\end{equation}
We claim that $|p_{\mathcal{R}}(x) | \leq {\exp\left(b_\Rcal(x) \cdot \tO(\sqrt{T})\right)}$. This bound relies on the following elementary lemma.

\begin{lemma} \label{lem:real-bound}
Let $p : \R^n \to \R$ be a multilinear polynomial with $p(x) \in [0, 1]$ for all $x \in \B^n$. Then for $x \in \R^n$, we have
\[|p(x)| \le \prod_{i = 1}^n (|1 - x_i| + |x_i|).\]
\end{lemma}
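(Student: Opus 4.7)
My plan is to prove the bound by writing $p$ in the natural ``indicator basis'' for multilinear polynomials on the hypercube and then applying the triangle inequality in a completely mechanical fashion.

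The key observation is that any multilinear polynomial $p\colon \R^n \to \R$ admits the Lagrange-style expansion
\begin{equation*}
p(x) \;=\; \sum_{S \subseteq [n]} p(\mathbf{1}_S) \cdot \prod_{i \in S} x_i \prod_{i \notin S} (1 - x_i),
\end{equation*}
where $\mathbf{1}_S \in \B^n$ denotes the indicator vector of $S$. This identity holds because the right-hand side is a multilinear polynomial in $x$ that agrees with $p$ on every point of $\B^n$, and a multilinear polynomial in $n$ variables is uniquely determined by its values on $\B^n$. I would verify this identity in one line.

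With the expansion in hand, the bound is immediate. Since $p(\mathbf{1}_S) \in [0,1]$ by hypothesis, we have $|p(\mathbf{1}_S)| \le 1$ for every $S \subseteq [n]$, so the triangle inequality gives
\begin{equation*}
|p(x)| \;\le\; \sum_{S \subseteq [n]} |p(\mathbf{1}_S)| \cdot \prod_{i \in S} |x_i| \prod_{i \notin S} |1 - x_i| \;\le\; \sum_{S \subseteq [n]} \prod_{i \in S} |x_i| \prod_{i \notin S} |1 - x_i|.
\end{equation*}
The final sum factors as $\prod_{i=1}^n (|x_i| + |1-x_i|)$, which is exactly the desired right-hand side.

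There is essentially no obstacle here; the only nontrivial ingredient is recognizing that multilinearity forces the indicator-basis expansion, and after that the argument is a one-line application of the triangle inequality followed by a factoring step. I would keep the write-up to a short paragraph in the paper.
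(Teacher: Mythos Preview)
Your proof is correct. The paper proves the same lemma by induction on $n$: it writes $p(x) = (1-x_n)\,q_0(x_1,\dots,x_{n-1}) + x_n\,q_1(x_1,\dots,x_{n-1})$, observes that $q_0$ and $q_1$ are themselves $[0,1]$-valued on $\B^{n-1}$, and applies the inductive hypothesis together with the triangle inequality. Your approach is the ``unrolled'' version of this induction: the Lagrange/multilinear-extension expansion $p(x)=\sum_S p(\mathbf{1}_S)\prod_{i\in S}x_i\prod_{i\notin S}(1-x_i)$ is exactly what one obtains by iterating the paper's one-variable decomposition down to the base case, and your final factoring step $\sum_S \prod_{i\in S}|x_i|\prod_{i\notin S}|1-x_i| = \prod_i(|x_i|+|1-x_i|)$ is the closed-form version of the product that the induction builds up one factor at a time. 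The two arguments are thus the same in substance; yours is a bit more direct and avoids the inductive scaffolding, at the cost of needing to assert (or quickly verify) the uniqueness of multilinear extensions.
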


\begin{proof}
We prove the lemma by induction on the number of variables $n$. If $n = 0$, then $p$ is a constant in the interval $[0, 1]$ and the claim is true.

Now suppose the claim is true for $n-1$ variables, and let $p : \R^n \to \R$ with $p(x) \in [0, 1]$ for $x \in \B^n$. We begin by decomposing
\[p(x) = (1-x_n) \cdot q_0(x_1, \dots, x_{n-1}) + x_n \cdot q_1(x_1, \dots, x_{n-1})\]
where $q_0$ and $q_1$ are themselves multilinear polynomials. Since $p(x) \in [0, 1]$ for all $x \in \B^n$, this is in particular true when $x_n = 0$. Hence $q_0(x') \in [0, 1]$ for all $x' \in \B^{n-1}$. Similarly, setting $x_n = 1$ reveals that $q_1(x') \in [0, 1]$ for all $x' \in \B^{n-1}$. Now for any $x \in \R^n$ with $x' = (x_1, \dots, x_{n-1})$ we have
\begin{align*}
|p(x)| &= \left| (1 - x_n) \cdot q_0(x') + x_n \cdot q_1(x')\right| \\
&\le |1 - x_n| \cdot |q_0(x')| + |x_n| \cdot |q_1(x')| \\
&\le (|1 - x_n| + |x_n|) \cdot \prod_{i = 1}^{n-1} (|1 - x_i| + |x_i|)
\end{align*}
where the final inequality uses the inductive hypothesis. This proves the claim.
\end{proof}

\begin{lemma}\label{lem:promiseSURJ} \label{lem:boundonprcal}
There exists a function $a(x) =  \exp\left(b_\Rcal\left(x\right)\cdot \tO\left(\sqrt{T}\right)\right)$ such that for any $\Rcal \subseteq [R]$, the polynomial $p_\Rcal:[R]^N\to \R$ has degree $\tO(\sqrt{NT})$ and satisfies:
\begin{align*}
p_\Rcal(x) \in \begin{cases} [0,1/10] &\mathrm{if}~ x\in\P_\Rcal ~\mathrm{and}~ \SURJ_\Rcal(x)=0\\
[9/10,1] &\mathrm{if}~ x\in\P_\Rcal ~\mathrm{and}~ \SURJ_\Rcal(x)=1\\
[-a(x), a(x)]& \mathrm{if}~ x \notin \P_\Rcal.\end{cases}
\end{align*}
\end{lemma}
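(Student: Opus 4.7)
The lemma comprises three claims: the degree bound, the approximation property on $\P_\Rcal$, and the exponential growth bound outside $\P_\Rcal$. The degree bound is immediate from the construction: $p_\Rcal$ is the composition of $w$ (degree $O(\sqrt{|\Rcal|})$ by \Cref{andornor}), $V_{T,1/(20n)}$ (degree $O(\sqrt{T}\log n)$ by \Cref{lem:orT}), and the indicators $\id[x_i=r]$ (degree $\log R$ each). Multiplying gives $\tO(\sqrt{|\Rcal|\cdot T}\cdot \log R) = \tO(\sqrt{NT})$, using $|\Rcal|\le R\le N$.

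\textbf{The bound on $\P_\Rcal$.} \Cref{approxlemma} already gives $|p_\Rcal(x)-\SURJ_\Rcal(x)|\le 1/10$ on $\P_\Rcal$, and the only thing remaining is to upgrade this to $p_\Rcal(x)\in[0,1]$, which then yields $[0,1/10]$ or $[9/10,1]$ according to the value of $\SURJ_\Rcal$. For this I would reuse the probabilistic interpretation from the proof of \Cref{lem:naive}: for $x\in\P_\Rcal$, each inner value $y_r := V_{T,1/(20n)}(\id[x_1=r],\dots,\id[x_N=r])$ lies in $[0,1]$ because the corresponding Boolean indicator vector has Hamming weight $\fr_r(x)\le T$ (so \Cref{lem:orT} applies). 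Interpreting the $y_r$ as marginals of independent Bernoulli variables $Y_r$ and using that $w$ is multilinear and $w\in[0,1]$ on $\B^{|\Rcal|}$, we get $p_\Rcal(x)=\E[w(Y)]\in[0,1]$, completing this case.

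\textbf{The bound outside $\P_\Rcal$.} For $x\notin\P_\Rcal$, I would directly apply \Cref{lem:real-bound} to $w$, giving
\[ |p_\Rcal(x)| = |w(y_1,\ldots,y_{|\Rcal|})| \le \prod_{r\in\Rcal}\bigl(|1-y_r|+|y_r|\bigr). \]
Now split the product along $\fr_r(x)\le T$ vs.\ $\fr_r(x)>T$. Range items of the first type have $y_r\in[0,1]$ by \Cref{lem:orT}, so they contribute a factor of exactly $1$. Range items of the second type (there are $b_\Rcal(x)$ of them, by definition \eqref{eq:b}) satisfy $|y_r|\le\exp\bigl(O(\sqrt{T}\cdot\log N\cdot\log(20n))\bigr) = \exp(\tO(\sqrt T))$ by the third case of \Cref{lem:orT}, and so each contributes at most $1+2|y_r|\le\exp(\tO(\sqrt T))$. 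Multiplying gives $|p_\Rcal(x)|\le \exp\!\bigl(b_\Rcal(x)\cdot \tO(\sqrt T)\bigr)$, as desired.

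\textbf{Where the care goes.} No single step is deep, but the main place to be careful is the bookkeeping in the third part: we must isolate the ``good'' blocks (where the coordinate $y_r$ lies in $[0,1]$ and contributes $1$ to the product) from the ``bad'' blocks (where $y_r$ can be exponentially large), because it is precisely this separation that makes the exponent scale with $b_\Rcal(x)$ rather than with $|\Rcal|$. The second, seemingly trivial, subtlety is the non-negativity of $p_\Rcal$ on $\P_\Rcal$: a direct triangle-inequality argument only yields $p_\Rcal(x)\in[-1/10,1/10]$ in the negative case, so one really does need the probabilistic / multilinearity argument to improve the lower endpoint to $0$.
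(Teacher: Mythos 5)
Your proposal is correct and follows essentially the same route as the paper: the on-promise cases come from \Cref{approxlemma} (via \Cref{lem:naive}), and the off-promise bound comes from applying \Cref{lem:real-bound} to $w$ over the block outputs and splitting the product into the $b_\Rcal(x)$ ``bad'' factors of size $\exp(\tO(\sqrt{T}))$ and the remaining factors of size exactly $1$. Your extra care about non-negativity on $\P_\Rcal$ is well placed but is already supplied by the convention of \Cref{specialnotation} together with the expectation argument inside the proof of \Cref{lem:naive}, which is exactly the argument you reproduce.
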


\begin{proof}
The first two cases are an immediate consequence of \Cref{approxlemma}.

To upper bound the value of $|p_\Rcal(x)|$ for $x\notin \P_\Rcal$, we exploit the structure of $p_\Rcal$ as a multilinear polynomial $w$ of degree $O(\sqrt{N})$ over the variables $z_1,\ldots,z_{|\Rcal|}$, where each $z_r$ is the output of the $r^\mathrm{th}$ polynomial from \Cref{lem:orT}. That is, 
$$z_r = V_{T, 1/(20n)}\left(\id[x_1 = r], \dots, \id[x_N = r]\right).$$ 
If $b_\Rcal(x)$ range items appear greater than $T$ times, this means that up to $b_\Rcal(x)$ of the variables $z_r$ might take values outside $[0,1]$. 
However, by \Cref{lem:orT}, each of these $b_\Rcal(x)$ variables is still at most ${\exp(\tO(\sqrt{T}))}$. By \Cref{lem:real-bound},
\[|w(z)| \le \prod_{r \in \Rcal} \left(|1 - z_r| + |z_r|\right) \le \exp\left(b_\Rcal(x) \cdot \tO\left(\sqrt{T}\right)\right),\]
since each $z_r $ that is in $[0, 1]$ contributes a factor of exactly $1$ to the product, whereas each of the remaining $b_\Rcal(x)$ variables contributes a factor of at most ${\exp(\tO(\sqrt{T}))}$ to the product.
\end{proof}

\subsubsection{Controlling The Easy Inputs} \label{sec:surj-easy-inputs}

\noindent \textbf{Intuition.} Unlike the example of the $\NOR$ function, where all the inputs outside the ``hard'' set $\P$ (cf. Equation \eqref{pdefnor}) were in $\NOR^{-1}(1)$, for \Surjectivity\ there are both $0$- and $1$-inputs outside of the hard set $\P$  (cf. Equation \eqref{pdefsurj}) . 
So our remaining task is not simply a matter of constructing a polynomial that detects 
if the input is outside of $\P$, as it was in the case of the $\NOR$ function.

However, we will show that, for \Surjectivity, the inputs outside of $\P$ are easy to handle in a different sense---they are easy because we can construct a reduction from inputs outside of $\P$ to inputs in $\P$. 
To gain some intuition, consider the task of designing a randomized algorithm for \Surjectivity\ where we want to reduce the general case to the case where there is a set $\Rcal \subseteq [R]$ such that all range items $r\in \Rcal$ have $\fr_r(x)\leq T$.
To do this, we can simply sample a large number of elements $x_i$ and remove from consideration all  range items appearing at least once in the sample, because we know these range items all appear in the input at least once.
After this step, we have a new set $\Rcal\subseteq[R]$ consisting of all range items that have not been seen in the sampling stage.
Every $r\in\Rcal$ is likely to have $\fr_r(x)\leq T$ because elements that appeared too frequently would have (with high probability) been observed in the sampling stage.
Thus it now suffices to solve $\SURJ_{\Rcal}$ on the input under the assumption that all range items $r \in \Rcal$ have $\fr_r(x)\leq T$.

Informally, the above discussion states that we want to construct the polynomial described in \Cref{alg:SURJalmost}.

\begin{algorithm}[H]
\caption{Informal description of the polynomial approximation for $\SURJ$}\label{alg:SURJalmost}
\begin{algorithmic}[1]
\vspace{0.25em}
\State Sample $S=\tTheta(N^{3/4})$ items and remove all range items seen from $[R]$. Let the remaining set be $\Rcal \subseteq [R]$.
\State Solve $\SURJ_{\Rcal}$ under the promise that all $r\in\Rcal$ have $\fr_r(x)\leq T$, where $T=\tTheta(\sqrt{N})$.
\end{algorithmic}
\end{algorithm}

We now have to construct a polynomial that represents this algorithmic idea. 
We have already constructed an (unbounded approximating) polynomial for $\SURJ_\Rcal$ under the promise $\P_\Rcal$ in the previous section, so the second step of this construction is done.

For Step 1 of \Cref{alg:SURJalmost}, we need to construct a polynomial to represent the idea of sampling input elements and evaluating a polynomial that \emph{depends on} the sampled elements. 
To build up to this, consider a deterministic algorithm that queries a subset $\Scal \subseteq [N]$ of input elements, checks if the sampled string equals another fixed string $y$ and outputs $1$ if true and $0$ if false. If we denote the input $x\in[R]^N$ restricted to the subset $\Scal \subseteq [N]$ as $x_\Scal$, then this algorithm outputs $1$ if and only if $x_\Scal = y$. Interpreting the input as an element of $\B^{N \log R}$
rather than $[R]^N$, it is easy to see that a deterministic query algorithm querying $|\Scal|\log R$ bits of $x$ can solve this problem.
Consequently, there is a polynomial of degree $|\Scal|\log R$ that outputs $1$ if $x_\Scal = y$ and outputs $0$ otherwise.
For any fixed $\Scal \subseteq [N]$ and $y\in [R]^{|\Scal|}$, we denote this polynomial by $\id_y(x_\Scal)$.

Now we can construct a polynomial which samples $S$ elements of the input and then outputs a bit depending on the elements seen. Let $\Scal \subseteq [N]$ be a subset of indices with $|\Scal|=S$. 
Let $f(\Scal,x_\Scal)$ be an arbitrary Boolean function that tells us whether to output $0$ or $1$ on seeing the sample $(\Scal,x_\Scal)$.
Then the following polynomial samples a random set $\Scal$ of size $S$, reads the input $x$ restricted to the set $\Scal$, and outputs the bit $f(\Scal,x_\Scal)$:
\begin{equation}
    \frac{1}{\binom{N}{S}} \sum_{\Scal \in \binom{[N]}{S}} \sum_{y \in [R]^S} \id_y(x_\Scal) f(\Scal,y)
\end{equation}
Note that for any function $f$, this is a polynomial of degree $S\log R$  in the variables $x_1,\ldots,x_N$, because $\id_y(x_\Scal)$ is a multilinear polynomial over those variables, and $f(\Scal,y)$ is simply a hard-coded bit that does not depend on $x$. The value of this polynomial on an input $x$ equals
\begin{equation}
    \frac{1}{\binom{N}{S}} \sum_{\Scal \in \binom{[N]}{S}} \sum_{y \in [R]^S} \id_y(x_\Scal) f(\Scal,y)  
    = \frac{1}{\binom{N}{S}} \sum_{\Scal \in \binom{[N]}{S}} f(\Scal,x_\Scal).
\end{equation}

We can now generalize this construction to allow for the possibility that the function $f$ is itself a polynomial. 
This is what we need to implement \Cref{alg:SURJalmost}, in which we sample a random set $\Scal \subseteq [N]$ of size $S$, query $x_{\mathcal{S}}$ (using $S \log R$ queries), and then run a polynomial that depends on the results.
Specifically, if we sample the set $\Scal \subseteq [N]$, and learn that $x_{\mathcal{S}}$ equals the string $y\in[R]^{|\Scal|}$, then we want to run the polynomial $p_\Rcal(y)$ from \Cref{lem:promiseSURJ} for the set $\Rcal(y)$ of all elements in $[R]$ that do not appear in $y$, i.e., 
\begin{equation} \label{nomoreeqs}
    \Rcal(y) = [R] \setminus \{r:\exists i~y_i = r\}.
\end{equation}

\medskip
\noindent \textbf{Formal Description of The Approximation To \Surjectivity.} Using the tools from \Cref{sec:surj-hard-inputs} and the above discussion, we can now construct a polynomial that corresponds to the informal description in \Cref{alg:SURJalmost}:
\begin{equation}
    r(x) = \frac{1}{\binom{N}{S}} \sum_{\Scal \in \binom{[N]}{S}} \sum_{y \in [R]^S} \id_y(x_\Scal) p_{\Rcal(y)}(x). 
\end{equation}
This is a polynomial of degree $S\log R + \max_{y}\{\deg(p_{\Rcal(y)})\} = \tO(S + \sqrt{NT}) = \tO(N^{3/4})$, using $S=\tTheta(N^{3/4})$ and $T=\tTheta(\sqrt{N})$ (where the factors hidden by the $\tTheta$ notation will be chosen later). The value of the polynomial on input $x$ is
\begin{equation} \label{fleck}
    r(x) = \frac{1}{\binom{N}{S}} \sum_{\Scal \in \binom{[N]}{S}} p_{\Rcal(x_\Scal)}(x). 
\end{equation}
where recall that $\Rcal(x_{\Scal})$ is as defined in Equation \eqref{nomoreeqs}.
The right hand side of Equation \eqref{fleck} is precisely the expected value of the polynomial $p_{\Rcal(x_\Scal)}(x)$ when $\Scal$ is a uniformly random set of size $S$. We will now show that $r(x)$ is an approximating polynomial for \Surjectivity, i.e., that for all $x\in[R]^N$, $|\SURJ(x)-r(x)|\leq 1/3$.

Recalling that $b_{\Rcal(x_\Scal)}$ is the number of range items $r \in \Rcal$ that appear in $x$ greater than $T$ times (cf. Equation \eqref{eq:b}),
we compute the value of the polynomial $r(x)$ on an input $x$:
\begin{align}
    r(x) &= \frac{1}{\binom{N}{S}} \sum_{\Scal \in \binom{[N]}{S}} p_{\Rcal(x_\Scal)}(x) \\
    &= \frac{1}{\binom{N}{S}} \left(\sum_{\Scal \in \binom{[N]}{S}} \sum_{b=0}^{N/T} \id[b_{\Rcal(x_\Scal)}(x)=b] \cdot p_{\Rcal(x_\Scal)}(x) \right)\\
    &= \frac{1}{\binom{N}{S}} \sum_{\Scal \in \binom{[N]}{S}} \id[b_{\Rcal(x_\Scal)}(x)=0] \cdot p_{\Rcal(x_\Scal)}(x) \label{eq:t1}\\
    &\qquad\qquad\qquad + \sum_{b=1}^{N/T} \frac{1}{\binom{N}{S}} \sum_{\Scal \in \binom{[N]}{S}}  \id[b_{\Rcal(x_\Scal)}(x)=b]  \cdot p_{\Rcal(x_\Scal)}(x),
    \label{eq:t2}
\end{align}
where we split up the sum into the $b=0$ \eqref{eq:t1} and $b\geq 1$ \eqref{eq:t2}  cases.
We first show that the $b\geq1$ term \eqref{eq:t2} has essentially no contribution to the final value. 

When $b_{\Rcal(x_\Scal)}(x) \geq 1$, we know by \Cref{lem:boundonprcal} that the magnitude of the polynomial $|p_\Rcal(x)|$ is at most $\exp\left(b_{\Rcal(x_\Scal)}(x) \cdot \tO\left(\sqrt{T}\right)\right)$.
Thus the magnitude of the term in Expression \eqref{eq:t2} is at most
\begin{align}
    &\sum_{b=1}^{N/T} \frac{1}{\binom{N}{S}} \sum_{\Scal \in \binom{[N]}{S}}  \id[b_{\Rcal(x_\Scal)}(x)=b] \cdot \exp(b \cdot \tO(\sqrt{T})) \\
    =&\sum_{b=1}^{N/T} \Pr_{\Scal}[b_{\Rcal(x_\Scal)}(x)=b] \cdot \exp\left(b \cdot \tO\left(\sqrt{T}\right)\right).
\end{align}

We now need to compute the value of $\Pr_{\Scal}[b_{\Rcal(x_\Scal)}(x)=b]$ for each $b$. Intuitively, this roughly corresponds to the probability that we sample $S=\tTheta(N^{3/4})$ elements from the input and miss all $bT$ elements that correspond to the $T = \tTheta(\sqrt{N})$ copies of the $b$ range items that appear at least $T$ times. 
If we were to sample $N / (bT)$ elements, the probability of seeing none of the $b T$ elements would be $\Theta(1)$. 
Since we are sampling $S=\tTheta(N^{3/4}) \gg N / (bT)$ elements, the probability of not seeing one of the $bT$ elements is ${\exp(-bST/N)} = {\exp(-b \cdot \tOmega(N^{1/4}))}$. We make this heuristic calculation formal in the following lemma:

\begin{lemma} \label{lem:prob-b}
Let $S, T \le N$. Let $\Scal \subset [N]$ be a random subset of size $S$. Then for every $x \in [R]^N$ and every $b \ge 1$,
\[\Pr_{\Scal}[b_{\Rcal(x_\Scal)}(x) \ge b] \leq {\exp(-b \cdot (ST/N - \log N))},\]
recalling that $b_{\Rcal(x_\Scal)}$ is the number of range items $r \in \Rcal$ that appear in $x$ more than $T$ times, but do not appear in $x_\Scal$.
\end{lemma}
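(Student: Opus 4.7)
The plan is a straightforward union-bound argument. First I would fix the input $x$ and let $B = \{r \in [R] : f_r(x) > T\}$ denote the set of ``heavy'' range items (those appearing more than $T$ times in $x$). Since the total Hamming weight is $N$ and each element of $B$ accounts for more than $T$ positions, we have $|B| \le N/T$. The random variable $b_{\Rcal(x_\Scal)}(x)$ is exactly the number of heavy range items whose occurrences are all missed by the sample $\Scal$, so to have $b_{\Rcal(x_\Scal)}(x) \ge b$ it is necessary that there exist some subset $B' \subseteq B$ of size $b$ such that no position where a member of $B'$ appears is chosen in $\Scal$.

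Next I would bound, for a \emph{fixed} subset $B' \subseteq B$ of size $b$, the probability that $\Scal$ avoids every occurrence of every $r \in B'$. The total number of positions occupied by $B'$ is at least $b T$. Since $\Scal$ is drawn uniformly from size-$S$ subsets of $[N]$, this ``avoidance'' probability equals $\binom{N - bT}{S}/\binom{N}{S} = \prod_{i=0}^{S-1}(N-bT-i)/(N-i) \le (1 - bT/N)^S \le \exp(-bST/N)$. A union bound over the at most $\binom{|B|}{b} \le (N/T)^b \le N^b$ choices of $B'$ then gives
\[
\Pr_{\Scal}[b_{\Rcal(x_\Scal)}(x) \ge b] \le N^b \cdot \exp(-bST/N) = \exp\bigl(-b(ST/N - \log N)\bigr),
\]
which is the claimed bound.

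There is no real obstacle here; the only thing to be slightly careful about is sampling \emph{without} replacement, which is handled by the elementary inequality $\binom{N-bT}{S}/\binom{N}{S} \le (1 - bT/N)^S$ together with $1 - z \le e^{-z}$. Everything else is a one-line union bound and the crude estimate $\binom{N/T}{b} \le N^b$, which is what produces the additive $\log N$ slack in the exponent.
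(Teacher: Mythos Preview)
Your argument is correct and matches the paper's proof: both identify the set of heavy range items (of size at most $N/T$), bound by $\exp(-bST/N)$ the probability that a fixed $b$-subset of heavy items is entirely missed by $\Scal$, and then union-bound over the at most $N^b$ such subsets. The only (harmless) slip is that the avoidance probability is \emph{at most} $\binom{N-bT}{S}/\binom{N}{S}$ rather than exactly equal, since each heavy item appears strictly more than $T$ times; the inequality you need still holds.
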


\begin{proof}
To calculate the probability of interest, we begin by analyzing a simplified experiment. Suppose there are $b$ range items $r_1, \dots, r_b$ each appearing greater than $T$ times in $x$. We compute the probability that \emph{none} of the items $r_1, \dots, r_b$ appear in the sample $x_\Scal$. If $T \ge N - S$, then this probability is zero. Otherwise, by direct calculation, we have for each $i = 1, \dots, b$:
\[\Pr_{\Scal}[r_i \notin x_\Scal | r_1 \notin x_\Scal, \dots, r_{i-1} \notin x_\Scal] \le \prod_{j = 0}^{S-1} \left(1 - \frac{T}{N-j} \right) \le \left( 1 - \frac{T}{N}\right)^S \le \exp \left(-\frac{ST}{N} \right).\]
Hence the probability that none of $r_1, \dots, r_b$ appear is
\[\Pr_{\Scal}[r_1 \notin x_\Scal \land \dots \land r_b \notin x_\Scal] = \prod_{i  = 1}^b \Pr_{\Scal}[r_i \notin x_\Scal | r_1 \notin x_\Scal, \dots, r_{i-1} \notin x_\Scal] \le \exp \left(-\frac{bST}{N} \right).\]

Now fix an arbitrary $x$, and let $r_1, \dots, r_{k}$ be the range items appearing greater than $T$ times in $x$. Observe that $k \le N/T$. We estimate the probability of interest by taking a union bound over all subsets of $r_1, \dots, r_k$ of size $b$:
\begin{align*}
\Pr_{\Scal}[b_{\Rcal(x_\Scal)}(x) \ge b] &\le \sum_{Y \subseteq [k] : |Y| = b} \Pr_{\Scal}[r_{i_j} \notin x_\Scal \quad \forall j \in Y] \\
& \le \binom{k}{b} \exp \left(-\frac{bST}{N}\right) \\
&\le \exp \left( -\frac{bST}{N} + b \log N\right).
\end{align*}
\end{proof}

From \Cref{lem:prob-b} we know that $\Pr_{\Scal}[b_{\Rcal(x_\Scal)}(x)=b] \leq {\exp(-b \cdot \tOmega(ST/N))}$.
Hence the $b\geq1$ term \eqref{eq:t2} is at most
\begin{align}
    \sum_{b=1}^{N/T} {\exp(-b \cdot \tOmega(ST/N))} \cdot {\exp(b \cdot \tO(\sqrt{T}))} = o(1),
\end{align}
by choosing $T = \tTheta(\sqrt{N})$ and $S = \tTheta(N^{3/4})$ appropriately. This shows that the term in \eqref{eq:t2} does not significantly influence the value of the polynomial for any input $x$.

Thus we have for any input $x$,
\begin{align}
    r(x) = \frac{1}{\binom{N}{S}} \sum_{\Scal \in \binom{[N]}{S}} \id[b_{\Rcal(x_\Scal)}(x)=0]  \cdot p_{\Rcal(x_\Scal)}(x) + o(1).
\end{align}

By \Cref{lem:promiseSURJ}, we know that if $b_{\Rcal(x_\Scal)}(x)=0$ (and hence $x \in \P_{\Rcal(x_\Scal)}$), then $p_{\Rcal(x_\Scal)}(x) $ is a $(1/10)$-approximation to $\SURJ(x)$. Applying \Cref{lem:prob-b} once again shows that $\Pr_{\Scal}[b_{\Rcal(x_\Scal)}(x) \ge 1] \le \exp(- \tOmega(N^{1/4}))$, so $\Pr_{\Scal}[b_{\Rcal(x_\Scal)}(x) = 0] = 1 - o(1)$. 
Hence, for all $x \in [R]^N$, $|r(x) - \SURJ(x)| \le 1/10 + o(1) \le 1/3$. 

This completes
the proof of \Cref{thm:upper}.


\section{Lower Bound for Surjectivity}
\label{s:surjlower}
The goal of this section is to show the following improved lower bound on the approximate degree of the Surjectivity function.

\begin{theorem} \label{thm:surj}
For some $N = O(R)$, the $(1/3)$-approximate degree of $\SURJ_{N, R}$ is $\tOmega(R^{3/4})$.
\end{theorem}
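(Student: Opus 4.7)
The plan is to exhibit an explicit dual polynomial witnessing that $\ubdeg(F^{\le N}) = \tOmega(R^{3/4})$, where $F^{\le N}$ is the restriction of $\AND_R \circ \OR_N$ to $H_{\le N}^{N \cdot R}$. By \Cref{cor:surj-reduction}, this implies $\adeg_{1/3}(\dumSURJ_{N, R}) = \tOmega(R^{3/4})$, and combining with \Cref{prop:dumsurj-reduction} (whose $\log(R+1)$ factor is absorbed into $\tOmega$) yields the claimed lower bound for $\SURJ_{N+1, R+1}$. Throughout, I take $N = c \cdot R$ for a sufficiently large constant $c$ chosen to satisfy the hypotheses of \Cref{prop:btzeroing}.

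The dual witness is built by combining two pieces via the dual block method. The outer piece is a dual witness $\Phi : \bits^R \to \R$ for $\AND_R$ with pure high degree $D_1 = \Omega(\sqrt{R})$, unit $\ell_1$ norm, and correlation $\Omega(1)$ with $\AND_R$; its existence is standard given \Cref{andornor}. The inner piece is $\psi(x) = \omega(|x|)/\binom{N}{|x|}$ for a carefully designed $\omega : [M]_0 \to \R$ with three simultaneous properties: (i) $\omega$ satisfies the $(\alpha, \beta)$-decay condition of \Cref{prop:btzeroing} with $\alpha = O(1)$ and $\beta = \tilde{\Theta}(R^{-1/4})$; (ii) $\psi$ has pure high degree $d_2 = \Omega(R^{1/4})$; and (iii) $\psi$ has $\Omega(1)$ one-sided correlation with $\OR_N$ in the balanced sense required to combine with $\Phi$. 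The parameters are tuned so that the effective support of $\omega$ (of size $\approx 1/\beta = \tilde{\Theta}(R^{1/4})$) is exactly large enough to accommodate the pure high degree budget $d_2$. Such an $\omega$ is obtained by modifying a Chebyshev-based dual witness for $\OR_N$ so that beyond a cutoff $t_0 = \Theta(R^{1/4})$ it decays exponentially rather than merely polynomially; this is precisely the refinement over the Bun--Thaler \cite{adegsurj} inner witness, which only satisfied the polynomial-decay property of Equation~\eqref{decaykey}.

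With $\Phi$ and $\psi$ in hand, \Cref{prop:ls} guarantees that $\Phi \star \psi$ is unit $\ell_1$ norm and has pure high degree at least $D_1 \cdot d_2 = \tOmega(R^{3/4})$, and standard dual block correlation bounds show it has correlation $\Omega(1)$ with $\AND_R \circ \OR_N$ on $\bits^{NR}$. Applying \Cref{prop:btzeroing} with the chosen parameters yields $\Delta \ge \beta\sqrt{\alpha} R/(4 \ln^2 R) = \tOmega(R^{3/4})$ and produces $\zeta : (\bits^N)^R \to \R$ with unit $\ell_1$ norm, support contained in $H^{N \cdot R}_{\le N}$, pure high degree $\min\{D_1 d_2, \Delta\} = \tOmega(R^{3/4})$, and $\|\zeta - \Phi \star \psi\|_1 \le 2/9$. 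The last inequality and the triangle inequality imply that $\zeta$ is correlated with $\AND_R \circ \OR_N$ by at least $\Omega(1) - 2/9 > 1/3$; since $\zeta$ is supported on $H^{N \cdot R}_{\le N}$, this is in fact correlation with $F^{\le N}$, so by \Cref{prop:ub-duality}, $\zeta$ witnesses $\ubdeg_{1/3}(F^{\le N}) = \tOmega(R^{3/4})$.

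The main obstacle is the construction of $\omega$ satisfying (i), (ii), and (iii) simultaneously. These properties pull in opposite directions: strong exponential decay with rate $\beta$ sharply limits the effective support of $\omega$, which in turn limits how many moment conditions can be satisfied and hence caps the pure high degree of $\psi$. The sweet spot $\beta = \tilde{\Theta}(R^{-1/4})$ is precisely where the pure-high-degree contribution $D_1 \cdot d_2$ from the dual block method and the zeroing-out contribution $\Delta$ from \Cref{prop:btzeroing} both hit $\tOmega(R^{3/4})$, balancing the two sources of lower bound. A secondary but important subtlety is verifying that $\omega$ retains the strong one-sided correlation with $\OR_N$ needed so that $\Phi \star \psi$ correlates with the composed function well enough to survive the $2/9$ loss from the zeroing-out step; this is handled by constructing $\omega$ so that its negative mass is concentrated near Hamming weight zero (matching $\OR_N^{-1}(+1)$) and its positive mass is spread appropriately below the exponential-decay cutoff.
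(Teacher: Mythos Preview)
Your proposal is correct and follows essentially the same approach as the paper's proof of \Cref{thm:surj} (via \Cref{thm:main-amp} and \Cref{prop:surj-dual}): the paper constructs $\Phi \star \psi$ from the inner $\OR_N$ dual of \Cref{prop:or-sym-dual} (with $T = \Theta(\sqrt{R})$, $\delta = 1/20$, yielding $\alpha = O(1)$, $\beta = \Theta(R^{-1/4})$, and inner pure high degree $\Theta(R^{1/4})$) and the outer $\AND_R$ dual of \Cref{prop:and-or}, then applies \Cref{prop:btzeroing} and the reductions of \Cref{cor:surj-reduction} and \Cref{prop:dumsurj-reduction}, exactly as you outline. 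One minor slip in your final sentence: the \emph{positive} (not negative) mass of $\omega$ should be concentrated at Hamming weight zero, since that is where $\OR_N = +1$ and you need $\langle \psi, \OR_N\rangle$ large and positive.
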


To prove \Cref{thm:surj}, we combine the following 
theorem with the reductions of \Cref{prop:dumsurj-reduction} and \Cref{cor:surj-reduction}.

\begin{theorem}   \label{thm:main-amp}
Let $N = c \cdot R$ for a sufficiently large constant $c>0$. Let $F^{\le N} \colon H^{N \cdot R}_{\le N} \to \bits$ equal $\AND_{R} \circ \OR_N$ restricted to inputs in $H^{N \cdot R}_{\le N} = \{x \in \bits^{N \cdot R} : |x| \le N\}$. Then $\ubdeg(F^{\le N}) \ge \tOmega(R^{3/4}).$
\end{theorem}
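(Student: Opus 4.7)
The plan is to build an explicit dual witness $\zeta$ for $\ubdeg(F^{\le N})$ by combining the dual block method (Proposition \ref{prop:ls}) with the zeroing-out refinement (Proposition \ref{prop:btzeroing}). The parameter balance I have in mind is to take the combined pure high degree before zeroing to be $\sqrt{R}\cdot d$ for some inner PHD $d$, and to balance this against the PHD $\Delta \gtrsim \beta\sqrt{\alpha}\,R/\ln^2 R$ delivered by Proposition \ref{prop:btzeroing}. Setting $\alpha = O(1)$ and choosing $\beta = \tilde{\Theta}(R^{-1/4})$ with $d = \tilde{\Theta}(R^{1/4})$ will equate both quantities at $\tilde{\Theta}(R^{3/4})$, which is where the claimed bound comes from.

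First I would take $\Phi\colon\bits^R\to\R$ to be a standard dual witness for $\adeg(\AND_R)$ with $\|\Phi\|_1 = 1$, pure high degree $\Omega(\sqrt{R})$, and correlation $\Omega(1)$ with $\AND_R$; this exists by \Cref{andornor} and standard LP duality. Next I would construct a symmetric dual witness $\psi\colon\bits^N\to\R$ for $\OR_N$ of the form $\psi(x) = \omega(|x|)/\binom{N}{|x|}$, where $\omega\colon[N]_0\to\R$ satisfies the $(\alpha,\beta)$-decay condition of Proposition \ref{prop:btzeroing} with $\alpha = O(1)$ and $\beta = \tilde{\Theta}(R^{-1/4})$, while simultaneously attaining $\|\psi\|_1 = 1$, pure high degree $d = \tilde{\Omega}(R^{1/4})$, and correlation $\Omega(1)$ with $\OR_N$. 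The transition threshold $1/\beta \approx R^{1/4}$ should align with the effective degree $d$, so $\psi$ looks like an optimal dual for $\OR_N$ on Hamming weights up to $\tilde\Theta(R^{1/4})$ and then decays exponentially.

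Given $\Phi$ and $\psi$, form the block-composed dual polynomial $\xi := \Phi\star\psi$. By Proposition \ref{prop:ls} it satisfies $\|\xi\|_1 = 1$ and pure high degree at least $\Omega(\sqrt{R})\cdot d = \tilde{\Omega}(R^{3/4})$, and the standard dual-block correlation calculation gives $\langle \xi,\AND_R\circ\OR_N\rangle \ge \Omega(1)$. Then I would invoke Proposition \ref{prop:btzeroing} with $N = \lceil 20\sqrt{\alpha}\rceil R = \Theta(R)$ to obtain $\zeta\colon(\bits^N)^R\to\R$ supported on $H^{N\cdot R}_{\le N}$, with $\|\zeta\|_1 = 1$, pure high degree $\min\{D,\Delta\} = \tilde{\Omega}(R^{3/4})$ (both $D$ and $\Delta$ are $\tilde\Theta(R^{3/4})$ by the parameter balance), and $\|\zeta - \xi\|_1 \le 2/9$. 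The correlation of $\zeta$ with $F^{\le N}$ is then at least $\langle\xi,\AND_R\circ\OR_N\rangle - \|\zeta - \xi\|_1 \ge \Omega(1) - 2/9 > 0$, since $\zeta$ is supported on $H^{N\cdot R}_{\le N}$ where $F^{\le N}$ and $\AND_R\circ\OR_N$ agree. By \Cref{prop:ub-duality}, $\zeta$ witnesses $\ubdeg(F^{\le N}) = \tilde{\Omega}(R^{3/4})$.

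The main obstacle is constructing the inner dual witness $\psi$ for $\OR_N$ with the two competing requirements: pure high degree $\tilde\Omega(R^{1/4})$ and genuine exponential decay $|\omega(t)|\lesssim \alpha\exp(-\beta t)/t^2$ with $\beta = \tilde\Theta(R^{-1/4})$. Existing constructions (such as those implicit in \cite{adegsurj}) give only the inverse-polynomial decay captured by Equation~\eqref{decaykey}, which would only suffice to recover the $R^{2/3}$ bound; coupling the decay rate tightly to the PHD in a symmetric witness is exactly the refinement flagged in \Cref{s:overviewoftheproofs}. I would approach this by writing $\omega$ as a combination of a ``primary'' low-weight piece that handles orthogonality against degree-$d$ polynomials, plus a carefully shaped exponentially decaying tail that preserves balance, unit $\ell_1$-norm, and orthogonality; verifying all three properties simultaneously is where the bulk of the technical work lies.
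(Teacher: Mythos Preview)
Your proposal is correct and follows essentially the same route as the paper: construct an inner dual $\psi$ for $\OR_N$ with pure high degree $\Theta(R^{1/4})$ and $(\alpha,\beta)$-decay with $\alpha=O(1)$, $\beta=\Theta(R^{-1/4})$ (the paper does this explicitly in \Cref{prop:or-sym-dual} via a \v{S}palek-style witness on a quadratically spaced node set $S\subset[T]_0$ with $T=\Theta(\sqrt{R})$), dual-block-compose with an outer $\Phi$ for $\AND_R$ of pure high degree $\Omega(\sqrt{R})$, and then invoke \Cref{prop:btzeroing}. One small point: to conclude $\ubdeg_{1/3}(F^{\le N})\ge\tilde\Omega(R^{3/4})$ via \Cref{prop:ub-duality} you need correlation strictly greater than $1/3$, not merely positive, so make sure the inner correlation $\langle\psi,\OR_N\rangle$ is taken large enough (the paper uses $\ge 19/20$, cf.\ \Cref{prop:and-or}) that after the $2/9$ loss from zeroing you still exceed $1/3$.
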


The proof of \Cref{thm:main-amp} entails using dual witnesses for the high approximate degree of $\AND_R$ and $\OR_N$ to construct a dual witness for the higher approximate degree of $F^{\le N}$. As indicated in \Cref{s:overview}, the construction is essentially the same as in~\cite{adegsurj}, except that we observe that a dual witness for $\OR$ constructed and used in prior works satisfies an exponentially stronger decay condition than has been previously realized. 

The construction can be thought of as consisting of three steps:
\paragraph{Step 1.} We begin by constructing a dual witness $\psi$ for the fact that the unbounded approximate degree of the $\OR_N$ function is $\Omega\left(\sqrt{T}\right)$ even when promised that the input has Hamming weight at most $T = \Theta(\sqrt{R})$. The dual witness $\psi$ is a small variant of the one in~\cite{adegsurj}, but we give a more careful analysis of its tail decay. In particular, we make use of the fact that for all $t \ge 1$, the $\ell_1$ weight that $\psi$ places on the $t$'th layer of the Hamming cube is upper bounded by $\exp(-\Omega(t/\sqrt{T})) / t^2$.

\paragraph{Step 2.} We combine $\psi$ with a dual witness $\Phi$ for $\AND_R$ to obtain a preliminary dual witness $\Phi \ls \psi$ for $F = \AND_R \circ \OR_N$. The dual witness $\Phi \ls \psi$ shows that $F$ has approximate degree $\Omega(\sqrt{R} \cdot \sqrt{T}) = \Omega(R^{3/4})$. However, $\Phi \ls \psi$ places weight on inputs of Hamming weight larger than $N$, and hence does not give an \emph{unbounded} approximate degree lower bound for the promise variant $F^{\le N}$.

\paragraph{Step 3.} Using \Cref{prop:btzeroing} we zero out the mass that $\Phi \ls \psi$ places on inputs of  Hamming weight larger than $N$, while maintaining its pure high degree and correlation with $F^{\le N}$. This yields the final desired \emph{unbounded} approximate degree dual witness $\zeta$ for $F^{\le N}$, as per \Cref{prop:ub-duality}.

\subsection{Step 1: A Dual Witness for \texorpdfstring{$\OR_N$}{OR}}

We begin by constructing a univariate function which captures the properties we need of our inner dual witness for $\OR_N$. The construction slightly modifies the dual polynomial for $\OR_N$ given by \v{S}palek \cite{spalek}. We provide a careful 
analysis of its decay as a function of the input's Hamming weight.

\begin{proposition} \label{prop:or-sym-dual}
Let $T \in \N$ and $1/T \le \delta \le 1/2$. There exist constants $c_1, c_2 \in (0, 1) $ and a function $\omega : [T]_0 \to \R$ such that
\begin{align}
&\omega(0) - \sum_{t = 1}^T \omega(t) \ge 1- \delta  \label{eqn:or-sym-corr}\\
&\sum_{t = 0}^T |\omega(t)| = 1 \\
& \text{For all univariate polynomials } q \colon \R \to \R\text{, } \deg q < c_1\sqrt{\delta T} \implies \sum_{t=0}^T \omega(t) \cdot q(t) = 0 \label{eqn:or-sym-phd}\\
& |\omega(t)| \le \frac{170 \exp(-c_2 t \sqrt{\delta} /  \sqrt{T})}{\delta \cdot t^2} \qquad \forall t = 1, \dots, T.\label{eqn:or-sym-decay}\end{align}
\end{proposition}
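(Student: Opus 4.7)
The plan is to construct $\omega$ as a small modification of the explicit dual witness for $\OR_T$ due to \v{S}palek, followed by a more careful tail analysis than has appeared in prior work. \v{S}palek's witness is a signed, $\ell_1$-normalized measure on $[T]_0$ roughly of the form $\omega(t) = (-1)^t \cdot c / \phi(t)$, where $\phi(t)$ is a product of $d - 1 \asymp \sqrt{\delta T}$ linear factors whose zeros lie in $[0, T]$ at appropriately rounded scaled Chebyshev nodes. Conditions \eqref{eqn:or-sym-corr}, $\ell_1$-normalization, and \eqref{eqn:or-sym-phd} then follow from standard arguments already contained in \cite{spalek}: pure high degree from Lagrange-interpolation at the zeros of $\phi$, correlation with $\OR$ close to $1 - \delta$ because the signed measure concentrates approximately $(1-\delta)/2$ of its $\ell_1$ mass at $t = 0$ with positive sign, and normalization by rescaling. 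Only a minor adjustment of \v{S}palek's parameters is needed to match the constants and quantitative bounds that appear in the present statement.

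The heart of the argument, and the only genuinely new content, is the strengthened decay estimate \eqref{eqn:or-sym-decay}. I would split the analysis into two regimes. In the \emph{small-$t$} regime $1 \le t \lesssim \sqrt{T/\delta}$, the exponential factor $\exp(-c_2 t \sqrt{\delta / T})$ is $\Theta(1)$ and the bound reduces to the polynomial estimate $|\omega(t)| \le O(1/(\delta t^2))$, which is essentially already in \v{S}palek. In the \emph{large-$t$} regime $t \gtrsim \sqrt{T/\delta}$, I would exploit the fact that the Chebyshev nodes have empirical density $\sim d / \sqrt{s(T-s)}$ at location $s \in (0, T)$ and run a telescoping estimate on $|\phi(t+1)/\phi(t)| = \prod_j |1 + 1/(t - r_j)|$. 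Each additional unit of $t$ accrues a multiplicative factor of roughly $\exp(\Omega(\sqrt{\delta/T}))$ from the aggregate of the linear factors, leading, after integration over the range, to an overall factor of $\exp(\Omega(t \sqrt{\delta / T}))$ in the denominator, which is exactly the claimed decay.

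The main obstacle I anticipate is handling values of $t$ that are very close to (or equal to) one of the rounded Chebyshev nodes $r_j$, where $1/|\phi(t)|$ can spike and break the pointwise bound. A clean workaround is to set $\omega(t) = 0$ at and very near each node and compensate by adjusting the normalization; since there are only $O(d)$ such points and each affects $\omega$ in at most one location, this produces only an $O(1)$ change in the $\ell_1$ mass, comfortably absorbed into the slack constant $170$ appearing in \eqref{eqn:or-sym-decay}. After making this truncation I would verify that (i) the correlation \eqref{eqn:or-sym-corr} degrades by at most an additive constant times $\delta$ (absorbable by shrinking the effective $\delta$), and (ii) the pure high degree decreases by at most a constant factor, which can be compensated by choosing $c_1$ slightly smaller. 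Putting these pieces together yields $\omega$ with the properties required by the proposition.
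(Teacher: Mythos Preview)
Your high-level plan (use a \v{S}palek-style witness and push the tail analysis further) is the right one, but your description of the witness is structurally wrong, and this leads you to invent an obstacle that does not exist and then ``fix'' it in a way that would actually break the proof.

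\v{S}palek's dual polynomial is not of the form $\omega(t)=(-1)^t c/\phi(t)$ at every integer $t$. It is \emph{supported on} a sparse set $S\subset[T]_0$ of size $\asymp\sqrt{\delta T}$; on the support one has $|\omega(s)|=\prod_{r\in S\setminus\{s\}}|s-r|^{-1}$, i.e.\ the Lagrange weight $1/|\phi'(s)|$ where $\phi(x)=\prod_{r\in S}(x-r)$. This is exactly how the paper proceeds: it takes $S=\{1,c\}\cup\{2ci^2:0\le i\le m\}$ with $c=\lceil 8/\delta\rceil$ and $m=\lfloor\sqrt{T/2c}\rfloor$, and sets $\omega(t)\propto(-1)^t\binom{T}{t}\prod_{r\in[T]_0\setminus S}(t-r)$, which vanishes off $S$ by construction. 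The pure high degree is then $|S|-1$ directly from the identity $\sum_t(-1)^t\binom{T}{t}p(t)=0$ for $\deg p<T$; there is no Lagrange-interpolation step and no issue at the nodes.

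Because of this misreading, your ``main obstacle'' (spikes of $1/|\phi(t)|$ at or near nodes) is not present in the correct construction, and your proposed cure is actively harmful: zeroing $\omega$ at $\Theta(d)$ points adds to $\omega$ a perturbation supported on $\Theta(d)$ points, which generically has nonzero inner product with \emph{every} polynomial. There is no mechanism by which pure high degree ``decreases by at most a constant factor''; in general it can drop to zero. So claim (ii) in your last paragraph is false, and the whole truncation step should be deleted.

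For the decay, the paper's argument is also quite different from your telescoping over consecutive integers (which is moot since $\omega$ is zero on most integers). At a support point $t=2cj^2$ one computes $|\omega(t)|/|\omega(0)|$ and isolates the factor
\[
\frac{(m!)^2}{(m+j)!(m-j)!}\ \le\ \Bigl(\tfrac{m}{m+j}\Bigr)^{j}\ \le\ e^{-j^2/(2m)},
\]
which, since $t=2cj^2$ and $m\asymp\sqrt{T/c}$, gives exactly the $\exp(-\Omega(t\sqrt{\delta/T}))$ decay. The remaining factor is handled by an elementary estimate yielding the $O(1/(\delta t^2))$ prefactor, and the correlation \eqref{eqn:or-sym-corr} is obtained by directly bounding $|\omega(1)|/\omega(0)$, $|\omega(c)|/\omega(0)$, and $\sum_{t>c}|\omega(t)|/\omega(0)$. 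I would redo your argument starting from the correct supported-on-$S$ form of $\omega$ and this factorial-ratio estimate.
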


\begin{proof}[Proof of \Cref{prop:or-sym-dual}]
By renormalizing, it suffices to construction a function $\omega : [T]_0 \to \R$ such that
\begin{align} &\omega(0) - \sum_{t = 1}^T \omega(t) \ge (1 - \delta)\|\omega\|_1  \label{eqn:or-sym-corr-ref}\\
& \text{For all univariate polynomials } q \colon \R \to \R\text{, } \deg q < c_1\sqrt{\delta T} \implies \sum_{t=0}^T \omega(t) \cdot q(t) = 0 \label{eqn:or-sym-phd-ref}\\
& |\omega(t)| \le \frac{170 \exp(-c_2 t \sqrt{\delta} /\sqrt{T})\|\omega\|_1}{\delta \cdot t^2} \qquad \forall t = 1, \dots, T.\label{eqn:or-sym-decay-ref}\end{align}
 Let $c = \lceil 8 / \delta \rceil$ below. We will freely use the fact that since $\delta \le 1/2$, we have $c \le c^2 / (c-1) \le 10 /\delta$. Let $m = \lfloor \sqrt{T/2c} \rfloor$ and define the set
\[S = \{1, c\} \cup \{2ci^2 : 0 \le i \le m\}.\]
Note that $|S| \ge  c_1 \sqrt{\delta T}$ for some absolute constant $c_1 > 0$. Define the function
\[\omega(t) = \frac{(-1)^{t+(T-m)}}{T!} \binom{T}{t} \prod_{r \in [T]_0 \setminus S} (t - r).\]
Property~\eqref{eqn:or-sym-phd-ref} follows from the following well-known combinatorial identity.
\begin{fact}[e.g., \cite{concrete} Equation (5.23) or \cite{osnewbounds}] \label{fact:combinatorial}
Let $T \in \N$, and let $p$ be a polynomial of degree less than $T$. Then
\[\sum_{t = 0}^T (-1)^t \binom{T}{t} p(t) = 0.\]
\end{fact}

Expanding out the binomial coefficient in the definition of $\omega$ reveals that
\[|\omega(t)| = \begin{cases}
 \prod\limits_{r \in S \setminus \{t\}} \frac{1}{|t - r|} & \text{ for } t \in S, \\
0 & \text{ otherwise.}
\end{cases}\]

We now use this characterization to establish the improved tail decay property \eqref{eqn:or-sym-decay-ref}. This clearly holds for $t = 1$ with $c_2 = 1/10$, since $|\omega(1)| \le \|\omega\|_1$ and
\[\frac{170}{\delta} \exp(-c_2\sqrt{\delta} / \sqrt{T}) \ge 340 \cdot e^{-\sqrt{2} / 10} > 1.\]

For $t = c$, we have
\begin{align} \label{eqn:omega-2}
\frac{|\omega(c)|}{\omega(0)} &= \frac{c\prod_{i = 1}^m (2ci^2)}{c(c-1)\prod_{i = 1}^m (2ci^2 - c)} \nonumber \\
&= \frac{1}{c-1}\left(\prod_{i = 1}^m \frac{i^2 - 1/2}{i^2}\right)^{-1} \nonumber \\
&\le \frac{1}{c-1} \left(1 - \sum_{i=1}^m \frac{1}{2i^2}\right)^{-1} \nonumber \\
&\le \frac{1}{c-1} \left(1 - \frac{\pi^2}{12}\right)^{-1} \le \frac{6}{c - 1}
\end{align}
where the first inequality follows from the fact that $\prod_{i = 1}^m (1 - a_i) \ge 1 - \sum_{i = 1}^m a_i$ for $a_i \in (0, 1)$. Now note that
\begin{align*}
\frac{|\omega(c)|}{\|\omega\|_1} &\le \frac{|\omega(c)|}{\omega(0)} \\
&\le \frac{6}{c - 1} \\
&\le \frac{60}{\delta \cdot c^2} & \text{ since } 10/\delta \ge c^2 / (c-1) \\
& \le \frac{170}{\delta \cdot c^2} \cdot e^{-c\sqrt{\delta}/10\sqrt{T}},
\end{align*}
since $\delta \ge 1/T$ and hence $e^{-c\sqrt{\delta}/10\sqrt{T}} \ge e^{-1}$. Thus~\eqref{eqn:or-sym-decay-ref} holds for $t = c$, recalling that $c_2 = 1/10$.

For $t = 2cj^2$ with $j \ge 1$, we get
\begin{align*}
\frac{|\omega(t)|}{\omega(0)} &= \frac{c\prod_{i = 1}^m (2ci^2)}{(2cj^2 - 1)(2cj^2-c)\prod_{i \in [m]_0 \setminus \{j\}} |2ci^2 - 2cj^2|} \\
&= \frac{c(m!)^2}{(4c^2j^4 - (2c^2 + 2c)j^2 + c)\prod_{i \in [m]_0 \setminus \{j\}} (i+j)|i-j|} \\
&= \frac{c}{4c^2j^4 - (2c^2 + 2c)j^2 + c} \cdot \frac{(m!)^2}{(m+j)!(m-j)!}.
\end{align*}
For $j \ge 1$, the first factor is bounded by
\[\frac{c}{4c^2j^4 - (2c^2 + 2c)j^2 + c} \le \frac{3c}{(2cj^2)^2},\]
using the fact that $c \ge 2$. We control the second factor by
\begin{align*}
\frac{(m!)^2}{(m+j)!(m-j)!} &= \frac{m}{m+j} \cdot \frac{m-1}{m+j-1} \cdot \ldots \cdot \frac{m-j+1}{m+1} \\
&\le \left(\frac{m}{m+j} \right)^j \\
&\le  \left(1 - \frac{j}{2m} \right)^j \\
&\le e^{-j^2/2m},
\end{align*}
where the last inequality uses the fact that $1 - x \le e^{-x}$ for all $x$.
Since
\[\frac{|\omega(t)|}{\|\omega\|_1} \le \frac{|\omega(2cj^2)|}{\omega(0)} \le \frac{3c}{(2cj^2)^2} \cdot e^{-2cj^2/(4cm)} \le \frac{170}{\delta \cdot t^2} \cdot e^{-t \sqrt{\delta} / 10 \sqrt{T}},\]
this establishes~\eqref{eqn:or-sym-decay-ref}.

What remains is to perform the correlation calculation to establish~\eqref{eqn:or-sym-corr-ref}. For $t = 1$, we observe
\[\frac{|\omega(1)|}{\omega(0)} = \frac{c\prod_{i = 1}^m (2ci^2)}{(c-1)\prod_{i = 1}^m (2ci^2 - 1)} \ge \prod_{i=1}^m \frac{i^2}{i^2 - 1/(2c)} \ge 1.\]
Next, we observe that the total contribution of $t > c$ to $\|\omega\|_1/\omega(0)$ is at most
\begin{equation} \label{eqn:omega-tail}
\sum_{t > c} \frac{|\omega(t)|}{\omega(0)} \le \sum_{j=1}^{m} \frac{3c}{(2cj^2)^2} < \sum_{j=1}^\infty \frac{3}{4cj^4} = \frac{\pi^4}{120 c}.
\end{equation}

Next, we calculate
\begin{align}
\omega(0) - \sum_{t = 1}^T \omega(t) &\ge \omega(0) - \omega(1) - \left( \sum_{t = c}^T |\omega(t)|\right) \nonumber \\
&\ge \omega(0) - \omega(1) - \left( \omega(c) +  \omega(0) \cdot \frac{\pi^4}{120c}\right) & \text{by } \eqref{eqn:omega-tail} \nonumber \\
&\ge - \omega(1) + \omega(0) \left(1 - \frac{6}{c-1} - \frac{\pi^4}{120c}\right) & \text{by } \eqref{eqn:omega-2} \nonumber \\
&\ge - \omega(1)  + (1 - \delta)\omega(0) & \text{by our choice of } c \ge 8/\delta. \label{eqn:fudgethis}
\end{align}
On the other hand,
\begin{align}
\|\omega\|_1 &\le \omega(0) - \omega(1) + \omega(2) + \omega(0) \cdot \frac{\pi^4}{120c} & \text{by } \eqref{eqn:omega-tail} \nonumber \\
&\le -\omega(1) + \omega(0) \left(1 + \frac{6}{c - 1} + \frac{\pi^4}{120c} \right) &\text{by } \eqref{eqn:omega-2}\nonumber \\
&\le -\omega(1) + (1 + \delta)\omega(0) & \text{since } c \ge 8/\delta. \label{eqn:fudgethat}
\end{align}
Combining~\eqref{eqn:fudgethis} and~\eqref{eqn:fudgethat}, and using the fact that $-\omega(1) \ge \omega(0)$ shows that
\[\frac{\omega(0) - \sum_{t = 1}^k \omega(t)}{\|\omega\|_1} \ge \frac{-\omega(1) + (1-\delta) \omega(0)}{-\omega(1) + (1+\delta) \omega(0)} \ge \frac{2-\delta}{2 + \delta} \ge 1 - \delta.\]
This establishes~\eqref{eqn:or-sym-corr-ref}, completing the proof.
\end{proof}

The following construction of a dual polynomial for $\OR_N$, with $N \ge T$, is an immediate consequence of Minsky-Papert symmetrization (\Cref{lem:mp}), combined with \Cref{prop:or-sym-dual}.

\begin{proposition} \label{prop:or-dual}
Let $T, N \in \N$ with $T \le N$, and let $\delta > 1/T$. Define $\omega$ as in \Cref{prop:or-sym-dual}. Define the function $\psi: \bits^N \to \bits$ by $\psi(x) = \omega(|x|) / \binom{N}{ |x|}$ for $x \in H^N_{\le T}$ and $\psi(x) = 0$ otherwise. Then
\begin{align} &\langle \psi, \OR_N \rangle \ge 1-\delta  \label{eqn:or-corr} \\
& \|\psi\|_1 = 1   \label{eqn:or-norm}\\
& \text{For any polynomial } p \colon \bits^N \to \R\text{, } \deg p < c_1\sqrt{\delta T} \implies \langle \psi, p \rangle = 0 \label{eqn:or-phd}
\end{align}
\end{proposition}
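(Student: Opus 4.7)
The plan is a routine symmetrization-based lift of the univariate dual witness $\omega$ from \Cref{prop:or-sym-dual} to the multivariate function $\psi$ on $\bits^N$. Because $\psi$ depends only on Hamming weight and equals $\omega(|x|)/\binom{N}{|x|}$ within its support, every quantity of interest collapses cleanly to a univariate sum over $[T]_0$ against $\omega$. I would verify the three required conditions in the order given, deferring the pure-high-degree condition to the end since it is the only place where a nontrivial tool (Minsky--Papert symmetrization) enters.

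First I would check the $\ell_1$-norm condition~\eqref{eqn:or-norm}. Since there are exactly $\binom{N}{t}$ inputs of Hamming weight $t$, and $|\psi(x)| = |\omega(t)|/\binom{N}{t}$ for each such $x$ (with $\psi$ vanishing for $|x|>T$), summing yields
\[
\|\psi\|_1 \;=\; \sum_{t=0}^{T} \binom{N}{t}\cdot \frac{|\omega(t)|}{\binom{N}{t}} \;=\; \sum_{t=0}^{T}|\omega(t)| \;=\; 1,
\]
invoking the unit-norm property of $\omega$ from \Cref{prop:or-sym-dual}.

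Next I would handle the correlation condition~\eqref{eqn:or-corr}. Under the paper's sign convention, $\OR_N(0^N)=+1$ and $\OR_N(x)=-1$ for $|x|\ge 1$. Since $\psi$ is supported on $H^N_{\le T}$ and symmetric in its coordinates, grouping contributions by Hamming weight gives
\[
\langle \psi, \OR_N\rangle \;=\; \psi(0^N)\cdot 1 \;+\; \sum_{t=1}^{T}\binom{N}{t}\cdot\frac{\omega(t)}{\binom{N}{t}}\cdot(-1) \;=\; \omega(0)-\sum_{t=1}^{T}\omega(t) \;\ge\; 1-\delta,
\]
where the inequality is exactly~\eqref{eqn:or-sym-corr}.

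Finally, for the pure-high-degree condition~\eqref{eqn:or-phd}, let $p:\bits^N\to\R$ have $\deg p < c_1\sqrt{\delta T}$. Applying \Cref{lem:mp} to $p$ produces a univariate polynomial $q:\R\to\R$ of degree at most $\deg p$ such that $q(t)=\binom{N}{t}^{-1}\sum_{|x|=t}p(x)$ for every $t\in[N]_0$. Then
\[
\langle \psi, p\rangle \;=\; \sum_{t=0}^{T}\sum_{|x|=t}\frac{\omega(t)}{\binom{N}{t}}\,p(x) \;=\; \sum_{t=0}^{T}\omega(t)\,q(t) \;=\; 0,
\]
by \eqref{eqn:or-sym-phd}, since $\deg q \le \deg p < c_1\sqrt{\delta T}$. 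There is no real obstacle here; the only subtlety worth flagging is the sign convention in computing the correlation, and the fact that the vanishing of $\psi$ outside $H^N_{\le T}$ is needed so that the univariate sums truncate at $T$ rather than $N$. This completes the reduction of the multivariate dual witness to the three properties already established for $\omega$.
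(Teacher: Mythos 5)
Your proposal is correct and is exactly the argument the paper intends: it states \Cref{prop:or-dual} as an immediate consequence of Minsky--Papert symmetrization (\Cref{lem:mp}) combined with \Cref{prop:or-sym-dual}, which is precisely the reduction you carry out. The three verifications (norm, correlation with the stated sign convention, and pure high degree via the symmetrized polynomial $q$) are all as expected.
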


\subsection{Step 2: Constructing a Preliminary Dual Witness for \texorpdfstring{$\AND_R \circ \OR_N$}{AND o OR}} \label{sec:prelim-dual}

The following proposition, when combined with \Cref{prop:ls}, shows that there is a function $\Phi : \bits^R \to \bits$ such that the dual block composition $\Phi \ls \psi$ is a good dual polynomial for $\AND_R \circ \OR_N$. In the next section, we will modify $\Phi \ls \psi$ to zero out the weight it places outside $H_{\le N}^{N \cdot R}$.

\begin{restatable}{proposition}{andor} \label{prop:and-or}
Let $\OR_N : \bits^N \to \bits$ \ and \ $\AND_R \colon \bits^R \to \bits$. \ Let  $\psi\colon \bits^N \to \bits$ be a function such that $\|\psi\|_1 = 1$ and $\langle \psi, \OR_N \rangle \ge 19/20$. Then there exists a function $\Phi \colon \bits^R \to \bits$ with pure high degree $\Omega(\sqrt{R})$ and $\|\Phi\|_1 = 1$ such that
\[ \langle \Phi \ls \psi, \AND_R \circ \OR_N \rangle > 2/3.\]
\end{restatable}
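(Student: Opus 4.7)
The plan is to apply the dual block method with a carefully chosen, symmetric, almost-one-sided dual $\Phi$ for $\AND_R$. Invoke \Cref{prop:or-sym-dual} at $T = R$ with a sufficiently small absolute constant $\delta$ (e.g.\ $\delta = 1/20$) to obtain a univariate $\omega : [R]_0 \to \R$ with $\|\omega\|_1 = 1$, pure high degree $c_1\sqrt{\delta R} = \Omega(\sqrt{R})$, and---by tracking constants through that proof and using the ratio bound $|\omega(1)|/\omega(0) = 1 + O(1/c)$ with $c = \lceil 8/\delta\rceil$ together with the tail bound on $\sum_{t \ge c}|\omega(t)|/\omega(0)$---the quantitative estimate $\omega(0) \ge 1/2 - O(\delta)$. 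Define $\Phi(z) := -\omega(R - |z|_-)/\binom{R}{|z|_-}$ on $\bits^R$, where $|z|_-$ denotes the number of $-1$ entries of $z$. The reflection $k \mapsto R-k$ preserves both $\ell_1$-norm and pure high degree, so $\|\Phi\|_1 = 1$, $\Phi$ has pure high degree $\Omega(\sqrt{R})$, and $\Phi((-1)^R) = -\omega(0)$.

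To compute $\langle \Phi \ls \psi, \AND_R \circ \OR_N\rangle$, write $\AND_R(y) = 1 - 2\,\id[y = (-1)^R]$ and use pure high degree at least $1$ of $\Phi \ls \psi$---which follows from \Cref{prop:ls}, assuming $\psi$ has pure high degree at least $1$, an implicit requirement satisfied whenever $\psi$ arises from \Cref{prop:or-dual}---to cancel the constant contribution. Unfolding the dual block composition and grouping inputs by $z := (\sgn \psi(x_1), \ldots, \sgn \psi(x_R))$ then gives
\[
\langle \Phi \ls \psi, \AND_R \circ \OR_N \rangle \;=\; -2^{R+1}\sum_{z \in \bits^R} \Phi(z)\prod_{i=1}^R q_{z_i, -1},
\]
where $q_{z,y} := \sum_{x\,:\,\sgn\psi(x)=z,\; \OR_N(x)=y} |\psi(x)|$.

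The hypothesis $\langle \psi, \OR_N\rangle \ge 19/20$, combined with $\sum_x \psi(x) = 0$ and the fact that $(+1)^N$ is the unique input on which $\OR_N = +1$, forces $\psi((+1)^N) \ge 19/40$; hence $q_{-,+1} = 0$, $q_{-,-1} = 1/2$, and $\eps := q_{+,-1} = 1/2 - \psi((+1)^N) \le 1/40$. Substituting $\prod_i q_{z_i,-1} = 2^{-R}(2\eps)^{|z|_+}$, where $|z|_+ = R - |z|_-$, and using the symmetric form of $\Phi$ collapses the sum to
\[
-2\sum_{z \in \bits^R} \Phi(z)(2\eps)^{|z|_+} \;=\; 2\sum_{t=0}^R \omega(t)(2\eps)^t \;=\; 2\omega(0) + 2\sum_{t \ge 1}\omega(t)(2\eps)^t.
\]
The tail is bounded in absolute value by $2(2\eps) \sum_{t\ge 1}|\omega(t)| \le 4\eps \le 1/10$, so choosing $\delta$ small enough that $\omega(0) \ge 9/20$ yields $\langle \Phi \ls \psi, \AND_R \circ \OR_N\rangle \ge 2\omega(0) - 4\eps \ge 9/10 - 1/10 = 4/5 > 2/3$.

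The main technical obstacle is the quantitative estimate $\omega(0) \ge 1/2 - O(\delta)$: verifying it requires reopening the computation in \Cref{prop:or-sym-dual} to confirm that $|\omega(1)|/\omega(0)$ is $1 + O(1/c)$ and that the $t \ge c$ terms contribute only $O(\omega(0)/c)$ to the $\ell_1$-norm. Once this is in hand, everything else reduces to routine applications of the dual block identities collected in \Cref{prop:ls}.
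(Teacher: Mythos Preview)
Your argument is correct. The paper itself does not give a proof, deferring to \cite{bt13,sherstovandor}; the approach implicit in those references differs from yours in a mild but notable way. There, one takes an \emph{arbitrary} dual $\Phi$ for $\AND_R$ with $\langle \Phi, \AND_R\rangle \ge 9/10$ and pure high degree $\Omega(\sqrt{R})$, whose existence follows black-box from \Cref{andornor} and LP duality, and then bounds the correlation loss in $\Phi \ls \psi$ via the probabilistic coupling argument exemplified in \Cref{prop:degree-amp} (in the $\AND$ analog, one-sidedness of $\psi$ kills the $R\delta^-$ term and $\delta^+ \le 1/40$ handles the rest). Your route is more concrete: you build $\Phi$ explicitly by reflecting the univariate $\omega$ of \Cref{prop:or-sym-dual}, and then collapse the correlation to the closed form $2\sum_t \omega(t)(2\eps)^t$. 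This trades the LP-duality black box for the need to verify $\omega(0) \ge 1/2 - O(\delta)$ by revisiting the constants in \Cref{prop:or-sym-dual}, which you correctly identify as the main technical checkpoint (and which indeed holds: with $c = \lceil 8/\delta\rceil$ one has $|\omega(1)|/\omega(0) = 1 + O(1/c)$ and tail mass $O(1/c)$, giving $\|\omega\|_1/\omega(0) = 2 + O(\delta)$). Both routes ultimately rest on the same two ingredients---dual block composition and the one-sided error of any $\OR$ dual---so the difference is one of packaging rather than substance. Your note that the pure-high-degree-$\ge 1$ assumption on $\psi$ is implicit (it is needed both for $\|\Phi \ls \psi\|_1 = 1$ via \Cref{prop:ls} and to drop the constant term) is a fair reading of how the proposition is actually used in \Cref{sec:parametersetting}.
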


The proof of \Cref{prop:and-or} is implicit in the results of ~\cite{bt13, sherstovandor}.

\subsection{Step 3: Constructing the Final Dual Witness} \label{sec:final-dual}

\label{sec:parametersetting}

\begin{proposition} \label{prop:surj-dual}
Let $R$ be sufficiently large. There exist $N = O(R)$, $D = \tOmega(N^{3/4})$, and  $\zeta : (\bits^N)^R \to \R$ such that
\begin{align} \label{eq:surj-zero} &\zeta(x) = 0 \text{ for all } x \not\in H_{\le N}^{N \cdot R}, \\
\label{eq:surj-corr} &\sum_{x \in H_{\le N}^{N \cdot R}}\zeta(x) \cdot (\AND_{R} \circ \OR_N)(x) > 1/3, \\
 \label{eq:surj-unitnorm} &\|\zeta\|_1 = 1, \text{ and }  \\
 \label{eq:surj-phd} &\text{ For every polynomial } p \colon (\bits^N)^R\to \R \text{ of degree less than } D, \text{ we have } \langle p, \zeta \rangle = 0. 
\end{align}
\end{proposition}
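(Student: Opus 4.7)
The plan is to instantiate the three-step dual block composition strategy outlined in \Cref{s:surjlower} with carefully chosen parameters. First, take $T = \lfloor\sqrt{R}\rfloor$ and $\delta = 1/20$ and apply \Cref{prop:or-dual} to obtain an inner dual witness $\psi : \bits^N \to \R$ for $\OR_N$ with $\|\psi\|_1 = 1$, $\langle \psi, \OR_N \rangle \ge 19/20$, and pure high degree $\Omega(\sqrt{\delta T}) = \Omega(R^{1/4})$. Crucially, the underlying symmetric function $\omega$ produced by \Cref{prop:or-sym-dual} satisfies the $(\alpha,\beta)$-decay condition with $\alpha = 170/\delta = O(1)$ and $\beta = c_2\sqrt{\delta/T} = \Theta(R^{-1/4})$; this exponential tail, rather than the weaker inverse-polynomial decay used in \cite{adegsurj}, is what drives the improvement.

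Second, feed $\psi$ into \Cref{prop:and-or} to obtain an outer dual witness $\Phi : \bits^R \to \R$ with $\|\Phi\|_1 = 1$ and pure high degree $\Omega(\sqrt{R})$ such that $\langle \Phi \ls \psi, \AND_R \circ \OR_N \rangle > 2/3$. By \Cref{prop:ls}, the dual block composition $\Phi \ls \psi$ inherits $\|\Phi \ls \psi\|_1 = 1$ via \eqref{eqn:ls-norm} and, via multiplicativity~\eqref{eqn:ls-phd}, pure high degree at least $\Omega(\sqrt{R}) \cdot \Omega(R^{1/4}) = \Omega(R^{3/4})$.

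Third, apply \Cref{prop:btzeroing} with $N = \lceil 20\sqrt{\alpha}\rceil R = O(R)$. One checks that $\alpha = O(1) \le R^2$ and $\beta = \Theta(R^{-1/4})$ lies in $(4\ln^2 R/(\sqrt{\alpha}R),\, 1)$ for large $R$, so the hypotheses hold. The proposition yields $\zeta : (\bits^N)^R \to \R$ supported on $H^{N\cdot R}_{\le N}$, with $\|\zeta\|_1 = 1$, $\|\zeta - \Phi \ls \psi\|_1 \le 2/9$, and pure high degree at least $\min\{D,\Delta\}$, where $D = \Omega(R^{3/4})$ is the pure high degree of $\Phi \ls \psi$ and $\Delta \ge \beta\sqrt{\alpha}R/(4\ln^2 R) = \tOmega(R^{3/4})$. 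Conditions \eqref{eq:surj-zero}, \eqref{eq:surj-unitnorm}, and \eqref{eq:surj-phd} are then immediate with $D = \tOmega(R^{3/4}) = \tOmega(N^{3/4})$. For \eqref{eq:surj-corr}, since $\|\AND_R \circ \OR_N\|_\infty = 1$,
\[ \langle \zeta, \AND_R \circ \OR_N \rangle \;\ge\; \langle \Phi \ls \psi, \AND_R \circ \OR_N\rangle - \|\zeta - \Phi \ls \psi\|_1 \;>\; \tfrac{2}{3} - \tfrac{2}{9} \;=\; \tfrac{4}{9} \;>\; \tfrac{1}{3}. \]

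The delicate part is the parameter balancing. \Cref{prop:btzeroing} forces $N = \Theta(\sqrt{\alpha}\,R)$, so $\alpha$ must be held at $O(1)$ to keep $N = O(R)$; simultaneously $\beta$ must exceed $4\ln^2 R/(\sqrt{\alpha}R)$, while pushing $T$ large enough to produce pure high degree $\Omega(R^{3/4})$ in both the dual block product and the zeroing bound $\Delta$. The setting $T = \Theta(\sqrt{R})$ threads this needle, but only because the tail bound of \Cref{prop:or-sym-dual} contributes an honest exponential factor $\exp(-c_2 t\sqrt{\delta/T})$ and the combinatorial estimate \Cref{lem:combinatorial} converts that decay into the mass bound $(2NR)^{-2\Delta}$ required by \Cref{prop:btzeroing}. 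This is precisely the refinement over \cite{adegsurj}, whose analysis yielded only $\tOmega(R^{2/3})$ and required $N = \Omega(R\log^2 R)$.
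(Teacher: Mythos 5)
Your proposal is correct and follows essentially the same route as the paper's proof: the same parameter choices ($T=\Theta(\sqrt{R})$, $\delta=1/20$, $\alpha=170/\delta$, $\beta=\Theta(R^{-1/4})$, $N=\lceil 20\sqrt{\alpha}\rceil R$), the same three-step composition of \Cref{prop:or-dual}, \Cref{prop:and-or}, and \Cref{prop:btzeroing}, and the same correlation/norm bookkeeping at the end.
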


\begin{proof}

We start by fixing choices of several key parameters:
\begin{itemize}
\item $d= \Theta(\sqrt{R})$ is the pure high degree of the dual witness $\Phi$ for $\AND_R$ in \Cref{prop:and-or},
\item $T = \lfloor(R/d)^{1/2}\rfloor^2 = \Theta(\sqrt{R})$,
\item $\hat{D} = c_1 \sqrt{T} \cdot d =\Theta(R^{3/4})$, where $c_1$ is the constant from \Cref{prop:or-sym-dual},
\item $\delta =1/20$
\item $\alpha = 170/\delta = 3400$,
\item $\beta = c_2 \cdot \sqrt{\delta} / \sqrt{T} = \Theta(1/R^{1/4})$, where $c_2$ is the constant from \Cref{prop:or-sym-dual},
\item $N = \lceil 20 \sqrt{\alpha} \rceil R = 693 R$.
\end{itemize}

Let $\psi : \bits^N \to \bits$ be the function constructed in \Cref{prop:or-dual} with $\delta:=1/20$. Let $\Phi : \bits^R \to \bits$ be the function constructed in \Cref{prop:and-or}, and define $\xi = \Phi \ls \psi$. Then by \Cref{prop:ls}, $\xi$ satisfies the following properties:
\begin{align}& \langle \xi, \AND_R \circ \OR_N \rangle > 2/3, \\
& \| \xi \|_1 = 1, \\ 
& \text{ For every polynomial } p \text{ of degree less than } D, \text{ we have }\langle \xi, p \rangle = 0. 
\end{align}
Recall that $\psi$ was obtained by symmetrizing the function $\omega$ constructed in \Cref{prop:or-sym-dual}.
\Cref{prop:btzeroing} guarantees that for some $\Delta \ge \beta \sqrt{\alpha} R / 4\ln^2 R = \tOmega(R^{3/4})$, the function $\xi$ can be modified to produce a function $\zeta : (\bits^N)^R \to \R$ such that
\begin{align*}
&\zeta(x) = 0 \text{ for all } x \not\in H_{\le N}^{N \cdot R}, \\
&\langle \zeta, \AND_R \circ \OR_N \rangle \ge \langle \xi, \AND_R \circ \OR_N \rangle - \|\zeta - \xi\|_1 \ge 2/3 - 2/9 > 1/3, \\
&\|\zeta\|_1 = 1, \\
& \text{ For every polynomial } p \text{ of degree less than } \min\{\hat{D}, \Delta\}, \text{ we have }\langle \zeta, p \rangle = 0. 
\end{align*}
Observing that
\[D =  \min\{\hat{D}, \Delta\} = \tOmega(R^{3/4})\]
shows that the function $\zeta$ satisfies the conditions necessary to prove \Cref{prop:surj-dual}.
\end{proof}

\Cref{thm:main-amp} follows by combining \Cref{prop:surj-dual} with the dual characterization of unbounded approximate degree given in \Cref{prop:ub-duality}. By \Cref{cor:surj-reduction}, we conclude that $\adeg(\dumSURJ_{N, R})=\tilde{\Omega}(R^{3/4})$. \Cref{thm:surj} follows by \Cref{prop:dumsurj-reduction}.
 

\section{Lower Bound For \texorpdfstring{$k$}{k}-Distinctness}
\label{s:kdist}

Our goal is to prove the following lower bound on the approximate degree of the $k$-distinctness function.

\begin{theorem} \label{thm:dist}
For $k \ge 2$ and some $N = O_k(R)$, the $(1/3)$-approximate degree of $\DIST^k_{N, R}$ is $\tilde{\Omega}_k(R^{3/4 - 1/(2k)})$.
The same lower bound holds for the quantum query complexity of $\DIST^k_{N, R}$.
\end{theorem}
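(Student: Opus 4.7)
The plan is to mirror the structure of the $\SURJ$ lower bound (\Cref{thm:main-amp}), replacing the block composition $\AND_R \circ \OR_N$ with $\OR_R \circ \THR^k_N$. By \Cref{prop:dumdist-reduction} and \Cref{cor:dist-reduction}, it suffices to exhibit a dual witness $\zeta$ showing that, for some $N = O_k(R)$, the unbounded approximate degree of $G^{\le N} = \OR_R \circ \THR^k_N$ restricted to $H^{N \cdot R}_{\le N}$ is $\tilde\Omega(R^{3/4-1/(2k)})$. The quantum query lower bound then follows from the Beals et al.\ connection between approximate degree and quantum query complexity. The construction proceeds in three stages paralleling \Cref{sec:final-dual}: build an inner dual witness for $\THR^k_N$, combine it with an outer dual witness for $\OR_R$ via the dual block method, and zero out the mass outside $H^{N \cdot R}_{\le N}$ via \Cref{prop:btzeroing}.

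The technical heart of the proof is the construction of a \emph{symmetric} inner dual witness $\psi : \bits^N \to \R$, obtained by setting $\psi(x) = \omega(|x|)/\binom{N}{|x|}$ for a univariate $\omega : [T]_0 \to \R$, with four properties analogous to \Cref{prop:or-dual}: (i) $\|\omega\|_1 = 1$; (ii) pure high degree $d_\psi$ chosen so that the final product bound hits the target; (iii) an $(\alpha,\beta)$-decay condition with $\alpha = O(1)$ and $\beta = \tilde\Theta(1/\sqrt{T})$ fit for \Cref{prop:btzeroing}; and, crucially, (iv) a \emph{strong one-sided correlation} property: under the probability distribution $|\psi|/\|\psi\|_1$, conditioned on $\psi(x) < 0$, we have $|x| \ge k$ except with probability $1/(3R)$. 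The reason (iv) is essential is that, unlike for $\AND_R$, the outer dual witness $\Phi$ for $\OR_R$ (of PHD $\Omega(\sqrt{R})$, supplied by \Cref{andornor}) feeds the dual block composition $\Phi \ls \psi$ through a \emph{disjunction}: the inner function value in each of the $R$ blocks is "decoded" as $\sgn(\psi(x_i))$, and a single incorrectly-signed block can flip the outer $\OR_R$ value. Requiring each block's sign to be correct with probability $1 - 1/(3R)$ and union-bounding over the $R$ blocks gives $\langle \Phi \ls \psi, \OR_R \circ \THR^k_N\rangle > 2/3$.

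I would construct $\omega$ by adapting the root-set construction in \Cref{prop:or-sym-dual}, shifting the low-order "jump" from $\{0,1\}$ to $\{k-1,k\}$ and using a root set of the shape $S = \{0,1,\dots,k-1,k,c\}\cup\{2ci^2 : 1\le i \le m\}$ with $c = \Theta(1/\delta)$ and $m = \Theta(\sqrt{T/c})$. This automatically puts all the positive $\omega$-mass on Hamming weights $< k$ and the near-totality of the negative $\omega$-mass on weights $\ge k$, as required by (iv), at the cost of trading $k$ units of pure high degree. A similar decay analysis as in \Cref{prop:or-sym-dual} gives the $(\alpha,\beta)$-decay condition with $\alpha = O(1)$ and $\beta = \Theta(1/\sqrt{T})$. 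To calibrate the parameters so that the one-sidedness (iv) holds at level $1 - 1/(3R)$, one must take the PHD of $\omega$ to be roughly $d_\psi = \tilde\Omega(T^{1/2 - 1/(2k)})$ (rather than the $\Theta(\sqrt{T})$ achievable without (iv)); this is the source of the $R^{1/(2k)}$ loss relative to $\SURJ$ and, as the paper notes, is at least partially intrinsic in view of Belovs' upper bound. Choosing $T = \Theta(\sqrt{R})$ then yields $\sqrt{R} \cdot d_\psi = \tilde\Omega(R^{3/4 - 1/(2k)})$.

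With $\omega$ and hence $\psi$ in hand, \Cref{prop:ls} gives that $\xi = \Phi \ls \psi$ has $\|\xi\|_1 = 1$, PHD $\Omega(\sqrt{R}\cdot d_\psi) = \tilde\Omega(R^{3/4-1/(2k)})$, and (by the one-sidedness argument above) correlation $> 2/3$ with $\OR_R \circ \THR^k_N$. Applying \Cref{prop:btzeroing} with $\alpha = O(1)$, $\beta = \tilde\Theta(1/\sqrt{T})$, and $N = \lceil 20\sqrt\alpha\rceil R = O(R)$ produces $\zeta : (\bits^N)^R \to \R$ supported on $H^{N \cdot R}_{\le N}$, with $\|\zeta\|_1 = 1$, correlation $> 2/3 - 2/9 > 1/3$ with $G^{\le N}$, and PHD $\min(d_\psi\sqrt{R}, \beta\sqrt\alpha R/\ln^2 R) = \tilde\Omega(R^{3/4-1/(2k)})$. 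By \Cref{prop:ub-duality}, \Cref{cor:dist-reduction}, and \Cref{prop:dumdist-reduction}, this establishes \Cref{thm:dist}. The main obstacle is step (iv): simultaneously securing the $1-1/(3R)$ one-sidedness, the target pure high degree, and the $(\alpha,\beta)$-decay condition in a single symmetric construction. The $R$-dependent one-sidedness forces us to burn $\Theta(k)$ roots at the jump of $\THR^k$, and a careful weight-by-weight bookkeeping of $\omega$ (in the style of the $|\omega(c)|/\omega(0)$ and tail estimates in \Cref{prop:or-sym-dual}) is needed to verify that the construction indeed achieves the one-sidedness at level $1-1/(3R)$ without damaging the decay estimate.
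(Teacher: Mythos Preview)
Your high-level plan is right, and the reductions via \Cref{prop:dumdist-reduction} and \Cref{cor:dist-reduction} together with the final zeroing step via \Cref{prop:btzeroing} match the paper exactly. But the heart of the argument, your inner dual $\omega$ and the correlation analysis, has genuine gaps relative to what the paper actually does.

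First, for the outer function $\OR_R$ the asymmetry goes the other way from what you state. In the correlation calculation (cf.\ the paper's \Cref{prop:degree-amp}), the term $z=\mathbf{1}$ requires a \emph{union bound over $R$ blocks against false positives} ($\psi(x)>0$ but $|x|\ge k$), so it is $\delta^+$ that must be $O(1/R)$; for $z\ne\mathbf{1}$ a single correctly-signed block already fixes $\OR_R$ to $-1$, so $\delta^-$ only needs to be a small constant. Your condition (iv) controls false negatives at level $1/(3R)$, which is neither the right direction nor achievable with your root set. Second, the claim that your $\omega$ ``puts all the positive mass on Hamming weights $<k$'' is false: for any support set $S$, the sign of $\omega$ alternates between consecutive points of $S$ (this is forced by the form $(-1)^t\binom{T}{t}\prod_{r\notin S}(t-r)$), so roughly half of the mass on $\{0,\dots,k-1\}$ is negative, giving $\delta^-\approx 1/2$. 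Third, your arithmetic does not close: with $T=\Theta(\sqrt R)$, a PHD of $T^{1/2-1/(2k)}$ yields $\sqrt{R}\cdot d_\psi = R^{3/4-1/(4k)}$, not $R^{3/4-1/(2k)}$.

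The paper handles all of this differently. The inner root set is $S=\{1,\dots,k\}\cup\{ci^2:0\le i\le m\}$ with $c=2k\lceil N^{1/k}\rceil$; the $N$-dependent gap makes the \emph{total} mass on $t>k$ at most $|\omega(k)|/(48N)$, which gives $\delta^+\le 1/(48N)$ directly, while $\delta^-$ is only shown to be $\le \tfrac12 - 2\cdot 4^{-k}$. The resulting PHD is $\Theta(\sqrt{T/N^{1/k}})=\Theta(R^{1/4-1/(2k)})$, and the decay constant is $\alpha=(2k)^k$ (not $O(1)$), which is why $N=O_k(R)$ rather than $O(R)$. Because $\delta^-$ is close to $1/2$, a direct $\Phi$ for $\OR_R$ does \emph{not} give correlation $>2/3$; the paper inserts an intermediate \emph{error-amplification} step (\Cref{prop:error-amp}) with $M=4^k$ copies of $\OR$, driving $\delta^-$ down to $\tfrac12(1-4\cdot 4^{-k})^{4^k}\le e^{-4}/2$, and only then applies the degree-amplifying $\rho$ for $\OR_{R/4^k}$ (\Cref{prop:degree-amp}). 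This two-stage outer dual $\Phi=\rho\star\varphi$ is the piece your proposal is missing; without it (or a substitute that genuinely achieves small constant $\delta^-$), the correlation bound fails.
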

Here, the notation $O_k$ hides factors depending only on $k$, and $\tilde{\Omega}_k$ hides factors logarithmic in $R$ and factors depending only on $k$.

\Cref{thm:dist} is a consequence of applying the reductions of \Cref{prop:dumdist-reduction} and \Cref{cor:dist-reduction} to the  following, which is the ultimate goal of this section.

\begin{theorem} \label{thm:dist-promise}
Let $\promcomp : H_{\le N}^{N\cdot R} \to \bits$ equal $\OR_R \circ \THR^k_N$ restricted to inputs in $H_{\le N}^{N \cdot R}$. Then for some $N = O_k(R)$, we have $\ubdeg(\promcomp) \ge \tilde{\Omega}_k(R^{3/4 - 1/(2k)})$.
\end{theorem}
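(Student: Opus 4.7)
My plan is to prove \Cref{thm:dist-promise} by mirroring the three-step strategy used for Surjectivity in \Cref{s:surjlower}, replacing the inner witness for $\OR_N$ by a dual witness for $\THR^k_N$ and the outer witness for $\AND_R$ by one for $\OR_R$. Concretely: (Step 1) construct a univariate $\omega : [T]_0 \to \R$ capturing the hardness of $\THR^k_N$ on low Hamming weight inputs and symmetrize it to $\psi(x) = \omega(|x|)/\binom{N}{|x|}$, establishing analogs of the four properties in \Cref{prop:or-sym-dual}; (Step 2) combine $\psi$ with an outer dual $\Phi$ for $\OR_R$ of pure high degree $\Omega(\sqrt{R})$ through dual block composition, obtaining a preliminary dual $\Phi \ls \psi$ for $\OR_R \circ \THR^k_N$ (this step uses an analog of \Cref{prop:and-or} with $\OR_R$ in place of $\AND_R$); and (Step 3) apply \Cref{prop:btzeroing} to produce a final dual witness $\zeta$ supported on $H^{N \cdot R}_{\le N}$ with $\|\zeta\|_1 = 1$, correlation exceeding $1/3$ with $\OR_R \circ \THR^k_N$, and pure high degree $\min\{\hat D, \Delta\} = \tilde{\Omega}_k(R^{3/4 - 1/(2k)})$. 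The desired lower bound then follows from \Cref{prop:ub-duality}.

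The novel ingredient is Step 1. Since the transition of $\THR^k_N$ from $+1$ to $-1$ occurs at Hamming weight $k$ rather than $1$, I would design $\omega$ to vanish on $\{0, 1, \dots, k-2\}$ and to have its dominant positive mass at $k-1$ and dominant negative mass at $k$, with the remaining support spreading out from $k$ in a Chebyshev-like arithmetic-then-quadratic progression. A natural first candidate is to shift the \v{S}palek-style dual of \Cref{prop:or-sym-dual} upward by $k-1$. A careful sign-and-magnitude analysis, analogous to the proof of \Cref{prop:or-sym-dual}, should then yield (a) high pure high degree, (b) exponentially decaying tails matching the hypothesis of \Cref{prop:btzeroing}, and most importantly (c) the one-sided correlation bound highlighted in the footnote to \Cref{thm:introkdist}: conditioned on $\sgn \psi(x) = -1$, the Hamming weight satisfies $|x| \ge k$ with probability at least $1 - 1/(3R)$. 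The fact that $\omega$ vanishes below $k - 1$ automatically places all of its negative mass on $(\THR^k_N)^{-1}(-1)$; the delicate part is controlling the \emph{positive} mass that $\omega$ necessarily places on $\{k, k+1, \dots, T\}$ due to the alternating signs in the \v{S}palek pattern, since this positive mass produces spurious $+1$ signs in $\sgn \psi$ that, when combined with the $\OR_R$ outer function, can flip $\OR_R$ to a wrong value.

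Step 2 then reduces to a dual block composition correlation analysis, exploiting the fact that $\OR_R$ is sensitive only to the presence of a single $-1$ among its inputs. Each block contributes an error $\epsilon = \Pr_{x \sim |\psi|}[\sgn\psi(x) = -1 \wedge \THR^k_N(x) = +1]$, and by a union bound across the $R$ blocks the correlation of $\Phi \ls \psi$ with $\OR_R \circ \THR^k_N$ remains bounded away from zero precisely when $R\epsilon = O(1)$, i.e.\ when $\epsilon = O(1/R)$; this is why the one-sided condition from Step 1 must be sharp to within $1/(3R)$. Step 3 then proceeds essentially as in \Cref{sec:parametersetting}: given the $(\alpha, \beta)$-decay parameters of $\omega$, one chooses $T$ to balance $\hat D = d_{\text{out}} \cdot d_{\text{in}}$ against $\Delta \asymp \beta\sqrt{\alpha}R/\log^2 R$, ultimately yielding pure high degree $\tilde{\Omega}_k(R^{3/4 - 1/(2k)})$ with $N = O_k(R)$.

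The main obstacle is Step 1: designing $\omega$ so that the sharp $1/(3R)$ one-sided correlation bound is achieved without fatally degrading either the pure high degree or the tail decay. A plain shift of the \v{S}palek dual by $k-1$ has only a constant-fraction one-sided correlation, not $1 - 1/(3R)$, because its positive mass at positions $k + c - 1, k - 1 + 8c, \dots$ is a constant fraction of the total $\ell_1$-mass. Forcing this positive mass down to $O(1/R)$ while preserving the exponential tail decay required by \Cref{prop:btzeroing} and the support size needed for high pure high degree requires a $k$-dependent combinatorial refinement of the support set $S$, together with a delicate sign-and-magnitude analysis generalizing \Cref{prop:or-sym-dual}. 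This refinement inevitably costs a polynomial factor in the pure high degree, and the precise balancing of that cost against the parameters of \Cref{prop:btzeroing} is what produces the $R^{1/(2k)}$ penalty relative to the Surjectivity bound. As noted in the footnote to \Cref{thm:introkdist}, some loss of this form is intrinsic, owing to Belovs' matching $R^{3/4 - \Omega(1)}$ upper bound for $k$-distinctness.
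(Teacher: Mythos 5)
Your three-step skeleton (inner dual for $\THR^k_N$, dual block composition with an outer dual for $\OR_R$, then \Cref{prop:btzeroing} to restrict to $H^{N\cdot R}_{\le N}$) matches the paper, and your Step 3 bookkeeping is right in spirit. But Step 1 as you have specified it cannot work, and the failure is not a matter of a missing ``$k$-dependent combinatorial refinement of $S$.'' You require $\omega$ to vanish on $\{0,\dots,k-2\}$, to carry essentially all of its positive mass at the single point $k-1$, and to have only $O(1/R)$ positive mass on $\{k,\dots,T\}$. Shifting by $k-1$, such an $\omega$ is a dual witness for the $(1-O(1/R))$-approximate degree of the promise-$\OR$ problem ``$|x|=0$ versus $1\le |x|\le T-k+1$,'' and a Chebyshev construction (as in \Cref{lem:cheb}) approximates that problem to error $1-2\eta$ with degree $O(\sqrt{\eta T})$. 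Hence any such $\omega$ has pure high degree $O(\sqrt{T/R})=O_k(1)$ once $T\le N=O_k(R)$, whereas you need inner pure high degree $R^{1/4-1/(2k)}=R^{\Omega(1)}$ for every $k\ge 3$. You have also inverted which one-sided error must be $O(1/R)$: for the outer $\OR_R$, the fragile event is a block with $\sgn\psi(x_i)=+1$ but $\THR^k_N(x_i)=-1$ (a \emph{false positive}), since a single such block flips $\OR_R$ on the all-$(+1)$ outer string; your $\epsilon=\Pr[\sgn\psi=-1\wedge\THR^k_N=+1]$ is the false-\emph{negative} rate, which does not need to be $O(1/R)$ at all. (The footnote you quote from the introduction has its signs flipped relative to the body of the paper.)

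The paper's actual construction resolves exactly this tension by \emph{not} demanding zero false negatives. Its $\omega$ is supported on all of $\{0,1,\dots,k\}\cup\{ci^2: i\le m\}$ with $c=2k\lceil N^{1/k}\rceil$; the head carries mass with profile $|\omega(t)|/|\omega(k)|\le\binom{k}{t}$, and it is precisely having $k+1$ head points that lets the entire tail mass (hence the false-positive mass) be driven down to $|\omega(k)|/(48N)$ at a cost of only $N^{1/k}$ in the spacing $c$ --- this is the true source of the $R^{-1/(2k)}$ loss, and it degrades to your dead end exactly when the head shrinks to one point. The price is a false-negative mass as large as $\frac12-2\cdot 4^{-k}$, which is then handled by a step your outline omits entirely: before the degree-amplifying composition with a dual for $\OR_{R/4^k}$, the paper composes with an \emph{error-amplifying} dual $\varphi$ for $\OR_{4^k}$ supported only on $\{\mathbf{1},-\mathbf{1}\}$ (pure high degree $1$), which shrinks the false-negative rate to $\frac12(2\delta^-)^{4^k}\le e^{-4}/2$ while multiplying the false-positive rate by only $4^k$. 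Without this mechanism for tolerating constant false negatives, your plan has no way to reconcile the $O(1/R)$ error budget with a nontrivial inner pure high degree.
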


The proof of \Cref{thm:dist-promise} will follow the same basic outline as the proof of \Cref{thm:main-amp}. We will construct a dual polynomial for $\promcomp$ via the following three steps:

\paragraph{Step 1.} We first construct a dual witness $\psi$ showing that the unbounded-approximate degree of the $\THR^k_N$ function is $\Omega_k\left(\sqrt{T N^{-1/k}}\right)$, even when promised that the input has Hamming weight at most $T = \Theta_k(\sqrt{R})$. Moreover, this dual witness satisfies additional properties
that are exploited in Step 2 below.

\paragraph{Step 2.} We combine $\psi$ with a dual witness $\Phi$ for $\OR_R$ to obtain a preliminary dual witness $\Phi \ls \psi$ for $\OR_R \circ \THR_N^k$. The dual witness $\Phi \ls \psi$ shows that $\OR_R \circ \THR^k_N$ has approximate degree $\Omega(\sqrt{R} \cdot \sqrt{T}) = \Omega_k(R^{3/4 - 1/(2k)})$. However, $\Phi \ls \psi$ places weight on inputs of Hamming weight larger than $N$, and hence does not give an \emph{unbounded} approximate degree lower bound for the promise variant $\promcomp$.

\paragraph{Step 3.} Using \Cref{prop:btzeroing} we zero out the mass that $\Phi \ls \psi$ places on inputs of  Hamming weight larger than $N$, while maintaining its pure high degree and correlation with $\promcomp$. This yields the final desired dual witness $\zeta$ for $\promcomp$.

\paragraph{Additional Notation.}

For functions $f : \cX \to \bits$ and $\psi : \cX \to \mathbb{R}$, define the error regions
\begin{align*}
E_+(\psi, f) = \{x \in \cX : \psi(x) > 0, f(x) = -1\} \\
E_-(\psi, f) = \{x \in \cX : \psi(x) < 0, f(x) = +1\}.
\end{align*}

These are the regions where $\psi$ disagrees in sign with $f$. 
We refer to $E_+$ as the set of ``false positive'' errors made by $\psi$,
and $E_-$ as the set of ``false negative'' errors.

\subsection{Step 1: A Dual Witness for \texorpdfstring{$\THR^k_N$}{THR}}

We begin by constructing a univariate version of our dual witness for $\THR^k_N$. Properties~\eqref{eqn:thr-sym-pos-corr} and~\eqref{eqn:thr-sym-neg-corr} below amount to more refined conditions on the correlation between $\omega$ and the (symmetrized) $\THR^k_N$ function. These properties will be needed in order to execute Step 2 of the construction in \Cref{sec:k-dist-step2}.

\begin{proposition} \label{prop:thr-sym-dual}
Let $k, T, N \in \N$ with $k \le T$. There exist constants $c_1, c_2 \in (0, 1] $ and a function $\omega : \{0, 1, \dots, T\} \to \R$ such that
\begin{align} & \sum_{\omega(t) > 0, t \ge k} |\omega(t)| \le \frac{1}{48N}  \label{eqn:thr-sym-pos-corr} \\
& \sum_{\omega(t) < 0, t < k} |\omega(t)| \le \left(\frac{1}{2} - \frac{2}{4^k}\right)\label{eqn:thr-sym-neg-corr} \\
& \|\omega\|_1 := \sum_{t = 0}^{T} |\omega(t)| = 1 \label{eqn:thr-sym-norm} \\
& \text{For all univariate polynomials } q \colon \R \to \R\text{, } \notag \\
& \deg p < c_1\sqrt{k^{-1} \cdot T \cdot N^{-1/k}} \implies \sum_{t=0}^T \omega(t) \cdot q(t) = 0 \label{eqn:thr-sym-phd} \\
&  |\omega(t)| \le \frac{(2k)^k \exp(-c_2 t / \sqrt{k \cdot T \cdot N^{1/k}})}{t^2} \qquad \forall t = 1, 2, \dots, T.\label{eqn:thr-sym-decay}
\end{align}
\end{proposition}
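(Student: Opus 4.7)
My plan is to prove this proposition by constructing $\omega$ as a dual polynomial for $\THR^k_N$ on low-Hamming-weight inputs, following the general framework of the proof of \Cref{prop:or-sym-dual} (which handles the special case $k = 1$). Let $S \subseteq [T]_0$ be a carefully chosen support set, and define
\begin{equation*}
\omega(t) = \frac{(-1)^{t + \sigma}}{T!}\binom{T}{t}\prod_{r \in [T]_0 \setminus S}(t - r),
\end{equation*}
where $\sigma$ is chosen so that $\omega(0) > 0$. Then $\omega$ is supported on $S$, for $t \in S$ has $|\omega(t)| = \prod_{r \in S \setminus \{t\}}|t - r|^{-1}$, and its signs alternate along consecutive elements of $S$. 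The key structural change compared to the $\OR$ construction is that $S$ will consist of (an appropriate subset of) the ``prefix'' $\{0, 1, \dots, k-1\}$, which controls the correlation behavior on the $+1$-inputs of $\THR^k$, together with a geometric ``tail'' $\{s_j\}_{j=1}^{m}$ of the form $s_j \asymp k c N^{1/k} j^{2}$ for a large absolute constant $c$ and $m = \Theta(\sqrt{T/(kN^{1/k})})$. The $N^{1/k}$ tail scaling is the critical new ingredient and is responsible for the appearance of $N^{1/k}$ in both \eqref{eqn:thr-sym-phd} and \eqref{eqn:thr-sym-decay}.

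Property \eqref{eqn:thr-sym-phd} (pure high degree) follows immediately from \Cref{fact:combinatorial} once we verify $|S| \ge c_1 \sqrt{k^{-1} T N^{-1/k}}$, which holds by the choice of $m$. For the decay bound \eqref{eqn:thr-sym-decay}, I would split $|\omega(t)|$ into a prefix contribution $\prod_{r=0}^{k-1}|t - r|^{-1}$, which absorbs factorials into the $(2k)^k$ prefactor and gives ample slack against the required $1/t^2$ polynomial factor, and a geometric-tail contribution, which by the same calculation as in the proof of \Cref{prop:or-sym-dual} yields the exponential decay $\exp(-c_2 t/\sqrt{kT N^{1/k}})$.

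The main obstacle is verifying the two one-sided correlation conditions \eqref{eqn:thr-sym-pos-corr} and \eqref{eqn:thr-sym-neg-corr} simultaneously, since they are much stronger than anything that arises in the $\OR$-analysis. Condition \eqref{eqn:thr-sym-pos-corr} demands positive mass on $\{t \ge k\}$ bounded by $1/(48N)$; obtaining this requires exploiting both the polynomial $\prod_{r=0}^{k-1}(t-r)^{-1}$ decay coming from the prefix points in $S$ and the $N^{1/k}$ tail spacing, so that the first false-positive peak at $t \asymp kcN^{1/k}$ has weight at most $1/(48N)$ and subsequent positive peaks decay geometrically. Condition \eqref{eqn:thr-sym-neg-corr} is more subtle: along the prefix the signs of $\omega$ alternate and the absolute weights are proportional to $\binom{k-1}{r}$, so without further adjustment the negative prefix mass equals exactly $1/2$ of the prefix mass after normalization. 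Obtaining the required $2/4^k$ slack amounts to ensuring that the tail mass accounts for at least an $\Omega(4^{1-k})$ fraction of the total $\ell_1$-mass, which in turn constrains the spacing constant $c$ and the precise choice of prefix subset to include in $S$. Reconciling these competing requirements with the pure-high-degree and decay properties is the primary technical challenge; once this balance is achieved and $\omega$ is normalized to $\|\omega\|_1 = 1$, all stated properties hold simultaneously.
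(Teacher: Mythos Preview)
Your overall framework---choosing a support set $S$ with a dense ``prefix'' near $0$ and a quadratically spaced tail with spacing $\Theta(kN^{1/k})$, then invoking \Cref{fact:combinatorial}---matches the paper's approach, and your analysis of the pure high degree and the exponential decay is essentially correct. However, there is a genuine gap in how you plan to reconcile \eqref{eqn:thr-sym-pos-corr} with \eqref{eqn:thr-sym-neg-corr}.

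With prefix $\{0,1,\dots,k-1\}$, every point $t \ge k$ in $S$ lies in the tail. Condition \eqref{eqn:thr-sym-pos-corr} forces the positive tail mass to be at most $1/(48N)$; since adjacent tail weights differ only by bounded multiplicative factors (the ratio $|\omega(s_{j+1})|/|\omega(s_j)|$ is $\Theta(1)$ for small $j$), the \emph{entire} tail mass is then $O(1/N)$. But your proposed source of slack for \eqref{eqn:thr-sym-neg-corr} is precisely this tail: you need the negative tail mass to be at least $2 \cdot 4^{-k}$. These two requirements are incompatible once $N \gg 4^k$, and the proposition places no such bound on $N$. Adjusting the spacing constant or trimming the prefix to a subset of $\{0,\dots,k-1\}$ does not help, since neither creates any ``correct negative'' mass in the region $t \ge k$ that is large compared to $1/N$.

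The paper resolves this by extending the prefix to include the point $t = k$ itself, taking $S = \{0,1,\dots,k\} \cup \{c j^2 : 1 \le j \le m\}$ with $c = 2k\lceil N^{1/k}\rceil$, and fixing the sign so that $\omega(k) < 0$ (rather than pinning $\omega(0) > 0$). Then $\omega(k)$ is a \emph{correct} negative value (since $\THR^k(k) = -1$), and because the prefix weights satisfy $|\omega(t)|/|\omega(k)| \le \binom{k}{t}$, one gets $|\omega(k)| \ge 2 \cdot 4^{-k}\|\omega\|_1$. This single point supplies all the slack needed for \eqref{eqn:thr-sym-neg-corr}, while the tail---now starting at $c = \Theta(kN^{1/k})$---has total mass bounded by $|\omega(k)| \cdot (2k)^k/c^k \cdot \sum_j j^{-4} \le |\omega(k)|/(48N)$, which gives \eqref{eqn:thr-sym-pos-corr}. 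The two conditions are thus decoupled: the slack comes from $t=k$, not from the tail.
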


\begin{proof}
If $k = 1$, then the function defined by $\omega(0) = \frac{1}{2}$ and $\omega(1) = -\frac{1}{2}$ satisfies the conditions of the proposition for $c_1 = c_2 = 1$. In what follows, we treat the complementary case where $k \ge 2$.

Let $E_+ := \{t \colon \omega(t) > 0, t \ge k\}$, and $E_{-} := \{t\colon \omega(t) < 0, t < k\}$. 
By normalizing, it suffices to construct a function $\omega : [T]_0 \to \R$ such that

\begin{align} & \sum_{t \in E_+} |\omega(t)| \le \frac{1}{48 N} \cdot \|\omega\|_1  \label{eqn:thr-sym-pos-corr-ref} \\
& \sum_{t \in E_-} |\omega(t)| \le \left(\frac{1}{2} - \frac{2}{4^k}\right) \cdot \|\omega\|_1  \label{eqn:thr-sym-neg-corr-ref} \\
& \text{For all univariate polynomials } q \colon \R \to \R\text{, } \nonumber \\
&\deg p < c_1\sqrt{k^{-1} \cdot T \cdot N^{-1/k}} \implies \sum_{t=0}^k \omega(t) \cdot q(t) = 0 \label{eqn:thr-sym-phd-ref} \\
&  |\omega(t)| \le \frac{(2k)^k \exp(-c_2 t / \sqrt{k \cdot T \cdot N^{1/k}})\|\omega\|_1}{t^2} \qquad \forall t = 1, 2, \dots, T.\label{eqn:thr-sym-decay-ref}
\end{align}

Let $c = 2k\lceil N^{1/k}\rceil$, and let $m = \lfloor \sqrt{T/c} \rfloor$. 
Define the set
\[S = \{1, 2, \dots, k\} \cup \{ci^2 : 0 \le i \le m\}.\]
Note that $|S| = \Omega(k^{-1/2}T^{1/2}N^{-1/(2k)})$. Define the polynomial
\[\omega(t) = \frac{(-1)^{t+(T-m)+1}}{T!} \binom{T}{t} \prod_{r \in [T]_0 \setminus S} (t - r).\]

The signs are chosen so that $\omega(k) < 0$. 
 It is immediate from \Cref{fact:combinatorial} that $\omega$ satisfies~\eqref{eqn:thr-sym-phd-ref} for $c_1 = 1/\sqrt{2}$.
We now show that~\eqref{eqn:thr-sym-decay-ref} holds. For $t = 1, \dots, k$, we have
\[\frac{(2k)^k \exp(-c_2 t / \sqrt{k \cdot T \cdot N^{1/k}})}{t^2} \ge \frac{(2k)^k \exp(-c_2 \sqrt{k})}{k^2} \ge 1\]
as long as $c_2 \le 1/2$ and $k \ge 2$. Since $|\omega(t)| \le \|\omega\|_1$, the bound holds for $t = 1, \dots, k$.

 For $t = cj^2$ with $j \ge 1$, we expand out the binomial coefficient in the definition of $\omega$ to obtain
\[|\omega(t)| = \begin{cases}
 \prod\limits_{r \in S\setminus \{t\}} \frac{1}{|t - r|} & \text{ for } t \in S, \\
0 & \text{ otherwise.}
\end{cases}\]
For $t \in \{0, 1, \dots, k\}$, we observe that
\begin{equation} \label{eqn:head-bound}
\frac{|\omega(t)|}{|\omega(k)|} = \frac{k! \cdot \prod_{i = 1}^m (c i^2 - k)}{t! \cdot (k - t)! \cdot \prod_{i = 1}^m (c i^2 - t)} \le \binom{k}{t}.
\end{equation}
Meanwhile, for $t = cj^2$ with $j \ge 1$, we get
\begin{align*}
\frac{|\omega(t)|}{|\omega(k)|} &= \frac{k! \cdot \prod_{i = 1}^m (ci^2 - k)}{\prod_{i = 1}^k(cj^2 - i) \cdot \prod_{i \in [m]_0 \setminus \{j\}} |ci^2 - cj^2|} \\
&\le \frac{k! \cdot \prod_{i = 1}^m ci^2}{(cj^2 - k)^{k} \cdot \prod_{i \in [m]_0 \setminus \{j\}} c(i+j)|i-j|} \\
&= \frac{k!}{j(cj^2 - k)^{k}} \cdot \frac{(m!)^2}{(m+j)!(m-j)!}.
\end{align*}
The first factor is bounded above by
\[\frac{k!}{(c - k)^k j^{2k+1}}.\] 
As long as $c \ge 2k$ and $k \ge 2$,
this expression is at most
\[\frac{k^k}{(c/2)^kj^4} = \frac{(2k)^k}{c^k \cdot j^4}.\]

We control the second factor by
\begin{align*}
\frac{(m!)^2}{(m+j)!(m-j)!} &= \frac{m}{m+j} \cdot \frac{m-1}{m+j-1} \cdot \ldots \cdot \frac{m-j+1}{m+1} \\
&\le \left(\frac{m}{m+j} \right)^j \\
&\le  \left(1 - \frac{j}{2m} \right)^j \\
&\le e^{-j^2/2m},
\end{align*}
where the last inequality uses the fact that $1 - x \le e^{-x}$ for all $x$. Hence,
\begin{equation} \label{eqn:tail-bound}
\frac{|\omega(cj^2)|}{|\omega(k)|} \le \frac{(2k)^k}{c^k \cdot j^4} \cdot e^{-j^2/2m}.
\end{equation}
This immediately yields
\[\frac{|\omega(cj^2)|}{\|\omega\|_1} \le \frac{|\omega(cj^2)|}{|\omega(k)|} \le \frac{(2k)^k}{(cj^2)^2} \cdot e^{-cj^2/(2cm)},\]
 which establishes~\eqref{eqn:thr-sym-decay-ref} for all $t = cj^2 > k$. 
 
 Moreover, by \eqref{eqn:tail-bound}
\begin{equation} \label{eqn:tail-sum}
\sum_{t > k} |\omega(t)| \le |\omega(k)| \cdot \sum_{j = 1}^m \frac{(2k)^k}{c^k \cdot j^4} \cdot e^{-j^2/2m} \le \frac{(2k)^k}{c^k} \cdot |\omega(k)| \cdot  \sum_{j=1}^m \frac{1}{j^4} \le \frac{|\omega(k)|}{48 N}.
\end{equation}
Hence, since $\omega(k) < 0$,
\[\sum_{t \in E_+} |\omega(t)| \le \sum_{t > k} |\omega(t)| \le \frac{|\omega(k)|}{48 N} \le \frac{\|\omega\|_1}{48N},\]
which gives~\eqref{eqn:thr-sym-pos-corr-ref}.

Finally, to establish~\eqref{eqn:thr-sym-neg-corr-ref}, we combine ~\eqref{eqn:head-bound} and~\eqref{eqn:tail-sum} to obtain
\begin{equation} \label{eqn:mass-on-k}
\frac{\|\omega\|_1}{|\omega(k)|} \le \sum_{t = 0}^k \binom{k}{t} + \frac{1}{48 N} < 2^k + 1 < \frac{1}{2} \cdot 4^k.
\end{equation}
We calculate
\begin{align*}
\frac{\|\omega\|_1}{2} - \sum_{t \in E_-} |\omega(t)| &=  \sum_{t : \omega(t) < 0} (-\omega(t)) - \sum_{t \in E_-} (-\omega(t)) & \text{since $\langle \omega, \mathbf{1} \rangle = 0$}\\
&= \sum_{t \colon \omega(t) < 0, t \geq k} (-\omega(t)) \\
&\ge -\omega(k).
\end{align*}
Rearranging and applying the bound~\eqref{eqn:mass-on-k},
\[\sum_{t \in E_{-}} |\omega(t)| \le \left(\frac{1}{2} + \frac{\omega(k)}{\|\omega\|_1}\right) \cdot \|\omega\|_1 \le \left(\frac{1}{2} - 2 \cdot 4^{-k}\right) \cdot \|\omega\|_1.\]
\end{proof}

Applying Minsky-Papert symmetrization (\Cref{lem:mp}) to ensure that the resulting function has the appropriate pure high degree, \Cref{prop:thr-sym-dual} yields a dual polynomial for $\THR^k_N$.

\begin{proposition} \label{prop:thr-dual}
Let $k, T, N\in \N$ with $k \le T \le N$. Define $\psi : \bits^N \to \R$ by $\psi(x) = \omega(|x|) / \binom{N}{ |x|}$ for $x \in H_{\le T}^N$ and $\psi(x) = 0$ otherwise, where $\omega$ is as constructed in \Cref{prop:thr-sym-dual}. Then
\begin{align}
&\sum_{x \in E_+(\psi, \THR^k_N)} |\psi(x)| \le \frac{1}{48  N}  \label{eqn:thr-pos-corr}\\
&\sum_{x \in E_-(\psi, \THR^k_N)} |\psi(x)| \le \frac{1}{2} - \frac{2}{4^k}  \label{eqn:thr-neg-corr} \\
&\|\psi\|_1 = 1   \label{eqn:thr-norm} \\
&\text{For any polynomial } p \colon \bits^N \to \R\text{, } \deg p < c_1 \sqrt{k^{-1} \cdot T \cdot N^{-1/k}} \implies \langle \psi, p \rangle = 0 \label{eqn:thr-phd} \\
&\sum_{|x| = t} |\psi(x)| \le (2k)^k \exp(-c_2 t / \sqrt{k \cdot T \cdot N^{1/k}})/t^2 \qquad \forall t = 1, 2, \dots, N.\label{eqn:thr-decay}
\end{align}
\end{proposition}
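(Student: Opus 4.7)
The plan is to lift each of the five properties guaranteed by \Cref{prop:thr-sym-dual} to the symmetrized multilinear function $\psi$ via routine symmetrization arguments. The key observation driving everything is that $|\psi(x)| = |\omega(|x|)|/\binom{N}{|x|}$ and $\sgn \psi(x) = \sgn \omega(|x|)$ for every $x \in H_{\le T}^N$, while $\psi(x) = 0$ for $|x| > T$. Since there are exactly $\binom{N}{t}$ inputs in the Hamming shell $\{x : |x|=t\}$, summing $|\psi(x)|$ over such a shell yields precisely $|\omega(t)|$.

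This identity directly delivers four of the five bounds. The unit norm \eqref{eqn:thr-norm} is simply $\sum_x|\psi(x)| = \sum_t|\omega(t)|=1$ via \eqref{eqn:thr-sym-norm}, and the shell-wise decay bound \eqref{eqn:thr-decay} is literally \eqref{eqn:thr-sym-decay}. For the one-sided correlation bounds, I will use the fact that $\THR^k_N$ is symmetric with $\THR^k_N(x)=-1$ iff $|x|\ge k$, so
\[E_+(\psi, \THR^k_N) = \bigcup_{t \ge k,\, \omega(t) > 0}\{x : |x|=t\}, \qquad E_-(\psi, \THR^k_N) = \bigcup_{t < k,\, \omega(t) < 0}\{x : |x|=t\}.\]
Summing $|\psi|$ shell by shell then reduces \eqref{eqn:thr-pos-corr} and \eqref{eqn:thr-neg-corr} to \eqref{eqn:thr-sym-pos-corr} and \eqref{eqn:thr-sym-neg-corr} respectively.

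The remaining condition, the pure high degree bound \eqref{eqn:thr-phd}, is where the Minsky--Papert lemma (\Cref{lem:mp}) enters. Given any polynomial $p : \bits^N \to \R$ of degree less than $c_1\sqrt{k^{-1} T N^{-1/k}}$, that lemma produces a univariate polynomial $\bar p$ of degree at most $\deg p$ satisfying $\bar p(t) = \binom{N}{t}^{-1}\sum_{|x|=t}p(x)$ for all $t \in [N]_0$. Grouping $\langle \psi, p\rangle$ by Hamming weight then collapses it to $\sum_{t=0}^T \omega(t)\,\bar p(t)$, which vanishes by \eqref{eqn:thr-sym-phd}.

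Since every step is a direct transfer through symmetrization, no serious obstacle is expected here; all of the delicate combinatorial work has already been absorbed into \Cref{prop:thr-sym-dual}. The only point requiring mild care is verifying that the sign convention chosen in the construction of $\omega$ (so that $\omega(k) < 0$) and the vanishing of $\psi$ outside $H_{\le T}^N$ line up correctly with the shell decomposition of the error regions $E_{\pm}(\psi, \THR^k_N)$, which is immediate by inspection.
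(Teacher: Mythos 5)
Your proposal is correct and matches the paper's approach exactly: the paper treats this proposition as an immediate consequence of \Cref{prop:thr-sym-dual} via Minsky--Papert symmetrization (\Cref{lem:mp}), which is precisely the shell-by-shell transfer you spell out. All five properties lift as you describe, and the pure-high-degree step via the symmetrized univariate polynomial $\bar p$ is the standard argument the paper has in mind.
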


\subsection{Step 2: A Preliminary Dual Witness for \texorpdfstring{$\OR_R \circ \THR^k_N$}{OR o THR}}
\label{sec:k-dist-step2}

\subsubsection{Refined Amplification Lemmas}

The dual witness $\Phi$ for $\OR_R$
that we construct will itself be obtained as a dual block composition $\rho \ls \varphi$. Each constituent dual polynomial will play a distinct role in showing that $\Phi \ls \psi$ is a good dual polynomial for $\OR_R \circ \THR^k_N$. The first function $\varphi$ is an ``error amplifier'' in the sense that $\varphi \ls \psi$ is much better correlated with $\OR_R \circ \THR^k_N$ than $\psi$ is with $\THR^k_N$. The second function $\rho$, on the other hand, is a ``degree amplifier'' in that it serves to increase the pure high degree of $\varphi \ls \psi$.

While the amplification results we need are new, they are relatively straightforward extensions of similar results in~\cite{bt13, sherstovandor, bt14}. Proofs appear below for completeness.

\paragraph{Amplifying Error.} The following proposition shows that
if $\psi$ is a dual witness for the high approximate degree of a Boolean function $f$, 
then there is a dual witness of the form $\varphi \ls \psi$ for $\OR_M \circ f$
such that (a) $\varphi \ls \psi$ may make slightly more false positive errors
than $\psi$ (by at most a factor of $M$), and (b) $\varphi \ls \psi$
makes significantly fewer false-negative errors than $\psi$ (exponentially
smaller in $M$). 

\begin{restatable}{proposition}{erroramp} \label{prop:error-amp}
Let $f : \bits^m \to \bits$ and $\psi : \bits^m \to \R$ be functions such that
\begin{align} &\sum_{x \in E_+(\psi, f)} |\psi(x)| \le \delta^+ \label{eqn:error-base-pos} \\
& \sum_{x \in E_-(\psi, f)} |\psi(x)| \le \delta^- \label{eqn:error-base-neg} \\
& \|\psi\|_1 = 1.\label{eqn:error-base-norm}
\end{align}
For every $M \in \N$, there exists a function $\varphi : \bits^M \to \bits$ with $\|\varphi\|_1 = 1$ and pure high degree $1$ such that
\begin{align} & \sum_{x \in E_+(\varphi \ls \psi, \OR_M \circ f)} |(\varphi \ls \psi)(x)| \le M \cdot \delta^+ \label{eqn:error-amp-pos} \\
& \sum_{x \in E_-(\varphi \ls \psi, \OR_M \circ f)} |(\varphi \ls \psi)(x)| \le  \frac{1}{2} \cdot (2\delta^-)^M.\label{eqn:error-amp-neg}
\end{align}
\end{restatable}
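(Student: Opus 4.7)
The plan is to exhibit $\varphi$ very explicitly as a dual witness for $\OR_M$ supported on only two points of $\bits^M$. Concretely, I would set $\varphi(+1,\dots,+1)=1/2$, $\varphi(-1,\dots,-1)=-1/2$, and $\varphi(y)=0$ for every other $y\in\bits^M$. The normalization $\|\varphi\|_1=1$ is immediate, and $\sum_y\varphi(y)=0$ gives pure high degree at least $1$. The motivation for choosing such a ``two-point'' witness is that it will force the $E_-$ event to compound multiplicatively across all $M$ blocks (producing the $(2\delta^-)^M$ factor), while the $E_+$ event can still be handled by a union bound over a single bad block.

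The next step is to observe that plugging this $\varphi$ into the dual block composition causes a massive simplification: since $\varphi(s)=0$ for any mixed sign pattern $s\in\bits^M$, the value $(\varphi\ls\psi)(x_1,\dots,x_M)$ vanishes unless every $\sgn\psi(x_i)$ agrees. In the all-positive case it equals $2^{M-1}\prod_i\psi(x_i)>0$, and in the all-negative case it equals $-2^{M-1}\prod_i|\psi(x_i)|<0$. So the positive and negative supports of $\varphi\ls\psi$ are precisely the all-positive and all-negative sign configurations of $(\psi(x_1),\dots,\psi(x_M))$.

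For the $E_+$ bound, I would note that $(\varphi\ls\psi)(x)>0$ together with $(\OR_M\circ f)(x)=-1$ forces every $\psi(x_i)>0$ and at least one $f(x_i)=-1$, so some $x_i$ lies in $E_+(\psi,f)$. A union bound over the offending coordinate, combined with the identity $\sum_{z\colon\psi(z)>0}\psi(z)=1/2$ (which follows from $\|\psi\|_1=1$ and $\langle\psi,\mathbf{1}\rangle=0$, itself immediate from the pure high degree of $\psi$), gives
\[\sum_{x\in E_+(\varphi\ls\psi,\,\OR_M\circ f)}|(\varphi\ls\psi)(x)|\le M\cdot 2^{M-1}\cdot\delta^+\cdot(1/2)^{M-1}=M\delta^+.\]
For the $E_-$ bound, the condition $(\varphi\ls\psi)(x)<0$ together with $(\OR_M\circ f)(x)=+1$ forces \emph{every} $x_i$ into $E_-(\psi,f)$, and the product structure gives
\[\sum_{x\in E_-(\varphi\ls\psi,\,\OR_M\circ f)}|(\varphi\ls\psi)(x)|=2^{M-1}\Big(\sum_{z\in E_-(\psi,f)}|\psi(z)|\Big)^{M}\le 2^{M-1}(\delta^-)^M=\tfrac12(2\delta^-)^M.\]

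The main difficulty is conceptual rather than technical: one has to choose $\varphi$ carefully so that the $E_-$ event requires \emph{all} blocks to fail simultaneously, which is what produces the $M$-th-power decay in $\delta^-$. More standard dual witnesses for $\OR_M$---for example, the one placing uniform mass $-1/(2(2^M-1))$ on every $y\ne(+1,\dots,+1)$---only yield a bound of order $M\delta^-/2^M$, which is much weaker than $(2\delta^-)^M/2$ once $\delta^-$ is small. Once the two-point $\varphi$ above is identified, the rest of the proof is a direct computation of the two error masses, as sketched.
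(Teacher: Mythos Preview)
Your proposal is correct and takes essentially the same approach as the paper: the identical two-point witness $\varphi$, with the paper phrasing the mass computation probabilistically (viewing $|\psi|^{\otimes M}$ as a distribution and conditioning on the sign pattern) while you factor the product sums directly. One minor point shared by both arguments: the identity $\sum_{z:\psi(z)>0}\psi(z)=\tfrac12$ requires $\langle\psi,\mathbf{1}\rangle=0$, which is not among the listed hypotheses of the proposition but is assumed implicitly in the paper's proof as well.
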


\begin{proof} 
Let $\varphi: \{-1, 1\}^M \rightarrow \{-1, 1\}$ be defined such that $\varphi(\mathbf{1})=1/2$, $\varphi(-\mathbf{1})=-1/2$, and $\varphi(x)=0$ for all other $x$. 
Notice that 
\begin{equation}  \sum_{(x_1, \dots, x_{M}) \in \{-1, 1\}^M} \varphi(x_1, \dots, x_{M}) = 0 \label{eq:balanced} \end{equation}
so $\Psi$ has pure high degree $1$, and that $\|\Psi\|_1 = 1$.

We now prove that (\Cref{eqn:error-amp-pos}) holds. Let $\lambda$ be the distribution on $\{-1, 1\}^{m}$ given by $\lambda(x) = |\psi(x)|$, and let $\lambda^{\otimes M}$ be the product distribution on $(\{-1, 1\}^m)^M$ given by $\lambda^{\otimes M}(x_1, \dots, x_M) = \prod_{i=1}^M |\psi(x_i)|$.
Since $\psi$ is orthogonal to the constant polynomial, it has expected value 0, and hence the string $(\dots, \sgn(\psi(x_i)), \dots)$ is distributed uniformly in $\{-1, 1\}^{M}$
when one samples $(x_1, \dots, x_{M})$ according to $\lambda^{\otimes M}$. This allows us to write
\begin{align}
&\hspace{-2em}\sum_{(x_1, \dots, x_{M}) \in E^+(\varphi \ls \psi, \OR_M \circ f)} |(\varphi \ls \psi)(x_1, \dots, x_M)| \nonumber \\
&\hspace{-2em}= 2^{M} \mathbf{E}_{\lambda^{\otimes M}} [\varphi( \dots, \sgn(\psi(x_i)), \dots) \cdot \mathbb{I} (\varphi( \dots, \sgn(\psi(x_i)), \dots) > 0 \land \OR_M\left( \dots, f(x_i), \dots\right) = -1)] \nonumber\\
&\hspace{-2em}= \sum_{z : \varphi(z) > 0} \varphi(z) \cdot \Pr_{\lambda^{\otimes M}} [\OR_M(\dots, f(x_i), \dots) = -1 | (\dots, \sgn(\psi(x_i)), \dots) = z]. \label{eqn:error-amp-middle}
\end{align}
Observe that for any bit $b$,
\[\Pr_{x \sim \lambda} [f(x) \ne \sgn(\psi(x)) | \sgn(\psi(x)) = b] = 2\sum_{x \in A_b} |\psi(x)|,\]
where for brevity, we have written $A_{+1} = E_+(\psi, f)$ and $A_{-1} = E_-(\psi, f)$. Therefore, as noted in \cite{sherstovhalfspaces1}, for any given $z \in \{-1, 1\}^{M}$, the following two random variables are identically distributed:

\begin{itemize}
\item The string $(\dots, f(x_i), \dots)$, when one chooses $(\dots, x_i, \dots)$ from $\lambda^{\otimes M}$ conditioned on \newline $(\dots, \sgn(\psi(x_i)), \dots) = z$
\item The string $(\dots, y_iz_i, \dots)$, where $y \in \{-1, 1\}^{M}$ is a random string whose $i$th bit independently
takes on value $-1$ with probability $2 \sum_{x \in A_{z_i}} |\psi(x)|$.
\end{itemize}
Thus, Expression~\eqref{eqn:error-amp-middle} equals
\begin{equation}\sum_{z : \varphi(z) > 0} \varphi(z) \cdot \Pr_y[\OR_M(\dots, y_iz_i, \dots) = -1], \label{eqn:error-amp-middle2}\end{equation}
where $y \in \{-1, 1\}^{M}$ is a random string whose $i$th bit independently
takes on value $-1$ with probability $2 \sum_{x \in A_{z_i}} |\psi(x)|$. The only term in this sum corresponds to $z=\mathbf{1}$, which we now argue is at most $M \delta^+$.
By \eqref{eqn:error-base-pos}, each $y_i = -1$ with probability $2 \sum_{x \in A_{1}} |\psi(x)| = 2\sum_{x \in E^+(\psi, f)} |\psi(x)| \le 2\delta^+$.  Hence, for $z = \mathbf{1}$, we have 
\[\Pr_y[\OR_M(\dots, y_i, \dots) = -1] \le 2 M \delta^+\]
by a union bound. Thus Expression~\eqref{eqn:error-amp-middle2} is at most $M \delta^+$, proving \eqref{eqn:error-amp-pos}.

It now remains to prove the bound~\eqref{eqn:error-amp-neg}. By an identical argument as above, we have
\begin{equation}\sum_{(x_1, \dots, x_{M}) \in E^-(\varphi \ls \psi, \OR_M \circ f)} |(\varphi \ls \psi)(x_1, \dots, x_M)| = \sum_{z : \varphi(z) < 0} \varphi(z) \cdot \Pr_y[\OR_M(\dots, y_iz_i, \dots) = 1]. \label{eqn:error-amp-end}\end{equation}
The only term in the sum corresponds to $z=-\mathbf{1}$, which we argue takes value $\frac{1}{2} \cdot (2\delta^-)^{M}$. 
Here, each $y_i=-1$ independently with probability $\sum_{x \in A_{-1}} |\psi(x)| = 2\sum_{x \in E^-(\psi, f)} |\psi(x)| \le 2\delta^-$, and $\OR_M(\dots, -y_i, \dots)=1$ only if $y_i = -1$ for every $i$. Hence, we conclude that
\[\Pr_y[\OR_{M}\left(\dots, -y_i, \dots\right) = 1]  \le (2\delta^-)^M. \]
It follows that Expression~\eqref{eqn:error-amp-end} is at most $\frac{1}{2} \cdot (2\delta^-)^M$, establishing~\eqref{eqn:error-amp-pos}. This completes the proof.
\end{proof}

\paragraph{Amplifying Degree.}

The following proposition states that if $\psi$ is a dual polynomial for a Boolean function $f$,
then there is a dual polynomial $\rho \ls \psi$ for $\OR_M \circ f$
with significantly larger pure high degree that does not make too many
more false positive and false negative errors than does $\psi$ itself.

\begin{restatable}{proposition}{degreeamp} \label{prop:degree-amp}
Let $f : \bits^m \to \bits$ and $\psi : \bits^m \to \R$ be functions such that
\begin{align} & \sum_{x \in E_+(\psi, f)} |\psi(x)| \le \delta^+ \label{eqn:deg-base-pos} \\
& \sum_{x \in E_-(\psi, f)} |\psi(x)| \le \delta^- \label{eqn:deg-base-neg} \\
& \|\psi\|_1 = 1 \label{eqn:deg-base-norm}
\end{align}
For every $M \in \N$ there exists a function $\rho : \bits^M \to \R$ with $\|\rho\|_1 = 1$ and pure high degree $\Omega(\sqrt{M})$ such that
\begin{equation}
\langle \rho \ls \psi, \OR_M \circ f \rangle \ge \frac{9}{10} - 4  M \delta^+ - 4\delta^-. \label{eqn:deg-amp-corr}
\end{equation}
\end{restatable}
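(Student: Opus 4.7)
The plan is to instantiate $\rho$ as a dual polynomial for the outer $\OR_M$ and then analyze the correlation $\langle \rho \ls \psi, \OR_M \circ f \rangle$ by conditioning on the sign pattern of $\psi$ across the $M$ blocks, closely mirroring the argument for \Cref{prop:error-amp}. For the choice of $\rho$, I take the dual witness for $\OR_M$ furnished by \Cref{prop:or-dual} with $T = N = M$ and $\delta = 1/20$: this guarantees $\|\rho\|_1 = 1$, $\langle \rho, \OR_M \rangle \ge 19/20$, and pure high degree $c_1 \sqrt{M/20} = \Omega(\sqrt{M})$, satisfying all structural requirements demanded by the proposition.

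Since $\psi$ is mean-zero (a standard consequence of having pure high degree at least $1$, which holds in every intended application of this proposition), sampling $x_i \sim |\psi|$ makes $z_i := \sgn \psi(x_i)$ uniform on $\bits$. Unpacking the dual block composition then yields
\[\langle \rho \ls \psi, \OR_M \circ f \rangle = \sum_{z \in \bits^M} \rho(z)\, \E\!\left[\OR_M(f(x_1),\dots,f(x_M))\,\big|\,\sgn\psi(x_i) = z_i\ \forall i\right],\]
where under the conditioning the events $f(x_i) \ne z_i$ are independent, occurring with probability at most $2\delta^+$ when $z_i = +1$ and at most $2\delta^-$ when $z_i = -1$.

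I then rewrite the right-hand side as $\langle \rho, \OR_M \rangle$ plus an error term coming from the instances where $\OR_M(f(x)) \ne \OR_M(z)$, and split by $z$. For $z = \mathbf{1}$, an error requires some $x_i \in E_+(\psi, f)$, which a union bound over the $M$ coordinates confines to probability at most $2M\delta^+$, yielding an error contribution of magnitude at most $4M\delta^+$. For $z \ne \mathbf{1}$, an error requires every $f(x_i) = +1$, and since at least one index has $z_i = -1$ and must therefore be in $E_-(\psi, f)$, this event has probability at most $2\delta^-$, contributing at most $4\delta^-$. Weighting by $|\rho(z)|$ and using $\|\rho\|_1 = 1$ gives total error at most $4M\delta^+ + 4\delta^-$, so
\[\langle \rho \ls \psi, \OR_M \circ f \rangle \ge \tfrac{19}{20} - 4M\delta^+ - 4\delta^- \ge \tfrac{9}{10} - 4M\delta^+ - 4\delta^-.\]

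The only subtle point is the asymmetric case split: a union bound over all $M$ coordinates is needed (and cheap) for $z = \mathbf{1}$, whereas for $z \ne \mathbf{1}$ a single-coordinate bound suffices because the event being ruled out is a long conjunction of independent atypical events. Everything else is a straightforward manipulation of the dual block product already carried out in the proof of \Cref{prop:error-amp}, so I do not expect any serious technical obstacle.
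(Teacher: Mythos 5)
Your proposal is correct and follows essentially the same route as the paper's proof: condition on the sign pattern $z$ of $\psi$ across blocks, bound the error for $z=\mathbf{1}$ by a union bound over the $M$ coordinates and for $z\ne\mathbf{1}$ by a single coordinate with $z_i=-1$, and absorb the losses using $\|\rho\|_1=1$. The only (cosmetic) difference is that the paper obtains $\rho$ generically from the dual characterization of the $(9/10)$-approximate degree of $\OR_M$ (\Cref{andornor} plus \Cref{prop:duality}) rather than from the explicit construction of \Cref{prop:or-dual}, which sidesteps the constraint $\delta > 1/T$ for small $M$; both choices work.
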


\begin{proof}

\Cref{andornor} shows that the function $\OR_M$ has $(9/10)$-approximate degree $\Omega(\sqrt{M})$. Hence, \Cref{prop:duality} guarantees the existence of a function $\rho : \bits^M \to \R$ with $\|\rho\|_1 = 1$ and pure high degree $\Omega(\sqrt{M})$ such that
\begin{equation}
\langle \rho, \OR_M \rangle \ge \frac{9}{10}.
\end{equation}
What remains is to establish the correlation bound~\eqref{eqn:deg-amp-corr}. Letting $\lambda$ denote the distribution $\lambda(x) = |\psi(x)|$ as in the proof of \Cref{prop:error-amp}, we may write
\begin{align}
& \sum_{(x_1, \dots, x_{M}) \in (\bits^m)^M} (\varphi \ls \psi) (x_1, \dots, x_M) \cdot \OR_M(\dots, f(x_i), \dots) \nonumber \\
&\qquad= 2^{M} \mathbf{E}_{\lambda^{\otimes M}} [\varphi( \dots, \sgn(\psi(x_i)), \dots) \cdot \OR_M\left( \dots, f(x_i), \dots\right)] \nonumber\\
&\qquad= \sum_{z \in \bits^M} \varphi(z) \cdot \mathbf{E}_{\lambda^{\otimes M}} [\OR_M(\dots, f(x_i), \dots) | (\dots, \sgn(\psi(x_i)), \dots) = z] \nonumber\\
&\qquad = \sum_{z \in \bits^M} \varphi(z) \cdot \mathbf{E}_{y} [\OR_M(\dots, y_i z_i, \dots)], \label{eqn:deg-amp-middle}
\end{align}
where $y \in \bits^M$ is a random string whose $i$th bit independently takes the value $-1$ with probability $2 \sum_{x \in A_{z_i}} |\psi(x)|$. (Here, we are using the abbreviated notation $A_{+1} = E_+(\psi, f)$ and $A_{-1} = E_-(\psi, f)$.) We first consider the contribution of the term corresponding to $z = \mathbf{1}$ to the sum. Here, by a union bound,
\begin{align*}
\mathbf{E}_{y} [\OR_M(\dots, y_i z_i, \dots)] &= 1 - 2 \Pr_y[\OR_M(\dots, y_i, \dots) = -1] \\
&\ge 1 - 2 M \cdot \left(2 \sum_{x \in A_{+1}} |\psi(x)|\right) \\
&\ge 1 - 4 M\delta^+.
\end{align*}
Hence, the term $z = \mathbf{1}$ contributes $\varphi(\mathbf{1}) \cdot (1 - 4 M \delta^+)$ to the sum.

Now we consider the contribution of any term corresponding to $z \ne \mathbf{1}$. Given such a $z$, let $i^*$ be an index such that $z_{i^*} = -1$. Then we have
\begin{align*}
-\mathbf{E}_{y} [\OR_M(\dots, y_i z_i, \dots)] &= 1 - 2 \Pr_y[\OR_M(\dots, y_i z_i, \dots) = 1] \\
&\ge 1 - 2 \cdot \Pr_{y_{i^*}}[y_{i^*} = -1] \\
&= 1 - 2 \cdot \left(2 \sum_{x \in A_{-1}} |\psi(x)|\right) \\
&\ge 1 - 4 \delta^-.
\end{align*}

We can now lower bound~\eqref{eqn:deg-amp-middle} by
\begin{align*}\hspace{-1.1em}
\varphi(\mathbf{1}) \cdot (1 - 4 M \delta^+) - \sum_{z \ne \mathbf{1}} \varphi(z) (1 - 4\delta^-) &\ge \sum_{z \in \bits^M} \varphi(z) \OR_M(z) - 4M\delta^+ |\varphi(\mathbf{1})| - 4\delta^- \sum_{z \ne 1} |\varphi(z)| \\
&\ge \frac{9}{10} - 4M\delta^+ - 4\delta^-.\qedhere
\end{align*}
\end{proof}

\subsubsection{Constructing a Dual Witness for $\OR_R \circ \THR^k_N$}

We now combine our amplification lemmas to construct a dual witness for $\OR_R \circ \THR^k_N$.

\begin{proposition} \label{prop:dist-prelim}
Let $k, T, N, R \in \N$ with $k \le T \le R \le N$ and $R$ divisible by $4^k$. Let $\psi : \bits^N \to \R$ be a function with $\|\psi\|_1 = 1$ and
\begin{align*}
&\sum_{x \in E_+(\psi, \THR^k_N)} |\psi(x)| \le \frac{1}{48 N} \\
&\sum_{x \in E_-(\psi, \THR^k_N)} |\psi(x)| \le \frac{1}{2} - \frac{2}{4^k}. 
\end{align*}
Then there exists a function $\Phi : \bits^R \to \R$ with $\|\Phi\|_1 = 1$ and pure high degree $\Omega(2^{-k} \sqrt{R})$ such that
\[ \langle \Phi \ls \psi, \OR_R \circ \THR^k_N \rangle \ge 2/3. \]
\end{proposition}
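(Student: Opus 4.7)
The plan is to construct $\Phi$ as a dual block composition $\rho \star \varphi$, obtained by applying the error-amplification lemma (\Cref{prop:error-amp}) and then the degree-amplification lemma (\Cref{prop:degree-amp}) in sequence. I split the $R$ outer copies of $\OR$ into two stages: $M_1 := 4^k$ copies of $\OR$ to drive down the false-negative error of $\psi$, and $M_2 := R/4^k$ copies to boost pure high degree. The divisibility hypothesis guarantees $M_2 \in \N$, and $\OR_{M_2} \circ \OR_{M_1} = \OR_R$, so the resulting witness will be for $\OR_R \circ \THR^k_N$.

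First, I apply \Cref{prop:error-amp} to $\psi$ with $M = M_1$. Inspecting its proof, the resulting $\varphi : \bits^{M_1} \to \R$ satisfies $\|\varphi\|_1 = 1$, has pure high degree at least $1$, and is balanced ($\langle \varphi, \mathbf{1}\rangle = 0$). The witness $\varphi \star \psi$ for $\OR_{M_1} \circ \THR^k_N$ then has false-positive mass at most
\[
  \delta^+_{\text{new}} \;\le\; M_1 \cdot \frac{1}{48N} \;=\; \frac{4^k}{48 N},
\]
and false-negative mass at most
\[
  \delta^-_{\text{new}} \;\le\; \tfrac12\bigl(1 - 4/4^k\bigr)^{M_1} \;\le\; \tfrac12 e^{-4},
\]
using the standard bound $(1-x)^{1/x} \le 1/e$.

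Next, I apply \Cref{prop:degree-amp} to $\varphi \star \psi$ with $M = M_2$. This application is valid because $\|\varphi \star \psi\|_1 = 1$ by \Cref{prop:ls} (the balanced condition $\langle \psi, \mathbf{1}\rangle = 0$ follows from $\psi$ having pure high degree at least $1$, which holds for $T$ large enough relative to $k$). It produces $\rho : \bits^{M_2} \to \R$ with $\|\rho\|_1 = 1$ and pure high degree $\Omega(\sqrt{M_2}) = \Omega(2^{-k}\sqrt{R})$, satisfying
\[
  \bigl\langle \rho \star (\varphi \star \psi),\, \OR_{M_2} \circ (\OR_{M_1} \circ \THR^k_N) \bigr\rangle \;\ge\; \tfrac{9}{10} - 4 M_2 \delta^+_{\text{new}} - 4\delta^-_{\text{new}}.
\]
The subtracted terms are bounded by $\tfrac{4 R}{48 N} + 2 e^{-4} \le \tfrac{1}{12} + 2 e^{-4} < \tfrac{9}{10} - \tfrac{2}{3}$, where the first inequality uses $N \ge R$; hence the correlation exceeds $2/3$.

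Finally, I set $\Phi := \rho \star \varphi$. By associativity (\Cref{prop:ls}), $\Phi \star \psi = \rho \star (\varphi \star \psi)$, so the correlation bound above transfers to $\Phi \star \psi$ against $\OR_R \circ \THR^k_N$. The $\ell_1$-norm preservation in \Cref{prop:ls} (using $\|\rho\|_1 = \|\varphi\|_1 = 1$ and $\langle \varphi, \mathbf{1}\rangle = 0$) gives $\|\Phi\|_1 = 1$, while multiplicativity of pure high degree gives $\deg_{\text{phd}}(\Phi) \ge \Omega(\sqrt{M_2}) \cdot 1 = \Omega(2^{-k}\sqrt{R})$. The only genuine delicacy lies in the calibration of $M_1$: the crude initial false-negative bound $\tfrac12 - 2 \cdot 4^{-k}$ is barely better than trivial, which forces $M_1 = \Theta(4^k)$ just to reduce the amplified false-negative mass below a constant, and this in turn caps the degree gain at $\sqrt{M_2} = \Theta(2^{-k}\sqrt R)$. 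This $2^{-k}$ loss is the ultimate source of the $n^{-1/(2k)}$ gap between our $k$-distinctness lower bound and the Surjectivity lower bound.
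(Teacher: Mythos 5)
Your proof is correct and follows essentially the same route as the paper's: error-amplify with $M_1=4^k$ copies of $\OR$ via \Cref{prop:error-amp}, degree-amplify with $M_2=R/4^k$ copies via \Cref{prop:degree-amp}, set $\Phi=\rho\ls\varphi$, and invoke associativity, norm preservation, and multiplicativity of pure high degree from \Cref{prop:ls}, with the same numerical bounds. (One quibble with your closing editorial remark: the $2^{-k}$ factor in $\Phi$'s pure high degree is only a constant for constant $k$; the $n^{-1/(2k)}$ gap actually originates in the $N^{-1/(2k)}$ loss in the pure high degree of the inner dual witness for $\THR^k_N$ from \Cref{prop:thr-sym-dual}.)
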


\begin{proof}
Using the construction of \Cref{prop:error-amp} with $m = N$, $M = 4^k$, and $f = \THR_N^k$, we first obtain a function $\varphi : \bits^{4^k} \to \R$ with $\|\varphi\|_1 = 1$ and pure high degree $1$ such that
\begin{align*}
&\sum_{x \in E_+(\varphi \ls \psi, \OR_{4^k} \circ \THR^k_N)} |(\varphi \ls \psi)(x)| \le \frac{4^k}{48 N} \\
&\sum_{x \in E_-(\varphi \ls \psi, \OR_{4^k} \circ \THR^k_N)} |(\varphi \ls \psi)(x)| \le \frac{1}{2} \cdot \left(1 - 4 \cdot 4^{-k} \right)^{4^k} \le \frac{e^{-4}}{2}.
\end{align*}
Now by the construction of \Cref{prop:degree-amp} with $m = 4^k \cdot N$, $M = R/4^{k}$ and $f = \OR_{4^k} \circ \THR_N^k$, there exists a function $\rho : \bits^{R / 4^k} \to \R$ with $\|\rho\|_1 = 1$ and pure high degree $\Omega(2^{-k}\sqrt{R})$ such that
\[ \langle \rho \ls (\varphi \ls \psi), \OR_{R/4^k} \circ (\OR_{4^k} \circ \THR^k_N) \rangle \ge \frac{9}{10} - \frac{4R}{48 N} - 2e^{-4} \ge \frac{2}{3}.\]
Let $\Phi : \bits^R \to \R$ be the dual block composition $\rho \ls \varphi$. Since the dual block composition preserves $\ell_1$-norms (\Cref{prop:ls}, Condition~\eqref{eqn:ls-norm}) and multiplies pure high degrees (\Cref{prop:ls}, Condition~\eqref{eqn:ls-phd}), the function $\Phi$ itself has $\ell_1$-norm $1$ and pure high degree $\Omega(2^{-k} \sqrt{R})$. The claim now follows from the associativity of dual block composition (\Cref{prop:ls}, Condition~\eqref{eqn:ls-assoc}) and the fact that $\OR_R = \OR_{R/4^k} \circ \OR_{4^k}$.
\end{proof}

\subsection{Step 3: Completing the Construction} 

We are now ready to apply \Cref{prop:btzeroing} to zero out the mass that the construction of \Cref{prop:dist-prelim} places on inputs outside of $H_{\le N}^{N \cdot R}$.

\begin{proposition} \label{prop:dist-dual}
Let $R$ be sufficiently large. There exist $N = O((2k)^{k/2} \cdot R)$, $D = \tOmega(2^{-k/4} k^{(k-3)/ 4} \cdot R^{3/4 - 1/(2k)})$, and  $\zeta : (\bits^N)^R \to \R$ such that
\begin{align} \label{eq:dist-zero} &\zeta(x) = 0 \text{ for all } x \not\in H_{\le N}^{N \cdot R}, \\
\label{eq:dist-corr} &\sum_{x \in H_{\le N}^{N \cdot R}}\zeta(x) \cdot (\OR_{R} \circ \THR_N^k)(x) > 1/3, \\
 \label{eq:dist-unitnorm} &\|\zeta\|_1 = 1, \text{ and }  \\
 \label{eq:dist-phd} &\text{ For every polynomial } p \colon (\bits^N)^{R}\to \R \text{ of degree less than } D, \text{ we have } \langle p, \zeta \rangle = 0. 
\end{align}
\end{proposition}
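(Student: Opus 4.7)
The plan is to parallel the proof of \Cref{prop:surj-dual}, replacing the dual witness for $\OR_N$ used there with the dual witness for $\THR^k_N$ from \Cref{prop:thr-dual}, and replacing the dual witness $\Phi$ for $\AND_R$ with the dual witness for $\OR_R$ built in \Cref{prop:dist-prelim}. Concretely, I would first invoke \Cref{prop:thr-dual} with parameters $T$ and $N$ to be set below to obtain $\psi : \bits^N \to \R$ supported on $H^N_{\le T}$ satisfying $\|\psi\|_1 = 1$, pure high degree at least $c_1 \sqrt{k^{-1} T N^{-1/k}}$, the false-positive/false-negative mass bounds $1/(48N)$ and $1/2 - 2/4^k$ needed by \Cref{prop:dist-prelim}, and the Hamming-weight decay $\sum_{|x|=t}|\psi(x)| \le (2k)^k \exp(-c_2 t/\sqrt{k T N^{1/k}})/t^2$. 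Next, I would apply \Cref{prop:dist-prelim} to obtain $\Phi : \bits^R \to \R$ of pure high degree $d = \Omega(2^{-k}\sqrt{R})$ such that $\xi := \Phi \ls \psi$ has $\|\xi\|_1 = 1$ and $\langle \xi, \OR_R \circ \THR^k_N\rangle \ge 2/3$. By \Cref{prop:ls}, $\xi$ has pure high degree at least $\hat{D} := c_1 \cdot d \cdot \sqrt{k^{-1} T N^{-1/k}}$.

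To invoke \Cref{prop:btzeroing}, I would read off the decay parameters from \Cref{prop:thr-dual} as $\alpha = (2k)^k$ and $\beta = c_2/\sqrt{k T N^{1/k}}$, and then set $N = \lceil 20\sqrt{\alpha}\rceil R = O((2k)^{k/2} R)$, matching the claimed bound on $N$. The balance and unit-norm conditions on $\omega$ are inherited from \Cref{prop:thr-sym-dual}, the $(\alpha,\beta)$-decay hypothesis is Property~\eqref{eqn:thr-sym-decay}, and the admissibility constraint $\beta \in (4\ln^2 R /(\sqrt{\alpha} R), 1)$ holds for $R$ sufficiently large once $T$ is specified below. \Cref{prop:btzeroing} then produces $\zeta : (\bits^N)^R \to \R$ supported on $H_{\le N}^{N \cdot R}$ with $\|\zeta\|_1 = 1$, pure high degree at least $\min\{\hat{D}, \Delta\}$ for some $\Delta \ge \beta \sqrt{\alpha} R /(4\ln^2 R)$, and $\|\zeta - \xi\|_1 \le 2/9$. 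The correlation bound~\eqref{eq:dist-corr} follows from the triangle inequality, $\langle \zeta, \OR_R \circ \THR^k_N\rangle \ge 2/3 - 2/9 > 1/3$, while~\eqref{eq:dist-zero}, \eqref{eq:dist-unitnorm}, and \eqref{eq:dist-phd} come directly from \Cref{prop:btzeroing}.

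The main obstacle is choosing $T$ to extract the claimed $D = \tilde{\Omega}(2^{-k/4} k^{(k-3)/4} R^{3/4 - 1/(2k)})$. A direct calculation gives $\hat{D} = \Theta(c_1 d \cdot k^{-1/2} (2k)^{-1/4} T^{1/2} R^{-1/(2k)})$, which grows as $T^{1/2}$, while $\Delta$ shrinks in $T$ because $\beta$ does. Setting $T = \Theta((8k)^{k/2} R^{1/2})$ balances the two: this choice is admissible since $T \le R$ whenever $R \ge (8k)^k$ (implied by the ``sufficiently large $R$'' hypothesis), and substituting yields $\hat{D} = \Omega(2^{-k/4} k^{(k-3)/4} R^{3/4 - 1/(2k)})$ and $\Delta = \tilde{\Omega}(2^{-k/4} k^{(k-3)/4} R^{3/4 - 1/(2k)})$, completing the proof.
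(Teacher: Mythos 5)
Your proposal is correct and follows essentially the same route as the paper: the same three-step pipeline (dual witness for $\THR^k_N$ from \Cref{prop:thr-dual}, amplification via \Cref{prop:dist-prelim}, then zeroing out high-Hamming-weight mass via \Cref{prop:btzeroing}), with the identical parameter choices $\alpha = (2k)^k$, $N = \lceil 20\sqrt{\alpha}\rceil R$, and $T = \Theta((8k)^{k/2}\sqrt{R})$ balancing $\hat{D}$ against $\Delta$. The arithmetic extracting $D = \tOmega(2^{-k/4}k^{(k-3)/4}R^{3/4-1/(2k)})$ checks out.
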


\begin{proof}
We start by fixing choices of several key parameters:
\begin{itemize}
\item $d= \Theta(2^{-k}\sqrt{R})$ is the pure high degree of the dual witness $\Phi$ for $\OR_R$ in \Cref{prop:dist-prelim},
\item $T =  \lfloor (8k)^{k/2} \sqrt{R}\rfloor$, 
\item $\alpha = (2k)^k$,
\item $N = \lceil 20 \sqrt{\alpha} \rceil R = \Theta((2k)^{k/2} R)$,
\item $\hat{D} = c_1 \sqrt{k^{-1} \cdot T \cdot N^{-1/k}} \cdot d =\Theta(2^{-k/4} k^{(k-3)/4}  \cdot R^{3/4 - 1/(2k)})$, where $c_1$ is the constant from \Cref{prop:thr-sym-dual},
\item $\beta = c_2 / \sqrt{k \cdot T \cdot N^{1/k}} = \Theta(2^{-3k/4} k^{(-k-3)/4} \cdot R^{-1/4 - 1/(2k)})$, where $c_2$ is the constant from \Cref{prop:thr-sym-dual}.
\end{itemize}

Let $\psi : \bits^N \to \bits$ be the function constructed in \Cref{prop:thr-dual}. Let $\Phi : \bits^R \to \bits$ be the function constructed in \Cref{prop:dist-prelim}, and define $\xi = \Phi \ls \psi$. Then by \Cref{prop:ls}, $\xi$ satisfies the following properties:
\begin{align}& \langle \xi, \OR_R \circ \THR_N^k \rangle > 2/3, \\
& \| \xi \|_1 = 1, \\ 
& \text{ For every polynomial } p \text{ of degree less than } \hat{D}, \text{ we have }\langle \xi, p \rangle = 0. 
\end{align}
Recall that $\psi$ was obtained by symmetrizing the function $\omega$ constructed in \Cref{prop:thr-sym-dual}.
\Cref{prop:btzeroing} guarantees that for some $\Delta \ge \beta \sqrt{\alpha} R / 4\ln^2 R = \tOmega(2^{-k/4} k^{(k-3)/4} \cdot R^{3/4 - 1/(2k)})$, the function $\xi$ can be modified to produce a function $\zeta : (\bits^N)^R \to \R$ such that
\begin{align*}
&\zeta(x) = 0 \text{ for all } x \not\in H_{\le N}^{N \cdot R}, \\
&\langle \zeta, \OR_R \circ \THR_N^k \rangle \ge \langle \xi, \OR_R \circ \THR_N^k \rangle - \|\zeta - \xi\|_1 \ge 2/3 - 2/9 > 1/3, \\
&\|\zeta\|_1 = 1, \\
& \text{ For every polynomial } p \text{ of degree less than } \min\{\hat{D}, \Delta\}, \text{ we have }\langle \zeta, p \rangle = 0. 
\end{align*}
Observing that
\[D =  \min\{\hat{D}, \Delta\} = \tOmega(2^{-k/4} k^{(k-3)/ 4} \cdot R^{3/4-1/(2k)})\]
shows that the function $\zeta$ satisfies the conditions necessary to prove \Cref{prop:dist-dual}.
\end{proof}

\Cref{thm:dist-promise} now follows by combining \Cref{prop:dist-dual} with the dual characterization of unbounded approximate degree \Cref{prop:ub-duality}. The approximate degree lower bound in \Cref{thm:dist} is then a consequence of \Cref{prop:dumdist-reduction} and \Cref{cor:dist-reduction}. The quantum query lower bound follows via the standard fact
that the $\eps$-error quantum query complexity of $f$ is lower bounded by $1/2 \cdot \adeg_{2\eps}(f)$ \cite{beals}.

 
\section{Lower Bound for Image Size Testing and Its Implications}
\label{s:final}
\subsection{Image Size Testing} 

The Image Size Testing problem ($\SE$ for short) is defined as follows. 

\begin{definition}
Given an input $s=(s_1, \dots, s_N) \in [R]_0^N$, and $i \in [R]$, let
$f_i = |\{j \colon s_j=i\}|$. The \emph{image size} of $s$
is the number of $i \in [R]$ such that $f_i > 0$. For $0 < \gamma < 1$, define:
$$\SE_{N, R}^{\gamma}(s_1, \dots, s_N) = \begin{cases} 
-1 & \text{ if the image size is } R\\
1 & \text{ if the image size is at most } \gamma \cdot R\\
\text{undefined} & \text{otherwise}.\end{cases}$$
\end{definition}

Observe that the definition of $\SE$ ignores whether or not 
the range item 0 has positive frequency, just like the functions
$\dumSURJ$ and $\dumDIST^k$. We choose
to define $\SE$ in this manner to streamline
our analysis.

The goal of this section is to prove the following lower bound.

\begin{theorem} \label{sethm}
 For some constant $c > 0$, and any constant $\gamma \in (0, 1)$, $\adeg\left(\SE_{N, R}^{\gamma}\right) \ge \tOmega(R^{1/2})$, where $N = c \cdot \gamma^{-1/2} \cdot R$.
 The same lower bound applies to the quantum query complexity of $\SE_{N, R}^{\gamma}$.
 \end{theorem}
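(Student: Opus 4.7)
The plan follows the three-step template from the $\SURJ$ and $k$-distinctness proofs in Sections~\ref{s:surjlower} and~\ref{s:kdist}. First I would apply \Cref{thm:main-reduction} with $f = \GAPAND_R$ and the symmetric function $g = \OR_N$: the $r$-th coordinate fed to $\GAPAND_R$ is $\OR_N(\id[s_1=r],\ldots,\id[s_N=r])$, which equals $-1$ iff range item $r$ appears in $s$, so the resulting $F^{\operatorname{prop}}$ coincides with $\SE^\gamma_{N,R}$ (after a cosmetic adjustment of the gap parameter by $1/R$ to align the ``$<\gamma R$'' threshold of $\GAPAND_R$ with the ``$\le \gamma R$'' threshold of $\SE^\gamma_{N,R}$, which is negligible asymptotically). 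Hence
\[\adeg(\SE^\gamma_{N,R})\ \geq\ \ubdeg(G^{\le N}), \qquad G^{\le N} := (\GAPAND_R \circ \OR_N)\big|_{H_{\le N}^{N \cdot R}}.\]
By \Cref{prop:ub-duality} it now suffices to exhibit a function $\zeta : (\bits^N)^R \to \R$ supported on the domain of $G^{\le N}$, with $\|\zeta\|_1 = 1$, pure high degree $\tilde\Omega(\sqrt R)$, and correlation greater than $1/3$ with $G^{\le N}$.

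For the inner witness I would take the $\OR_N$ dual $\psi$ from \Cref{prop:or-dual} with $\delta$ a sufficiently small constant multiple of $\gamma$ and $T = \Theta(1/\delta)$. Then $\psi$ has unit $\ell_1$-norm, correlation at least $1 - \delta$ with $\OR_N$, pure high degree $\Omega(\sqrt{\delta T}) = \Omega(1)$, and---by \Cref{prop:or-sym-dual}---the sharpened layerwise decay $\sum_{|x| = t}|\psi(x)| \le \alpha e^{-\beta t}/t^2$ with $\alpha = 170/\delta = O(1/\gamma)$ and $\beta = c_2\sqrt{\delta/T} = \Theta(\gamma)$, matching exactly the decay assumption of \Cref{prop:btzeroing}. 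For the outer witness I would use the dual polynomial $\Phi : \bits^R \to \R$ for $\GAPAND_R$ from \cite{bt14}: unit $\ell_1$-norm, pure high degree $\Omega(\sqrt R)$, correlation at least $19/20$ with $\GAPAND_R$ (in particular with negligible mass on the undefined middle region), and---crucially---robust in the sense that flipping a $\Theta(\gamma)$ fraction of its input bits does not destroy its correlation with $\GAPAND_R$. This robustness is exactly what is needed, because in forming $\xi = \Phi \ls \psi$ the per-coordinate disagreement rate between $\sgn(\psi(x_i))$ and $\OR_N(x_i)$ is $\Theta(\delta) = \Theta(\gamma)$. By \Cref{prop:ls}, $\xi$ has $\|\xi\|_1 = 1$ and pure high degree $\hat D = \Omega(\sqrt R)$, and the gap-tolerance of $\Phi$ gives $\langle \xi, \GAPAND_R \circ \OR_N\rangle \ge 2/3$.

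Finally I would apply \Cref{prop:btzeroing} to the pair $(\Phi,\omega)$: with $\alpha = O(1/\gamma)$ and $\beta = \Theta(\gamma)$ the proposition forces $N = \lceil 20\sqrt\alpha\rceil R = c\,\gamma^{-1/2} R$ for an absolute constant $c$, matching the statement of \Cref{sethm}. Its output $\zeta$ is supported on $H_{\le N}^{N \cdot R}$, satisfies $\|\zeta\|_1 = 1$ and $\|\zeta - \xi\|_1 \le 2/9$, and has pure high degree $\min\{\hat D, \Delta\}$, where $\Delta \ge \beta\sqrt\alpha R/(4\ln^2 R) = \tilde\Omega(\sqrt\gamma\, R) = \tilde\Omega(\sqrt R)$ for constant $\gamma$; the overall PHD is therefore $\tilde\Omega(\sqrt R)$. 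A short additional correction---analogous to \Cref{prop:correction}, applied to the negligible residual mass that $\zeta$ still places on inputs where $\GAPAND_R \circ \OR_N$ is undefined---zeroes out that remainder without damaging norm, PHD, or correlation. Combining yields $\langle \zeta, G^{\le N}\rangle > 2/3 - 2/9 > 1/3$, which establishes the approximate-degree bound; the quantum-query lower bound then follows from $\adeg_{2\eps}(f) \le 2 Q_\eps(f)$ \cite{beals}. The principal obstacle is hitting the scaling $N = O(\gamma^{-1/2} R)$ exactly: the requirement $\alpha = O(1/\gamma)$ forces $\delta = \Theta(\gamma)$, which is too large for a naive union-bound correlation analysis. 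Overcoming this needs both the refined \Cref{prop:btzeroing} (stronger than the version in \cite{adegsurj}, which required $N = \Omega(R\log^2 R)$) and the gap-tolerance of the outer $\GAPAND_R$ dual from \cite{bt14}; without both, the downstream reductions in Section~\ref{s:final} to $\SDU$, Shannon entropy, and junta testing at constant proximity parameter---which all crucially demand $N = O(R)$---would break.
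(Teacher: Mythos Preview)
There is a genuine gap in your proposal: the outer witness $\Phi$ you describe cannot exist. You ask for a dual polynomial for $\GAPAND_R^\gamma$ that has unit $\ell_1$-norm, places negligible mass on the undefined region, has correlation at least $19/20$ with $\GAPAND_R^\gamma$, \emph{and} has pure high degree $\Omega(\sqrt R)$. But any such $\Phi$ would, by \Cref{prop:duality}, witness $\adeg(\GAPAND_R^\gamma)=\Omega(\sqrt R)$---and this is false. The degree-one polynomial $q(x)=\tfrac1R\sum_i x_i$ satisfies $q(-\mathbf 1)=-1$ and $q(x)\ge 1-2\gamma$ whenever $|x|\le\gamma R$, and it is bounded in $[-1,1]$ everywhere; so $\adeg(\GAPAND_R^\gamma)=O_\gamma(1)$ for any constant $\gamma$. (The dual from \cite{bt14} that the paper later invokes is not what you think it is: it is exactly the two-point function $\Phi(\pm\mathbf 1)=\pm 1/2$, with pure high degree $1$.) Consequently your choice $T=\Theta(1/\delta)=O(1)$, which forces the inner dual $\psi$ to have only constant pure high degree, leaves you no source of $\Omega(\sqrt R)$ pure high degree anywhere in the construction, and the argument collapses.

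The paper's route is essentially the opposite allocation of roles. It takes the trivial outer witness $\Phi$ supported only on $\{\pm\mathbf 1\}$---which has PHD $1$ but, crucially, places \emph{all} of its mass on the promise of $\GAPAND_R^\gamma$---and pushes the entire $\sqrt R$ burden onto the inner dual by setting $T=N=\Theta(\gamma^{-1/2}R)$, so that $\psi$ has PHD $c_1\sqrt{\delta T}=\Theta(\sqrt R)$. The correlation analysis (\Cref{lem:ughcor}) then exploits the one-sided error of $\psi$ with respect to $\OR_N$: conditioned on $\sgn(\psi(x_i))=-1$ we have $\OR_N(x_i)=-1$ deterministically, and conditioned on $\sgn(\psi(x_i))=+1$ a Chernoff bound shows at most a $\gamma$-fraction of the $\OR_N(x_i)$ flip. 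This is what keeps $\Phi\star\psi$ inside the promise of $\GAPAND_R^\gamma$ with high probability, without any ``robustness'' property of $\Phi$. A secondary issue: because $\GAPAND_R^\gamma$ is partial, the reduction you need is \Cref{thm:main-reductionpartial} via the double-promise approximate degree (\Cref{thisisreallyaproblem} and \Cref{cor:se-reduction}), not \Cref{thm:main-reduction}; your one-line adjustment of the threshold by $1/R$ does not address this.
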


 \begin{remark}
 It is possible to refine our analysis to show that even the \emph{unbounded} approximate degree of $\SE_{N, R}^{\gamma}$ is $\tOmega(R^{1/2})$, and that this holds even if the error parameter is $1-2^{-n^{\Omega(1)}}$. 
 However, for brevity we do not explicitly establish this stronger result. We direct the interested reader to subsequent work
 \cite{btlargeerror}, which shows that the threshold
 degree of $\SURJ_{N, R}$ is $\tOmega(R^{1/2})$ for $R \leq N/2$. The proof of that result 
 can be extended with little difficulty to show that the threshold degree of $\SE_{N, R}^{\gamma}$ is $\tOmega(R^{1/2})$.
 \end{remark}

\subsubsection{Connecting Symmetric Promise Properties and Block Compositions of Partial Functions}
For any function $f$ and symmetric function $g$, \Cref{sec:step1} described a connection between the symmetric
property
\begin{align*}
&F^{\operatorname{prop}}(s_1, \dots, s_N) = f(g(\id[s_1 = 1], \dots, \id[s_N = 1]), \dots, g(\id[s_1=R], \dots,\id[s_N=R]))
\end{align*}
and the partial function 
\begin{align*}
&F^{\le N}(x_1, \dots, x_R) = \begin{cases} f(g(x_1), \dots, g(x_R)) & \text{ if } x_1, \dots, x_R \in \bits^N,  |x_1| + \dots + |x_R| \le N,\\
\text{undefined} & \text{otherwise}.
\end{cases}
\end{align*}

For simplicity and clarity, that discussion was restricted to total functions $f$ and $g$
(in particular, this avoided having to address the possibility that $f(g(x_1), \dots, g(x_R))$ 
is undefined in the definitions of $F^{\operatorname{prop}}$ and $F^{\leq N}$).
Because $\SE$ is a partial function, we need to explain
that the same connection still holds even when $f$ is a partial function.
To do this, we need to introduce the notion of the \emph{double-promise} approximate degree of $F^{\le N}$. 

\begin{definition} \label{thisisreallyaproblem}
Let $Y \subset \bits^R$ and $f \colon Y \to \bits$,
and let $g \colon \bits^N \to \bits$ be a symmetric (total) function.
Let $G = \{x_1, \dots, x_R \colon (g(x_1), \dots, g(x_R)) \in Y\}.$
Let $F^{\le N}$ be defined as above.
Observe that $F^{\le N}$ is defined at all inputs 
in $H_{\leq N}^{N \cdot R} \cap G$.
The double-promise $\eps$-approximate degree of $F^{\le N}$,
denoted $\dpdeg(F^{\le N})$
is the least degree of a real polynomial $p$ such that:
\begin{align}
& |p(x) - F^{\le N}(x)| \leq \eps \text{ for all } x \in H_{\leq N}^{N \cdot R} \cap G.\\
& |p(x)| \leq  1+\eps \text{ for all } x \in  H_{\leq N}^{N \cdot R} \setminus G.
\end{align}
\end{definition}
Observe that in the definition above, no restriction is placed on $p(x)$
for any inputs that are not in $H_{\leq N}^{N \cdot R}$.
\medskip

Bun and Thaler's analysis from \cite{adegsurj} (cf. \Cref{thm:main-reduction}) applies to partial functions $f$ in the following way. 
\begin{theorem}[Bun and Thaler \cite{adegsurj}] \label{thm:main-reductionpartial}
Let $Y \subset \bits^R$ and let $f : Y \to \bits$ be any partial function. Let $g : \bits^N \to \bits$ be any symmetric function. Then for $F^{\operatorname{prop}}$ and $F^{\le N}$ defined above, and for any $\eps > 0$, we have
\[\adeg_\eps(F^{\operatorname{prop}}) \ge \dpdeg_\eps(F^{\le N}).\]
\end{theorem}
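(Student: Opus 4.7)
The plan is to follow the symmetrization argument of Bun and Thaler (\Cref{thm:main-reduction}) essentially verbatim; the only new observation is that the ``ambient boundedness'' clause built into the definition of $\adeg_\eps$ supplies, for free, the boundedness clause built into the definition of $\dpdeg_\eps$ on $H^{N \cdot R}_{\le N} \setminus G$. Fix $d = \adeg_\eps(F^{\operatorname{prop}})$ and let $p$ be a witnessing polynomial. Under the natural encoding of $[R]_0^N$ into a Boolean cube --- say, each $s_j \in [R]_0$ encoded by $R+1$ indicator bits $y_{j,i} = \id[s_j = i]$ --- the polynomial $p$ has degree $d$, satisfies $|p(s) - F^{\operatorname{prop}}(s)| \leq \eps$ wherever $F^{\operatorname{prop}}$ is defined, and satisfies $|p(y)| \leq 1+\eps$ at every point of the ambient cube, including at points that do not even encode a valid element of $[R]_0^N$.

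For each $x = (x_1, \dots, x_R) \in (\bits^N)^R$ with $|x_1| + \dots + |x_R| \leq N$, set $f_i = |x_i|$ for $i \geq 1$ and $f_0 = N - \sum_{i=1}^R f_i \geq 0$, and let $D_x$ be the uniform distribution on $s \in [R]_0^N$ with frequency vector $(f_0, f_1, \dots, f_R)$. Define
\[ q(x) = \E_{s \sim D_x}[p(s)]. \]
The combinatorial core of the Ambainis/Bun--Thaler argument is the observation that for each multilinear monomial in the $y_{j,i}$ variables, its expectation under $D_x$ is a ratio of falling factorials in $(f_0, \dots, f_R)$ of total degree equal to the number of variables in the monomial. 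Consequently $q$ agrees on $H^{N \cdot R}_{\le N}$ with some polynomial of degree at most $d$ in $(f_0, \dots, f_R)$; since each $f_i$ is a linear function of the bits of $x$, this extends to a polynomial $q \colon (\bits^N)^R \to \R$ of degree at most $d$, which we take as the definition of $q$ off $H^{N \cdot R}_{\le N}$ as well.

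It remains to verify the two defining conditions of $\dpdeg_\eps(F^{\le N})$. For $x \in H^{N \cdot R}_{\le N} \cap G$: since $g$ is symmetric, $g(\id[s_1 = i], \dots, \id[s_N = i])$ depends only on the frequency of $i$ in $s$, which equals $f_i = |x_i|$ for every $s$ in $\mathrm{supp}(D_x)$; hence $(g(x_1), \dots, g(x_R)) \in Y$ forces $F^{\operatorname{prop}}(s) = f(g(x_1), \dots, g(x_R)) = F^{\le N}(x)$ pointwise on $\mathrm{supp}(D_x)$, and so $|q(x) - F^{\le N}(x)| \leq \E_s[|p(s) - F^{\operatorname{prop}}(s)|] \leq \eps$. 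For $x \in H^{N \cdot R}_{\le N} \setminus G$: the same calculation places every $s \in \mathrm{supp}(D_x)$ outside the domain of $F^{\operatorname{prop}}$, where we only know $|p(s)| \leq 1+\eps$ from the ambient boundedness of $p$; averaging gives $|q(x)| \leq 1+\eps$, as required.

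No significant obstacle appears beyond what is already present in the total-function proof of \Cref{thm:main-reduction}. The real work is the degree-preservation step -- the falling-factorial identity underlying the claim that the symmetrized $q$ has degree at most $d$ -- and extending to partial $f$ amounts to the one-line remark that boundedness of $p$ on the ambient Boolean cube is inherited by $q$ on $H^{N \cdot R}_{\le N} \setminus G$. The only point to be careful about is choosing the right encoding of $[R]_0^N$ so that the indicator-bit monomials really do have expectations that are polynomial (of the correct degree) in the frequencies $(f_0, \dots, f_R)$; a binary encoding of $[R]_0$ would introduce an undesirable $\log R$ blowup in the degree bound and so the $(R+1)$-fold indicator encoding is essential.
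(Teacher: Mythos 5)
Your proof is correct and is exactly the Ambainis-style symmetrization that the paper invokes by citation: the paper gives no proof of this statement beyond the remark that the total-function argument of \Cref{thm:main-reduction} extends to partial $f$, and the one genuinely new point in that extension is the one you isolate, namely that the ambient boundedness clause in the definition of $\adeg_\eps(F^{\operatorname{prop}})$ is inherited by the averaged polynomial $q$ on $H^{N\cdot R}_{\le N}\setminus G$, since every $s$ in the support of $D_x$ for such $x$ lies outside the domain of $F^{\operatorname{prop}}$. One minor quibble with your closing remark: the indicator encoding is convenient but not \emph{essential}, because a degree-$d$ monomial in a binary encoding of $[R]_0$ collapses, position by position (using $\id[s_j=i]\cdot\id[s_j=i']=0$ for $i\neq i'$), to a product of at most $d$ linear forms in the indicator variables, so passing from the binary to the indicator encoding loses no degree and the argument goes through under either convention.
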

  
We will require the following dual formulation of $\dpdeg_{\eps}(F^{\le N})$. 

\begin{proposition} \label{prop:dualitydiedie}
Let $F^{\le N}$ and $G$ be defined as above. Then $\dpdeg_{\eps}(F^{\le N}) \geq d$ if and only if there exists a function 
$\psi \colon \bits^n \to \R$ satisfying the following properties.
\begin{align}\label{eq:ub-zerodp} &\psi(x) = 0 \text{ for all } x \not\in H_{\leq N}^{N \cdot R}, \\
\label{eq:corrdp} &\sum_{x \in H_{\leq N}^{N \cdot R} \cap G}\psi(x) \cdot F^{\le N}(x) - \sum_{x \in H_{\leq N}^{N \cdot R} \setminus G} |\psi(x)| > \eps, \\
 \label{eq:unitnormdp} &\sum_{x \in \bits^n} |\psi(x)| = 1, \text{ and }  \\
 \label{eq:phddp} &\text{ For every polynomial } p \colon \bits^n \to \R \text{ of degree less than } d, \sum_{x \in \bits^n} p(x) \cdot \psi(x) = 0. 
\end{align}
\end{proposition}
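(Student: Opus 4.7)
The plan is to prove \Cref{prop:dualitydiedie} by standard linear programming duality (essentially Farkas' lemma), mirroring the proofs of the analogous dual characterizations in \Cref{prop:duality} and \Cref{prop:ub-duality}. Write $\mathcal{H} = H_{\leq N}^{N \cdot R}$ for brevity, and let $V_d$ denote the real vector space of polynomials $p \colon \bits^{NR} \to \R$ of degree less than $d$. The statement $\dpdeg_{\eps}(F^{\le N}) < d$ is precisely the feasibility of the linear system that asks for $p \in V_d$ satisfying $p(x) \in [F^{\le N}(x)-\eps, F^{\le N}(x)+\eps]$ for every $x \in \mathcal{H} \cap G$ and $p(x) \in [-(1+\eps), 1+\eps]$ for every $x \in \mathcal{H} \setminus G$, with no constraints imposed on the values $p(x)$ for $x \notin \mathcal{H}$. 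Thus $\dpdeg_{\eps}(F^{\le N}) \geq d$ is equivalent to the infeasibility of this system, and my task reduces to showing that infeasibility is witnessed by exactly the class of dual polynomials described in the proposition.

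To pass to the dual, I introduce nonnegative multipliers $\alpha^+(x), \alpha^-(x)$ for the upper and lower inequalities at each $x \in \mathcal{H} \cap G$, and $\beta^+(x), \beta^-(x)$ for the inequalities at each $x \in \mathcal{H} \setminus G$. Applying Farkas' lemma to the primal system, infeasibility is equivalent to the existence of such multipliers for which
\begin{equation*}
\sum_{x \in \mathcal{H} \cap G} (\alpha^+(x) - \alpha^-(x)) \cdot p(x) + \sum_{x \in \mathcal{H} \setminus G} (\beta^+(x) - \beta^-(x)) \cdot p(x) = 0 \quad \text{for every } p \in V_d,
\end{equation*}
and such that the induced combination of right-hand sides is strictly negative. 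Note that there are no multipliers for the points $x \notin \mathcal{H}$, because no constraints were imposed there; this is what will force the support condition \eqref{eq:ub-zerodp}.

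Next I would define $\psi(x) = \alpha^-(x) - \alpha^+(x)$ for $x \in \mathcal{H} \cap G$, $\psi(x) = \beta^-(x) - \beta^+(x)$ for $x \in \mathcal{H} \setminus G$, and $\psi(x) = 0$ for $x \notin \mathcal{H}$. Without loss of generality the multipliers at each point have disjoint support in the $+/-$ pair (since decreasing both by the same amount preserves feasibility), so $|\psi(x)| = \alpha^+(x) + \alpha^-(x)$ on $\mathcal{H} \cap G$ and analogously on $\mathcal{H}\setminus G$. The orthogonality condition above becomes exactly the pure-high-degree condition \eqref{eq:phddp}, while the support condition \eqref{eq:ub-zerodp} is built in. Substituting these identifications into the right-hand side combination yields
\begin{equation*}
\sum_{x \in \mathcal{H} \cap G} \psi(x) \cdot F^{\le N}(x) \;-\; \sum_{x \in \mathcal{H} \setminus G} |\psi(x)| \;>\; \eps \cdot \sum_{x} |\psi(x)|,
\end{equation*}
after collecting the $\eps$-terms from both sides of every inequality into a uniform $\eps \|\psi\|_1$. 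Finally, rescaling $\psi$ so that $\|\psi\|_1 = 1$ (which preserves the sign of the inequality and the orthogonality condition) gives \eqref{eq:corrdp} and \eqref{eq:unitnormdp}. The converse direction, showing that any such $\psi$ precludes the existence of a valid approximating polynomial, is immediate by taking the inner product of $\psi$ with any candidate $p \in V_d$ and using the triangle inequality.

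The argument is essentially routine; the one point that merits care is tracking how the ``two-tier'' promise structure (strict approximation on $\mathcal{H} \cap G$, mere boundedness on $\mathcal{H} \setminus G$, no constraint off $\mathcal{H}$) translates under duality into the split correlation term in \eqref{eq:corrdp} together with the support condition \eqref{eq:ub-zerodp}. I do not anticipate any genuine obstacle here, since both features fall out naturally from the presence or absence of dual variables at each input.
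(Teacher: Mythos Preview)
The paper does not actually give a proof of \Cref{prop:dualitydiedie}; it is stated as a dual formulation in the same spirit as \Cref{prop:duality} and \Cref{prop:ub-duality}, both of which are also quoted without proof (with references to the literature). Your LP-duality argument is correct and is precisely the standard derivation one would expect, so there is nothing to compare against and no gap to flag.
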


We will need to define the following partial function.

\begin{definition}
Define $\GAPAND_{R}^{\gamma} \colon H_{\leq (\gamma \cdot R)}^R \cup \{\mathbf{-1}\} \to \bits$ via:
$$\GAPAND_{R}^{\gamma}(x) = \begin{cases} 
-1 & \text { if } x_i=-1 \text{ for all } i \\
1 & \text{ if } x\in H_{\leq (\gamma \cdot R)}^R\\
\text{undefined} & \text{otherwise}.
\end{cases}$$
\end{definition}

In the case where $f = \GAPAND_{R}^{\gamma}$ and $g = \OR_N$, the function $F^{\operatorname{prop}}(s_1, \dots, s_N)$ is precisely the $\SE_{N, R}^\gamma$ function. Hence:

\begin{corollary} \label{cor:se-reduction}
Let $N, R \in \N$. Let $Y$ be the domain of $\GAPAND_R^{\gamma}$,
and let $G=\{(x_1, \dots, x_R) \in \bits^{N \cdot R} \colon (\OR_N(x_1), \dots, \OR_N(x_R)) \in Y\}.$
Then for any $\eps > 0$,
\[\adeg_\eps(\SE_{N, R}^\gamma) \ge \dpdeg_\eps(F^{\le N})\]
where $F^{\le N} : G \cap H^{N\cdot R}_{\le N} \to \bits$ is the partial function
obtained by restricting $\GAPAND_{R}^{\gamma} \circ \OR_N$ to $H^{N\cdot R}_{\le N}$.
\end{corollary}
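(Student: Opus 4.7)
\medskip

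The plan is to derive \Cref{cor:se-reduction} as a direct application of \Cref{thm:main-reductionpartial} with the specific choices $f = \GAPAND_R^\gamma$ and $g = \OR_N$. Since $\OR_N$ is a total symmetric function and $\GAPAND_R^\gamma$ is a partial function, \Cref{thm:main-reductionpartial} applies out of the box. The entire content of the proof is therefore to verify that under these choices, the induced symmetric property $F^{\operatorname{prop}}$ coincides with $\SE_{N,R}^\gamma$, and that the induced partial function on low-Hamming-weight inputs coincides with the $F^{\le N}$ named in the corollary.

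First I would spell out $F^{\operatorname{prop}}$ on input $(s_1,\dots,s_N) \in [R]_0^N$. For each range item $i \in [R]$, the inner application
\[
\OR_N(\id[s_1 = i], \dots, \id[s_N = i])
\]
evaluates to $-1$ exactly when $f_i > 0$ (i.e., item $i$ appears in the input list) and to $+1$ exactly when $f_i = 0$. Thus the vector $y \in \bits^R$ fed to $\GAPAND_R^\gamma$ records, in its $-1$-coordinates, precisely the set of range items with positive frequency. Its Hamming weight $|y|$ equals the image size of $s$. Substituting into the definition of $\GAPAND_R^\gamma$: the output is $-1$ iff $y = \mathbf{-1}$, i.e., iff the image size equals $R$; it is $+1$ iff $|y| \le \gamma \cdot R$, i.e., iff the image size is at most $\gamma \cdot R$; and it is undefined otherwise. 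This matches the definition of $\SE_{N,R}^\gamma$ on the nose, so $F^{\operatorname{prop}} = \SE_{N,R}^\gamma$.

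Next I would verify that the $F^{\le N}$ in \Cref{thm:main-reductionpartial} with these $f$ and $g$ agrees with the one named in the statement of the corollary. By definition, with $Y$ equal to the domain of $\GAPAND_R^\gamma$ and $G = \{(x_1,\dots,x_R) \in \bits^{N \cdot R} : (\OR_N(x_1), \dots, \OR_N(x_R)) \in Y\}$, the function $F^{\le N}$ is defined on $G \cap H^{N\cdot R}_{\le N}$ and there equals $\GAPAND_R^\gamma(\OR_N(x_1), \dots, \OR_N(x_R))$ --- exactly the restriction of $\GAPAND_R^\gamma \circ \OR_N$ to $H^{N \cdot R}_{\le N}$ described in the corollary statement. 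Applying \Cref{thm:main-reductionpartial} then yields $\adeg_\eps(\SE_{N,R}^\gamma) \ge \dpdeg_\eps(F^{\le N})$, which is exactly the claim.

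There is essentially no obstacle here: the corollary is a bookkeeping exercise identifying partial functions. The only minor subtlety is making sure the ``undefined'' region of $\GAPAND_R^\gamma$ lines up with the ``undefined'' region of $\SE_{N,R}^\gamma$ under the symmetrization (so that $F^{\operatorname{prop}}$ is undefined exactly where $\SE_{N,R}^\gamma$ is), which the Hamming-weight calculation above handles directly. The substantive work --- the proof of \Cref{thm:main-reductionpartial} itself --- is cited from \cite{adegsurj} and reused as a black box.
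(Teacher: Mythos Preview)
Your proposal is correct and takes essentially the same approach as the paper: the paper simply observes that when $f = \GAPAND_R^{\gamma}$ and $g = \OR_N$, the function $F^{\operatorname{prop}}$ is precisely $\SE_{N,R}^\gamma$, and then invokes \Cref{thm:main-reductionpartial}. You spell out the identification in more detail than the paper does, but the argument is the same.
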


With \Cref{cor:se-reduction} in hand, we now turn to proving 
a lower bound on $\dpdeg_{\eps}(F^{\le N})$.

\subsubsection{Completing the Proof of \Cref{sethm}} \label{sec:ughsection}
\begin{proof}
We construct a dual polynomial to witness the lower bound in \Cref{sethm}.

Define the parameters
\begin{itemize}
\item  $\delta = \gamma/4$, 
\item $\alpha = 170/\delta$, 
\item $T=N=\lceil 20 \sqrt{\alpha} \rceil R \leq 310 \cdot \gamma^{-1/2} \cdot R$,
\item $\beta = c_2 \cdot \sqrt{\delta} / \sqrt{T}$, where $c_2$ is the constant from \Cref{prop:or-sym-dual}.
\end{itemize}

Let 
$\psi$ be the dual witness for $\OR_N$ from \Cref{prop:or-dual} with $T=N$. Define $\Phi \colon \bits^R \to \reals$ as follows:
$$\Phi(x) = \begin{cases}
-1/2 & \text{ if } x=(-1, -1, \dots, -1)\\
1/2 & \text{ if } x = (1, 1, \dots, 1)\\
0 & \text{otherwise}.
\end{cases}$$

The dual block composition $\Phi \ls \psi$ is the same dual witness for $\AND_R \circ \OR_N$ which Bun and Thaler \cite{bt14} used to show that $\adeg_{\eps}(\AND_R \circ
\OR_N) = \Omega(N^{1/2})$ for $\eps = 1-2^{-R}$. This dual witness was also
used in subsequent works \cite{sherstov14, bchtv}. Curiously, we are
interested in this dual witness for a completely different reason than these
prior works. These prior works were interested in $\Phi \ls \psi$ because its correlation
with the target function $\AND_R \circ \OR_N$ is exponentially closer to 1
than is the correlation of $\psi$ with $\OR_N$. For our purposes,
it will not be essential to exploit such a strong correlation guarantee---rather, we are
interested in $\Phi \ls \psi$ because most its ``$\ell_1$-mass''
lies on inputs either with full image or tiny image (i.e.,
most of its mass lies in the domain of $\GAPAND_{R}^{\gamma} \circ \OR_N$).

As in \Cref{cor:se-reduction}, let $G$ denote the set of all inputs on
which $\GAPAND_{R}^{\gamma} \circ \OR_N$ is defined,
i.e., $$G=\{x_1, \dots, x_R \in \bits^{N \cdot R} \colon (\OR_N(x_1), \dots, \OR_N(x_R)) \in
H_{\leq \gamma \cdot R}^R \cup \{\mathbf{-1}\}\}.$$
The analysis in these prior works \cite{bt14,bchtv} implies that 
$\Phi \ls \psi$ satisfies the following three properties.
\begin{align} &\|\Phi \ls \psi \|_1=1, \label{ughnorm} \\
&\sum_{x \in G}  ( \Phi \ls \psi)(x) \cdot \left(\GAPAND_{R}^{\gamma} \circ \OR_N\right)(x) - \sum_{x \in\bits^{N \cdot R} \setminus G} |(\Phi \ls \psi)(x)|
\geq 9/10,  \label{ughcor} \\
&\text{For any polynomial } p \colon \bits^{N\cdot R} \to \R\text{, } \deg p < c_1\sqrt{\delta T} \implies \langle \Phi \ls \psi, p \rangle = 0,   \label{ughphd}
\end{align}
where $c_1$ is the constant from \Cref{prop:or-dual}. Indeed, Properties~\eqref{ughnorm} and~\eqref{ughphd} are immediate from \Cref{prop:ls} on the properties of dual block composition. For completeness, we prove that Property~\eqref{ughcor} holds in \Cref{sec:gap-and-corr} below, making use of the fact that $\delta = \gamma/4$ and taking $R$ to be sufficiently large.

\Cref{prop:btzeroing} (with $\alpha$ and $\beta$ set as above) guarantees that for some $\Delta \ge \beta \sqrt{\alpha} R / 4\ln^2 R = \tOmega(R^{1/2})$, the function $\Phi \ls \psi$ can be modified to produce a function $\zeta : (\bits^N)^R \to \R$ such that
\begin{align*}
&\zeta(x) = 0 \text{ for all } x \not\in H_{\le N}^{N \cdot R}, \\
&\langle \zeta, \GAPAND_{R}^{\gamma} \circ \OR_N \rangle \ge \langle \Phi \ls \psi, \GAPAND_{R}^{\gamma} \circ \OR_N \rangle - \|\zeta - \Phi \ls \psi\|_1 \ge 9/10 - 2/9 > 1/3, \\
&\|\zeta\|_1 = 1, \\
& \text{ For every polynomial } p \text{ of degree less than } D:=\min\{\hat{D}, \Delta\}, \text{ we have }\langle \zeta, p \rangle = 0. 
\end{align*}
Observing that
\[D =  \min\{c_1\sqrt{\delta T}, \Delta\} = \tOmega(R^{1/2})\]
shows that the function $\zeta$ satisfies the conditions necessary to prove \Cref{sethm} via \Cref{prop:dualitydiedie}.
\end{proof}

\subsubsection{Proof of Property~\eqref{ughcor}}
\label{sec:gap-and-corr}

\begin{lemma} \label{lem:ughcor}
Let $\delta > 0$, $\gamma > 2\delta$, and let $$G=\{x_1, \dots, x_R \in \bits^{N \cdot R} \colon (\OR_N(x_1), \dots, \OR_N(x_R)) \in
H_{\leq \gamma \cdot R}^R \cup \{\mathbf{-1}\}\}.$$
Define $\Phi : \bits^R \to \bits$ by $\Phi(-\mathbf{1}) = -1/2$, $\Phi(\mathbf{1}) = 1/2$ and $\Phi(z) = 0$ otherwise. Let $\psi : \bits^N \to \bits$ be any dual witness for $\OR_N$ such that $\|\psi\|_1 = 1$, $\langle \psi, \mathbf{1}\rangle = 0$, and $\langle \psi, \OR_N \rangle \ge 1 - \delta$. Then
\[\sum_{x \in G}  ( \Phi \ls \psi)(x) \cdot \left(\GAPAND_{R}^{\gamma} \circ \OR_N\right)(x) - \sum_{x \in\bits^{N \cdot R} \setminus G} |(\Phi \ls \psi)(x)|
\geq 1-\delta^R -\exp(-(\gamma - \delta) R/3).\]
\end{lemma}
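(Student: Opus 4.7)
The plan is to unpack the dual block composition via its probabilistic interpretation and then reduce the bound to a tail estimate. Since $\Phi$ is supported only on $\{-\mathbf{1}, +\mathbf{1}\}$, the function $(\Phi \ls \psi)(x)$ is nonzero only when the signs $\sgn(\psi(x_i))$ are all equal, in which case it equals $\pm 2^{R-1} \prod_i |\psi(x_i)|$. I would introduce the two probability distributions $\mu_{\pm}$ on $\bits^N$ defined by $\mu_{\pm}(y) = 2|\psi(y)| \cdot \mathbf{1}[\sgn(\psi(y)) = \pm 1]$, which are genuine distributions because $\|\psi\|_1 = 1$ and $\langle \psi, \mathbf{1}\rangle = 0$ (the latter forcing the $\pm$ masses each to equal $1/2$). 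Letting $Y \sim \mu_{\pm}^{\otimes R}$ and $W_i := \OR_N(Y_i)$, and partitioning the sample space into events $A = \{W = -\mathbf{1}\}$, $B = \{|W^{-1}(-1)| \leq \gamma R\}$, and $C = \{|W^{-1}(-1)| > \gamma R,\ W \neq -\mathbf{1}\}$ (so $Y \in G \iff A \cup B$ holds), a short bookkeeping calculation tracking signs across the four cases (sign of $\Phi \ls \psi$, and $Y \in G$ vs.\ $Y \notin G$) yields
\[
\text{LHS} = \Pr_{\mu_-^{\otimes R}}[A] + \Pr_{\mu_+^{\otimes R}}[B] - 1.
\]

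For $\Pr_-[A]$, the crucial observation is that for the specific $\psi$ from \Cref{prop:or-dual}, $\psi(\mathbf{1}) = \omega(0) > 0$, so $\sgn(\psi(\mathbf{1})) = +1$ and $\mu_-(\mathbf{1}) = 0$. Since $\mathbf{1}$ is the unique input with $\OR_N(\mathbf{1}) = +1$, every $W_i = -1$ almost surely under $\mu_-^{\otimes R}$, giving $\Pr_-[A] = 1$ exactly. This is precisely what converts the naive bound $(1-\delta)^R$ into an exact $1$, without which the stated bound of $1 - \delta^R - \exp(-(\gamma - \delta) R / 3)$ could not be achieved for constant $\delta$. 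For $\Pr_+[B]$, the hypothesis $\langle \psi, \OR_N\rangle \geq 1 - \delta$ combined with $\|\psi\|_1 = 1$ yields $\sum_{x \in E_+(\psi, \OR_N)} |\psi(x)| \leq \delta/2$, hence $\mu_+(\OR_N^{-1}(-1)) \leq \delta$. This immediately gives $\Pr_+[A] \leq \delta^R$, and identifies the random variables $X_i := \mathbf{1}[W_i = -1]$ as i.i.d.\ Bernoulli with mean at most $\delta$ under $\mu_+^{\otimes R}$.

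To bound $\Pr_+[C] \leq \Pr_{\mu_+^{\otimes R}}[\sum_i X_i \geq \gamma R]$, I would apply a multiplicative Chernoff bound: the standard MGF computation (optimized at $s = \ln(\gamma/\delta)$) gives $\Pr_{\mu_+^{\otimes R}}[\sum_i X_i \geq \gamma R] \leq \exp\bigl(R[(\gamma - \delta) - \gamma \ln(\gamma/\delta)]\bigr)$. A short calculus check using the hypothesis $\gamma > 2\delta$---setting $u := \gamma/\delta \geq 2$ and verifying $\ln u \geq \tfrac{4}{3} - \tfrac{4}{3u}$ by noting the difference is increasing on $[4/3, \infty)$ and positive at $u = 2$---yields $\gamma \ln(\gamma/\delta) - (\gamma - \delta) \geq (\gamma - \delta)/3$, so $\Pr_+[C] \leq \exp(-(\gamma - \delta) R / 3)$. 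Combining gives $\Pr_+[B] \geq 1 - \delta^R - \exp(-(\gamma - \delta) R / 3)$, and substituting into the formula for $\text{LHS}$ completes the proof.

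The main obstacle will be verifying the calculus inequality behind the specific constant $1/3$ in the Chernoff bound; everything else is straightforward bookkeeping once the probabilistic reformulation is in hand. A subtle point worth flagging is that although the lemma is stated for any dual witness satisfying the hypotheses, the exact identity $\Pr_-[A] = 1$ (as opposed to merely $\geq (1 - \delta)^R$) really does require the specific property $\psi(\mathbf{1}) \geq 0$ enjoyed by the construction of \Cref{prop:or-dual}.
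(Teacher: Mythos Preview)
Your proof is correct and follows essentially the same probabilistic strategy as the paper: unpack $\Phi \ls \psi$ via the product measures $\mu_{\pm}$, use one-sided error to get the $\mu_-$ contribution exactly, and use a multiplicative Chernoff bound for the $\mu_+$ tail. Your exact identity $\text{LHS} = \Pr_{\mu_-}[A] + \Pr_{\mu_+}[B] - 1$ is a slightly cleaner bookkeeping than the paper's route (which first lower-bounds the LHS by $\langle \Phi \ls \psi, \AND_R \circ \OR_N\rangle - 2\sum_{x \notin G} |(\Phi \ls \psi)(x)|$ and then handles the two terms separately), but the substance is identical.

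The one point worth correcting is your final caveat. You flag that $\Pr_{\mu_-}[A] = 1$ seems to require the \emph{specific} construction of $\psi$ (namely $\omega(0) > 0$), and hence that the lemma may not hold in the stated generality. In fact it does: the paper invokes a result of Gavinsky and Sherstov stating that \emph{any} $\psi$ with $\langle \psi, \mathbf{1}\rangle = 0$ and $\langle \psi, \OR_N\rangle > 0$ automatically has one-sided error with respect to $\OR_N$, i.e., $E_-(\psi, \OR_N) = \emptyset$. Since $\mathbf{1}$ is the unique point with $\OR_N = +1$, this forces $\psi(\mathbf{1}) \geq 0$ from the hypotheses alone. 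So your argument goes through for every $\psi$ satisfying the lemma's hypotheses, not just the explicit one.
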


The proof of \Cref{lem:ughcor} crucially relies on a special property, called \emph{one-sided error}, that is satisfied by any dual polynomial for $\OR$.

\begin{definition}
Let $f : \bits^N \to \bits$ and let $\psi : \bits^N \to \R$. We say that $\psi$ has one-sided error with respect to $f$ if for all $x \in \bits^N$,
\begin{equation}
f(x) = 1 \implies \psi(x) > 0.
\end{equation}
\end{definition}
The following lemma shows that \emph{any} dual witness for the $\OR_N$ function has one-sided error.

\begin{lemma}[Gavinsky and Sherstov \cite{comm2}] \label{lem:gs}
Let $\psi : \bits^N \to \R$ be a function with pure high degree at least $1$ such that $\langle \psi, \OR_N \rangle > 0$. Then $\psi$ has one-sided error with respect to $\OR_N$.
\end{lemma}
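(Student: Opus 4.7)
The plan is to exploit the fact that under the $\{-1,+1\}$ convention used in the paper (where $-1$ represents TRUE and $+1$ represents FALSE), the preimage $\OR_N^{-1}(+1)$ is a single point, namely the all-FALSE input $\mathbf{1}$. Consequently, the one-sided error condition $\OR_N(x) = 1 \implies \psi(x) > 0$ collapses to the single inequality $\psi(\mathbf{1}) > 0$, and the lemma amounts to deducing that inequality from the two hypotheses on $\psi$.

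First, I would rewrite the correlation by separating the $\mathbf{1}$ term:
\[\langle \psi, \OR_N \rangle \;=\; \psi(\mathbf{1}) \;-\; \sum_{x \neq \mathbf{1}} \psi(x),\]
using that $\OR_N(\mathbf{1}) = +1$ while $\OR_N(x) = -1$ for every other input. Next, I would invoke the pure high degree hypothesis applied to the constant polynomial $1$, which forces $\sum_{x \in \bits^N} \psi(x) = 0$ and hence $\sum_{x \neq \mathbf{1}} \psi(x) = -\psi(\mathbf{1})$. Substituting yields the clean identity $\langle \psi, \OR_N \rangle = 2\,\psi(\mathbf{1})$, and the assumed strict positivity of the correlation then gives $\psi(\mathbf{1}) > 0$, as required.

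There is really no substantive obstacle here: the entire argument is a two-line identity that exploits the degeneracy $|\OR_N^{-1}(+1)| = 1$. The only point requiring care is bookkeeping with the sign convention---in particular, confirming that under the $-1 = $ TRUE encoding (consistent with the paper's use of $|x|$ as the count of $-1$ coordinates and with the definition of $\THR^k_N$), the unique input on which $\OR_N$ outputs $+1$ is indeed $\mathbf{1}$. This stands in contrast to the analogous one-sided error statements for richer functions, where the conclusion typically requires more than mere orthogonality to the constants and instead draws on the full structure of $f^{-1}(+1)$.
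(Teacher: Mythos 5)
Your proof is correct. The paper itself states this lemma as a citation to Gavinsky and Sherstov without reproducing a proof, and your two-line argument --- verifying that under the paper's conventions $\OR_N^{-1}(+1)=\{\mathbf{1}\}$, using pure high degree $\ge 1$ to get $\sum_x \psi(x)=0$, and concluding $\langle \psi, \OR_N\rangle = 2\psi(\mathbf{1})>0$ --- is exactly the standard derivation and is consistent with how the lemma is used later (e.g., in Equation~\eqref{eqn:mybullshitequation}).
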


In particular, if $\psi$ is such a dual witness for $\OR_N$, then we have
\begin{equation} \label{eqn:mybullshitequation}
\sum_{x \in A_{+1}} |\psi(x)| \le \frac{1}{2}(1 - \langle \psi, \OR_N \rangle), \qquad \sum_{x \in A_{-1}} |\psi(x)| = 0,
\end{equation}
where the sets $A_{+1}$ and $A_{-1}$ are, respectively, the sets of false positive and false negative errors given by
\begin{align*}
&A_{+1} = E_+(\psi, \OR_N) = \{x \in \bits^N : \psi(x) > 0,  \OR_N(x) = -1\}, \\
&A_{-1} = E_-(\psi, \OR_N) = \{x \in \bits^N : \psi(x) < 0, \OR_N(x) = +1\}.
\end{align*}

\begin{proof}[Proof of \Cref{lem:ughcor}]
We begin by observing that the quantity of interest can be written as
\begin{align}
&\sum_{x \in \bits^{N \cdot R}}  ( \Phi \ls \psi)(x) \cdot \left(\AND_R \circ \OR_N\right)(x) - \nonumber \\
&\qquad\qquad \left( \sum_{x \in \bits^{N \cdot R} \setminus G} ( \Phi \ls \psi)(x) \cdot \left(\AND_R \circ \OR_N\right)(x) + \sum_{x \in\bits^{N \cdot R} \setminus G} |(\Phi \ls \psi)(x)| \right) \nonumber \\
\ge &\sum_{x \in \bits^{N \cdot R}}  ( \Phi \ls \psi)(x) \cdot \left(\AND_R \circ \OR_N\right)(x) - 2 \sum_{x \in \bits^{N \cdot R} \setminus G} |(\Phi \ls \psi)(x)|. \label{eqn:buttz}
\end{align}
We estimate each term of Expression~\eqref{eqn:buttz} separately, beginning with the first term. Just as in the proofs of \Cref{prop:error-amp} and \Cref{prop:degree-amp}, we have
\begin{align*}
\sum_{x \in \bits^{N \cdot R}} (\Phi \ls \psi)(x) \cdot (\AND_R \circ \OR_N)(x) &= \sum_{z \in \bits^R} \Phi(z) \cdot \E_y[\AND_R(\dots, y_i z_i, \dots)]
\end{align*}
where $y \in \bits^R$ is a random string whose $i$th bit independently takes the value $-1$ with probability $2 \sum_{x \in A_{z_i}} |\psi(x)|$. For $z = -\mathbf{1}$, we have by~\eqref{eqn:mybullshitequation} that $2 \sum_{x \in A_{-1}} |\psi(x)| = 0$, so the contribution of the corresponding term to the sum is $1/2$. For $z = \mathbf{1}$, we use the fact that $2 \sum_{x \in A_{+1}} |\psi(x)| \le \delta$ to compute
\begin{align*}
\frac{1}{2} \cdot \E_{y}[\AND_R(\dots, y_i, \dots)] &= \frac{1}{2} \cdot \left(1 - 2\Pr_y[\AND_R(\dots, y_i, \dots) = -1 ] \right) \\
&\ge \frac{1}{2} \cdot \left(1 - 2\delta^R\right).
\end{align*}
Hence, the first summand of~\eqref{eqn:buttz} is at least $1-\delta^R$.

We now estimate the second summand, $2 \sum_{x \notin G} |(\Phi \ls \psi)(x)|$. As in the proofs of \Cref{prop:error-amp} and \Cref{prop:degree-amp}, we let $\lambda$ denote the distribution with probability mass function $\lambda(x) = |\psi(x)|$. Then
\begin{align*}\hspace{-2.2em}
2\sum_{x \notin G} |(\Phi \ls \psi)(x)| &= 2^{R+1} \E_{\lambda^{\otimes R}} [|\Phi(\dots, \sgn \psi(x_i), \dots)| \cdot \mathbb{I}(x \notin G)] \\
&= 2\sum_{z \in \bits^R} |\Phi(z)| \cdot \Pr_{\lambda^{\otimes R}}[x \notin G | (\dots, \sgn(\psi(x_i)), \dots) = z] \\
&= \Pr_{\lambda^{\otimes R}}[x \notin G | (\dots, \sgn(\psi(x_i)), \dots) = \mathbf{-1}] + \Pr_{\lambda^{\otimes R}}[x \notin G | (\dots, \sgn(\psi(x_i)), \dots) = \mathbf{1}].
\end{align*}
We analyze each term of this sum separately. For the first term, observe that by one-sided error of $\psi$, it follows that if $x = (x_1, \dots, x_R)$ is any input $\sgn(\psi(x_i)) = -1$ for all $i$ then $\OR_N(x_i) = -1$ for all $i$. Thus we are guaranteed that $x \in G$, so the contribution of the first summand is zero. To analyze the second summand, let us denote by $r_i \in \{0, 1\}$ the indicator random variable for the event $\OR_N(x_i) = -1$ when $x_i$ is drawn from the conditional distribution $(\lambda | \sgn(\psi(x_i)) = 1)$. Then
\begin{align*}
\Pr[r_i = 1] &= \Pr_{x_i \sim \lambda}[\OR_N(x_i) = -1 | \sgn(\psi(x_i)) = 1] \\
&= 2\sum_{x \in A_{+1}} |\psi(x_i)| \\
&\le \delta
\end{align*}
by~\eqref{eqn:mybullshitequation}. Hence,
\begin{align*}
 \Pr_{\lambda^{\otimes R}}[x \notin G | (\dots, \sgn(\psi(x_i)), \dots) = \mathbf{1}] &\le \Pr\left[ \sum_{i = 1}^R r_i  > \gamma R\right] \\
 &\le \exp\left(-\frac{(\gamma - \delta)R}{3}\right)
\end{align*}
by the multiplicative Chernoff bound.\footnote{The formulation we use here is as follows. Let $r_1, \dots, r_R$ be independent $\{0, 1\}$-valued random variables, $S = \sum_{i=1}^R r_i$, and $\mu = \E[S]$. Then for $\eta > 1$, we have $\Pr[S > (1+\eta) \mu] \le \exp(-\eta \mu / 3)$. In this application, we are taking $\mu \le \delta R$ and $\eta = \gamma R / \mu - 1 > 1$.} Thus, we have
\[2 \sum_{x \notin G} |(\Phi \ls \psi)(x)| \le \exp\left(-\frac{(\gamma - \delta)R}{3}\right).\]
Putting everything together, we see that Expression~\eqref{eqn:buttz} is at least $1 - \delta^R - \exp\left(-(\gamma - \delta) R / 3\right)$ as we wanted to show.
\end{proof}

\subsection{Lower Bound for Junta Testing}
It follows from a reduction 
in Ambainis et al. \cite[Section 6]{juntatesting}
 that for any $N=O(R)$ and sufficiently small constant $\gamma > 0$, a $\tilde{\Omega}(R^{1/2})$ lower bound for the approximate
 degree or quantum query complexity of $\SE_{N, R}^{\gamma}$ implies 
 an $\tilde{\Omega}(k^{1/2})$ approximate degree or quantum query lower bound for $k$-junta testing for proximity parameter $\eps = 1/3$. Hence,
 \Cref{sethm} has the following corollary.
 
 \begin{corollary}
Any quantum tester that distinguishes $k$-juntas from functions that are $(1/3)$-far
from any $k$-junta with error probability at most $1/3$ makes $\tilde{\Omega}(k^{1/2})$ queries to the function.
 \end{corollary}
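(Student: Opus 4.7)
The plan is to derive this corollary as an immediate consequence of Theorem~\ref{sethm} (the $\tilde{\Omega}(R^{1/2})$ lower bound on the quantum query complexity of $\SE_{N,R}^{\gamma}$) combined with the reduction from image size testing to $k$-junta testing established by Ambainis, Belovs, Regev, and de Wolf~\cite[Section 6]{juntatesting}. The paragraph immediately preceding the corollary already sketches this, but I will spell out the pipeline.

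First I would invoke Theorem~\ref{sethm}: for any constant $\gamma \in (0,1)$, there exists a constant $c_{\gamma} > 0$ such that, with $N = c_{\gamma} R$, any bounded-error quantum algorithm for $\SE_{N,R}^{\gamma}$ requires $\tilde{\Omega}(R^{1/2})$ queries. The crucial quantitative feature is that $N$ is \emph{linear} in $R$; the refinement of the Bun--Thaler machinery captured in \Cref{prop:btzeroing} is exactly what makes the lower bound go through in this regime, which is why the preceding sections invested so much effort in the $N = O(R)$ setting.

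Next I would invoke the reduction of \cite[Section 6]{juntatesting}, which shows that any quantum tester distinguishing $k$-juntas from functions $\eps$-far from every $k$-junta (for some absolute constant $\eps$, say $\eps = 1/3$) can be used as a subroutine, with only $\tilde O(1)$ overhead in query complexity, to solve $\SE_{N,R}^{\gamma}$ for some absolute constant $\gamma > 0$, with $R = \Theta(k)$ and $N$ polynomial in $k$. Intuitively, the reduction composes the unknown image-size-testing function with a suitably chosen gadget on $\log n$ bits, so that the composition depends on at most $k$ coordinates precisely when the gadget has small image, and is $\Omega(1)$-far from every $k$-junta when the gadget has full image. Plugging in $R = \Theta(k)$ converts the $\tilde\Omega(R^{1/2})$ lower bound from Theorem~\ref{sethm} into the claimed $\tilde\Omega(k^{1/2})$ lower bound.

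Because all of the technical work has already been done in proving Theorem~\ref{sethm}, there is no genuine obstacle in this final step: it is purely a matter of matching parameters. The only items to verify are (i) that the constant $\gamma$ supplied by the Ambainis et al.\ reduction is compatible with the range of constants $\gamma \in (0,1)$ covered by Theorem~\ref{sethm}, and (ii) that the relation $N = O(R)$ coming from our lower bound aligns with the $N = \poly(k)$ input size required by their reduction (it does, since their reduction tolerates any polynomial relationship between $N$ and $R$). Once those bookkeeping checks are made, the corollary follows without further argument.
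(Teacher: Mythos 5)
Your proposal matches the paper's argument: the corollary is obtained exactly by combining \Cref{sethm} (the $\tilde{\Omega}(R^{1/2})$ lower bound for $\SE^{\gamma}_{N,R}$ with $N=O(R)$) with the black-boxed reduction of Ambainis et al.\ \cite[Section 6]{juntatesting}, checking only that the constant $\gamma$ and the relation $N=O(R)$ are compatible. The paper gives no more detail than this, so your write-up is essentially the paper's proof spelled out.
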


\subsection{Lower Bound for \texorpdfstring{$\SDU$}{SDU}}

The goal of this section is to derive a lower bound
for approximating the statistical distance of an input distribution
from uniform up to some additive constant error. We formalize this problem as follows.

Given an input $(s_1, \dots, s_N) \in [R]^N$, and $i \in [R]$, let
$f_i = |\{j \colon s_j=i\}|$, and let $p$ be the probability distribution over $[R]$
such that $p_i=f_i/N$.  For $N \ge R$ and $0 < \gamma_2 < \gamma_1 < 1$, define the partial function $\SDU_{N, R}^{\gamma_1, \gamma_2}$ as follows. 

\begin{definition} Define $$\SDU_{N, R}^{\gamma_1, \gamma_2}(s_1, \dots, s_N) = \begin{cases} 
-1 & \text{ if } \frac{1}{2} \sum_{i=1}^R |p_i - 1/R| \leq \gamma_1\\
1 & \text{ if } \frac{1}{2} \sum_{i=1}^R |p_i - 1/R| \geq \gamma_2\\
\text{undefined} & \text{otherwise}.\end{cases}$$
Above, $ \frac{1}{2} \sum_{i=1}^R |p_i - 1/R| $ is the statistical
distance between $p$ and the uniform distribution.
\end{definition}

The $\SDU$ problem reduces to $\SE$ in the sense that any approximating polynomial for $\SDU$ implies the existence of an approximation to $\SE$ of the same degree. Hence, the approximate degree of $\SDU$ is at least as large as that of $\SE$. For intuition as to why this is true, let us relate $\SE^{1/2}_{N, R}$ to $\SDU^{1/2, 0}_{N, R}$ in the special case where $N = R$ and no dummy (i.e., 0) items appear in the input to $\SE$. If $(s_1, \dots, s_N)$ is a true input to $\SE^{1/2}_{N, R}$, i.e., $\SE^{1/2}_{N, R}(s_1, \dots, s_N) = -1$, then every index $i \in [R]$ must appear in the input list exactly once. Hence, the distribution represented by $(s_1, \dots, s_N)$ is exactly uniform, so $\SDU^{1/2, 0}_{N, R}(s_1, \dots, s_N) = -1$. On the other hand, if $\SE^{1/2}_{N, R}(s_1, \dots, s_N) = 1$, then at most $R/2$ indices $i \in [R]$ appear in the input list, so the list represents a distribution with statistical distance at least $1/2$ from uniform. Thus, an approximating polynomial for $\SDU^{1/2, 0}_{N, R}$ is also an approximating polynomial for $\SE^{1/2}_{N, R}$.

We will actually need a more general relationship between the approximate degrees of $\SDU$ and $\SE$ to handle the fact that we cannot take $N$ to be exactly equal to $R$ in the $\SE$ lower bound, as well as to handle the occurrences of dummy items in the definition of $\SE$.

\begin{theorem} \label{sduthm}
 For some $N = O(R)$, and some constants $0 < \gamma_2 < \gamma_1 < 1$, $\adeg\left(\SDU_{N, R}^{\gamma_1, \gamma_2}\right) \ge \tOmega(R^{1/2})$.
 The same lower bound applies to the quantum query complexity of $\SDU_{N, R}^{\gamma_1, \gamma_2}$.
 \end{theorem}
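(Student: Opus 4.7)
I would prove \Cref{sduthm} by establishing a low-degree polynomial reduction from $\SE$ to $\SDU$. Concretely, I would show that any $(1/3)$-approximating polynomial $P$ of degree $d$ for $\SDU^{\gamma_1, \gamma_2}_{N, R}$ yields a $(1/3)$-approximating polynomial $P^*$ of degree $O(d \log R)$ for $\SE^{\gamma}_{N, R}$, for suitable absolute constants $\gamma, \gamma_1, \gamma_2$. Combined with the $\tilde{\Omega}(R^{1/2})$ lower bound from \Cref{sethm}, applied with $N = c \gamma^{-1/2} R = O(R)$, this gives $\adeg(\SDU^{\gamma_1, \gamma_2}_{N, R}) \geq \tilde{\Omega}(R^{1/2})$. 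The quantum query lower bound follows immediately from the fact that quantum query complexity is bounded below by half the approximate degree \cite{beals}.

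The reduction simply replaces dummy entries by a fixed range item: given an $\SE$ input $s = (s_1, \dots, s_N) \in [R]_0^N$, define $s' \in [R]^N$ by $s'_i = \max(s_i, 1)$, and set $P^*(s) = P(s'_1, \dots, s'_N)$. Each bit of $s'_i$ is a Boolean function of the $\lceil\log(R+1)\rceil$ bits encoding $s_i$, so the map $s \mapsto s'$ is computable coordinate-wise by a polynomial of degree at most $\log(R+1)$, and therefore $\deg(P^*) = O(d\log R)$. To verify correctness, let $p_i = |\{j : s'_j = i\}|/N$ be the distribution induced by $s'$ over $[R]$, and let $U$ denote the uniform distribution on $[R]$. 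In the YES case, if $s$ has full image over $[R]$, then every $p_i \geq 1/N$ (since every $i \in [R]$ already appears in $s$, hence in $s'$), so $\sum_i \min(p_i, 1/R) \geq R/N$, yielding $\operatorname{SD}(p, U) \leq 1 - R/N = 1 - \gamma^{1/2}/c$. In the NO case, if $s$ has image at most $\gamma R$ over $[R]$, then $s'$ has support at most $\gamma R + 1$ (the image of $s$ over $[R]$, possibly augmented by $\{1\}$ to account for the dummies mapped up by $\max$), yielding $\operatorname{SD}(p, U) \geq 1 - (\gamma R + 1)/R = 1 - \gamma - 1/R$. Choosing $\gamma$ small enough that $c\gamma + c/R < \gamma^{1/2}$ for large $R$ (e.g., $\gamma < 1/(4c^2)$) and fixing any constants with $1 - \gamma^{1/2}/c \leq \gamma_1 < \gamma_2 \leq 1 - \gamma - 1/R$ ensures that $\SDU$ outputs the correct sign on both YES and NO $\SE$ inputs, so $P^*$ is a valid $(1/3)$-approximation to $\SE^{\gamma}_{N, R}$.

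The argument is quite short and the only real subtlety is choosing the constants $\gamma, \gamma_1, \gamma_2$ consistently. There is no serious technical obstacle; rather, the main conceptual point is recognizing that the simple bound $\operatorname{SD}(p, U_{[R]}) \leq 1 - R/N$ available for any full-support distribution whose probabilities are integer multiples of $1/N$ gives exactly the gap needed to separate SE YES from SE NO at the SDU level, provided $\gamma$ is chosen small enough relative to $c = N/R$ to maintain a constant gap between $1 - \gamma^{1/2}/c$ and $1 - \gamma$. This formalizes the intuition sketched in \Cref{s:final} that the trivial $N = R$, no-dummies reduction extends to the parameter regime of \Cref{sethm} by appropriate rescaling.
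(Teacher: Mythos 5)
Your proposal is correct and follows essentially the same route as the paper's proof: both reduce $\SE^{\gamma}_{N,R}$ to $\SDU$ by absorbing the dummy item (you fold it into range element $1$ at a harmless $\log R$ cost in degree, while the paper simply relabels it as a fresh element $R+1$), and both rest on the identical pair of bounds — a full-image input whose probabilities are integer multiples of $1/N$ has statistical distance at most $1-R/N$ from uniform, while an input of support $O(\gamma R)$ has distance at least $1-O(\gamma)$ — followed by a choice of constants with $\gamma$ small relative to $c=N/R$. The one discrepancy is that your constants satisfy $\gamma_1 < \gamma_2$, the reverse of the ordering $\gamma_2 < \gamma_1$ appearing in the theorem statement and in the paper's choice $\gamma_1 = 1-3\gamma/2$, $\gamma_2 = 1-\gamma^{1/2}/c$; since your ordering is the one under which the two promise regions of $\SDU^{\gamma_1,\gamma_2}$ are actually disjoint, this is a notational mismatch with the paper's convention rather than a gap in the argument.
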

\label{s:seimplications}
\begin{proof}
Fix $R >0$, let $c$ be the constant from \Cref{sethm},
$\gamma<(2/3c)^2$ be a sufficiently small constant, and $N= c \cdot \gamma^{-1/2} \cdot R$. As inputs in $\SE_{N, R}^{\gamma}$ are interpreted as
elements of $[R]_0^N$, we can equivalently interpret them as elements of $[R+1]^N$,
i.e., as inputs to $\SDU_{N, R+1}^{\gamma_1, \gamma_2}$, for any desired $0 < \gamma_1 < \gamma_2 < 1$.

Set $\gamma_1 = 1-3\gamma/2$ and $\gamma_2=1-\gamma^{1/2}/c$.  
Observe that since $\gamma < (2/3c)^2$, $\gamma_1$ is strictly
greater than $\gamma_2$.
We claim that
\begin{align*} 
&\left(\SE_{N, R}^{\gamma}\right)^{-1}(-1) \subseteq \left(\SDU_{N, R+1}^{\gamma_1, \gamma_2}\right)^{-1}(-1), \text{ and} \\
&\left(\SE_{N, R}^{\gamma}\right)^{-1}(+1) \subseteq \left(\SDU_{N, R+1}^{\gamma_1, \gamma_2}\right)^{-1}(+1).
\end{align*}
Indeed, since inputs in $\left(\SE_{N, R}^{\gamma}\right)^{-1}(-1)$ define a probability distribution over $[R+1]$ with support size at least $R$, with all probabilities being integer multiples of $1/N=\gamma^{1/2}/(cR)$, the statistical distance between any such distribution and
the uniform distribution is at most $1-R/N = 1-\gamma^{1/2}/c$. This follows from the following calculation. 
Amongst probability distributions $(p_1, \dots, p_{R+1})$ over $[R+1]$ with support size at least $R$ and all probabilities $p_i$ being integer multiples of $1/N$,
it is not hard to see that one maximizes the statistical distance from the uniform distribution over $[R+1]$ by setting $p_1=1-\frac{R-1}{N}$, $p_2=p_3 = \dots = p_{R}=1/N$, and $p_{R+1}=0$. 
The statistical distance from uniform is: 
\begin{align*} 
\frac{1}{2} \left( \left(1-\frac{R-1}{N}-\frac{1}{R+1}\right) + \left(R-1\right)\left(\frac{1}{R+1} - \frac1N\right) + \frac{1}{R+1} \right)\\
= 1-\frac{R}{N} + \frac1N - \frac{1}{R+1} \leq 1-\frac{R}{N},
\end{align*}
where we have assumed that $N \geq R+1$, which is true for sufficiently small choice of $\gamma$.

Similarly, since inputs in $\left(\SE_{N, R}^{\gamma}\right)^{-1}(1)$ define a probability distribution over $[R+1]$ with support size at most $\gamma \cdot R +1$, 
the statistical distance between any such distribution and
the uniform distribution is at least 
$1-3\gamma/2$. To see
this, let $p=(p_1, \dots, p_{R+1})$ be any distribution of support size at most $\gamma \cdot R$, and let $S = \{i \colon p_i=0\}$ and $\bar{S}$ be the complement of $S$.
Then the statistical distance of $p$ from uniform is at least 
\begin{align*}
\frac{1}{2} \left(\sum_{i \in S} \left(p_i - \frac{1}{R+1} \right) \right) + 
\frac{1}{2} \left(\sum_{i \in \bar{S}} \left( p_i - \frac{1}{R+1} \right) \right) 
=
\frac{1}{2} \left(\frac{|S|}{R+1} + \left(\sum_{i \in \bar{S}} p_i\right) - \frac{|\bar{S}|}{R+1}\right)\\
= \frac{1}{2} \left(\frac{|S|}{R+1} + 1 - \frac{|\bar{S}|}{R+1}\right)
\geq  \frac{1}{2} \left(\frac{(1-\gamma) R}{R+1} + 1 - \frac{\gamma R + 1}{R+1}\right)
=  \frac{1}{2} \left(1 +  \frac{R-2\gamma R - 1}{R+1}\right)\\
= 1 - \frac{\gamma R + 1}{R+1} = 1 - \gamma - \frac{\gamma + 1}{R+1} \geq 1-3\gamma/2.
\end{align*}

It is an easy consequence of the above that any $\eps$-approximating polynomial of degree $d$
for $\SDU_{N, R+1}^{\gamma_1, \gamma_2}$ implies an approximation to $\SE_{N, R}^{\gamma}$ of the same degree (i.e., that $\adeg_{\eps}(\SE_{N, R}^{\gamma}) \leq d$).
\Cref{sduthm} then follows from \Cref{sethm}.
\end{proof}

\subsection{Lower Bound for Entropy Comparison and Approximation}

Given a distribution $p$ over $[R]$, the Shannon entropy of $p$, denoted $H(p)$,
is defined to be $H(p):=\sum_{i \in [R]} p_i \log_2(1/p_i)$. Following Goldreich and Vadhan \cite{goldreichvadhansurvey}, we define a partial function $\mathsf{GapCmprEnt}_{N, R}^{\alpha, \beta}$ capturing the problem of comparing the entropies of two distributions.

The function $\mathsf{GapCmprEnt}_{N, R}^{\alpha, \beta}$  takes as input two 
vectors in $[R]^N$ and interprets each vector $i \in \{1, 2\}$ as a probability
distribution $p_i$ over $[R]$, with $p_i(j)=f_{i,j}/N$ where $f_{i,j}$ is the frequency of $j$
in the $i$th vector. The function $\mathsf{GapCmprEnt}_{N, R}^{\alpha, \beta}$ evaluates to 
$$\begin{cases}
-1 & \text{ if } H(p_1)-H(p_2) \leq \beta\\
1 & \text{ if } H(p_1)-H(p_2) \geq \alpha \\
\text{undefined} & \text{otherwise}.
\end{cases}
$$

\begin{theorem} \label{entropysucks} There exist constants $0 < \beta < \alpha < 1$ such that $\adeg(\mathsf{GapCmprEnt}^{\alpha, \beta}_{N, R}) = \tilde{\Omega}(R^{1/2}).$ The same lower bound applies to the quantum query complexity of 
$\mathsf{GapCmprEnt}^{\alpha, \beta}_{N, R}$.\end{theorem}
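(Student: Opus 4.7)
The plan is to derive \Cref{entropysucks} as a corollary of \Cref{sduthm} via a query-efficient reduction from $\SDU$ to $\mathsf{GapCmprEnt}$ due to Vadhan. At a high level, the reduction exploits the fact that the Shannon entropy of a distribution $p$ on $[R]$ and its statistical distance from the uniform distribution $U_R$ are tightly linked via Pinsker-type inequalities, so that deciding a gap version of one problem reduces to the other.

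Concretely, I would proceed as follows. Given an input $(s_1, \dots, s_N) \in [R]^N$ to $\SDU^{\gamma_1, \gamma_2}_{N, R}$ inducing the distribution $p_i = f_i / N$, I construct an input $(p_1, p_2)$ to $\mathsf{GapCmprEnt}^{\alpha, \beta}_{N', R'}$ by taking $p_2 := p$ and $p_1$ to be a fixed, hardcoded distribution as close as possible to uniform on $[R']$ (for instance by choosing $N'$, $R'$ with $R'$ dividing $N'$, so that $p_1$ is exactly uniform and representable as a list). Because $p_1$ is known in advance, each query to the $\mathsf{GapCmprEnt}$ input requires at most one query to the underlying $\SDU$ input, so any $Q$-query algorithm for $\mathsf{GapCmprEnt}$ yields a $Q$-query algorithm for $\SDU$, and the bound $Q = \tilde{\Omega}(R^{1/2})$ transfers directly from \Cref{sduthm}.

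The main technical step is to choose $\alpha$ and $\beta$ so that the promise is respected. For the ``far from uniform'' case ($\tfrac{1}{2}\sum_i |p_i - 1/R| \geq \gamma_2$), Pinsker's inequality together with the identity $D(p \Vert U_R) = \log_2 R - H(p)$ (in bits) gives
\[ \log_2 R - H(p) \;\geq\; (2/\ln 2) \cdot \Big(\tfrac{1}{2}\sum_i |p_i - 1/R|\Big)^{\!2} \;\geq\; (2/\ln 2)\, \gamma_2^2, \]
so I may take $\alpha := (2/\ln 2)\, \gamma_2^2$. For the ``close to uniform'' case, I need $\log_2 R - H(p) \leq \beta$ whenever $\tfrac{1}{2}\sum_i |p_i - 1/R| \leq \gamma_1$, with $\beta < \alpha$. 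Here I would invoke Fannes' inequality (or an analogous continuity bound), which yields $\log_2 R - H(p) \leq \gamma_1 \log_2 R + H_2(\gamma_1)$; this is constant only when $\gamma_1 = O(1/\log R)$. Since the $\SDU$ instances produced by \Cref{sduthm} have both $\gamma_1$ and $\gamma_2$ close to $1$, I would first amplify the $\SDU$ hardness using a polarization-style reduction (tensoring and combining input distributions in the query model) to push $\gamma_1$ down to $O(1/\log R)$ while keeping $\gamma_2$ bounded away from $\gamma_1$, at a cost of at most a polylogarithmic factor in queries that is absorbed into the $\tilde{\Omega}$ notation.

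The gap-amplification step is the main obstacle: one must verify that the polarization can be carried out with the required query overhead and that the amplified $\SDU$ problem still admits the $\tilde{\Omega}(R^{1/2})$ lower bound. An alternative, should polarization prove unwieldy, is to strengthen \Cref{sduthm} directly by restricting the $-1$ class to input distributions that are simultaneously close to uniform \emph{and} have entropy close to $\log_2 R$, and then verifying that the dual witness constructed in \Cref{sec:ughsection} still correlates well with the restricted function. Once the constant gap $\alpha > \beta > 0$ is established, the approximate-degree lower bound is immediate via \Cref{prop:duality}, and the matching quantum query lower bound follows from the standard connection between quantum query complexity and approximate degree of~\cite{beals}.
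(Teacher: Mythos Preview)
Your overall strategy of reducing $\SDU$ to $\mathsf{GapCmprEnt}$ is correct, but the specific reduction you propose has a genuine gap, and the fix you suggest would not work.

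The naive reduction you describe --- set $p_1$ to the uniform distribution and $p_2$ to the input distribution $p$, then compare $H(p_1) - H(p_2) = \log_2 R - H(p)$ --- runs into exactly the obstacle you identify: Fannes' inequality only controls $\log_2 R - H(p)$ by a constant when $\gamma_1 = O(1/\log R)$, whereas the $\gamma_1$ supplied by \Cref{sduthm} is a constant close to $1$. Your proposed repair via polarization does not save the argument. As the paper itself observes (see the discussion in \Cref{s:conclusion}), the Sahai--Vadhan Polarization Lemma blows up the input parameters $N$ and $R$ polynomially, so the $\tilde{\Omega}(R^{1/2})$ lower bound for the original instance does not transfer to the polarized instance. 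This is not merely a technicality to be checked; it is the reason the paper lists the analogous question for $\SDU^{2/3,1/3}$ as open.

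The paper avoids this entirely by using a different reduction of Vadhan \cite[Claim 4.4.2]{vadhan}. Rather than comparing $p$ directly to uniform, it constructs from the single input distribution $p$ two distributions $p_1, p_2$ over $[R] \times \{0,1\}$: sample $s \in \{0,1\}$ uniformly, draw $r$ from $p$ if $s = 0$ and from $U_R$ if $s = 1$, then let $p_2$ output $(r, s)$ while $p_1$ outputs $(r, b)$ for a fresh random bit $b$. One always has $H(p_1) = v + 1$ where $v = \tfrac{1}{2}H(p) + \tfrac{1}{2}\log_2 R$, and one can show that if $p$ has statistical distance $\delta$ from uniform then $1 - \delta \leq H(p_1) - H(p_2) \leq H((1+\delta)/2)$. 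The entropy gap is therefore automatically confined to $[0,1]$ regardless of $\log_2 R$, and the reduction applies as long as $H((1+\gamma_1)/2) < 1 - \gamma_2$ with a constant slack. The paper verifies this inequality holds for the specific $\gamma_1 = 1 - 3\gamma/2$ and $\gamma_2 = 1 - \gamma^{1/2}/c$ produced by \Cref{sduthm}, by taking the free constant $\gamma$ small enough. Crucially, this reduction only inflates $N$ and $R$ by constant factors, so the $\tilde{\Omega}(R^{1/2})$ lower bound transfers intact.
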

\begin{proof}
Vadhan \cite[Claim 4.4.2 and Remark 4.4.3]{vadhan} showed that as long as $H( (1+\gamma_1)/2) < 1-\gamma_2 - \lambda$, 
then $\SDU_{N, R}^{\gamma_1, \gamma_2}$ is reducible
to $\mathsf{GapCmprEnt}^{\alpha, \beta}_{4N, 2R}$ for some constants $\alpha, \beta$
such that $\alpha-\beta=\lambda$. 
This reduction (described next for completeness) implies that $\adeg_{\eps}(\SDU_{N, R}^{\gamma_1, \gamma_2}) \leq 
\adeg_{\eps}(\mathsf{GapCmprEnt}^{\alpha, \beta}_{4N, 2R})$.

For completeness,
we sketch this transformation, closely following
the presentation of Goldreich and Vadhan \cite{goldreichvadhansurvey}.
At a high level, the reduction transforms
an input in $[R]^N$ to $\SDU_{N, R}^{\gamma_1, \gamma_2}$ (interpreted as a distribution $p$ over $[R]$) into
two distributions $p_1, p_2$ over $[R] \times \{0, 1\}$ as follows.
Both $p_1$ and $p_2$ start by sampling an $s \in \{0, 1\}$ at random. If $s=0$, then a random sample $r$ is chosen from $p$, 
and if $s=1$, then $r$ is set to a uniform random sample from $[R]$. Distribution $p_2$ outputs $(r, s)$, while $p_1$ outputs $(r, b)$ for a random $b \in \{0, 1\}$. 

The entropy of $p_1$
is always $v + 1$, where $v=\frac{1}{2} H(p) + \frac{1}{2} \log_2(R)$. 
As for the entropy of $p_2$, if $p$ is far from the uniform distribution, then the selection bit $s$ will be essentially determined by the sample $r$. Hence,
the entropy of $p_2$ will be approximately $v$, which is noticeably smaller than
the entropy of $p_1$. On the other hand, if the two input distributions are close then
(even conditioned on the sample selected) the selection bit $s$ will be almost
random and so $H(p_2) \approx v+1$, which is approximately the same as $H(p_1)$.
Quantitatively, Vadhan \cite{vadhan} shows that if the statistical distance between $p$ and the uniform distribution is $\delta$,
then $1-\delta \leq H(p_1) - H(p_2) \leq H( (1+\delta)/2)$.

Since we are considering distributions specified as vectors in $[R]^N$, this transformation
can be equivalently described as follows. Assume for simplicity that $R$ divides $N$. 
If $p$ is specified by a vector $u$ in $[R]^N$, then $p_2$ is specified by a vector $w$ in $([R]\times\{0,1\})^{4N}$ defined as follows. For all $i \in [N]$ and $j \in \{0, 1\}$, $w_{i, j}=(u_i, 0)$, and for $j \in \{2, 3\}$, $w_{i, j}=(\lceil Ri/N\rceil, 1)$. Similarly, 
$p_1$ is specified by a vector $v$ in $([R]\times\{0,1\})^{4N}$. For $i \in [R]$ and $j \in \{0, 1\}$, $v_{i, j}=(u_i, j)$, and for $j \in \{2, 3\}$, $v_{i, j} = (\lceil Ri/N\rceil, j-2)$. 
Observe that when representing $u$, $v$, and $w$ as vectors in $\bits^{N \log_2(R)}$
or $\bits^{4N \cdot \log_2(R)}$, each bit of $v$ and $w$ depends on at most one bit of $u$.  

Recall that in the statement of \Cref{sduthm}, $\gamma_1 = 1-3\gamma/2$ and $\gamma_2=1-\gamma^{1/2}/c$,
where $\gamma$ is an arbitrary constant less than $(2/3c)^2$, where $c>1$
is the constant from \Cref{sethm}.
Clearly, 
$H( (1+\gamma_1)/2)) = H(1-3 \gamma/4) = H(3 \gamma/4)$.
Using the fact that for any $p \in [0, 1/2]$,
\begin{align*}
H(p) &= -p \log_2(p) - (1-p) \log_2(1-p) \\
&\le -2p\log_2(p) \\
&\le 6p^{3/4},
\end{align*}
it follows that for some constant $\gamma \le 1/648c^4$, we have $H(3 \gamma/4) < \gamma^{1/2}/c$.
Hence, Vadhan's reduction from $\SDU_{N, R}^{\gamma_1, \gamma_2}$ to $\mathsf{GapCmprEnt}^{\alpha, \beta}_{4N, 2R}$ applies to this setting
of $\gamma_1$ and $\gamma_2$, and this shows
that any degree $d$ $\eps$-approximating polynomial for $\mathsf{GapCmprEnt}^{\alpha, \beta}_{4N, 2R}$ implies a degree $d$ polynomial $\eps$-approximating
polynomial for  $\SDU_{N, R}^{\gamma_1, \gamma_2}$.

Combined with \Cref{sduthm}, this implies that
$\adeg(\mathsf{GapCmprEnt}^{\alpha, \beta}_{4N, 2R}) = \tilde{\Omega}(R^{1/2}).$
\end{proof}

Clearly, a quantum query algorithm
that approximates entropy
up to additive error $(\alpha-\beta)/4$ can be used to solve $\mathsf{GapCmprEnt}^{\alpha, \beta}_{4N, 2R}$, by (approximately) computing the entropies of each of the two input distributions, and determining whether the difference is at most $(\beta + \alpha)/2$. Hence, \Cref{entropysucks} implies the following
lower bound for approximating entropy to additive error $\alpha-\beta$.

\begin{corollary}
Let $N=c \cdot R$ for a sufficiently large constant $c$. 
Interpret an input in $[R]^N$ as a distribution $p$ 
in the natural way (i.e., for each $j \in [R]$, $p_j=f_j/N$, where $f_j$
is the number of times $j$ appears in the input). 
There is a constant $\eps > 0$ such that any quantum algorithm
that approximates the entropy of $p$ up to additive error $\eps$
with probability at least $2/3$
requires $\tilde{\Omega}(R^{1/2})$ queries.
\end{corollary}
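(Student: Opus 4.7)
The plan is to prove the corollary by a direct reduction from $\mathsf{GapCmprEnt}$ to the entropy approximation problem, and then invoke \Cref{entropysucks}. Let $\alpha, \beta$ be the constants guaranteed by \Cref{entropysucks}, and set $\eps := (\alpha - \beta)/4 > 0$. Note that \Cref{entropysucks} furnishes a $\tilde{\Omega}(R^{1/2})$ quantum query lower bound for $\mathsf{GapCmprEnt}^{\alpha,\beta}_{N,R}$ for some $N = O(R)$; the constant $c$ in the corollary should be chosen at least as large as the constant hidden in this $O(R)$, so that inputs to $\mathsf{GapCmprEnt}$ may be presented as pairs of vectors in $[R]^N$.

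Concretely, suppose towards a contradiction that a quantum algorithm $\mathcal{A}$ approximates the Shannon entropy of any distribution specified by an input in $[R]^N$ to additive error $\eps$, with success probability at least $2/3$, using $Q(R)$ queries. Standard amplification (running $\mathcal{A}$ a constant number of times independently and outputting the median of the estimates) yields an algorithm $\mathcal{A}'$ with success probability at least $5/6$ and query complexity $O(Q(R))$. Given an instance $(u, w) \in [R]^N \times [R]^N$ of $\mathsf{GapCmprEnt}^{\alpha,\beta}_{N,R}$, which encodes distributions $p_1$ and $p_2$, run $\mathcal{A}'$ on $u$ and $w$ separately to obtain estimates $\tilde{H}_1$ and $\tilde{H}_2$ of $H(p_1)$ and $H(p_2)$. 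By a union bound, with probability at least $2/3$ both estimates are accurate to within $\eps$, and hence $\tilde{H}_1 - \tilde{H}_2$ lies within $2\eps = (\alpha - \beta)/2$ of the true difference $H(p_1) - H(p_2)$. Outputting $-1$ if $\tilde{H}_1 - \tilde{H}_2 \le (\alpha + \beta)/2$ and $+1$ otherwise then correctly decides $\mathsf{GapCmprEnt}^{\alpha,\beta}_{N,R}$ with probability at least $2/3$, using a total of $O(Q(R))$ queries to the combined input.

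Combining this reduction with the $\tilde{\Omega}(R^{1/2})$ lower bound of \Cref{entropysucks} immediately yields $Q(R) = \tilde{\Omega}(R^{1/2})$, as claimed. There is no genuine obstacle in this plan: the only subtleties are the bookkeeping for success probability amplification and the verification that the parameter regime $N = cR$ of the entropy approximation problem lines up with the parameter regime $N = O(R)$ in which the comparison lower bound of \Cref{entropysucks} holds. Both are handled by taking the constant $c$ in the corollary to be sufficiently large.
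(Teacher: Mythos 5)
Your proposal is correct and follows essentially the same route as the paper: reduce $\mathsf{GapCmprEnt}^{\alpha,\beta}$ to entropy approximation by estimating each distribution's entropy to additive error $(\alpha-\beta)/4$ and thresholding the difference at $(\alpha+\beta)/2$, then invoke \Cref{entropysucks}. The amplification and parameter-matching details you supply are routine and consistent with the paper's (terser) argument.
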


 
\section{Conclusion and Open Questions}
\label{s:conclusion}
We conclude by briefly describing some additional consequences of our results,
as well as a number of open questions and directions for future work.

\subsection{Additional Consequences: Approximate Degree Lower Bounds for DNFs and \texorpdfstring{AC$^0$}{AC0}}
For any constant $k > 0$, $k$-distinctness is computed by a DNF of polynomial size.
Our $\tilde{\Omega}\left(n^{3/4-1/(2k)}\right)$ is the best known lower bound on
the approximate degree of polynomial size DNF formulae. The previous best was $\tilde{\Omega}(n^{2/3})$ for Element Distinctness (a.k.a., $2$-Distinctness) \cite{aaronsonshi}, although Bun and Thaler did establish, for any $\delta>0$, an $\Omega(n^{1-\delta})$ lower bound on the approximate degree of \emph{quasi}polynomial size DNFs.

Similarly, for any constant $k\geq 1$, Bun and Thaler exhibited an AC$^0$ circuit of depth $2k-1$
with approximate degree $\tilde{\Omega}\left(n^{1-2^{k-2}/3^{k-1}}\right)$. 
Our techniques can be used to give a polynomial improvement for any fixed $k\geq 2$, to
$\tilde{\Omega}\left(n^{1-2^{-k}}\right)$ (\Cref{thm:introsurj} 
is the special case of $k=2$, as $\SURJ$ is computed by an AC$^0$ 
circuit of depth three). We omit further details of this result for brevity. 

\subsection{Open Problems}
The most obvious direction for future work is to extend our techniques
to resolve the approximate degree and quantum query complexity of additional
problems of interest in the study of quantum algorithms. These include
triangle finding problem~\cite{triangles3,triangles5}, graph collision~\cite{triangles3}, and verifying matrix products~\cite{robin3,robin2}.
It would also be interesting to close the gap between our $\Omega(n^{3/4-1/(2k)})$
lower bound for $k$-distinctness and Belovs' 
$O\left(n^{3/4-1/(2^{k+2} - 4})\right)$ upper bound, especially for small values of $k$ (e.g., $k=3$).

 Although we prove a lower bound of $\tilde{\Omega}(R^{1/2})$
for $\SDU^{\gamma_1, \gamma_2}_{N, R}$ for some constants $0 < \gamma_2 < \gamma_1$, we leave open whether or not $\SDU^{2/3, 1/3}_{N, R} = \tilde{\Omega}(R^{1/2})$.
It may be tempting to suspect that \Cref{sduthm} implies
an $\tilde{\Omega}(R^{1/2})$ lower bound on $\SDU_{N, R}^{2/3, 1/3}$,
by invoking the well-known Polarization Lemma of Sahai and Vadhan \cite{sahaivadhan}. The Polarization Lemma 
reduces $\SDU^{\gamma_1, \gamma_2}_{N, R}$ for any pair of constant $\gamma_1, \gamma_2$
with $\gamma_2 < \gamma_1^2$ to  $\SDU^{2/3, 1/3}_{N', R'}$ for an appropriate choice of $N'$ and $R'$. Unfortunately, $N'$ and $R'$
may be polynomially larger than $N$ and $R$, so this reduction does not give an $\tilde{\Omega}(R^{1/2})$ lower bound for $\SDU^{2/3, 1/3}_{N, R}$ itself.

Another important direction is to resolve
the approximate degree of specific classes of functions, especially polynomial
size DNF formulae, and AC$^0$ circuits. 
As mentioned in the previous subsection, our $k$-distinctness lower bound (\Cref{thm:introkdist}) gives the best known lower bound on polynomial size DNFs. A compelling candidate for improving this lower bound is the $k$-sum function, which may have approximate degree as large as $\Theta(n^{k/(k+1)})$ (it is known that the quantum query complexity of $k$-sum is $\tilde{\Theta}(n^{k/(k+1)})$ \cite{ambainis, belovsspalek}). On the upper bounds side, it may be possible to extend the techniques underlying our $\tilde{O}(n^{3/4})$ upper bound on the approximate degree of $\SURJ$ to yield a sublinear upper bound for \emph{every} DNF formula of polynomial size. 

\begin{openproblem} \label{op4} For every constant $c>0$ and every DNF formula $f\colon \bits^n \to \bits$ of size at most $n^c$,
is there a $\delta > 0$ (depending only on $c$) such that $\adeg(f) = O(n^{1-\delta})$? \end{openproblem}

A positive answer to \Cref{op4} 
would have major algorithmic consequences, including a subexponential time algorithm for agnostically learning
DNF formulae \cite{kkms} (and PAC learning depth three circuits \cite{ksdnf}) of any fixed polynomial size. 

For general AC$^0$ circuits, an $\Omega(n^{1-\delta})$ approximate degree lower bound is already known \cite{adegsurj}. 
It would be very interesting to improve this lower bound to an optimal $\Omega(n)$.
Until recently, $\SURJ$ was a prime candidate for exhibiting such a lower bound. However, owing to Sherstov's upper bound~\cite{sherstovpersonal} and \Cref{thm:introsurj},
$\textsf{SURJ}$ is no longer a candidate function. 
However, we are optimistic about the following closely related candidate. An approximate majority
function is any total Boolean function that evaluates to $-1$ (respectively $+1$) whenever at least $2/3$ of its 
inputs are $-1$ (respectively $+1$). 
It is well-known (via the probabilistic method) that there are approximate majorities computable by depth 3 circuits 
of quadratic size and logarithmic bottom fan-in \cite{ajtai}. 
It is possible that \emph{every} approximate majority has approximate degree $\Omega(n)$;
proving this would resolve a question of Srinivasan \cite{filmus2014real}. 


\section*{Acknowledgements} We are grateful to Sasha Sherstov for an inspiring conversation at the BIRS 2017 workshop on Communication Complexity and Applications, II, which helped to spark this work. We also thank the anonymous STOC and Theory of Computing reviewers for comments improving the presentation of this manuscript.

This work was done while M.~B.~was a postdoctoral researcher at Princeton University. Some of this work was performed when R.~K.~was a postdoctoral associate at MIT and was partly supported by NSF grant CCF-1629809. 

\bibliographystyle{alpha}
\phantomsection\addcontentsline{toc}{section}{References} 
\bibliography{clean}

\newcommand{\etalchar}[1]{$^{#1}$}
\begin{thebibliography}{BNRdW07}

\bibitem[Aar12]{qqc2}
Scott Aaronson.
\newblock Impossibility of succinct quantum proofs for collision-freeness.
\newblock {\em Quantum Information {\&} Computation}, 12(1-2):21--28, 2012.

\bibitem[ABK16]{cheatsheets}
Scott Aaronson, Shalev Ben{-}David, and Robin Kothari.
\newblock Separations in query complexity using cheat sheets.
\newblock In {\em Proceedings of the 48th Annual {ACM} {SIGACT} Symposium on
  Theory of Computing, {STOC} 2016, Cambridge, MA, USA, June 18-21, 2016},
  pages 863--876, 2016.

\bibitem[ABRdW16]{juntatesting}
Andris Ambainis, Aleksandrs Belovs, Oded Regev, and Ronald de~Wolf.
\newblock Efficient quantum algorithms for (gapped) group testing and junta
  testing.
\newblock In {\em Proceedings of the Twenty-Seventh Annual ACM-SIAM Symposium
  on Discrete Algorithms}, pages 903--922. Society for Industrial and Applied
  Mathematics, 2016.

\bibitem[ACR{\etalchar{+}}10]{readonceformulae}
Andris Ambainis, Andrew~M. Childs, Ben Reichardt, Robert \v{S}palek, and
  Shengyu Zhang.
\newblock Any {A}{N}{D}-{O}{R} formula of size n can be evaluated in time
  n$^{\mbox{1/2+o(1)}}$ on a quantum computer.
\newblock {\em SIAM J. Comput.}, 39(6):2513--2530, 2010.

\bibitem[Ajt83]{ajtai}
Mikl{\'o}s Ajtai.
\newblock $\sigma_1^1$-formulae on finite structures.
\newblock {\em Annals of pure and applied logic}, 24(1):1--48, 1983.

\bibitem[Amb02]{Amb02}
Andris Ambainis.
\newblock Quantum lower bounds by quantum arguments.
\newblock {\em Journal of Computer and System Sciences}, 64(4):750--767, June
  2002.

\bibitem[Amb03]{ambainissep}
Andris Ambainis.
\newblock Polynomial degree vs. quantum query complexity.
\newblock In {\em Foundations of Computer Science, 2003. Proceedings. 44th
  Annual IEEE Symposium on}, pages 230--239. IEEE, 2003.

\bibitem[Amb05]{ambainissmallrange}
Andris Ambainis.
\newblock Polynomial degree and lower bounds in quantum complexity: Collision
  and element distinctness with small range.
\newblock {\em Theory of Computing}, 1(1):37--46, 2005.

\bibitem[Amb07]{ambainis}
Andris Ambainis.
\newblock Quantum walk algorithm for element distinctness.
\newblock {\em SIAM Journal on Computing}, 37(1):210--239, 2007.

\bibitem[AS04]{aaronsonshi}
Scott Aaronson and Yaoyun Shi.
\newblock Quantum lower bounds for the collision and the element distinctness
  problems.
\newblock {\em J. ACM}, 51(4):595--605, 2004.

\bibitem[AS07]{roccoquantum}
Alp At{\i}c{\i} and Rocco~A Servedio.
\newblock Quantum algorithms for learning and testing juntas.
\newblock {\em Quantum Information Processing}, 6(5):323--348, 2007.

\bibitem[BBC{\etalchar{+}}01]{beals}
Robert Beals, Harry Buhrman, Richard Cleve, Michele Mosca, and Ronald de~Wolf.
\newblock Quantum lower bounds by polynomials.
\newblock {\em J. {ACM}}, 48(4):778--797, 2001.

\bibitem[BCdWZ99]{smallerrorquantum}
Harry Buhrman, Richard Cleve, Ronald de~Wolf, and Christof Zalka.
\newblock Bounds for small-error and zero-error quantum algorithms.
\newblock In {\em FOCS}, pages 358--368. IEEE Computer Society, 1999.

\bibitem[BCH{\etalchar{+}}17]{bchtv}
Adam Bouland, Lijie Chen, Dhiraj Holden, Justin Thaler, and Prashant~Nalini
  Vasudevan.
\newblock On the power of statistical zero knowledge.
\newblock In {\em 58th {IEEE} Annual Symposium on Foundations of Computer
  Science, {FOCS} 2017, Berkeley, CA, USA, October 15-17, 2017}, pages
  708--719, 2017.

\bibitem[Bei94]{beigel}
Richard Beigel.
\newblock Perceptrons, {PP}, and the {P}olynomial {H}ierarchy.
\newblock {\em Computational Complexity}, 4:339--349, 1994.

\bibitem[Bel12a]{belovs}
Aleksandrs Belovs.
\newblock Learning-graph-based quantum algorithm for k-distinctness.
\newblock In {\em Foundations of Computer Science (FOCS), 2012 IEEE 53rd Annual
  Symposium on}, pages 207--216. IEEE, 2012.

\bibitem[Bel12b]{belovsframework}
Aleksandrs Belovs.
\newblock Span programs for functions with constant-sized 1-certificates.
\newblock In {\em Proceedings of the forty-fourth annual ACM symposium on
  Theory of computing}, pages 77--84. ACM, 2012.

\bibitem[BHH11]{bravyi}
Sergey Bravyi, Aram~Wettroth Harrow, and Avinatan Hassidim.
\newblock Quantum algorithms for testing properties of distributions.
\newblock {\em {IEEE} Trans. Information Theory}, 57(6):3971--3981, 2011.

\bibitem[BHT98]{quantumcounting}
Gilles Brassard, Peter H{\o}yer, and Alain Tapp.
\newblock {\em Quantum counting}, pages 820--831.
\newblock Springer Berlin Heidelberg, 1998.

\bibitem[BIVW16]{viola}
Andrej Bogdanov, Yuval Ishai, Emanuele Viola, and Christopher Williamson.
\newblock Bounded indistinguishability and the complexity of recovering
  secrets.
\newblock In Matthew Robshaw and Jonathan Katz, editors, {\em Advances in
  Cryptology - {CRYPTO} 2016 - 36th Annual International Cryptology Conference,
  Santa Barbara, CA, USA, August 14-18, 2016, Proceedings, Part {III}}, volume
  9816 of {\em Lecture Notes in Computer Science}, pages 593--618. Springer,
  2016.

\bibitem[Bla09]{Bla09}
Eric Blais.
\newblock Testing juntas nearly optimally.
\newblock In {\em Proceedings of the Forty-first Annual ACM Symposium on Theory
  of Computing}, STOC '09, pages 151--158, New York, NY, USA, 2009. ACM.

\bibitem[BM12]{beame}
Paul Beame and Widad Machmouchi.
\newblock The quantum query complexity of {AC}\({}^{\mbox{0}}\).
\newblock {\em Quantum Information {\&} Computation}, 12(7-8):670--676, 2012.

\bibitem[BNRdW07]{BuhrmanNRW07}
Harry Buhrman, Ilan Newman, Hein R{\"{o}}hrig, and Ronald de~Wolf.
\newblock Robust polynomials and quantum algorithms.
\newblock {\em Theory Comput. Syst.}, 40(4):379--395, 2007.

\bibitem[B{\v S}06]{robin3}
Harry Buhrman and Robert {\v S}palek.
\newblock Quantum verification of matrix products.
\newblock In {\em Proceedings of the Seventeenth Annual {ACM-SIAM} Symposium on
  Discrete Algorithms, {SODA} 2006, Miami, Florida, USA, January 22-26, 2006},
  pages 880--889. {ACM} Press, 2006.

\bibitem[BS13]{belovsspalek}
Aleksandrs Belovs and Robert Spalek.
\newblock Adversary lower bound for the k-sum problem.
\newblock In {\em Proceedings of the 4th Conference on Innovations in
  Theoretical Computer Science}, ITCS '13, pages 323--328, New York, NY, USA,
  2013. ACM.

\bibitem[BSS03]{BSS03}
Howard Barnum, Michael Saks, and Mario Szegedy.
\newblock Quantum query complexity and semi-definite programming.
\newblock In {\em 18th Conference on Computational Complexity (CCC 2003)},
  pages 179--193, 2003.

\bibitem[BT13]{bt13}
Mark Bun and Justin Thaler.
\newblock Dual lower bounds for approximate degree and {M}arkov-{B}ernstein
  inequalities.
\newblock In Fedor~V. Fomin, Rusins Freivalds, Marta~Z. Kwiatkowska, and David
  Peleg, editors, {\em ICALP (1)}, volume 7965 of {\em Lecture Notes in
  Computer Science}, pages 303--314. Springer, 2013.

\bibitem[BT15]{bt14}
Mark Bun and Justin Thaler.
\newblock Hardness amplification and the approximate degree of constant-depth
  circuits.
\newblock In Magn{\'{u}}s~M. Halld{\'{o}}rsson, Kazuo Iwama, Naoki Kobayashi,
  and Bettina Speckmann, editors, {\em Automata, Languages, and Programming -
  42nd International Colloquium, {ICALP} 2015, Kyoto, Japan, July 6-10, 2015,
  Proceedings, Part {I}}, volume 9134 of {\em Lecture Notes in Computer
  Science}, pages 268--280. Springer, 2015.
\newblock Full version available at
  \texttt{http://eccc.hpi-web.de/report/2013/151}.

\bibitem[BT16a]{bttoc}
Mark Bun and Justin Thaler.
\newblock Dual polynomials for {C}ollision and {E}lement {D}istinctness.
\newblock {\em Theory of Computing}, 12(16):1--34, 2016.

\bibitem[BT16b]{bt16}
Mark Bun and Justin Thaler.
\newblock Improved bounds on the sign-rank of {AC}$^0$.
\newblock In Ioannis Chatzigiannakis, Michael Mitzenmacher, Yuval Rabani, and
  Davide Sangiorgi, editors, {\em 43rd International Colloquium on Automata,
  Languages, and Programming, {ICALP} 2016, July 11-15, 2016, Rome, Italy},
  volume~55 of {\em LIPIcs}, pages 37:1--37:14. Schloss Dagstuhl -
  Leibniz-Zentrum fuer Informatik, 2016.

\bibitem[BT17]{adegsurj}
Mark Bun and Justin Thaler.
\newblock A nearly optimal lower bound on the approximate degree of
  {AC}\({}^{\mbox{0}}\).
\newblock In {\em 58th {IEEE} Annual Symposium on Foundations of Computer
  Science, {FOCS} 2017, Berkeley, CA, USA, October 15-17, 2017}, pages 1--12,
  2017.

\bibitem[BT18]{btlargeerror}
Mark Bun and Justin Thaler.
\newblock The large-error approximate degree of ac0.
\newblock In {\em Electronic Colloquium on Computational Complexity (ECCC)},
  volume~25, page 143, 2018.
\newblock To appear in \emph{{International Conference on Randomization and
  Computation (RANDOM)}}, 2019.

\bibitem[BVdW07]{bvdw}
Harry Buhrman, Nikolai~K. Vereshchagin, and Ronald de~Wolf.
\newblock On computation and communication with small bias.
\newblock In {\em 22nd Annual {IEEE} Conference on Computational Complexity
  {(CCC} 2007), 13-16 June 2007, San Diego, California, {USA}}, pages 24--32.
  {IEEE} Computer Society, 2007.

\bibitem[CA08]{comm7}
Arkadev Chattopadhyay and Anil Ada.
\newblock Multiparty communication complexity of disjointness.
\newblock {\em Electronic Colloquium on Computational Complexity {(ECCC)}},
  15(002), 2008.

\bibitem[Che82]{cheney}
E.W. Cheney.
\newblock {\em Introduction to Approximation Theory}.
\newblock AMS Chelsea Publishing Series. AMS Chelsea Pub., 1982.

\bibitem[CTUW14]{difpriv2}
Karthekeyan Chandrasekaran, Justin Thaler, Jonathan Ullman, and Andrew Wan.
\newblock Faster private release of marginals on small databases.
\newblock In {\em Innovations in Theoretical Computer Science, ITCS'14,
  Princeton, NJ, USA, January 12-14, 2014}, pages 387--402, 2014.

\bibitem[DP08]{comm8}
Matei David and Toniann Pitassi.
\newblock Separating {NOF} communication complexity classes {RP} and {NP}.
\newblock {\em Electronic Colloquium on Computational Complexity {(ECCC)}},
  15(014), 2008.

\bibitem[DPV09]{comm6}
Matei David, Toniann Pitassi, and Emanuele Viola.
\newblock Improved separations between nondeterministic and randomized
  multiparty communication.
\newblock {\em {TOCT}}, 1(2), 2009.

\bibitem[FHH{\etalchar{+}}14]{filmus2014real}
Yuval Filmus, Hamed Hatami, Steven Heilman, Elchanan Mossel, Ryan O'Donnell,
  Sushant Sachdeva, Andrew Wan, and Karl Wimmer.
\newblock Real analysis in computer science: A collection of open problems,
  2014.

\bibitem[Gal14]{triangles5}
Fran{\c{c}}ois~Le Gall.
\newblock Improved quantum algorithm for triangle finding via combinatorial
  arguments.
\newblock In {\em 55th {IEEE} Annual Symposium on Foundations of Computer
  Science, {FOCS} 2014, Philadelphia, PA, USA, October 18-21, 2014}, pages
  216--225, 2014.

\bibitem[GKP94]{concrete}
Ronald~L. Graham, Donald~E. Knuth, and Oren Patashnik.
\newblock {\em Concrete Mathematics: A Foundation for Computer Science}.
\newblock Addison-Wesley Longman Publishing Co., Inc., Boston, MA, USA, 2nd
  edition, 1994.

\bibitem[GS10]{comm2}
Dmitry Gavinsky and Alexander~A. Sherstov.
\newblock A separation of {NP} and co{NP} in multiparty communication
  complexity.
\newblock {\em Theory of Computing}, 6(1):227--245, 2010.

\bibitem[GV11]{goldreichvadhansurvey}
Oded Goldreich and Salil~P Vadhan.
\newblock On the complexity of computational problems regarding distributions
  (a survey).
\newblock In {\em Electronic Colloquium on Computational Complexity (ECCC)},
  volume~18, page~4, 2011.

\bibitem[HL{\v S}07]{negativeweights}
Peter H{\o}yer, Troy Lee, and Robert {\v S}palek.
\newblock Negative weights make adversaries stronger.
\newblock In {\em Proceedings of the 39th Symposium on Theory of Computing
  (STOC 2007)}, pages 526--535, 2007.

\bibitem[KKMS08]{kkms}
Adam~Tauman Kalai, Adam~R. Klivans, Yishay Mansour, and Rocco~A. Servedio.
\newblock Agnostically learning halfspaces.
\newblock {\em {SIAM} J. Comput.}, 37(6):1777--1805, 2008.

\bibitem[KLS96]{kahn}
Jeff Kahn, Nathan Linial, and Alex Samorodnitsky.
\newblock Inclusion-exclusion: Exact and approximate.
\newblock {\em Combinatorica}, 16(4):465--477, 1996.

\bibitem[KN16]{robin2}
Robin Kothari and Ashwin Nayak.
\newblock Quantum algorithms for matrix multiplication and product
  verification.
\newblock In {\em Encyclopedia of Algorithms}, pages 1673--1677. Springer,
  2016.

\bibitem[KS04]{ksdnf}
Adam~R. Klivans and Rocco~A. Servedio.
\newblock Learning {DNF} in time {$2^{\tilde{O}(n^{1/3})}$}.
\newblock {\em J. Comput. Syst. Sci.}, 68(2):303--318, 2004.

\bibitem[KS06]{klivansservedioomb}
Adam~R. Klivans and Rocco~A. Servedio.
\newblock Toward attribute efficient learning of decision lists and parities.
\newblock {\em Journal of Machine Learning Research}, 7:587--602, 2006.

\bibitem[K{\v S}dW07]{KSdW07}
Hartmut Klauck, Robert {\v S}palek, and Ronald de~Wolf.
\newblock Quantum and classical strong direct product theorems and optimal
  time-space tradeoffs.
\newblock {\em SIAM Journal on Computing}, 36(5):1472--1493, 2007.

\bibitem[KT14]{colt}
Varun Kanade and Justin Thaler.
\newblock Distribution-independent reliable learning.
\newblock In Maria{-}Florina Balcan and Csaba Szepesv{\'{a}}ri, editors, {\em
  Proceedings of The 27th Conference on Learning Theory, {COLT} 2014,
  Barcelona, Spain, June 13-15, 2014}, volume~35 of {\em {JMLR} Proceedings},
  pages 3--24. JMLR.org, 2014.

\bibitem[Lee09]{lee}
Troy Lee.
\newblock A note on the sign degree of formulas.
\newblock {\em CoRR}, abs/0909.4607, 2009.

\bibitem[LM04]{LM04}
Sophie Laplante and Fr\'ed\'eric Magniez.
\newblock Lower bounds for randomized and quantum query complexity using
  {K}olmogorov arguments.
\newblock In {\em Proceedings of the 19th Conference on Computational
  Complexity}, pages 294--304, June 2004.

\bibitem[LMR{\etalchar{+}}11]{LMR+11}
Troy Lee, Rajat Mittal, Ben~W. Reichardt, Robert {\v S}palek, and Mario
  Szegedy.
\newblock Quantum query complexity of state conversion.
\newblock In {\em Proceedings of the 52nd Symposium on Foundations of Computer
  Science (FOCS 2011)}, pages 344--353, 2011.

\bibitem[LS09a]{ls09rank}
Troy Lee and Adi Shraibman.
\newblock An approximation algorithm for approximation rank.
\newblock In {\em Proceedings of the 24th Annual {IEEE} Conference on
  Computational Complexity, {CCC} 2009, Paris, France, 15-18 July 2009}, pages
  351--357, 2009.

\bibitem[LS09b]{comm9}
Troy Lee and Adi Shraibman.
\newblock Disjointness is hard in the multiparty number-on-the-forehead model.
\newblock {\em Computational Complexity}, 18(2):309--336, 2009.
\newblock Preliminary version in \emph{CCC} 2008.

\bibitem[LW18]{entropy}
Tongyang Li and Xiaodi Wu.
\newblock Quantum query complexity of entropy estimation.
\newblock {\em IEEE Transactions on Information Theory}, 65(5):2899--2921,
  2018.

\bibitem[Mon16]{montanaro}
Ashley Montanaro.
\newblock The quantum complexity of approximating the frequency moments.
\newblock {\em Quantum Information {\&} Computation}, 16(13{\&}14):1169--1190,
  2016.

\bibitem[MP69]{mp}
Marvin Minsky and Seymour Papert.
\newblock {\em Perceptrons - an introduction to computational geometry}.
\newblock {MIT} Press, 1969.

\bibitem[MSS07]{triangles3}
Fr{\'{e}}d{\'{e}}ric Magniez, Miklos Santha, and Mario Szegedy.
\newblock Quantum algorithms for the triangle problem.
\newblock {\em {SIAM} J. Comput.}, 37(2):413--424, 2007.

\bibitem[NS94]{nisanszegedy}
Noam Nisan and Mario Szegedy.
\newblock On the degree of {B}oolean functions as real polynomials.
\newblock {\em Computational Complexity}, 4:301--313, 1994.

\bibitem[OS10]{osnewbounds}
Ryan O'Donnell and Rocco~A. Servedio.
\newblock New degree bounds for polynomial threshold functions.
\newblock {\em Combinatorica}, 30(3):327--358, 2010.

\bibitem[Rei11]{adversary2}
Ben~W Reichardt.
\newblock Reflections for quantum query algorithms.
\newblock In {\em Proceedings of the twenty-second annual {ACM-SIAM}
  {S}ymposium on Discrete Algorithms}, pages 560--569. Society for Industrial
  and Applied Mathematics, 2011.

\bibitem[RS10]{razborovsherstov}
Alexander~A. Razborov and Alexander~A. Sherstov.
\newblock The sign-rank of {AC$^0$}.
\newblock {\em {SIAM} J. Comput.}, 39(5):1833--1855, 2010.

\bibitem[RY15]{comm4}
Anup Rao and Amir Yehudayoff.
\newblock Simplified lower bounds on the multiparty communication complexity of
  disjointness.
\newblock In {\em 30th Conference on Computational Complexity, {CCC} 2015, June
  17-19, 2015, Portland, Oregon, {USA}}, pages 88--101, 2015.

\bibitem[She08]{sherstovsurvey}
Alexander~A. Sherstov.
\newblock Communication lower bounds using dual polynomials.
\newblock {\em Bulletin of the {EATCS}}, 95:59--93, 2008.

\bibitem[She09a]{sherstovinclusion}
Alexander~A. Sherstov.
\newblock Approximate inclusion-exclusion for arbitrary symmetric functions.
\newblock {\em Computational Complexity}, 18(2):219--247, 2009.

\bibitem[She09b]{sherstovmajmaj}
Alexander~A. Sherstov.
\newblock Separating {AC}\({}^{\mbox{0}}\) from depth-2 majority circuits.
\newblock {\em {SIAM} J. Comput.}, 38(6):2113--2129, 2009.

\bibitem[She11]{patmat}
Alexander~A. Sherstov.
\newblock The pattern matrix method.
\newblock {\em {SIAM} J. Comput.}, 40(6):1969--2000, 2011.
\newblock Preliminary version in \emph{STOC} 2008.

\bibitem[She12]{comm1}
Alexander~A. Sherstov.
\newblock The multiparty communication complexity of set disjointness.
\newblock In {\em Proceedings of the 44th Symposium on Theory of Computing
  Conference, {STOC} 2012, New York, NY, USA, May 19 - 22, 2012}, pages
  525--548, 2012.

\bibitem[She13a]{sherstovandor}
Alexander~A. Sherstov.
\newblock Approximating the {A}{N}{D}-{O}{R} {T}ree.
\newblock {\em Theory of Computing}, 9(20):653--663, 2013.

\bibitem[She13b]{comm3}
Alexander~A. Sherstov.
\newblock Communication lower bounds using directional derivatives.
\newblock In Dan Boneh, Tim Roughgarden, and Joan Feigenbaum, editors, {\em
  Symposium on Theory of Computing Conference, STOC'13, Palo Alto, CA, USA,
  June 1-4, 2013}, pages 921--930. {ACM}, 2013.

\bibitem[She13c]{sherstovhalfspaces1}
Alexander~A. Sherstov.
\newblock The intersection of two halfspaces has high threshold degree.
\newblock {\em {SIAM} J. Comput.}, 42(6):2329--2374, 2013.

\bibitem[She13d]{sherstovrobust}
Alexander~A. Sherstov.
\newblock Making polynomials robust to noise.
\newblock {\em Theory of Computing}, 9:593--615, 2013.

\bibitem[She14]{sherstov14}
Alexander~A. Sherstov.
\newblock Breaking the {M}insky-{P}apert barrier for constant-depth circuits.
\newblock In David~B. Shmoys, editor, {\em Symposium on Theory of Computing,
  {STOC} 2014, New York, NY, USA, May 31 - June 03, 2014}, pages 223--232.
  {ACM}, 2014.

\bibitem[She15]{sherstov15}
Alexander~A. Sherstov.
\newblock The power of asymmetry in constant-depth circuits.
\newblock In {\em {IEEE} 56th Annual Symposium on Foundations of Computer
  Science, {FOCS} 2015, Berkeley, CA, USA, 17-20 October, 2015}, pages
  431--450, 2015.

\bibitem[She18]{sherstovpersonal}
Alexander~A. Sherstov.
\newblock Algorithmic polynomials.
\newblock In {\em Proceedings of the 50th Annual {ACM} {SIGACT} Symposium on
  Theory of Computing, {STOC} 2018, Los Angeles, CA, USA, June 25-29, 2018},
  pages 311--324, 2018.

\bibitem[{\v S}pa08]{spalek}
Robert {\v S}palek.
\newblock A dual polynomial for {OR}.
\newblock {\em CoRR}, abs/0803.4516, 2008.

\bibitem[{\v S}S06]{spalekszegedy}
Robert {\v S}palek and Mario Szegedy.
\newblock All quantum adversary methods are equivalent.
\newblock {\em Theory of Computing}, 2(1):1--18, 2006.

\bibitem[STT12]{servediotanthaler}
Rocco~A. Servedio, Li-Yang Tan, and Justin Thaler.
\newblock Attribute-efficient learning and weight-degree tradeoffs for
  polynomial threshold functions.
\newblock In {\em COLT}, pages 14.1--14.19, 2012.

\bibitem[SV03]{sahaivadhan}
Amit Sahai and Salil Vadhan.
\newblock A complete problem for statistical zero knowledge.
\newblock {\em Journal of the ACM (JACM)}, 50(2):196--249, 2003.

\bibitem[SZ09]{shizhu}
Yaoyun Shi and Yufan Zhu.
\newblock Quantum communication complexity of block-composed functions.
\newblock {\em Quantum Information {\&} Computation}, 9(5):444--460, 2009.

\bibitem[Tal14]{tal2}
Avishay Tal.
\newblock Shrinkage of {D}e {M}organ formulae by spectral techniques.
\newblock In {\em 55th {IEEE} Annual Symposium on Foundations of Computer
  Science, {FOCS} 2014, Philadelphia, PA, USA, October 18-21, 2014}, pages
  551--560, 2014.

\bibitem[Tal17]{tal1}
Avishay Tal.
\newblock Formula lower bounds via the quantum method.
\newblock In {\em Proceedings of the 49th Annual {ACM} {SIGACT} Symposium on
  Theory of Computing, {STOC} 2017, Montreal, QC, Canada, June 19-23, 2017},
  pages 1256--1268, 2017.

\bibitem[Tha16]{thaler}
Justin Thaler.
\newblock {Lower Bounds for the Approximate Degree of Block-Composed
  Functions}.
\newblock In {\em 43rd International Colloquium on Automata, Languages, and
  Programming (ICALP 2016)}, volume~55 of {\em Leibniz International
  Proceedings in Informatics (LIPIcs)}, pages 17:1--17:15, Dagstuhl, Germany,
  2016. Schloss Dagstuhl--Leibniz-Zentrum fuer Informatik.

\bibitem[TUV12]{difpriv1}
Justin Thaler, Jonathan Ullman, and Salil~P. Vadhan.
\newblock Faster algorithms for privately releasing marginals.
\newblock In Artur Czumaj, Kurt Mehlhorn, Andrew~M. Pitts, and Roger
  Wattenhofer, editors, {\em Automata, Languages, and Programming - 39th
  International Colloquium, {ICALP} 2012, Warwick, UK, July 9-13, 2012,
  Proceedings, Part {I}}, volume 7391 of {\em Lecture Notes in Computer
  Science}, pages 810--821. Springer, 2012.

\bibitem[Vad99]{vadhan}
Salil~Pravin Vadhan.
\newblock {\em A study of statistical zero-knowledge proofs}.
\newblock PhD thesis, Massachusetts Institute of Technology, 1999.

\bibitem[Vio17]{violalecturenotes}
Emanuele Viola.
\newblock Lecture notes for {E}manuele {V}iola's {F}all 2017 course at
  {N}ortheastern {U}niversity on {S}pecial {T}opics in {C}omplexity {T}heory,
  2017.
\newblock Scribe: Biswaroop Maiti. Guest lecture by Justin Thaler. Available
  online at
  \url{http://www.ccs.neu.edu/home/viola/classes/spepf17-lectures/L10.pdf}.

\bibitem[VV11]{VV11}
Gregory Valiant and Paul Valiant.
\newblock Estimating the unseen: An $n/\log(n)$-sample estimator for entropy
  and support size, shown optimal via new clts.
\newblock In {\em Proceedings of the Forty-third Annual ACM Symposium on Theory
  of Computing}, STOC '11, pages 685--694, New York, NY, USA, 2011. ACM.

\bibitem[Zha05]{Zha05}
Shengyu Zhang.
\newblock On the power of {Ambainis} lower bounds.
\newblock {\em Theoretical Computer Science}, 339(2):241--256, 2005.

\bibitem[Zha15]{perm1}
Mark Zhandry.
\newblock A note on the quantum collision and set equality problems.
\newblock {\em Quantum Information {\&} Computation}, 15(7{\&}8):557--567,
  2015.

\end{thebibliography}

\end{document}